\title{Characterizing the easy-to-find subgraphs from the viewpoint of polynomial-time algorithms, kernels, and Turing kernels
\thanks{This work was partially supported by the European Research Council through starting grant 306992 ``Parameterized Approximation'' and grant 280152 ``PARAMTIGHT: Parameterized complexity and the search for tight complexity results'' and OTKA grant NK105645.}}
\author{Bart M. P. Jansen\thanks{University of Bergen, Norway, bart.jansen@ii.uib.no} \\
\and
D\'{a}niel Marx \thanks{Institute for Computer Science and Control, Hungarian Academy of Sciences (MTA SZTAKI), Budapest,
Hungary, dmarx@cs.bme.hu}
}
\setlist{noitemsep}
\setlist[enumerate,1]{label=(\arabic*), ref=(\arabic*)}
\newcommand{\cqed}{\renewcommand{\qedsymbol}{$\lrcorner$}} 
\newcommand{\retheorem}{theorem}
\newcommand{\reabctheorem}{abctheorem}
\newcommand{\relemma}{lemma}
\setlist{noitemsep}
\newcommand{\executeiffilenewer}[3]{%
\ifnum\pdfstrcmp{\pdffilemoddate{#1}}%
{\pdffilemoddate{#2}}>0%
{\immediate\write18{#3}}\fi%
} 
\newcommand{%
\executeiffilenewer{.svg}{.pdf}%
{inkscape -z -D --file=.svg %
--export-pdf=.pdf --export-latex}%
{\input{.pdf_tex}}}[1]{%
\executeiffilenewer{#1.svg}{#1.pdf}%
{inkscape -z -D --file=#1.svg %
--export-pdf=#1.pdf --export-latex}%
{\input{#1.pdf_tex}}}%
\newcommand{\svg}[2]{\def\svgwidth{#1}%
\executeiffilenewer{#2.svg}{#2.pdf}%
{inkscape -z -D --file=#2.svg %
--export-pdf=#2.pdf --export-latex}%
{\input{#2.pdf_tex}}}
\definecolor{dark-red}{rgb}{0.4,0.15,0.15}
\definecolor{dark-blue}{rgb}{0.15,0.15,0.4}
\definecolor{medium-blue}{rgb}{0,0,0.5}
\definecolor{gray}{rgb}{0.5,0.5,0.5}
\theoremstyle{plain}
\newtheorem{theorem}{Theorem}[section]
\newtheorem{proposition}[theorem]{Proposition}
\newtheorem{lemma}[theorem]{Lemma}
\newtheorem{corollary}[theorem]{Corollary}
\newtheorem{claim}[theorem]{Claim}
\theoremstyle{definition}
\newtheorem{definition}[theorem]{Definition}
\newtheorem{observation}[theorem]{Observation}
\newenvironment{claimproof}{\begin{proof}\renewcommand{\qedsymbol}{\claimqed}}{\end{proof}\renewcommand{\qedsymbol}{\plainqed}}
\let\plainqed\qedsymbol
\newcommand{\eqvr}[0]{\ensuremath{\mathcal{R}}\xspace}
\newcommand{\containment}[0]{NP~$\subseteq$~coNP$/$poly\xspace}
\newcommand{\ncontainment}[0]{NP~$\not \subseteq$~coNP$/$poly\xspace}
\newcommand{\Oh}{\mathcal{O}}
\newcommand{\yes}{\textsc{yes}\xspace}
\newcommand{\no}{\textsc{no}\xspace}
\newcommand{\F}{\ensuremath{\mathcal{F}}\xspace}
\newcommand{\G}{\ensuremath{\mathcal{G}}\xspace}
\newcommand{\Q}{\ensuremath{\mathcal{Q}}\xspace}
\renewcommand{\L}{\ensuremath{\mathcal{L}}\xspace}
\newcommand{\I}{\ensuremath{\mathcal{I}}\xspace}
\renewcommand{\P}{\ensuremath{\mathcal{P}}\xspace}
\renewcommand{\S}{\ensuremath{\mathcal{S}}\xspace}
\newcommand{\N}{\ensuremath{\mathcal{N}}\xspace}
\newcommand{\kPath}{\textsc{Long Path}\xspace}
\newcommand{\kFPacking}{\textsc{\F-Packing}\xspace}
\newcommand{\kFSubgraphTest}{\textsc{\F-Subgraph Test}\xspace}
\newcommand{\FPacking}{\textsc{\F-Packing}\xspace}
\newcommand{\HPacking}{\textsc{$H$-Packing}\xspace}
\newcommand{\FSubgraphTest}{\textsc{\F-Subgraph Test}\xspace}
\newcommand{\Packing}{\textsc{Packing}\xspace}
\newcommand{\SubgraphTest}{\textsc{Subgraph Test}\xspace}
\newcommand{\kPacking}{\textsc{Packing}\xspace}
\newcommand{\kSubgraphTest}{\textsc{Subgraph Test}\xspace}
\newcommand{\nRegularExactSetCover}{\textsc{Uniform Exact Set Cover ($n$)}\xspace}
\newcommand{\nExactSetCover}{\textsc{Exact Set Cover ($n$)}\xspace}
\newcommand{\XTC}{\textsc{Exact Cover by 3-Sets}\xspace}
\newcommand{\TrianglePackingCanonical}{\textsc{Triangle Packing in a Canonical Subgraph}\xspace}
\newcommand{\PathPackingCanonical}{\textsc{$P_3$ Packing in a Canonical Subgraph}\xspace}
\newcommand{\bipRamsey}{\ensuremath{R_{\mathrm{B}}}}
\newcommand{\Ffamily}[2]{\ensuremath{\mathcal{F}_{\textup{\sf #1}}^{#2}}\xspace}
\newcommand{\Fpath}{\Ffamily{Path}{}}
\newcommand{\Fclique}{\Ffamily{Clique}{}}
\newcommand{\Fbiclique}{\Ffamily{Biclique}{}}
\newcommand{\Fdoublebroom}[1]{\Ffamily{2-broom}{#1}}
\newcommand{\Ffountain}[1]{\Ffamily{Fountain}{#1}}
\newcommand{\Flongfountain}[2]{\Ffamily{LongFountain}{#1,#2}}
\newcommand{\Foperahouse}[1]{\Ffamily{OperaHouse}{#1}}
\newcommand{\Fsubdivstar}{\Ffamily{SubDivStar}{}}
\newcommand{\Fsubdivtree}[1]{\Ffamily{SubDivTree}{#1}}
\newcommand{\Fdiamondfan}{\Ffamily{DiamondFan}{}}
\newcommand{\Graph}[2]{\textup{\sf #1}(#2)}
\newcommand{\Gpath}[1]{\Graph{Path}{#1}}
\newcommand{\Gclique}[1]{\Graph{Clique}{#1}}
\newcommand{\Gbiclique}[1]{\Graph{Biclique}{#1}}
\newcommand{\Gdoublebroom}[2]{\Graph{2-broom}{#1,#2}}
\newcommand{\Gfountain}[2]{\Graph{Fountain}{#1,#2}}
\newcommand{\Glongfountain}[3]{\Graph{LongFountain}{#1,#2,#3}}
\newcommand{\Goperahouse}[2]{\Graph{OperaHouse}{#1,#2}}
\newcommand{\Gsubdivstar}[1]{\Graph{SubDivStar}{#1}}
\newcommand{\Gsubdivtree}[2]{\Graph{SubDivTree}{#1,#2}}
\newcommand{\Gdiamondfan}[1]{\Graph{DiamondFan}{#1}}
\newcommand{\problemdef}[3]
{
\begin{quote}
\textsc{#1}\\
\textbf{Input:} #2\\
\textbf{Question:} #3
\end{quote}
}
\newcommand{\parproblemdef}[4]
{
\begin{quote}
\textsc{#1}\\
\textbf{Instance:} #2\\
\textbf{Parameter:} #3\\
\textbf{Question:} #4
\end{quote}
}
\begin{document}

\maketitle

\begin{abstract}
  We study two fundamental problems related to finding subgraphs: (1) given
  graphs $G$ and $H$, \SubgraphTest asks if $H$ is isomorphic to a
  subgraph of $G$, (2) given graphs $G$, $H$, and an integer $t$,
  \Packing asks if $G$ contains $t$ vertex-disjoint subgraphs
  isomorphic to $H$. For every graph class $\F$, let \FSubgraphTest
  and \FPacking be the special cases of the two problems where $H$ is
  restricted to be in $\F$. Our goal is to study which classes $\F$
  make the two problems tractable in one of the following senses:
\begin{itemize}
\item (randomized) polynomial-time solvable,
\item admits a polynomial (many-one) kernel (that is, has a polynomial-time preprocessing procedure that creates an equivalent instance whose size is polynomially bounded by the size of the solution), or 
\item admits a polynomial Turing kernel (that is, has an adaptive polynomial-time procedure that
  reduces the problem to a polynomial number of instances, each of
  which has size bounded polynomially by the size of the solution).
\end{itemize}
To obtain a more robust setting, we restrict our attention to hereditary
classes $\F$.

It is known that if every component of every graph in $\F$ has at most
two vertices, then \FPacking is polynomial-time solvable, and NP-hard
otherwise. We identify a simple combinatorial property (every
component of every graph in $\F$ either has bounded size or is a
bipartite graph with one of the sides having bounded size) such that
if a hereditary class $\F$ has this property, then \FPacking admits a
polynomial kernel, and has no polynomial (many-one) kernel otherwise, unless the
polynomial hierarchy collapses.  Furthermore, if $\F$ does not have this
property, then \FPacking is either \textup{WK[1]}-hard, \textup{W[1]}-hard, or \textsc{Long
  Path}-hard, giving evidence that it does not admit polynomial Turing
kernels either.

For \FSubgraphTest, we show that if every graph of a hereditary class
$\F$ satisfies the property that it is possible to delete a bounded
number of vertices such that every remaining component has size at
most two, then \FSubgraphTest is solvable in randomized polynomial time
and it is NP-hard otherwise. We introduce a combinatorial property
called $(a,b,c,d)$-splittability and show that if every graph in a
hereditary class $\F$ has this property, then \FSubgraphTest admits a
polynomial Turing kernel and it is \textup{WK[1]}-hard, \textup{W[1]}-hard, or \textsc{Long
  Path}-hard otherwise. We do not give a complete characterization of the
cases when \FSubgraphTest admits polynomial many-one kernels, but show
examples that this question is much more fragile than the
characterization for Turing kernels.

\end{abstract}

\newpage

\setcounter{tocdepth}{2}
\tableofcontents
\newpage

\section{Introduction}
Many classical algorithmic problems on graphs can be defined in terms
of finding a subgraph that is isomorphic to a certain pattern
graph. For example, the polynomial-time solvable problem of finding
perfect matchings and the NP-hard \textsc{Hamiltonian Cycle} and \textsc{Clique}
problems arise this way. The goal of the paper is to understand which
pattern graphs make this problem easy with respect to polynomial-time
solvability and polynomial-time preprocessing. 

Given graphs~$G$ and~$H$, \SubgraphTest asks if~$G$ has a subgraph
isomorphic to the pattern~$H$. Observe that, for every fixed pattern graph $H$,
\SubgraphTest is polynomial-time solvable, as we can test each of the
$|V(G)|^{|V(H)|}$ mappings from the vertices of $H$ to the vertices of
$G$, resulting in a polynomial-time algorithm. Therefore,
studying the restrictions of \SubgraphTest to fixed $H$ does not allow
us to make a distinction between easy and hard patterns. We can get a
more useful framework if we restrict \SubgraphTest to a fixed {\em
  class} of patterns.  For every graph class $\F$, let \FSubgraphTest
be the special case of the problem where $H$ is restricted to be in
$\F$. For example, if $\F$ is the set of all matchings (1-regular
graphs), then \FSubgraphTest is the polynomial-time solvable maximum matching
problem; if $\F$ is the set of all cliques, then \FSubgraphTest is the
NP-hard \textsc{Clique} problem. Our goal is to understand which
classes $\F$ make \FSubgraphTest tractable.

We also investigate a well-studied and natural variant of finding
subgraphs. Given graphs $G$ and $H$, and an integer $t$, \Packing asks
if $G$ has $t$ vertex-disjoint subgraphs isomorphic to $H$. Unlike for
\SubgraphTest, now it makes sense to define the problem \HPacking for
a fixed graph $H$: for example, $K_2$-\textsc{Packing} is the
polynomial-time solvable maximum matching problem and
$K_3$-\textsc{Packing} is the NP-hard vertex-disjoint triangle packing
problem. We also define the more general \FPacking problem, where $H$
is restricted to be a member of $\F$.

\textbf{Kernels and Turing kernels.} Besides looking at the
polynomial-time solvability of these problems, we also explore the
possibility of efficient preprocessing algorithms, as defined by the
notion of polynomial kernelization in parameterized complexity
\cite{DowneyF13,FlumG06,LokshtanovMS12}. We can naturally
associate a parameter $k$ to each instance measuring the size of the
solution we are looking for, that is, we define the parameter
$k:=|V(H)|$ for \kSubgraphTest and $k:=t\cdot |V(H)|$ for \kPacking. We
say that a problem with parameter $k$ is {\em fixed-parameter
  tractable (FPT)} if it is solvable in time $f(k)\cdot n^{O(1)}$ for
some computable function $f$. The fixed-parameter tractability of
various cases of \SubgraphTest is a classical topic of the
parameterized complexity literature. It is known that \kFSubgraphTest
is FPT if \F is the set of paths
\cite{AlonYZ95,Bjorklund10,KneisMRR06,DBLP:journals/siamcomp/WilliamsW13}
and, more generally, if \F is a set of graphs of bounded treewidth
\cite{AlonYZ95,DBLP:journals/jcss/FominLRSR12}. The case where~\F is the set of 
all bicliques (complete bipartite graphs), corresponding to the \textsc{Biclique} problem, 
was a tantalizing open problem for many years. In a recent breakthrough result, Binkai Lin~\cite{Lin14} proved
that \textsc{Biclique} is W[1]-hard.

In this paper, we study only a specific aspect of fixed-parameter
tractability. A {\em polynomial (many-one) kernelization} is a
polynomial-time algorithm that creates an equivalent instance whose
size is polynomially bounded by the parameter $k$. Intuitively, a
kernelization is a preprocessing algorithm that does not solve the
problem, but assuming that the parameter value is ``small'' compared
to the size of the input, creates a compact equivalent instance by
somehow getting rid of irrelevant parts of the input. In the case of
\kSubgraphTest, we want to create an equivalent instance with size
bounded by $|V(H)|^{O(1)}$: if the pattern $H$ is small compared to
$G$, we want to compress the instance to a ``hard core'' that has size
comparable to $H$. In recent years, the existence of polynomial
kernelization for various parameterized problems has become a
thoroughly investigated subject. In 2008, Bodlaender et
al.~\cite{BodlaenderDFH09} built on a theorem by Fortnow and Santhanam~\cite{FortnowS11} to introduce the lower bound technology of
OR-compositions, which allows us to show that certain parameterized
problems do not admit polynomial kernels, unless \containment
and the polynomial-time hierarchy collapses to the third level~\cite{Yap83}. In particular, they showed that \kPath (given an undirected graph~$G$ and integer~$k$, does~$G$ contain a simple path of length~$k$?) does not
admit a polynomial kernel under this complexity assumption. This work
has been followed by a flurry of results refining this technology
\cite{BodlaenderJK14,DellM12,DellM10,Drucker12,HermelinW12} and using
it to prove negative results for concrete parameterized problems
(e.g.,
\cite{Binkele-RaibleFFLSV12,BodlaenderFLPST09,BodlaenderTY11,CyganKPPW12,DomLS09,FominLMS12,JansenB11,JansenK13,JansenK12,Kratsch13,KratschPRR12},
see also the recent survey of Lokshtanov et
al.~\cite{LokshtanovMS12}). We continue this line of research by
trying to characterize which \kFSubgraphTest and \kFPacking problems
admit polynomial kernels.

A natural, but less understood variant of kernelization is Turing
kernelization. In a Turing kernelization, instead of creating a single
compact instance in polynomial time, we want to solve the instance in
polynomial time having access to an oracle solving instances of size
$k^{O(1)}$ in constant time. This form of kernelization can be also
thought of as some kind of preprocessing: we want to spend polynomial
time to preprocess the instance in such a way that the time-consuming
part of the work needs to be done on compact instances. While Turing
kernelization may seem much more powerful than many-one kernels, there
are only a handful of examples where Turing kernelization is possible,
but many-one kernelization is not~\cite{AmbalathBHKMPR10,Binkele-RaibleFFLSV12,Jansen14,SchaferKMN12,ThomasseTV13}. On the other hand, the lower bound technology introduced
by Fortnow and Santhanam \cite{FortnowS11} and Bodlaender et
al.~\cite{BodlaenderDFH09} {\em does not} say anything about the
possibility of Turing kernels and therefore we know very little about
the limits of Turing kernelization. In fact, even the basic question
whether \kPath admits a Turing kernel is open (cf.~\cite{Jansen14}). Hermelin et
al.~\cite{HermelinKSWW13} tried to deal with this situation by
developing a completeness theory based on certain fundamental
satisfiability problems that can be shown to be fixed-parameter
tractable by simple branching argument, but for which the existence of polynomial
(Turing) kernels is unlikely. They introduced the notion of
\textup{WK[1]}-hardness, which can be interpreted as evidence that the problem
is unlikely to admit a polynomial Turing kernel.\footnote{It is known~\cite[Lemma 2]{HermelinKSWW13} that the existence of a polynomial-size many-one kernel for a \textup{WK[1]}-hard problem implies \containment.}
Unfortunately,
Hermelin et al.~\cite{HermelinKSWW13} were unable to prove any hardness
result for \kPath; its \textup{WK[1]}-hardness remains an open
question. In this paper, we are working under the assumption that
\kPath admits no polynomial Turing kernel and interpret the existence of a polynomial-parameter transformation from \kPath to our problem as evidence for the nonexistence of polynomial Turing kernels. Problems for which such a transformation exists will be called \emph{\kPath-hard}.

\textbf{Our results.}  In this paper, we restrict our study of
\FPacking and \FSubgraphTest to hereditary classes $\F$, that is, to
classes that are closed under taking induced subgraphs.

The polynomial-time solvability of \HPacking is
well understood: if every component of $H$ has at most two vertices,
then it is a matching problem (hence polynomial-time solvable) and
Kirkpatrick and Hell \cite{KirkpatrickH78} proved that \HPacking is
NP-hard for every other $H$. It follows that \FPacking is
polynomial-time solvable if every component of every graph in $\F$ has
at most two vertices, and is NP-hard otherwise.
For every fixed $H$, we can formulate \HPacking as a
special case of finding $t$ disjoint sets of size $|V(H)|$ each. Hence the problem 
admits a polynomial kernel of size $t^{O(|V(H)|)}$ using, for example, standard sunflower
kernelization arguments~\cite[Appendix A]{DellM12}. However, the exponent of the bound on the
kernel size depends on the size of~$H$. Therefore, it does not follow that
\FPacking admits a polynomial kernel for every fixed class $\F$, as
$\F$ may contain arbitrarily large graphs.

Our first result characterizes those hereditary classes $\F$ for which
\FPacking admits a polynomial kernel. Interestingly, it seems that
Turing kernels are not more powerful for this family of problems: we
get the same positive and negative cases with respect to both
notions. Let us call a connected bipartite graph {\em $b$-thin} if the
smaller partite class has size at most $b$. We say that a graph $H$
is {\em $a$-small/$b$-thin} if every component of $H$ either has at
most~$a$ vertices, or is a~$b$-thin bipartite graph (we emphasize that
it is possible that $H$ has components of both types).
A graph class $\F$ is small/thin if there are $a,b\ge 0$ such that every graph in $\F$ is $a$-small/$b$-thin.
\begin{restatable}{\reabctheorem}{restateintropacking}\label{theorem:intro:packing}
  Let $\F$ be a hereditary class of graphs.  If $\F$ is small/thin, then
  \kFPacking admits a polynomial (many-one) kernel. If $\F$ does
  not have this property, then \kFPacking admits no polynomial kernel,
  unless \containment, and moreover it is also \textup{WK[1]}-hard, \textup{W[1]}-hard, or \kPath-hard.
\end{restatable}
Theorem~\ref{theorem:intro:packing} gives a complete characterization
of the hereditary families for which \kFPacking admits a polynomial
kernel. It is well known that many problems related to packing small graphs/objects
admit polynomial kernels (most of the research is therefore on
understanding the exact degree of the polynomial bound \cite{Abu-Khzam10,ChenLSZ07,DellM12,HermelinW12,Moser09}), but we are
not aware of any previous result showing that thin bipartite graphs have
similar good properties. This revelation about thin bipartite graphs
highlights the importance of looking for dichotomy theorems such as
Theorem~\ref{theorem:intro:packing}: while proving a complete
characterization of the positive and negative cases, we {\em
  necessarily} have to uncover all the important algorithmic ideas
relevant to the family of problems we study. Indeed, our goal was not
to prove a result specific to the kernelization of thin bipartite graphs,
but it turned out that one cannot avoid proving this result in a complete
characterization. The negative part of
Theorem~\ref{theorem:intro:packing} shows that these two algorithmic
ingredients (handling small components and thin bipartite graphs)
cover all the relevant algorithmic ideas and {\em
  any} hereditary class $\F$ that cannot be handled by these ideas
leads to a hard problem.

For \FSubgraphTest, we first prove a dichotomy theorem characterizing
the randomized polyno\-mial-time solvable and NP-hard cases. We say that
$\F$ is {\em matching-splittable} if there is a constant $c$ such that
every $H\in \F$ has a set $S$ of at most $c$ vertices such that every
component of $H- S$ has at most 2 vertices.
\begin{restatable}{\reabctheorem}{restateintropolysubgraph}\label{theorem:intro:polysubgraph}
  Let $\F$ be a hereditary class of graphs. If $\F$ is matching-splittable,
  then \FSubgraphTest can be solved in randomized polynomial time. If
  $\F$ does not have this property, then \FSubgraphTest is NP-hard.
\end{restatable}
The reason why randomization appears in
Theorem~\ref{theorem:intro:polysubgraph} is the following. Given
graphs $G$ and $H\in \F$, first we try every possible location where
the set $S\subseteq V(H)$ can appear in $V(G)$ in a solution; as
$|S|\le c$, there are $|V(G)|^c$ possibilities to try. Having fixed the
location of $S$, we need to locate every component of $H-S$. As each such component is an edge or a single vertex, this looks
like a matching problem, but here we have an additional restriction on
how the endpoints of the edges should be attached to $S$. We can
encode these neighborhood conditions using a bounded number of colors
and get essentially a colored matching problem, which can be solved in
randomized polynomial time using the algorithm of Mulmuley, Vazirani,
and Vazirani \cite{MulmuleyVV87} for finding perfect matchings of
exactly a certain weight. The negative side of
Theorem~\ref{theorem:intro:polysubgraph} can be obtained by observing
(using an application of Ramsey arguments) that if $\F$ is not
matching-splittable, then $\F$ contains all cliques,
all bicliques, all disjoint unions of triangles, or all disjoint unions of length-two paths; in each case, the problem is NP-hard. The authors are somewhat
puzzled that the clean characterization of
Theorem~\ref{theorem:intro:polysubgraph} has apparently not been
observed so far in the literature: it is about the classical question
of polynomial-time solvability of finding subgraphs and the proof uses
techniques that are decades old. We may attribute this to the fact
that while dichotomy theorems for fixed classes \F of graphs exist 
(e.g.,
\cite{DBLP:conf/icalp/ChenTW08,DBLP:journals/tcs/DalmauJ04,1206036,380867,DBLP:journals/tcs/KhotR02,LewisY80,DBLP:journals/siamcomp/Yannakakis81a}),
perhaps it is not yet widely realized that such results are possible
and aiming for them is a doable goal. We hope our paper contributes to
the more widespread recognition of the feasibility of this line of
research.

In Theorem~\ref{theorem:intro:packing}, we have observed that Turing
kernels are not more powerful than many-one kernels for \kFPacking. The
situation is different for \kFSubgraphTest: there are classes $\F$ for
which \kFSubgraphTest admits a polynomial Turing kernel, but has no
polynomial many-one kernel, unless \containment. We characterize the
classes $\F$ that admit polynomial Turing kernels the following way. We
say that a graph $H$ is $(a,b,c,d)$-splittable, if there is a set
$S$ of at most $c$ vertices such that every component of $H-S$ either has size at most $a$ or is a~$b$-thin bipartite graph with the
additional restriction that the closed neighborhoods of all but $d$
vertices are universal to~$S$ (see Section~\ref{sec:char-prop} for details).
\begin{restatable}{\reabctheorem}{restatemainsubgraph}
\label{theorem:intro:turingsubgraph}
  Let $\F$ be a hereditary class of graphs. If there are $a,b,c,d\ge
  0$ such that every $H\in \F$ is $(a,b,c,d)$-splittable, then
  \kFSubgraphTest admits a polynomial Turing kernel. If $\F$ does not
  have this property, then \kFSubgraphTest is \textup{WK[1]}-hard, \textup{W[1]}-hard, or
  \kPath-hard.
\end{restatable}
In the algorithmic part of Theorem~\ref{theorem:intro:turingsubgraph},
the first step is to guess the location of the set $S\subseteq V(H)$
in $V(G)$, giving $|V(G)|^c$ possibilities (this is the reason why in
general our Turing kernel is not a many-one kernel). For each guess, locating the
components of $H- S$ in $G$ is similar to
Theorem~\ref{theorem:intro:packing}, as we have to handle small
components and thin bipartite components, but here we have the
additional technicality that we have to ensure that these components
are attached to $S$ in a certain way.

For many-one kernels, we do not have a characterization similar to
Theorem~\ref{theorem:intro:turingsubgraph}.  We present some concrete
positive and negative results showing that a complete characterization
of \kFSubgraphTest with respect to many-one kernels would be much more
delicate than Theorem~\ref{theorem:intro:turingsubgraph}. The simple
algorithmic idea used in Theorem~\ref{theorem:intro:turingsubgraph},
guessing the location of $S$, fails for many-one kernels
and it seems that we have to make extreme efforts (whenever it is
possible at all) to replace this step with adhoc arguments.

\textbf{Our techniques.} The proofs of
Theorems~\ref{theorem:intro:packing}--\ref{theorem:intro:turingsubgraph}
all follow the same pattern. First, we define a certain
graph-theoretic property and devise an algorithm for the case when
$\F$ has this property.  As described above, the algorithmic part of
Theorem~\ref{theorem:intro:polysubgraph} is based on the randomized
matching algorithm of Mulmuley, Vazirani, and Vazirani
\cite{MulmuleyVV87}.  For Theorems~\ref{theorem:intro:packing} and
\ref{theorem:intro:turingsubgraph}, the algorithm is a marking
procedure: for each component, we mark a bounded number of vertices
such that we can always find a copy of this component using only these
vertices even if the other components already occupy an unknown but small set of
vertices.  Therefore, if there is a solution, then
there is a solution using only this set of marked vertices. The kernel is
obtained by restricting the graph to this set of vertices. For small
components, we use the Sunflower Lemma of Erd\H os and Rado
\cite{MR22:2554} (similarly as it is used in the kernelization of
other packing problems, cf.~\cite{DellM12}). For thin bipartite graphs,
the marking procedure is a branching algorithm specifically designed
for this class of graphs. At some point in the algorithm, we crucially
use that the component is $b$-thin: we find a biclique with
$b$ vertices on one side and many vertices on the other side, and then
we argue that the component is a subgraph of this biclique.

For the hardness results of
Theorems~\ref{theorem:intro:packing}--\ref{theorem:intro:turingsubgraph},
first we prove that if $\F$ does not have the stated property, then
$\F$ contains every graph from one of the basic families of hard
graphs. These hard families include cliques, bicliques, paths, odd
cycles with a high-degree vertex, and subdivided stars (see Section~\ref{sec:hard-families}). To prove that
a hard family appears in $\F$, we use Ramsey results (including a
recent path vs.~induced path vs.~biclique result of Atminas, Lozin,
and Razgon~\cite{AtminasLR12}) and a graph-theoretic analysis of what,
for example, a large nonbipartite graph without large cliques and long
induced paths can look like.  For each hard family, we then claim a
lower bound on the problem. Most of these lower bounds take the form
of a relatively standard polynomial-parameter transformation from \textsc{Set Cover} parameterized by the size of the universe; here the
value of our contribution is not in the details of the 
reduction, but in realizing that these are the hard families of graphs
whose hardness exhaustively explain the hard cases of the
problem. 

The basic technique to obtain negative evidence for the existence of
many-one kernels is the method of \emph{OR-cross-composition}~\cite{BodlaenderJK14}, which refines the original OR-composition framework~\cite{BodlaenderDFH09}. An OR-cross-composition of a classical problem~$L$ into a parameterized problem~$\Q$ is a polynomial-time embedding of a series of~$t$ length-$n$ instances~$x_1, \ldots, x_t$ of~$L$ into a single instance~$x^*$ of~$\Q$ with parameter value poly($n$), such that~$x^* \in \Q \Leftrightarrow \bigvee _{i=1}^t x_i \in L$. If~$L$ is NP-hard, such a construction is known to rule out the existence of polynomial kernels for \Q under the assumption that \ncontainment.  
The negative results that we present for the 
existence of many-one kernels for \FSubgraphTest use a specific form
of this technique that we name {\em OR-cross-composition by reduction with a
  canonical template.} The idea is to start from an NP-hard graph problem~$L$ for which a family of polynomial-size \emph{canonical template graphs} exists, such that for every~$n$, the instances of length~$n$ are induced subgraphs of the $n$-th graph in this family. This allows length-$n$ inputs~$x_1, \ldots, x_t$ to be merged into one through their common canonical supergraph of size poly($n$), as opposed to the trivial~$t \cdot n$, which facilitates an OR-cross-composition. Canonical template graphs were first used for this purpose by Bodlaender et al.~\cite[Theorem 11]{BodlaenderJK12c} to prove a kernel lower bound for a structural parameterization of \textsc{Path with Forbidden Pairs}.


\section{Outline}\label{sec:outline}
In this section we present a more detailed overview of the results of the paper. We also describe the main technical parts of the proofs. The proofs of Theorems~\ref{theorem:intro:packing}--\ref{theorem:intro:turingsubgraph} all follow the same pattern:
\begin{enumerate*}
\item We define the property separating the positive and negative cases.
\item We prove an algorithmic result for the positive cases.
\item We prove a purely combinatorial result stating that if a class $\F$ does not satisfy the property, then $\F$ is a superset of one of the classes 
appearing on a short list of basic hard classes.
\item We prove a hardness result for each basic hard class on the list.
\end{enumerate*}
The structure of this section follows these steps: for each step, we
go through the relevant definitions and state the results proved later in
the paper.

\subsection{Characterizing properties}
\label{sec:char-prop}We say that a graph is {\em $c$-matching-splittable} if there is a set
$S\subseteq V(H)$ of at most $c$ vertices such that every component of
$H- S$ has at most two vertices. We say that a class $\F$
of graphs is $c$-matching-splittable if every $H\in \F$ has this
property, and we say that $\F$ is matching-splittable if $\F$ is
$c$-matching-splittable for some $c\ge 0$. In
Theorem~\ref{theorem:intro:polysubgraph}, this is the condition for
randomized polynomial-time solvability. Clearly, a matching is
0-matching-splittable and a matching plus a universal vertex is
1-matching-splittable. On the other hand, the class containing the
disjoint unions of arbitrarily many triangles is not
$c$-matching-splittable for any $c\ge 0$, as $S$ would need to contain
at least one vertex from each triangle.

In Theorem~\ref{theorem:intro:packing}, the condition that we need is
that every component is either small or a thin bipartite graph. We say
that a graph $H$ is {\em $a$-small/$b$-thin} if every component of $H$
has at most~$a$ vertices or is a $b$-thin bipartite graph (that is, a bipartite graph with one of the partite classes having size at most $b$). Note that
$H$ can have both types of components. For example, if $H$ is the
disjoint union of an arbitrary number of triangles and stars of
arbitrary size, then it is 3-small/1-thin. We say that class $\F$ is
$a$-small/$b$-thin if every graph $H\in \F$ has this property and say
that $\F$ is small/thin if it is $a$-small/$b$-thin for some $a,b\ge
0$.
The characterization property that we need for
Theorem~\ref{theorem:intro:turingsubgraph} is a somewhat technical generalization of being $a$-small/$b$-thin.
\begin{definition}\label{def:splittable}
  We say that a graph $H$ is \emph{$(a,b,c,d)$-splittable} if it has a 
  vertex set~$S \subseteq V(H)$ of size at most~$c$ such that:
\begin{enumerate}
	\item each connected component of~$H - S$ on more than~$a$ vertices is bipartite and has a partite class of size at most~$b$, and
	\item in each connected component~$C$ of~$H - S$, the number of vertices whose closed neighborhood in~$G[C]$ is \emph{not} universal to~$N_H(C) \cap S$ is at most~$d$.
\end{enumerate}
We say that such a set~$S \subseteq V(H)$ \emph{realizes}
the~$(a,b,c,d)$-split of~$H$. Family~\F is $(a,b,c,d)$-splittable if
every $H\in \F$ is a $(a,b,c,d)$-splittable. Family~\F is splittable
if there are constants~$a,b,c,d$ such that~\F is
$(a,b,c,d)$-splittable.
\end{definition}

\noindent Observe that being $a$-small/$b$-thin is exactly the same as being
$(a,b,0,0)$-splittable and being $c$-matching-splittable is exactly
the same as being $(2,0,c,2)$-splittable. We prefer to use
the terms $a$-small/$b$-thin and $c$-matching-splittable for these
special cases, as they are more descriptive.

Given an $a$-small/$b$-thin
graph $H$, adding a set $S$ of $c$ universal vertices results in an
$(a,b,c,0)$-splittable graph $H'$. If $C$ is a component of $H$ having
at most $a$ vertices and we remove from $H'$ any set of edges between
$C$ and $S$, then the resulting graph is $(a,b,c,a)$-splittable. The closed neighborhoods of the $a$ vertices in~$C$ may no longer be universal to~$N_H(C) \cap S$ after the edge removals, which is compensated by the fourth entry in the tuple. Let now $C$ be
a $b$-thin bipartite component of $H$, let $A$ be the smaller side and
let $B$ be the larger side of $C$. Observe that Definition~\ref{def:splittable} not only requires that all but $d$ vertices of $C$ 
are universal to $N_H(C)\cap S$, but even the closed neighborhoods in~$G[C]$ have to be universal. Therefore, removing even a single edge between a vertex $v$ of $C$ and $S$ can ruin the property, as it ``contaminates'' all the neighbors of $v$.
If we remove a single edge between
some $x\in B$ and $S$, then the graph is still
$(a,b,c,b+1)$-splittable: there are at most $b+1$ vertices in $C$
whose neighborhood is not universal to $S$, namely $x$ and some of the
vertices of $A$. On the other hand, if we remove a single edge between
some $y\in A$ and $S$, then the graph may not be
$(a,b,c,d)$-splittable for arbitrary large $d$: if $y$ has degree $d$,
then $y$ and all its neighbors have the property that their closed
neighborhoods are not universal to $S$. 

Note that the definition {\em does not} require that the closed
neighborhood of (all but $d$ of) the vertices are universal to $S$, it
requires universality only to $N_H(C)\cap S$. Suppose that $H_1$
and $H_2$ are two graphs with $S_i$ realizing an
$(a,b,c,d)$-split of $H_i$ for $i=1,2$. The disjoint union of
$H_1$ and $H_2$ is $(a,b,2c,d)$-splittable, as realized by $S_1\cup
S_2$: the vertices in a $b$-thin component of $H_1$ need to be
universal only to (a certain part of) $S_1$, as $C$ has no edge to $S_2$.
 
\subsection{Algorithms}
In the algorithmic part of Theorem~\ref{theorem:intro:polysubgraph},
we need to solve \SubgraphTest in the case that~$H$ is $c$-matching-splittable for some set $S$ of at most $c$ vertices. As
described in the introduction, we guess the location of $S$ and then solve
the resulting constrained matching problem. The main technical engine
in the algorithm is the classic algebraic matching algorithm due to Mulmuley, Vazirani, and Vazirani~\cite{MulmuleyVV87}. It can be used to obtain randomized algorithms for various colored versions of matching (see, for example, \cite{DBLP:journals/ipl/Marx04,MarxP14}). We need the following variant.
\begin{restatable}{\retheorem}{restatecoloredmatching}
\label{theorem:coloredmatching}
  Given a multigraph $G$ with a (not necessary proper) coloring of the
  edges with a set $C$ of colors and function $f:C\to \mathbb{Z}^+$,
  there is a randomized algorithm with false negatives that decides in
  time $(|V(G)|+|E(G)|)^{O(|C|)}$ if $G$ has a matching containing exactly $f(i)$
  edges of color $i$ for every $i\in C$.
\end{restatable}

By a randomized algorithm with false negatives, we mean an algorithm that is always correct on \no-instances, but which may incorrectly reject a \yes-instance with probability at most~$\frac{1}{2}$. Equipped with Theorem~\ref{theorem:coloredmatching}, we prove the algorithmic part of Theorem~\ref{theorem:intro:polysubgraph} in Section \ref{section:polynomialvsnpcomplete:upperbounds}.
\begin{restatable}{\retheorem}{restatesubgraphtestalg}
\label{theorem:fsubgraphtest:alg}
  \FSubgraphTest is (randomized) polynomial-time solvable if $\F$ is matching-splittable.
\end{restatable}

The polynomial kernel in the positive part of
Theorem~\ref{theorem:intro:packing} is obtained by a marking procedure
that finds a polynomially bounded subset of vertices in $G$ that
surely contains a solution, if a solution exists at all. Let us first
explain briefly how the standard technique of sunflowers can be used for this
marking procedure if every component of $H$ has at most $a$ vertices. 
We need the Sunflower Lemma of Erd\H os and Rado \cite{MR22:2554}. A
collection $\S$ of sets is called a {\em sunflower} if the pairwise
intersection $S_1\cap S_2$ is the same set $C$ for any two distinct
$S_1,S_2\in \S$. Then this intersection $C$ is the {\em core} of the
sunflower; the sets $S\setminus C$ for $S\in \S$ are the {\em petals}
of the sunflower.
\begin{lemma}[{\cite{MR22:2554}, cf.~\cite[Lemma 9.7]{FlumG06}}] \label{lemma:sunflowers}
Let $k$ and $m$ be nonnegative integers and let~$\S$ be a system of sets of size at most~$m$ over a universe~$U$. If~$|\S| \geq m!(k-1)^{m}$, then there is a sunflower in~$\S$ with~$k$ petals. Furthermore, for every fixed~$m$ there is an algorithm that computes such a sunflower in time polynomial in~$(k + |\S|)$.
\end{lemma}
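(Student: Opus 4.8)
The plan is to prove the lemma by induction on the bound~$m$ on the sizes of the sets in~$\S$, following the classical dichotomy of Erd\H{o}s and Rado: either~$\S$ contains many pairwise disjoint sets, or some element of the universe lies in many sets of~$\S$. The degenerate cases are handled directly: if~$k\le 1$ then the empty collection or any single set is already a sunflower with at most~$k$ petals, and if~$m=0$ every member of~$\S$ equals~$\emptyset$, so the claimed regime is vacuous. This takes care of the base case; for the inductive step assume~$m\ge 1$, $k\ge 2$, and that the statement holds for~$m-1$.

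Given~$\S$ with~$|\S|\ge m!(k-1)^m$, I would greedily select a maximal subfamily~$S_1,\dots,S_t\in\S$ of pairwise disjoint sets. If~$t\ge k$, then~$\{S_1,\dots,S_k\}$ is a sunflower with empty core and~$k$ petals, and we are done. Otherwise~$t\le k-1$, and by maximality of the subfamily every set of~$\S$ meets~$Y:=S_1\cup\dots\cup S_t$, where~$|Y|\le mt\le m(k-1)$. Averaging over the elements of~$Y$, some~$y\in Y$ lies in at least~$|\S|/|Y|\ge m!(k-1)^m/(m(k-1)) = (m-1)!(k-1)^{m-1}$ sets of~$\S$. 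Setting~$\S':=\{\,S\setminus\{y\} : S\in\S,\ y\in S\,\}$, distinct sets of~$\S$ that contain~$y$ remain distinct in~$\S'$ (reinsert~$y$ to recover the original), so~$|\S'|\ge (m-1)!(k-1)^{m-1}$ and every member of~$\S'$ has size at most~$m-1$. The induction hypothesis then yields a sunflower in~$\S'$ with core~$C$ and~$k$ petals; reinserting~$y$ into each of its~$k$ sets gives a sunflower in~$\S$ with core~$C\cup\{y\}$ and exactly the same~$k$ petals, because~$y$ lies in all of these sets and therefore ends up in the new core rather than in a petal. This closes the induction.

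For the algorithmic part, each step above is constructive and cheap: the maximal disjoint subfamily is computed by one greedy pass over~$\S$, the heavy element~$y$ is found by tallying element frequencies over the at most~$m(k-1)$ candidates in~$Y$, and the procedure then recurses on~$\S'$. The recursion has depth at most~$m$, and the work per level is polynomial in~$|\S|$ and the total size of the sets with the exponent of the polynomial independent of~$m$, so for every fixed~$m$ the running time is polynomial in~$k+|\S|$. I do not anticipate a real obstacle, since this is a classical result; the only points that need care are the bookkeeping in the inductive step — verifying that~$\S'$ is still large enough for the recursive call (which is precisely why the bound carries the factor~$m!$ rather than a smaller one) and that removing the shared element~$y$ does not identify two previously distinct sets — together with cleanly dispatching the degenerate cases~$m=0$ and~$k\le 1$.
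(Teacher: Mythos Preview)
The paper does not give its own proof of this lemma; it is stated as a cited result of Erd\H{o}s and Rado (with a pointer to Flum--Grohe for a textbook treatment) and used as a black box. Your argument is the standard Erd\H{o}s--Rado induction on~$m$ and is correct, including the algorithmic claim, so there is nothing to compare against in the paper itself.
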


Let $H$, $G$, and $t\ge 1$ form an instance of \Packing; the solution
we are looking for has $k:=t\cdot |V(H)|$ vertices. Let $C$ be a
component of $H$ having size at most $a$. First, we enumerate every
subset of $|V(C)|\le a$ vertices in $G$ where $C$ appears; the length of this
list is polynomial in the size of $G$ if $a$ is a fixed constant. We
would like to reduce the length of this list: we would like to have a
shorter list of candidate locations where $C$
can appear in a solution, such that the length of the list is polynomially bounded in $k$. We argue the following way. As long as the
length of the list is at least $a!(k+1)^a$, we can find a sunflower
with $k+2$ petals among the sets in the list. We claim that we can
choose any set $S$ from this sunflower and throw it out of the
list. Suppose that there is a solution where the component $C$ is mapped exactly to this set $S\subseteq V(G)$. As the solution uses only $k$
vertices of $G$ and the petals of the sunflower are disjoint, there is
another set $S'$ among the remaining $k+1$ sets of the sunflower whose
petal is disjoint from the solution. Therefore, we can modify the
solution such that $C$ is mapped to $S'$ instead of $S$, which means
that the set $S$ cannot be essential to the solution and can be safely
removed from the list of candidate locations for $C$. Repeating this
argument, we eventually get a list of at most $a!(k+1)^a$ candidate
locations for each component of $H$, thus we can reduce the problem to
an induced subgraph of $G$ whose size, for a fixed constant $a$, is
polynomial in $k$.

If $H$ has $b$-thin components, then the Sunflower Lemma cannot be
applied, as the size of such a component can be arbitrarily large (and
it is the size of the component that appears in the exponent in the
argument above). Therefore, in Section~\ref{sec:repr-sets-thin}, we
develop a marking procedure specifically designed for thin bipartite
graphs. As an illustration, we present here the main idea on the
special case of packing thin bicliques, that is, on graphs
$K_{b,\ell}$ for some fixed $b\ge 1$. The crucial ingredient for the
kernel for biclique packing is the following lemma.

\begin{lemma} \label{lemma:bicliquemarking:specialcase}
For every fixed~$b$ there is a polynomial-time algorithm that, given a graph~$G$ and integers~$\ell > b$ and~$k \geq \ell + b$, computes a set~$X$ of size~$\Oh(k^{4b})$ such that for every~$Z \subseteq V(G)$ of size at most~$k$, if~$G - Z$ contains a $K_{b,\ell}$ subgraph, then~$G[X] - Z$ contains a $K_{b,\ell}$ subgraph.
\end{lemma}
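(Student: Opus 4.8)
The plan is to construct the marking set $X$ by induction on $b$, so that $X$ consists of a small but ``robust'' collection of candidate small sides together with reserved pools of their common neighbours.

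\emph{Reduction to candidate small sides.} Since $\ell > b$, every $K_{b,\ell}$ subgraph of $G$ has a well-defined side of size $b$ (its ``small side''). Call a set $A\subseteq V(G)$ with $|A|=b$ a \emph{candidate small side} if $|N_G(A)|\ge \ell$, say it is \emph{rich} if moreover $|N_G(A)|\ge k+\ell$, and \emph{poor} otherwise. The key point is that if $X$ contains a candidate small side $A$ together with a set $R_A\subseteq N_G(A)$ of $\min(|N_G(A)|,k+\ell)$ of its common neighbours, then for every $Z$ with $|Z|\le k$ for which $A$ is the small side of some $K_{b,\ell}$ in $G-Z$, the graph $G[X]-Z$ again contains such a copy: if $A$ is rich then $|R_A\setminus Z|\ge (k+\ell)-k=\ell$, and if $A$ is poor then $R_A=N_G(A)$, so the original $\ell$-element large side survives inside $R_A\setminus Z$. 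Hence it suffices to put into $X$, for a polynomially bounded family $\mathcal{A}$ of candidate small sides, each $A\in\mathcal{A}$ together with its pool $R_A$, provided $\mathcal{A}$ is \emph{robust}: for every $Z$ with $|Z|\le k$, if some candidate small side is the small side of a $K_{b,\ell}$ in $G-Z$, then so is some $A\in\mathcal{A}$.

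\emph{Rich candidate small sides via sunflowers.} For rich $A$ the pool $R_A$ has size exactly $k+\ell$, so $A$ ``survives'' $Z$ as soon as $A\cap Z=\emptyset$. Since rich candidate small sides are sets of the \emph{constant} size $b$, I would prune them with the Sunflower Lemma (Lemma~\ref{lemma:sunflowers}): while more than $b!(k+1)^b$ rich candidate small sides remain, extract a sunflower with $k+2$ petals and discard one of its $b$-sets $A_1$. This is safe, because if $A_1\cap Z=\emptyset$ then the core avoids $Z$ and, among the $k+2$ pairwise disjoint petals, at least two avoid $Z$, so some other member of the sunflower also avoids $Z$. What remains is an $\Oh(k^b)$-size robust family of rich candidate small sides; adding their pools contributes $\Oh(k^{b+1})$ vertices to $X$, comfortably within budget. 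All of this is polynomial time for fixed $b$, since $b$-sets and their common neighbourhoods and the sunflowers among them can be enumerated/found in polynomial time.

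\emph{Poor candidate small sides.} This is the delicate part, and the place where $b$-thinness (the small side having the constant size $b$) is really used. For poor $A$ the ``copy region'' $A\cup N_G(A)$ has fewer than $2k$ vertices, but the Sunflower Lemma no longer helps: replacing a poor small side $A_1$ in a solution by a sunflower-equivalent poor small side $A_j$ need not be safe, as $N_G(A_j)$ may be destroyed entirely by $Z$ even when $N_G(A_1)$ is not. Instead I would argue as follows. If $G$ has $k+1$ poor candidate small sides with pairwise disjoint copy regions, keep all of these regions: some region avoids $Z$ entirely and already yields a copy in $G[X]-Z$. Otherwise a maximal such packing uses at most $k$ regions, so their union $Y$ has $\Oh(k^2)$ vertices and meets the copy region of every poor candidate small side. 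For the solution's poor small side $A^*$ pick $y\in(A^*\cup N_G(A^*))\cap Y$ and branch on its role: if $y\in A^*$ then $A^*\setminus\{y\}$ is the small side of a $K_{b-1,\ell}$ inside $G[N_G(y)]$ (and $\ell>b-1$, $k\ge\ell+(b-1)$ still hold), which is taken care of by the inductive marking set for $G[N_G(y)]$ of size $\Oh(k^{4(b-1)})$; if $y$ is instead a common neighbour of $A^*$ then $A^*$ is a $b$-set lying inside $N_G(y)$, and a further analogous nested argument confines the search for the small side to $N_G(y)$, ultimately bottoming out at a confined $K_{1,\ell}$ (star) instance. Letting $y$ range over $Y$ multiplies the cost by $\Oh(k^2)$, the recursion nests at most $b$ times, and tuning the thresholds at each level gives the claimed $\Oh(k^{4b})$ bound.

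\emph{The main obstacle} is precisely the poor case, and within it the sub-case where the solution's small side sits inside the neighbourhood of a high-degree vertex $y$: there we can neither reserve the unbounded large side cheaply (so the richness shortcut fails) nor afford to keep all of $N_G(y)$ (so brute force fails), and a careful recursive branching that crucially exploits the bounded size of the small side is required. Controlling both the termination of this recursion and the resulting polynomial size blow-up — so that it all fits into $\Oh(k^{4b})$ — is the heart of the proof; the rich case and the disjoint-packing case are comparatively routine.
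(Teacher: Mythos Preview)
Your rich/poor split and the sunflower reduction for rich small sides are fine; that part works and is a legitimate alternative to how the paper reaches its base case. The gap is in the poor case, specifically the branch $y\in A^*$. You claim that $A^*\setminus\{y\}$ is the small side of a $K_{b-1,\ell}$ inside $G[N_G(y)]$, but that is false in general: the large side $B^*$ lies in $N_G(y)$, yet the remaining small-side vertices $A^*\setminus\{y\}$ need not (take $G=K_{b,\ell}$ itself, where $A^*$ is independent and $G[N_G(y)]$ is an edgeless $\ell$-set containing no $K_{b-1,\ell}$ for $b\ge 2$). So the inductive call on $G[N_G(y)]$ can be vacuous even when a $K_{b,\ell}$ through $y$ exists in $G-Z$. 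Your induction on $b$ therefore does not close: fixing $y\in A^*$ leaves you with the constrained problem ``find a $K_{b-1,\ell}$ whose large side lies in $N_G(y)$'', which is not an instance of the original problem; and in the other branch, fixing $y\in N_G(A^*)$ leaves the dual constraint ``small side inside $N_G(y)$'', which your ``further analogous nested argument'' does not resolve into a clean recursion either.

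The paper handles exactly this by strengthening the inductive statement: it tracks a pair $(A',B')$ of already-committed vertices on the two sides, proves the lemma for $K_{b,\ell}$ subgraphs \emph{extending} $(A',B')$, and recurses on the measure $2b-|A'\cup B'|$. The branching step (your poor case) then adds the hit vertex $v$ once to $A'$ and once to $B'$; either way the measure drops. Two base cases terminate the recursion: when $|A'|=b$ one marks $k+\ell$ common neighbours of $A'$ (your rich-side idea), and when $|B'|=b$ one either uses $B'$ itself as the small side of a fresh $K_{b,\ell}$ (if $B'$ has $\ge k+\ell$ common neighbours) or branches into its $<k+\ell$ common neighbours to grow $A'$. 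Your proposal is morally the same branching scheme, but without carrying the partial $B'$ it cannot make the $y\in A^*$ branch correct.
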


Before proving the lemma, we show how it leads to a polynomial kernel for biclique packing. To reduce the size of an instance that asks whether~$G$ contains~$t$ disjoint $K_{b,\ell}$ subgraphs for~$\ell > b$, we define~$k := t \cdot (b + \ell)$ and invoke the lemma to compute a set~$X$ of size~$\Oh(k^{4b})$. We then output~$G[X]$ as the kernelized instance. If~$G$ contains a packing of~$t$ disjoint $K_{b,\ell}$ subgraphs, then while the packing contains a biclique~$C$ using a vertex in~$V(G) \setminus X$, we let~$Z$ be the~$(t-1)(b + \ell)$ other vertices in the packing, apply the guarantee of the lemma to find a biclique model~$C'$ in~$G[X]$ avoiding~$Z$, and replace~$C$ in the packing by~$C'$. Iterating the argument results in a packing of bicliques in~$G[X]$, proving that the reduced instance is equivalent to the original one. 

To facilitate a recursive algorithm, we actually prove a generalization of Lemma~\ref{lemma:bicliquemarking:specialcase}. To state the generalization we need the following terminology. For disjoint sets~$A', B' \subseteq V(G)$ and~$\ell > b$ we say that a~$K_{b,\ell}$ subgraph in~$G$ extends~$(A',B')$ if the side-$b$ partite class is a superset of~$A'$ and the size-$\ell$ partite class is a superset of~$B'$. 

\begin{lemma} \label{lemma:bicliquemarking:specialcase:general}
For every fixed~$b$ there is a polynomial-time algorithm that, given a graph~$G$, integers~$\ell > b$ and~$k \geq \ell + b$, and disjoint sets~$A', B' \subseteq V(G)$ of size at most~$b$, computes a set~$X$ of size at most~$(3k^2)^{2b - |A' \cup B'|}$ such that for every~$Z \subseteq V(G)$ of size at most~$k$, if~$G - Z$ contains a $K_{b,\ell}$ subgraph that extends~$(A',B')$, then~$G[X] - Z$ contains a $K_{b,\ell}$ subgraph.
\end{lemma}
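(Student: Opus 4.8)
The plan is to prove Lemma~\ref{lemma:bicliquemarking:specialcase:general} by induction on $r := 2b - |A'\cup B'|$, arranging the recursion so that $r \ge 1$ throughout (in particular $|A'| = |B'| = b$ never occurs); Lemma~\ref{lemma:bicliquemarking:specialcase} is then the case $A' = B' = \emptyset$, where $(3k^2)^{2b} = \Oh(k^{4b})$. First a trivial reduction: if $G$ has no $K_{b,\ell}$ subgraph extending $(A',B')$ at all, output $X := A'\cup B'$, as $G-Z \subseteq G$ makes the claimed implication vacuous. So assume such a subgraph exists in $G$. The base case is $|A'| = b$ (here $r = b - |B'| \ge 1$): then every $K_{b,\ell}$ subgraph extending $(A',B')$ has $b$-side exactly $A'$, and its size-$\ell$ side lies in $B' \cup W$ for $W := (\bigcap_{a\in A'}N_G(a))\setminus(A'\cup B')$. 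Output $X := A'\cup B'$ together with $\min\{|W|,\ell+k\}$ arbitrary vertices of $W$; then $|X| \le 2b+\ell+k \le 3k^2$ since $\ell \le k$ and $k \ge \ell+b$, and for any $Z$ with $|Z| \le k$ for which $G-Z$ still has such a subgraph, the set $A'\cup B'$ misses $Z$ and at least $\ell - |B'|$ of the marked $W$-vertices survive $Z$ (all of $W$ is kept if $|W|\le \ell+k$, and it contains that many solution vertices outside $Z$; otherwise $\ell+k$ vertices are kept, $\ge \ell$ of them surviving), so these plus $A'\cup B'$ form a $K_{b,\ell}$ in $G[X]-Z$ because $A'$ is complete to $W$.

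For the inductive step ($|A'| < b$) the recursion commits one more vertex: it computes a candidate set $\mathcal{C}$ of at most $3k^2$ vertices and, for each $v\in\mathcal{C}$, recurses on either $(A'\cup\{v\},B')$ (a new hub) or $(A',B'\cup\{v\})$ (a new spoke, further constraining the remaining hubs) — each recursive instance having $r$-value one less — and returns the union of the recursive outputs with $A'\cup B'\cup\mathcal{C}$; since $|\mathcal{C}|\le 3k^2$ the total size is at most $3k^2\cdot(3k^2)^{r-1} + \Oh(k) \le (3k^2)^r$ (the $\Oh(k)$ absorbed as $r\ge 1$). The crux is building $\mathcal{C}$ small yet \emph{sufficient}: for every $Z$ with $|Z|\le k$ such that $G-Z$ has a $K_{b,\ell}$ subgraph extending $(A',B')$, some $v\in\mathcal{C}\setminus Z$ extends $(A',B')$ so that $G-Z$ still has a $K_{b,\ell}$ subgraph extending the enlarged pair. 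I would get $\mathcal{C}$ from a dichotomy on the admissible-spoke set $W := (\bigcap_{a\in A'}N_G(a))\setminus(A'\cup B')$: if $|W| \le \ell+k$, put all of $W$ into $\mathcal{C}$ as spoke-candidates (afterwards the size-$\ell$ side is confined to a bounded set, handled by the base case next level); if $|W|$ is large, work with the admissible hubs and put into $\mathcal{C}$ (i) up to $k+1$ hubs whose relevant common neighbourhood has size $\ge\ell+k$ (``robustly good'' — enough since any $Z$ with $|Z|\le k$ misses one, whose common neighbourhood still has $\ge\ell$ vertices off $Z$), and (ii) for the remaining ``barely good'' hubs, whose relevant common neighbourhood is smaller than $\ell+k$ and hence induces a local biclique of bounded size, a greedily chosen maximal family of pairwise vertex-disjoint such local bicliques — if it has $\ge k+1$ members we keep them in full (no size-$\le k$ deletion destroys all of them) and need no more branching from this part, otherwise their union $U$ of $\Oh(k^2)$ vertices meets every barely good local biclique, so we add every vertex of $U$ to $\mathcal{C}$ as both a hub- and a spoke-candidate. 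In each branch correctness is a replacement argument: a solution in $G-Z$ is redirected into $G[X]$ using that disjoint copies cannot all be hit by $\le k$ deletions, or that a barely good hub's whole local biclique was stored verbatim.

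The main obstacle, as above, is the ``large $W$'' case of this dichotomy: showing that a subfamily of size $\le 3k^2$ of the possibly $\Omega(n)$ candidate hubs certifies a robust copy of every $K_{b,\ell}$ surviving in $G-Z$, uniformly over all $Z$ with $|Z|\le k$. This needs the degree threshold separating ``robustly good'' from ``barely good'' hubs, the disjointness argument for the latter, the single-hub version of this argument to close the recursion when only one further hub is needed (the case $|A'| = b-1$, $|B'| = b$), and the bookkeeping that the three contributions to $\mathcal{C}$ — at most $\ell+k$ spoke-candidates, at most $k+1$ robustly good hubs, and the $\Oh(k^2)$ vertices of $U$ — together stay below $3k^2$; this is where the constant $3$ and the exponent $2$ are calibrated, and the recursion depth is at most $2b$ because $(A',B')$ absorbs at most $b$ hubs and $b$ spokes, giving $(3k^2)^{2b-|A'\cup B'|}$.
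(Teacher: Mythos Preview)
Your proposal has the right overall shape (induction on $2b - |A'\cup B'|$, branching on a small candidate set), and the base case $|A'| = b$ matches the paper. But the inductive step has a genuine gap.

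The sufficiency condition you state is the right one: for every $Z$ of size $\le k$, if $G-Z$ contains a $K_{b,\ell}$ extending $(A',B')$, then for some $v \in \mathcal{C}$ the graph $G-Z$ contains a $K_{b,\ell}$ extending the enlarged pair. The natural way to guarantee this is to ensure that $\mathcal{C}$ meets \emph{every} $K_{b,\ell}$ extension of $(A',B')$ outside $A'\cup B'$; then the specific surviving copy in $G-Z$ supplies the $v$. Your ``robustly good hub'' argument does not do this. If there are many more than $k+1$ robustly good hubs, the $k+1$ you keep need not include any hub of the particular $K_{b,\ell}$ that survives in $G-Z$, and a robustly good hub $a^*\notin Z$ chosen a priori yields only a $K_{|A'|+1,\ell}$ (from $A'\cup\{a^*\}$ together with $\ell$ surviving common neighbours), not a $K_{b,\ell}$, unless $|A'|=b-1$. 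So the recursive call on $(A'\cup\{a^*\},B')$ receives no guarantee that $G-Z$ contains an extension of that pair, and the induction does not compose. The ``barely good'' part has the same defect: the bounded objects you pack are bicliques on $A'\cup\{a\}$ and their $<\ell+k$ common neighbours, not full $K_{b,\ell}$'s, so one surviving $Z$ is not enough; and $U$ meeting each such local biclique does not force $U$ to meet the surviving $K_{b,\ell}$ itself. You also never handle $|B'|=b$ with $|A'|<b$: your small-$W$ branch would add a spoke and overflow $B'$, and your large-$W$ branch is keyed to the common neighbourhood of $A'$, which is the wrong set here.

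The paper avoids the degree dichotomy entirely. When $|A'|,|B'| < b$, it greedily computes a maximal family of $K_{b,\ell}$ subgraphs extending $(A',B')$ that are pairwise disjoint outside $A'\cup B'$; this is polynomial since the missing $b-|A'|$ hubs can be enumerated for fixed $b$. If the family has $\ge k+1$ members, their union (size $\le 2k^2$) already contains a $K_{b,\ell}$ avoiding any $Z$. Otherwise their union $T$ has size $\le k(\ell+b)\le k^2$, and by maximality every extension of $(A',B')$ uses a vertex of $T\setminus(A'\cup B')$; now branch on each such vertex both as hub and as spoke. The case $|B'|=b$, $|A'|<b$ is handled separately via the common neighbours of $B'$ (not of $A'$): if there are $\ge k+\ell$ of them, $B'$ together with any $\ell$ survivors is a $K_{b,\ell}$ in $G[X]-Z$ (using that the output need not extend $(A',B')$); if fewer, every missing hub lies among them and one branches on adding each to $A'$.
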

\begin{proof}
The main idea behind the algorithm is to make progress in recursive calls by increasing the size of~$A' \cup B'$, thereby restricting the type of bicliques that have to be preserved in the set~$X$. Throughout the proof we use the fact that if~$Z \subseteq V(G)$ and there is a $K_{b,\ell}$-subgraph in~$G - Z$ that extends~$(A',B')$, then~$Z \cap (A' \cup B') = \emptyset$, the size-$b$ partite class consists of common neighbors of~$B'$, while the size-$\ell$ partite class consists of common neighbors of~$A'$. Let us point out that the lemma requires that $G[X]-Z$ contains an $K_{b,\ell}$-subgraph, but it does not require it to extend $(A',B')$.

\textbf{Case 1.} If~$|A'| = b$, then we choose~$X$ as~$A' \cup B'$ together with~$k + \ell$ common neighbors of~$A'$ (or less, if there are fewer), for a total size of at most~$2b + (k + \ell) \leq 3k$. Let~$Z \subseteq V(G)$ have size at most~$k$. If there is a $K_{b,\ell}$-subgraph~$H$ in~$G - Z$ that extends~$(A',B')$, then all vertices in the size-$\ell$ partite class are common neighbors of~$A'$. If all vertices of~$H$ are contained in~$G[X]$, then the biclique subgraph~$H$ also exists in~$G[X]-Z$. If not, then the set~$A'$ had at least~$k + \ell$ common neighbors (otherwise they were all preserved in~$X$). Since~$Z$ contains at most~$k$ of them, any~$\ell$ of the remaining vertices in~$X$ combines with~$A'$ to form a~$K_{b,\ell}$-subgraph in~$G[X] - Z$.

\textbf{Case 2.a.} If~$|B'| = b$,~$|A'| < b$, and the set~$B'$ has at least~$k + \ell$ common neighbors, then we choose~$X$ containing~$k+\ell$ of these common neighbors together with~$B'$ itself. For any~$Z \subseteq V(G)$ of size at most~$k$, if a biclique extending~$(A',B')$ exists in~$G - Z$ then~$Z$ avoids at least~$\ell$ common neighbors of~$B'$ in~$X$. Together with~$B'$, these form a~$K_{b,\ell}$ subgraph in~$G[X] - Z$. Note that this $K_{b,\ell}$ does not extend $(A',B')$, but this is not required by the lemma.

\textbf{Case 2.b.} If~$|B'| = b$,~$|A'| < b$, and the set~$B'$ has less than~$k + \ell \leq 2k$ common neighbors~$T := \bigcap _{v \in B'} N_G(v)$, then a $K_{b,\ell}$-subgraph extending~$(A',B')$ has its size-$b$ side within~$T$. For each~$a \in T \setminus (A' \cup B')$, add~$a$ to~$A'$ and recurse. Let~$X$ be the union of the recursively computed sets. If there is a biclique in~$G - Z$ extending~$(A',B')$, then there is an~$a \in T \setminus (A' \cup B')$ such that it extends~$(A' \cup \{a\}, B')$, and the correctness guarantee for that recursive call yields a biclique in~$G[X] - Z$. The measure~$2b - |A' \cup B'|$ drops in each recursive call and we recurse on at most~$2k$ instances, giving a bound of~$2k \cdot (3k^2)^{2b - |A' \cup B'| - 1} \leq (3k^2)^{2b - |A' \cup B'|}$ on~$|X|$.

\textbf{Case 3.} In the remaining cases we have~$|A'|, |B'| < b$. We greedily compute a maximal set of $K_{b,\ell}$ subgraphs that extend~$(A',B')$ and pairwise intersect only in~$A' \cup B'$. Since~$b$ is constant, this can be done in polynomial time by guessing all possible locations for the remaining vertices in the size-$b$ partite class and testing whether the resulting vertices are adjacent to~$B'$ and have sufficient common neighbors to realize the other partite class. Two things can happen.

\textbf{Case 3.a.} If we find~$k + 1$ distinct $K_{b,\ell}$ subgraphs that pairwise intersect only in~$(A',B')$, then we output~$X$ containing the union of these subgraphs, which has size at most~$(k+1)(\ell + b) \leq 2k^2$. If a $K_{b,\ell}$-subgraph extending~$(A',B')$ exists in~$G - Z$ for some~$Z \subseteq V(G)$ of size~$k$, then~$Z$ intersects at most~$k$ of the extensions. Hence one extension avoids~$Z$ and combines with~$A',B'$ to form a~$K_{b,\ell}$-subgraph in~$G[X] - Z$.

\textbf{Case 3.b.} If there are at most~$k$ of such extensions, then let~$T$ contain the at most~$k (\ell + b) \leq k^2$ vertices in their union. By the maximality of the packing, any extension of~$(A',B')$ uses a vertex in~$T \setminus (A' \cup B')$. For each~$v \in T \setminus (A' \cup B')$, recurse twice: once for adding~$v$ to~$A'$ and once for adding~$v$ to~$B'$. We let~$X$ be the union of the recursively computed sets. If there is a $K_{b,\ell}$ subgraph in~$G - Z$ for some~$Z \subseteq V(G)$ of size at most~$k$, then it extends~$(A' \cup \{v\}, B')$ or~$(A', B' \cup \{v\})$ for some~$v \in T \setminus (A' \cup B')$. The correctness guarantee for that branch of the recursion guarantees the existence of~$K_{b,\ell}$ in~$G[X] - Z$. As the measure~$2b - |A' \cup B'|$ drops in each recursive call, while we branch in at most~$2|T| \leq 2k(\ell + b) \leq 2k^2$ directions, the size of~$X$ is bounded by~$2k^2 \cdot (3k^2)^{2b - |A' \cup B'| - 1} \leq (3k^2)^{2b - |A' \cup B'|}$.
\end{proof}

The generalization from $b$-thin bicliques to general $b$-thin
bipartite graphs makes the scheme described above much more
technical. Let us point out that the large side of a $b$-thin
bipartite graph can be partitioned into at most $2^b$ classes
according to its neighborhood in the small side. Therefore,
intuitively, a $b$-thin bipartite graph can be seen as $2^b$ different
$b$-thin bicliques joined together, which makes it plausible that such
a generalization exists.

\begin{restatable}{\retheorem}{restatepackingkernel}\label{theorem:kernel:packing}
  If $\F$ is a hereditary class of graphs that is small/thin, then
  \FPacking admits a polynomial many-one kernel.
\end{restatable}

For the algorithmic part of
Theorem~\ref{theorem:intro:turingsubgraph}, we have to guess the
location of the set $S$ realizing the $(a,b,c,d)$-split and then take
into account the universality restrictions. This introduces another
layer of technical difficulties, but no new conceptual ideas are
needed. Moreover, because of this guessing step, the kernel is no longer
many-one, but it is a Turing kernel.
\begin{restatable}{\retheorem}{restatesubgraphturing} \label{theorem:kernel:subgraph}
  If $\F$ is a hereditary class of graphs that is splittable, then
  \FSubgraphTest admits a polynomial Turing kernel.
\end{restatable}

\subsection{Hard families}
\label{sec:hard-families}
We define several specific classes of graphs and show hardness results for these classes. Then we show that if a class does not have the property of, say, being splittable, then it is a superset of at least one hard class, hence hardness follows for every class that does not have this property.

First, we define the following graphs (see Figure~\ref{fig:basic} on page~\pageref{fig:basic}).
\begin{itemize}[noitemsep]
	\item $\Gpath{\ell}$ is the path of length $\ell$, which consists of~$\ell$ edges and ~$\ell + 1$ vertices. It is sometimes denoted~$P_{\ell+1}$ for brevity.
        \item $\Gclique{n}$ is the clique on~$n$ vertices (while describing hard families, we use $\Gclique{n}$ instead of the more standard $K_n$ for consistency of notation).
        \item $\Gbiclique{n}$ is the balanced biclique~$K_{n,n}$ on $n+n$
          vertices.
	\item $\Gdoublebroom{s}{n}$ is obtained from a length-$s$ path by adding $n$ pendant vertices to each of the two endpoints of the path.
	\item $\Goperahouse{s}{n}$ is obtained from a length-$s$ path by adding $n$ vertices that are adjacent to both endpoints of the path.
	\item $\Gfountain{s}{n}$ is obtained from a length-$s$ cycle by adding $n$ pendant vertices to one vertex on the cycle.
	\item $\Glongfountain{s}{t}{n}$ is obtained from a length-$s$ cycle by adding a path of length~$t$, identifying one endpoint with a vertex on the cycle and adding $n$ pendant vertices to the other endpoint.
	\item $\Gsubdivstar{n}$ is obtained from a star with $n$ leaves by subdividing each edge once.
	\item $\Gsubdivtree{s}{n}$ is obtained from a star with $n$ leaves by subdividing each edge $s-1$ times and attaching $n$ pendant vertices to each leaf.
	\item $\Gdiamondfan{n}$ is obtained from $n$ copies of $K_{2,n}$ by taking one degree-$n$ vertex from each copy and identifying them into a single vertex.
        \end{itemize}
\begin{figure}[t]
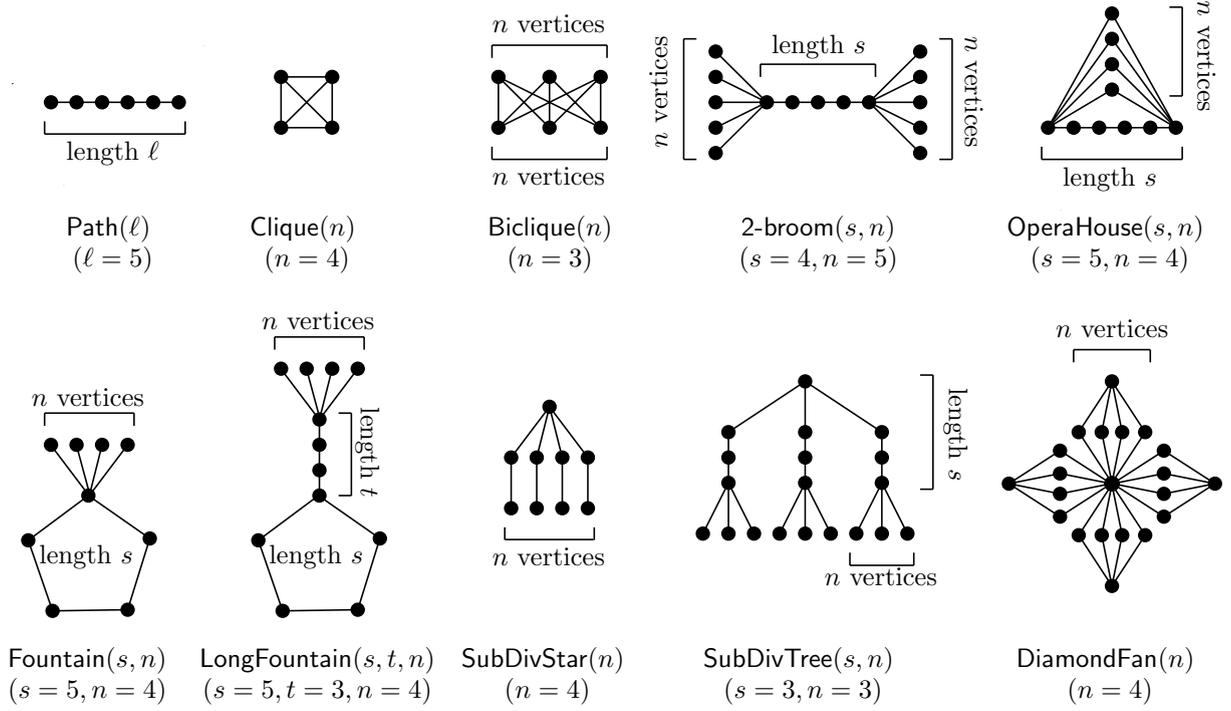

\begin{center}
{\small  \svg{\linewidth}{families}}
\end{center}
\caption{Basic families of graphs.}\label{fig:basic}
\end{figure}

We can define families of these graphs the obvious way:
\[
\begin{array}{ll}
\Fpath=\{\Gpath{i}\mid i\ge 1\}&
\Fclique=\{\Gclique{i}\mid i\ge 1\}\\
\Fbiclique=\{\Gbiclique{i}\mid i\ge 1\}&
\Fdoublebroom{s}=\{\Gdoublebroom{s}{i}\mid i\ge 1\}\\
\Ffountain{s}=\{\Gfountain{s}{i}\mid i\ge 1\}&
\Flongfountain{s}{t}=\{\Glongfountain{s}{t}{i}\mid i\ge 1\}\\
\Foperahouse{s}=\{\Goperahouse{s}{i}\mid i\ge 1\}&
\Fsubdivstar=\{\Gsubdivstar{i}\mid i\ge 1\}\\
\Fsubdivtree{s}=\{\Gsubdivtree{s}{i}\mid i\ge 1\}&
\Fdiamondfan=\{\Gdiamondfan{i}\mid i\ge 1\}
\end{array}
\]

To prove that a hard family is contained in every class not satisfying
a certain property, we use arguments based on Ramsey theory. The following lemma (proved in Section~\ref{sec:polyn-time-solv}) characterizes hereditary classes that are not matching-splittable. We define $n\cdot H$ to be the graph that contains $n$ disjoint copies of $H$. (Recall that $P_3$ is the path on 3 vertices.) 
\begin{restatable}{\retheorem}{restateramseymatching}
\label{theorem:ramsey:matchingsplittable}
Let $\F$ be a hereditary graph family that is not matching splittable. Then at least one of the following holds:
\begin{enumerate*}
\item $\F$ is a superset of \Fclique.
\item $\F$ is a superset of \Fbiclique.
\item $\F$ contains $n\cdot K_3$ for every $n\ge 1$.
\item $\F$ contains $n\cdot P_3$ for every $n\ge 1$.
\end{enumerate*}
\end{restatable}

Observe that Theorem~\ref{theorem:ramsey:matchingsplittable} is a tight
characterization of matching-splittable graphs: the converse statement
is also true, that is, if any of the four statements is true for $\F$,
then it is not matching-splittable. Clearly, large cliques and large
bicliques are not $c$-matching-splittable for constant $c$. Moreover,
if every component of a graph has three vertices (that is, it is either a $K_3$
or $P_3$), then at least one vertex has to be deleted from each
component to decrease the size of every component to at most two
vertices, hence $\F$ cannot be $c$-matching-splittable for constant
$c$ in the last two cases either.

In Section~\ref{sec:packing-ramsey}, we characterize hereditary classes that are not small/thin.
\begin{restatable}{\retheorem}{restateramseysmallthin} \label{theorem:ramsey:1} Let~$\F$ be a
  hereditary graph family that is not small/thin.
Then $\F$ is a superset of at least one of the following families:
\begin{enumerate*}
\item \Fpath,
\item \Fclique,
\item \Fbiclique,
\item \Ffountain{s} for some odd integer $s\ge 3$,
\item \Flongfountain{s}{t} for some odd integer $s\ge 3$ and integer $t\ge 1$,
\item \Foperahouse{s} for some odd integer $s\ge 1$,
\item \Fsubdivstar, or
\item \Fdoublebroom{s} for some odd integer $s\ge 1$.
\end{enumerate*}
\end{restatable}
Again, the characterization is tight: we can observe that if $\F$ is a
superset of any of these families, then there is no $a,b\ge 0$ such
that $\F$ is $a$-small/$b$-thin. Note that we cannot leave out any of
the eight items from the list: the hereditary closure of, say,
\Flongfountain{5}{2} is not the superset of any of the classes
described in the remaining seven items.

Finally, in Section~\ref{sec:subgraph-ramsey}, we characterize graphs that are not splittable. 
\begin{restatable}{\retheorem}{restateramseysplittable}
\label{theorem:ramsey:separator}
Let~$\F$ be a hereditary graph family that is not splittable. Then at least one of the following holds:
\begin{enumerate*}
\item $\F$ is a superset of $\Fpath$,
\item $\F$ is a superset of $\Fclique$,
\item $\F$ is a superset of $\Fbiclique$,
\item $\F$ contains $n\cdot \Gsubdivstar{n}$ for every $n\ge 1$,
\item there is an odd $s\ge 3$ such that $\F$ contains $n\cdot \Gfountain{s}{n}$ for every $n\ge 1$,
\item there is an odd $s\ge 1$ such that $\F$ contains $n\cdot \Goperahouse{s}{n}$ for every $n\ge 1$,
\item there is an odd $s\ge 1$ such that $\F$ contains $n\cdot \Gdoublebroom{s}{n}$ for every $n\ge 1$,
\item there is an odd $s\ge 3$ and arbitrary $t\ge 1$ such that $\F$ contains $n\cdot \Glongfountain{s}{t}{n}$ for every $n\ge 1$,
\item $\F$ is a superset of $\Fsubdivtree{s}$ for some integer $s\ge 1$, or
\item $\F$ is a superset of $\Fdiamondfan$.
\end{enumerate*}
\end{restatable}
We can again verify that the characterization is tight. In particular, let us show that
$\Gsubdivtree{s}{n}$ is not $(a,b,c,d)$-splittable if
$n>a+b+c+d$. Suppose that $S$ realizes the $(a,b,c,d)$-split. As the
graph is not $b$-thin and has more than $a$ vertices, we have that $S$
is not empty. By the pigeonhole principle, there is a vertex $v$ with
degree $n+1$ that is not in $S$, and none of its degree-1 neighbors
are in $S$ either. Then $v$ is in a component of size at least $n+1>a$
that contains at least $n>d$ vertices that have no neighbors in
$S$. Similarly, suppose that $S$ realizes an $(a,b,c,d)$-split of
$\Gdiamondfan{n}$ for $n>a+b+c+d$. Again, $S$ is not empty. By the
pigeonhole principle, there is a degree-$n$ vertex $v$ that is not in
$S$ and has no neighbor in $S$. The component of this vertex has size
more than $a$ and the component has more than $d$ vertices (namely,
every neighbor of $v$) whose closed neighborhood is not universal to
$S$.

\subsection{Hardness proofs} \label{section:outline:hardness}
Let us review the concrete hardness results that we prove, which, by
the combinatorial characterizations in Theorems \ref{theorem:ramsey:matchingsplittable}--\ref{theorem:ramsey:separator}, prove the negative parts of
Theorems~\ref{theorem:intro:packing}--\ref{theorem:intro:turingsubgraph}. 
As mentioned above, Kirkpatrick and Hell \cite{KirkpatrickH78} fully characterized the polynomial-time solvable cases of \HPacking.

\begin{theorem}[\cite{KirkpatrickH78}] \label{theorem:fpacking:pvsnp}
\HPacking is polynomial-time solvable if every connected component of $H$ has at most two vertices and NP-complete otherwise.
\end{theorem}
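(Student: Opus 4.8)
The plan is to handle the two halves of the dichotomy by completely different means. For the positive (polynomial-time) direction, assume every connected component of $H$ has at most two vertices, so that $H$ is the disjoint union of some number $p\ge 0$ of edges and $q\ge 0$ of isolated vertices, with $|V(H)| = 2p+q$. I would reduce to maximum matching via the following equivalence: $G$ contains $t$ vertex-disjoint copies of $H$ if and only if $G$ has a matching of size at least $tp$ and $|V(G)| \ge t\cdot|V(H)|$. Indeed, given such a matching, any $tp$ of its edges split into $t$ groups of $p$ realize the edge-components of the $t$ copies; these use $2tp$ vertices, leaving at least $t\cdot|V(H)| - 2tp = tq$ further vertices to serve as the isolated-vertex components. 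Conversely, a packing of $t$ copies of $H$ directly exhibits $tp$ disjoint edges and occupies $t\cdot|V(H)|$ vertices. Since a maximum matching can be computed in polynomial time (Edmonds' blossom algorithm), both conditions are checkable efficiently.

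For the negative direction, membership in NP is immediate for every fixed $H$: a packing of $t$ copies has at most $|V(G)|$ vertices and is verifiable in polynomial time. NP-hardness when some component $C$ of $H$ has $r := |V(C)| \ge 3$ is the theorem of Kirkpatrick and Hell~\cite{KirkpatrickH78}; I would organize its proof as follows. A preliminary reduction reduces to the case that $H$ is connected (if all components of $H$ are isomorphic to $C$ this is trivial, since an $H$-packing is literally a $C$-packing with a rescaled $t$; in general it is a padding argument, arranged so that a copy of $C$ cannot be absorbed into the padding). The main reduction is then from an NP-hard exact-covering problem such as \ThreeDM or \XTC: build an element gadget for each ground-set element and a set gadget for each set, glued so that the copies of the connected pattern $H$ that fit into the construction correspond bijectively to the assignments of each element to a covering set, and set $t$ to the number of $H$-copies consumed by an exact cover.

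The step I expect to be the main obstacle is the gadget construction for an \emph{arbitrary} connected pattern $H$ on $r\ge 3$ vertices: the gadgets, and especially the soundness argument excluding spurious $H$-copies that straddle several gadgets, depend sensitively on the internal shape of $H$ --- whether it is a path, a clique, has a cut vertex, contains a fixed small subgraph such as $P_3$ or $K_3$, etc.\ --- so the reduction proceeds through a structural case analysis in which one isolates inside $H$ a ``pivot'' substructure whose two admissible embeddings into a gadget encode the Boolean ``chosen/not chosen'' state. The reduction to connected $H$ and the base cases (e.g.\ $H$ a triangle, which is the classical NP-completeness of \textsc{Partition into Triangles}, or $H = P_3$) are comparatively routine and may simply be cited.
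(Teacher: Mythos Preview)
The paper does not prove this theorem at all: it is stated with a citation to Kirkpatrick and Hell~\cite{KirkpatrickH78} and used as a black box (see Section~\ref{section:outline:hardness} and the proof of Theorem~\ref{theorem:intro:polysubgraph}). So there is no ``paper's own proof'' to compare against.

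Your polynomial-time direction is correct and is the standard reduction to maximum matching. Your NP-hardness sketch is accurate in spirit but is really just an outline of what the cited paper does; the sentence ``the gadgets \ldots\ depend sensitively on the internal shape of $H$'' correctly flags where all the work lies, and you are right that this case analysis is the nontrivial content of~\cite{KirkpatrickH78}. One small caveat: your preliminary reduction from general $H$ to connected $H$ is not quite as simple as stated when the components of $H$ are non-isomorphic (rescaling $t$ is not enough, and the padding has to be done carefully so that copies of the large component cannot be realized inside the padding for the small ones), but this is exactly the kind of detail that is handled in the original reference and that the present paper, like you, simply cites.
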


It follows from Theorem~\ref{theorem:fpacking:pvsnp} that
\FSubgraphTest is NP-hard if $\F$ contains $n\cdot K_3$ for
every $n\ge 1$ or if $\F$ contains $n\cdot P_3$ for every $n\ge
1$, as then the problem is more general than $K_3$-\Packing
or $P_3$-\Packing, respectively. Also, \FSubgraphTest is NP-hard
if $\F$ contains every clique~\cite[GT7]{GareyJ79} (it generalizes \textsc{Clique}) or if $\F$ contains every biclique~\cite[GT24]{GareyJ79}.

For kernelization lower bounds, observe first that if~$\F$
contains every clique, then \FPacking and \FSubgraphTest are clearly
W[1]-hard~\cite[Theorem 21.2.4]{DowneyF13} and therefore do not admit a (Turing) kernel of any size, unless FPT~$=$~W[1] and the Exponential Time Hypothesis fails~\cite[Chapter 29]{DowneyF13}. A recent result of Lin~\cite{Lin14} shows that if~$\F$ contains every biclique, then \FPacking and \FSubgraphTest are also W[1]-hard. Since the parameterized \textsc{Clique} and \textsc{Biclique} problems are NP-hard and OR-compositional~\cite{BodlaenderDFH09}, it follows from standard kernelization lower bound machinery that if \FPacking or \FSubgraphTest has a polynomial (many-one) kernel when~$\F$ contains every clique or biclique, then \containment. To complete the proof of the negative parts of Theorems~\ref{theorem:intro:packing} and \ref{theorem:intro:turingsubgraph}, we prove the following two sets of \textup{WK[1]}-hardness results.
\begin{restatable}{\retheorem}{restatepackinglower}
\label{theorem:packing:lowerbounds}
The \FPacking problem is \textup{WK[1]}-hard under polynomial-parameter transformations if $\F$ is a superset of any of the following families:
\begin{enumerate*}
	\item $\Fsubdivstar$,
	\item $\Flongfountain{s}{t}$ for some integer~$t \geq 1$ and some \emph{odd} integer~$s \geq 3$,
	\item $\Fdoublebroom{s}$ for some odd integer $s \geq 1$,
	\item $\Ffountain{s}$ for some odd integer $s \geq 3$, or
	\item $\Foperahouse{s}$ for some odd integer $s \geq 1$.
\end{enumerate*}
\end{restatable}

\begin{restatable}{\retheorem}{restatesubgraphlower}\label{theorem:subgraphtest:lowerbounds}
  The \FSubgraphTest problem is \textup{WK[1]}-hard under polynomial-parameter transformations if $\F$ is a superset of any of the following families:
\begin{enumerate*}
\item $\Fdiamondfan$, or
\item $\Fsubdivtree{s}$ for some integer $s$.
\end{enumerate*}
\end{restatable}
All \textup{WK[1]}-hardness proofs are by reduction from \nRegularExactSetCover, where the parameter equals the size of the universe on which the set system is defined. The uniform variant, in which all sets have the same size, is particularly useful for proving these results. 
We prove the \textup{WK[1]}-hardness of the problem by a two-stage transformation from \nExactSetCover~\cite{HermelinKSWW13}, first introducing a small number of new elements  to ensure that solutions exist that contain a prescribed number of sets, and then using this knowledge to introduce another small number of elements that can be added to the sets to make the system uniform.

\subsection{Many-one kernels}
We do not have a complete characterization of the existence of many-one
kernels for \FSubgraphTest. The authors believe that if such a
characterization is possible, then it has to be significantly more
delicate than the characterization of Turing kernels in
Theorem~\ref{theorem:intro:turingsubgraph} and both the positive and
the negative parts should involve a larger number of specific
cases. We present two lower bounds and two upper bounds to show the difficulties that arise (see also Figure~\ref{fig:karp}). The following two theorems give the lower bounds.

\begin{figure}[t]
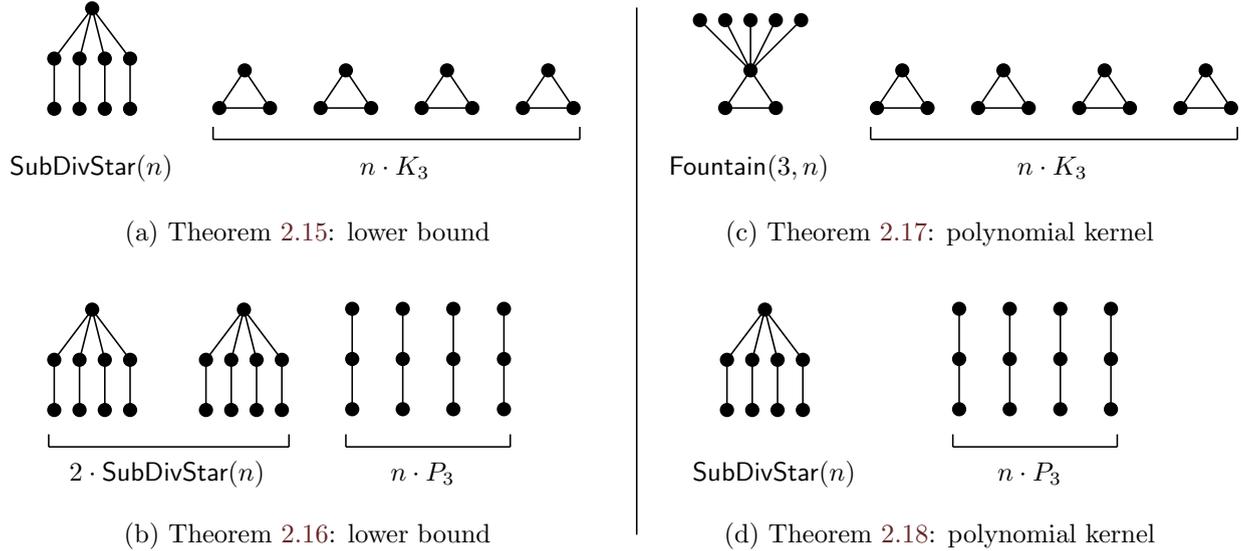

\begin{center}
{\small \svg{\linewidth}{karp}}
\caption{Illustrating the classes of graphs in Theorems~\ref{theorem:karp:lowerbound:onesubstar:manytriangles}--\ref{theorem:karp:subgraphkernel:starsandpaths}.}
\label{fig:karp}
\end{center}
\end{figure}

\begin{restatable}{\retheorem}{restatekarponesubstarmanytriangles}
\label{theorem:karp:lowerbound:onesubstar:manytriangles}
Let~\F be any hereditary graph family containing all graphs of the form~$H' + \ell \cdot K_3$, where~$\ell \geq 1$ and~$H' \in \Fsubdivstar$. Then \kFSubgraphTest does not admit a polynomial many-one kernel unless \containment.
\end{restatable}

\begin{restatable}{\retheorem}{restatekarptwostarmanyps}
\label{theorem:karp:lowerbound:twosubstar:manyps}
Let~\F be any hereditary graph family containing all graphs of the form~$H' + H'' + \ell \cdot P_3$, where~$\ell \geq 1$ and~$H', H'' \in \Fsubdivstar$. Then \kFSubgraphTest does not admit a polynomial many-one kernel unless \containment.
\end{restatable}

Observe that the graph families described by these theorems are~$(3,0,2,2)$-splittable: letting~$S$ contain the (at most two) centers of the subdivided stars, the connected components that remain after removing~$S$ have at most three vertices. Every leg of a subdivided star becomes a component of size two in which one of the vertices is universal to~$S$ and the other is not; hence the closed neighborhoods of the two vertices are not universal to~$S$. The \kFSubgraphTest problem for these families therefore has polynomial Turing kernels by Theorem~\ref{theorem:kernel:subgraph}, highlighting the difference between many-one and Turing kernelization for \kFSubgraphTest. The following two theorems give upper bounds.

\begin{restatable}{\retheorem}{restatekarpfountaintriangles}
\label{karp:subgraphkernel:fountainandtriangles}
Let~\F be the hereditary closure of the family containing all graphs of the form~$H' + \ell \cdot K_3$, where~$\ell \geq 1$ and~$H' \in \Ffountain{3}$. Then \kFSubgraphTest admits a polynomial many-one kernel.
\end{restatable}

\begin{restatable}{\retheorem}{restatekarpstarspaths}
\label{theorem:karp:subgraphkernel:starsandpaths}
Let~\F be the hereditary closure of the family containing all graphs of the form~$H' + \ell \cdot P_3$, where~$\ell \geq 1$ and~$H' \in \Fsubdivstar$. Then \kFSubgraphTest admits a polynomial many-one kernel.
\end{restatable}

Comparing Theorem~\ref{theorem:karp:lowerbound:onesubstar:manytriangles} to Theorem~\ref{karp:subgraphkernel:fountainandtriangles}, we find that changing the type of the single large component from a subdivided star to a fountain crosses the threshold for the existence of a polynomial kernel, even though both types of graphs can be reduced to constant-size components by a single vertex deletion. Comparing Theorem~\ref{theorem:karp:lowerbound:twosubstar:manyps} to Theorem~\ref{theorem:karp:subgraphkernel:starsandpaths} we see that decreasing the number of subdivided star components from two to one makes a polynomial kernel possible. While the definition of splittable graph families that characterizes the existence of polynomial Turing kernels for \kFSubgraphTest is robust under increases by constants, this is clearly not the case for the many-one complexity of \kFSubgraphTest.


\renewcommand{\retheorem}{rtheorem}
\renewcommand{\reabctheorem}{rabctheorem}
\renewcommand{\relemma}{rlemma}

\subsection{Motivation for hereditary classes} \label{section:motivate:hereditary}
In this paper, we
restricted our study to hereditary classes $\F$. There are a number of
reasons motivating this decision. First, considering arbitrary classes
$\F$ can make it very hard to prove lower bounds by polynomial-time
reductions (even if the classes are decidable). For a concrete example, pick $\F$
consisting of every clique of size $2^{2^{2^i}}$ for $i\ge 1$. Then \FSubgraphTest is unlikely to be polynomial-time solvable, but this seems
difficult to prove with a polynomial-time reduction as the smallest clique in~$\F$ of size exceeding~$n$ may have superpolynomial size. A
difficulty of different sorts appears if $\F$ contains cliques such
that the sizes of cliques in $\F$ form a more dense set of integers
than in the previous example, but deciding if the clique of a particular
size~$n$ is in~$\F$ takes time exponential in~$n$. These issues may
be considered artifacts of trying to prove hardness by uniform
polynomial-time reductions that work for every input length~$n$; 
potentially the issues can be avoided by formulating the complexity
framework in a different way. However, there are even more substantial
difficulties that appear when the class $\F$ is not hereditary. For
example, let $\F$ be the set of all paths. Then $\FSubgraphTest$ is
NP-hard and it does not admit a polynomial kernel, unless \containment
\cite{BodlaenderDFH09}. Consider now the class $\F'$ containing, for
every $i\ge 1$, the graph formed by a path of length $i$ together with
$2^i$ isolated vertices. The introduction of the isolated vertices
should not change the complexity of the problem, but, surprisingly, it
does. The problem of finding a path of length $k$ in an $n$-vertex
graph can be solved in time $2^{O(k)}\cdot n^{O(1)}$
\cite{AlonYZ95,DBLP:journals/ipl/Williams09,Bjorklund10}. Therefore,
if $H$ consists of a path of length $k$ and $2^k$ isolated vertices,
then these algorithms give a polynomial-time algorithm for finding $H$
in a graph $G$: the running time $2^{O(k)}\cdot n^{O(1)}$ is
polynomial in the size of $H$ and $G$. Therefore, $\F'$-\SubgraphTest
is polynomial-time solvable, but apparently only because finding a
path of length $k$ is fixed-parameter tractable and has
$2^{O(k)}n^{O(1)}$ time algorithms (note that the $2^{O(k\log
  k)}n^{O(1)}$ time algorithm of Monien \cite{Monien85} would not be
sufficient for this argument).  Therefore, it seems that we need a
very tight understanding of the fixed-parameter tractability of
\kFSubgraphTest to argue about its polynomial-time solvability. There
are examples in the literature where the polynomial-time solvability
of a problem was characterized for every (not necessarily hereditary)
class $\F$
\cite{DBLP:conf/icalp/ChenTW08,DBLP:journals/tcs/DalmauJ04,1206036,380867},
but in all these results, the characterization of polynomial-time was
possible only because it coincided with fixed-parameter
tractability. There is certainly no such coincidence for
\kFSubgraphTest (for example, finding a path of length $k$ is NP-hard,
but FPT) and moreover the fixed-parameter tractability of
\kFSubgraphTest is not well understood, as shown, for example, by the
\textsc{Biclique} problem.

All these problems disappear if we restrict $\F$ to be hereditary
(e.g., adding isolated vertices certainly cannot make the problem
easier and if $\F$ contains arbitrary large cliques, then $\F$
contains every clique). While this restricts the generality of our
results to some extent, we believe that avoiding the difficulties
discussed above more than compensates for this lack of generality.

Let us also comment on the fact that, while the problems we study concern finding and packing (non-induced) subgraphs, we characterize the difficulty of such problems 
for each class~$\F$ of pattern graphs that is closed under induced subgraphs (i.e., for hereditary classes). The discrepancy between induced and non-induced here is entirely 
natural. Note that every class~$\F$ of pattern graphs that is closed under subgraphs, is also closed under induced 
subgraphs, and is therefore covered by our dichotomies. The fact that we can also classify~$\F$ that are merely closed under induced subgraphs, rather than normal subgraphs, gives our results extra strength.

We mention in passing the classical result of Lewis and Yannakakis
\cite{LewisY80} on fully characterizing the complexity of
vertex-deletion problems defined by hereditary properties; these
results also rely crucially on the assumption that the property is
hereditary. Note that our results on \FSubgraphTest are unrelated to
the results of Lewis and Yannakakis \cite{LewisY80}: their problem is
related to finding induced subgraphs and the task is not to find a
specific induced subgraph, but to find a subgraph belonging to the
class and having a specified size.

\section{Preliminaries} \label{section:preliminaries}

For integers~$n$ we denote the set~$\{1, \ldots, n\}$ by~$[n]$. If~$X$ is a finite set and~$n \in \mathbb{N}$ then~$\binom{X}{n}$ is the collection of size-$n$ subsets of~$X$. Similarly, we use~$\binom{X}{\leq n}$ for the collection of all subsets of~$X$ that have size \emph{at most}~$n$, including the empty set.

\subsection{Parameterized complexity and kernelization}
A parameterized problem~$\Q$ is a subset of~$\Sigma^* \times \mathbb{N}$, the second component of a tuple~$(x,k) \in \Sigma^* \times \mathbb{N}$ is called the \emph{parameter}. A parameterized problem is (strongly uniformly) \emph{fixed-parameter tractable} if there exists an algorithm to decide whether $(x,k) \in \Q$ in time~$f(k)|x|^{\Oh(1)}$ where~$f$ is a computable function. A \emph{many-one kernelization algorithm} (or \emph{many-one kernel}) of size~$f \colon \mathbb{N} \to \mathbb{N}$ for a parameterized problem~$\Q \subseteq \Sigma^* \times \mathbb{N}$ is an algorithm that, on input~$(x,k) \in \Sigma^* \times \mathbb{N}$, runs in time polynomial in~$|x| + k$ and outputs an instance~$(x', k')$ with~$|x'|, k' \leq f(k)$ such that~$(x,k) \in \Q \Leftrightarrow (x', k') \in \Q$. It is a \emph{polynomial kernel} if~$f$ is a polynomial (cf.~\cite{Bodlaender09}). 

\begin{definition} \label{definition:turing:kernelization}
Let~$\Q$ be a parameterized problem and let~$f \colon \mathbb{N} \to \mathbb{N}$ be a computable function. A \emph{Turing kernelization for~$\Q$ of size~$f$} is an algorithm that decides whether a given instance~$(x,k) \in \Sigma^* \times \mathbb{N}$ is contained in~$\Q$ in time polynomial in~$|x| + k$, when given access to an oracle that decides membership in~$\Q$ for any instance~$(x',k')$ with~$|x'|, k' \leq f(k)$ in a single step.
\end{definition}

We refer to a textbook~\cite{FlumG06} for more background on parameterized complexity.

\subsection{Graphs} \label{section:preliminaries:graphs}
All graphs we consider are finite, undirected, and simple, unless explicitly stated otherwise. A graph~$G$ consists of a vertex set~$V(G)$ and an edge set~$E(G) \subseteq \binom{V(G)}{2}$. If~$H,G$ are graphs such that~$V(H) \subseteq V(G)$ and~$E(H) \subseteq E(G)$ then~$H$ is a \emph{subgraph} of~$G$, denoted~$H \subseteq G$. For a vertex set~$X \subseteq V(G)$, the subgraph of~$G$ \emph{induced} by~$G$ is the graph with vertex set~$X$ and edge set~$E(G) \cap \binom{X}{2}$. We use~$G - X$ as a shorthand for~$G[V(G) \setminus X]$. The \emph{open neighborhood} of a vertex~$v \in V(G)$ in graph~$G$ is denoted~$N_G(v)$, while the closed neighborhood (which includes~$v$ itself) is~$N_G[v]$. If~$X$ is a vertex set then~$N_G(X) = \bigcup _{v\in X} N_G(v) \setminus X$, while~$N_G[X] = \bigcup _{v \in X} N_G[v]$. We use~$\deg_G(v)$ to denote the degree of vertex~$v$ in graph~$G$. The maximum degree of~$G$ is denoted~$\Delta(G)$. For vertex sets~$X$ and~$Y$ of a graph~$G$ we say that \emph{$X$ is universal to~$Y$} if each vertex of~$X$ is adjacent to all vertices of~$Y$. If~$G$ is a graph and~$X \subseteq V(G)$ is a vertex set, then the operation of \emph{identifying the vertices~$X$ into a single vertex} consists of removing the vertices~$X$ and their incident edges, replacing them by a single new vertex~$v_X$ whose neighborhood becomes~$N_G(X)$. If~$G$ and~$H$ are graphs, then~$G + H$ denotes the disjoint union of the two graphs. For a positive integer~$t$, we denote by~$t \cdot G$ the disjoint union of~$t$ copies of~$G$. A \emph{vertex cover} of a graph~$G$ is a set~$X \subseteq V(G)$ that contains at least one endpoint of every edge. The \emph{vertex cover number} of a graph is the size of a smallest vertex cover.

By~$K_n$ ($K_{n,n}$) we denote the complete (bipartite) graph on~$n$ vertices. A complete bipartite graph is also called a \emph{biclique}. It is \emph{balanced} if its two partite classes have equal sizes. A connected bipartite graph~$G$ is $b$-thin for~$b \in \mathbb{N}$ if it has a partite class of size at most~$b$. The cycle on~$n$ vertices is denoted~$C_n$ and the path on~$n$ vertices is denoted~$P_n$. Observe that graph~$P_n$ is a path of length~$n - 1$. At various points in the paper we have to consider a graph~$G$ and the common neighbors of a vertex set~$D \subseteq V(G)$ in that graph, which is denoted~$\bigcap _{v \in D} N_G(v)$. To avoid some case distinctions, we will also allow the set~$D$ to be empty in such expressions. Since~$\bigcap _{v \in D} N_G(v)$ consists of the elements~$x$ that belong to~$N_G(v)$ for every~$v \in D$, when~$D = \emptyset$ this holds for all elements. Hence for~$D = \emptyset$ the expression~$\bigcap _{v \in D} N_G(v)$ evaluates to all elements in the universe of discourse, which will simply be~$V(G)$ unless explicitly stated otherwise. The \emph{hereditary closure} of a graph family~$\F$ is the hereditary family containing all graphs in~$\F$ and all their induced subgraphs.

Let~$H$ and~$G$ be graphs and let~$P \subseteq V(H)$. A \emph{$P$-partial subgraph model} of~$H$ in~$G$ is an injection~$\phi \colon P \to V(G)$ such that for all edges~$\{u,v\} \in E(H)$ with~$\{u,v\} \subseteq P$ we have~$\{\phi(u), \phi(v)\} \in E(G)$. We say that~$P$ is the \emph{domain} of~$\phi$. For sets~$P' \subseteq P$ we will write~$\phi(P')$ to denote the set~$\{ \phi(v) \mid v \in P'\}$. A $P$-partial $H$-subgraph model is a \emph{full subgraph model} if~$P = V(H)$. If~$\phi$ is a $P$-partial subgraph model in~$G$ and~$\phi'$ is a $P'$-partial subgraph model in~$G' \subseteq G$ then~$\phi'$ is an \emph{extension} of~$\phi$ if~$P \subseteq P'$ and~$\phi(v) = \phi'(v)$ for all~$v \in P$. If~$\phi$ and~$\phi'$ are partial $H$-subgraph models with domains~$P, P' \supseteq X$ then the models \emph{agree on~$X$} if~$\phi(v) = \phi'(v)$ for all~$v \in X$. If~$\phi$ is a $P$-partial $H$-subgraph model in~$G$, then for any set~$X \subseteq V(H)$ we define the \emph{restriction of~$\phi$ to~$X$} as the partial $H$-subgraph model~$\phi|_X \colon P \cap X \to V(G)$ given by~$\phi|_X(v) = \phi(v)$ for all~$v \in P \cap X$. The restriction will sometimes be used as a partial $H[X]$-subgraph model rather than a partial $H$-subgraph model; it will be clear from the context which is meant.

A \emph{separation} of a graph~$G$ is a pair~$(A, B)$ of subsets of~$V(G)$ such that~$A \cup B = V(G)$ and there are no edges between~$A \setminus B$ and~$B \setminus A$. The following observation formalizes that if~$(A,B)$ is a separation of~$H$ into two parts and we have full subgraph models for~$H[A]$ and~$H[B]$ that realize the separator~$A \cap B$ in the same way, then these models can be glued together to form a full subgraph model of~$H$.

\begin{observation} \label{observation:merge:models}
Let~$G$ and~$H$ be graphs, let~$(A,B)$ be a separation of~$H$, and let~$\phi$ be a partial~$H$-subgraph model in~$H$ with domain~$P \supseteq A \cap B$. If~$\phi_A$ is a full $H[A]$-subgraph model in~$G$ that extends~$\phi|_A$ and~$\phi_B$ is a full $H[B]$-subgraph model in~$G$ that extends~$\phi|_B$ such that~$\phi_A(A \setminus B) \cap \phi_B(B \setminus A) = \emptyset$, then the injection~$\phi^* \colon V(H) \to V(G)$ defined as follows:
\begin{equation*}
\phi^*(v) = 
\begin{cases} \phi_A(v) & \text{if~$v \in A \setminus B$,} \\
\phi_B(v) & \text{if~$v \in B \setminus A$,} \\
\phi_A(v) = \phi_B(v) & \text{if~$v \in A \cap B$,}
\end{cases}
\end{equation*}
is a full $H$-subgraph model in~$G$ that extends~$\phi$.
\end{observation}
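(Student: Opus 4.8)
The statement is a standard ``gluing'' lemma, so the plan is to verify directly that $\phi^*$ satisfies the three defining requirements of a full $H$-subgraph model in $G$ --- being a well-defined injection $V(H) \to V(G)$ that sends each edge of $H$ to an edge of $G$ --- and then that it extends $\phi$. The one preliminary point is well-definedness: the three cases in the definition of $\phi^*$ overlap only on $A \cap B$, and there I would argue that $\phi_A(v) = \phi_B(v)$ because $A \cap B \subseteq P$ forces $v$ into the domains of both $\phi|_A$ and $\phi|_B$, and since $\phi_A$ extends $\phi|_A$ and $\phi_B$ extends $\phi|_B$ both equal $\phi(v)$. It is convenient to record the consequence that $\phi^*(v) = \phi_A(v)$ for every $v \in A$ and $\phi^*(v) = \phi_B(v)$ for every $v \in B$; since $V(H) = A \cup B$, this also shows $\phi^*$ is total, and on the overlap the two descriptions agree.

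Next I would establish injectivity by contradiction: suppose $u \neq v$ with $\phi^*(u) = \phi^*(v)$. If $u, v \in A$ then the recorded reformulation gives $\phi_A(u) = \phi_A(v)$, contradicting injectivity of $\phi_A$, and symmetrically if $u, v \in B$. The only remaining possibility, since $A \cup B = V(H)$, is (up to symmetry) $u \in A \setminus B$ and $v \in B \setminus A$, in which case $\phi^*(u) \in \phi_A(A \setminus B)$ and $\phi^*(v) \in \phi_B(B \setminus A)$, contradicting the hypothesis $\phi_A(A \setminus B) \cap \phi_B(B \setminus A) = \emptyset$. For edge preservation I would invoke that $(A,B)$ is a separation of $H$: an edge $\{u,v\} \in E(H)$ cannot join $A \setminus B$ to $B \setminus A$, so $\{u,v\} \subseteq A$ or $\{u,v\} \subseteq B$. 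In the first case $\{u,v\} \in E(H[A])$, so $\{\phi_A(u), \phi_A(v)\} \in E(G)$ because $\phi_A$ is a full $H[A]$-subgraph model, and this equals $\{\phi^*(u), \phi^*(v)\}$; the second case is symmetric. Finally, for any $v$ in the domain $P$ of $\phi$, $v$ lies in $A$ or in $B$; if $v \in A$ then $v \in P \cap A$ and $\phi^*(v) = \phi_A(v) = \phi|_A(v) = \phi(v)$ since $\phi_A$ extends $\phi|_A$, and symmetrically for $v \in B$, so $\phi^*$ extends $\phi$.

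I do not expect any genuine obstacle; the content is entirely bookkeeping. The two hypotheses of the observation are each used exactly once and can be pinpointed: the separation property of $(A,B)$ is needed only to rule out edges of $H$ crossing between $A \setminus B$ and $B \setminus A$ (which is what lets us recombine $\phi_A$ and $\phi_B$ without any compatibility condition on edges), and the disjointness $\phi_A(A \setminus B) \cap \phi_B(B \setminus A) = \emptyset$ is needed only for the cross case of injectivity. The forced agreement of $\phi_A$ and $\phi_B$ on $A \cap B$ --- the one place where the two models must be consistent --- is automatic because both are required to extend the common partial model $\phi$, whose domain contains $A \cap B$.
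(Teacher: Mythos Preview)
Your verification is correct and complete. The paper presents this statement as an Observation without proof, treating it as self-evident bookkeeping; your write-up is exactly the standard verification one would supply if asked to justify it, and each hypothesis is used precisely where you indicate.
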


\subsection{Ramsey theory}
We review the basic results that we need. Ramsey's Theorem states that if
the edges of a sufficiently large clique are colored with a bounded
number of colors, then there has to be a large \emph{monochromatic} clique with every edge
having the same color.
\begin{theorem}[{Cf.~\cite[Chapter 25]{GrahamGL95}}]\label{theorem:ramsey}
For every choice of positive integers~$n$ and~$t$ there exists a number~$R(n,t)$ such that for every $t$-coloring~$f \colon E(K_{R(n,t)}) \to [t]$ of the edges of the complete $R(n,t)$-vertex graph, there exists a monochromatic complete subgraph with~$n$ vertices.
\end{theorem}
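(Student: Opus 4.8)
The plan is to follow the classical two-step proof of Ramsey's theorem: first establish the case of two colours by a vertex-deletion induction, and then bootstrap to $t$ colours by repeatedly amalgamating colour classes.

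For the two-colour case it is convenient to prove the slightly stronger asymmetric statement: for all positive integers $p,q$ there is a finite number $R_2(p,q)$ such that every red/blue colouring of $E(K_{R_2(p,q)})$ contains a red $K_p$ or a blue $K_q$. I would prove this by induction on $p+q$, with $R_2(1,q) = R_2(p,1) = 1$ as the base case. For the inductive step, take any vertex $v$ of $K_N$ with $N := R_2(p-1,q) + R_2(p,q-1)$ and partition the remaining $N-1$ vertices into the set $X$ of red-neighbours of $v$ and the set $Y$ of blue-neighbours; since $|X| + |Y| = N - 1 = R_2(p-1,q) + R_2(p,q-1) - 1$, either $|X| \geq R_2(p-1,q)$ or $|Y| \geq R_2(p,q-1)$. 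In the first case, the induction hypothesis applied to $K_N[X]$ yields either a blue $K_q$ (and we are done) or a red $K_{p-1}$, which together with $v$ forms a red $K_p$; the second case is symmetric. This shows $R_2(p,q) \leq R_2(p-1,q) + R_2(p,q-1)$, and in particular that $R_2(p,q)$ is finite.

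To reach arbitrarily many colours I would induct on $t$, the case $t = 1$ being trivial and $t = 2$ handled above (setting $R(n,2) := R_2(n,n)$). Given a $t$-colouring $f$ of $E(K_N)$ with $N := R_2\bigl(n, R(n,t-1)\bigr)$, recolour every edge whose $f$-colour lies in $\{2, \dots, t\}$ with a single new colour, obtaining a $2$-colouring. By the two-colour case there is either a monochromatic $K_n$ in colour $1$ --- in which case we are finished --- or a clique on $R(n,t-1)$ vertices all of whose edges received the merged colour, i.e.\ are coloured by $f$ using only colours from $\{2, \dots, t\}$. Restricting $f$ to that clique gives a $(t-1)$-colouring, and the induction hypothesis produces the desired monochromatic $K_n$. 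Taking $R(n,t)$ to be the value obtained from this recursion completes the argument.

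I do not expect a genuine obstacle, as the result is classical; the only point requiring care is the bookkeeping of the auxiliary Ramsey numbers --- in particular the need to pass through the asymmetric two-colour quantity $R_2(p,q)$ in the vertex-deletion step even though the statement we ultimately want is symmetric, and setting up the colour-merging step in the $t$-colour induction so that the surviving monochromatic-in-the-merged-colour clique is large enough to invoke the $(t-1)$-colour hypothesis.
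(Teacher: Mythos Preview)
Your proof is correct and follows the classical two-step route (asymmetric two-colour case via vertex-deletion, then colour-merging induction on $t$). Note, however, that the paper does not actually prove this theorem: it is stated as a known result with a citation to \cite{GrahamGL95} and used as a black box, so there is no in-paper proof to compare your proposal against.
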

There is a variant of Ramsey's Theorem for bipartite graphs, where we
are looking for a monochromatic biclique of a specific size in an edge
coloring of a large biclique.
\begin{theorem}[{\cite{BeinekeS76,CarnielliC1999}}]\label{theorem:bipartite:ramsey}
  For every choice of positive integers~$n$ and~$t$, there exists a
  number~$\bipRamsey(n,t)$ such that for every $t$-coloring~$f \colon
  E(K_{\bipRamsey(n,t),\bipRamsey(n,t)}) \to [t]$ of the edges of the balanced biclique
  with $\bipRamsey(n,t)$ vertices in each class, there exists a monochromatic
  balanced biclique with $n$ vertices in each class.
\end{theorem}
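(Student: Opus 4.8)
The plan is to derive this by applying the ordinary pigeonhole principle twice, which also yields an explicit (if large) bound on~$\bipRamsey(n,t)$. Set~$p := t(n-1)+1$ and put~$\bipRamsey(n,t) := n \cdot t^{p}$; one checks that~$n t^p \geq p$ for all positive~$n,t$, so the large biclique has at least~$p$ vertices on each side. Write~$A$ and~$B$ for its two partite classes and let~$f$ be the given $t$-coloring of its edges.

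The first step is to shrink~$B$ while fixing, for each of~$p$ chosen vertices of~$A$, a single color that vertex ``uses'' toward the shrunken set. Pick arbitrary distinct vertices~$a_1, \dots, a_p \in A$ and build a nested chain~$B = B_0 \supseteq B_1 \supseteq \dots \supseteq B_p$: given~$B_{i-1}$, the map~$b \mapsto f(\{a_i, b\})$ partitions~$B_{i-1}$ into at most~$t$ classes, so some color~$c_i \in [t]$ is attained on a subset of size at least~$|B_{i-1}|/t$; let~$B_i$ be one such set of size~$\lceil |B_{i-1}|/t \rceil$, all of whose vertices are joined to~$a_i$ by a color-$c_i$ edge. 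An immediate induction gives~$|B_i| \geq n t^{p-i}$, hence~$|B_p| \geq n$, and by construction every~$a_i$ is joined to every vertex of~$B_p \subseteq B_i$ by an edge of color~$c_i$.

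The second step is pigeonhole on the recorded colors~$c_1, \dots, c_p \in [t]$. Since~$p = t(n-1)+1$, some color~$c$ occurs for at least~$n$ indices; let~$I$ be such a set of~$n$ indices, put~$A' := \{a_i : i \in I\}$, and let~$B'$ be any~$n$-subset of~$B_p$. For every~$a_i \in A'$ and~$b \in B'$ we have~$b \in B_p \subseteq B_i$, so~$f(\{a_i, b\}) = c_i = c$. Thus~$(A', B')$ spans a monochromatic balanced biclique with~$n$ vertices on each side, as required.

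There is no serious obstacle here: the argument is entirely elementary, and the only care needed is the arithmetic bookkeeping that~$\bipRamsey(n,t) = n t^{t(n-1)+1}$ guarantees~$|B_p| \geq n$ while~$p = t(n-1)+1$ is precisely the pigeonhole threshold forcing a color to repeat~$n$ times. An alternative would be to quote it as a corollary of Theorem~\ref{theorem:ramsey} by two-coloring the edges of a large clique and reading off monochromatic bipartite pairs, but the direct double-pigeonhole argument above is cleaner and gives a concrete bound, so that is the route I would take.
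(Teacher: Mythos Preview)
Your proof is correct. The paper does not give its own proof of this statement: it simply quotes the bipartite Ramsey theorem from the literature (Beineke--Schwenk and Carnielli--Carmelo) and uses it as a black box. Your double-pigeonhole argument is the standard elementary proof, and the explicit bound $\bipRamsey(n,t) = n\,t^{t(n-1)+1}$ you obtain is a bonus the paper does not need but which is perfectly in order.
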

We also use a recent result of Atminas, Lozin, and Razgon \cite{AtminasLR12} showing
that a long path implies the existence of a long induced path or a
large biclique.
\begin{theorem}[{\cite[Theorem 1]{AtminasLR12}}] \label{theorem:path:ramsey}
For every choice of positive integers~$n$ and~$k$, there exists a number~$P_0(n,k)$ such that any graph with a path on~$P_0(n,k)$ vertices either contains an induced path on~$n$ vertices or a (not necessarily induced)~$K_{k,k}$ subgraph.
\end{theorem}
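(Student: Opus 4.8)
The statement is a Ramsey-type result (established by Atminas, Lozin, and Razgon~\cite{AtminasLR12}); the plan is to prove it by induction on~$n$, uniformly in~$k$, with Ramsey-style pigeonholing along the path as the driving mechanism and the bipartite Ramsey theorem (Theorem~\ref{theorem:bipartite:ramsey}), Ramsey's theorem (Theorem~\ref{theorem:ramsey}), and the standard $O(N^{2-1/k})$ bound on the number of edges of an $N$-vertex $K_{k,k}$-free graph as the main tools. First I would reduce: given~$G$ with a path $P = v_1 v_2 \cdots v_N$ on $N = P_0(n,k)$ vertices, restrict to the connected component containing~$P$ and assume~$G$ has no $K_{k,k}$ subgraph (otherwise we are done). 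It then suffices to show that if, in addition, $G$ has no induced~$P_n$, then $N$ is bounded by a function of~$n$ and~$k$. A key first observation is that $P_n$-freeness forces $\mathrm{diam}(G) \leq n - 2$, because a shortest path between any two vertices is induced; hence a BFS from~$v_1$ splits~$V(G)$ into at most $n-1$ layers, and walking along~$P$ the layer index changes by at most one per step.

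For the inductive step, assume $P_0(n-1, k)$ exists, cut~$P$ into $m \approx N / P_0(n-1,k)$ consecutive segments $S_1, \ldots, S_m$, each a subpath on $P_0(n-1,k)$ vertices, and apply the induction hypothesis inside each~$S_j$: we obtain either a $K_{k,k}$ (done) or an induced path~$Q_j$ on $n-1$ vertices; taking~$Q_j$ to be a maximal induced path inside~$S_j$, we may assume it cannot be extended within~$S_j$, and we may also assume (by applying the hypothesis to a one-sided subsegment) that~$Q_j$ sits near a prescribed end of~$S_j$. There are only finitely many isomorphism types for ``a segment together with its copy of~$Q_j$'' and for the bipartite adjacency pattern between two segments, so a pigeonhole step followed by an application of Ramsey's theorem to a colouring of the pairs of segments yields a large sub-collection of segments that are pairwise homogeneous: they induce the same graph and any two of them span the same bipartite pattern~$B$. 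One then examines the restriction of~$B$ to the copies of~$Q_j$. If this restricted pattern is dense, then over the many homogeneous pairs the edge count between copies is large, and the bipartite Ramsey theorem together with the $K_{k,k}$-free edge bound produces a~$K_{k,k}$. If it is sparse, one locates two homogeneous segments whose~$Q$'s are joined by a single ``clean'' connecting edge (endpoint to endpoint, with no other edges between the two copies) and concatenates them into an induced path on $2(n-1) \geq n$ vertices, which contains an induced~$P_n$. Vertices of very high degree on~$P$ have to be treated separately: the neighbourhood of such a vertex is a large $K_{k,k}$-free graph, hence very sparse, and this sparsity is exploited to sidestep such vertices before running the segment machinery. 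Tracking the parameters, $P_0(n,k)$ ends up as a tower-type function of~$n$ and~$k$.

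The step I expect to be the main obstacle is the gluing: even after isolating homogeneous segments, the bipartite pattern~$B$ may be such that every edge between two copies of~$Q_j$ is incident to an interior vertex of one of them, so that no single connecting edge leaves the concatenation induced; moreover the homogeneous segments returned by Ramsey need not be consecutive along~$P$, so there need not even be an edge between them to exploit. Overcoming this needs the anchoring of the~$Q_j$'s toward chosen ends of their segments (so that a path edge of~$P$ between consecutive segments can be leveraged) together with a careful quantitative argument showing that whenever no clean connecting edge exists the edges between the two copies --- which live in a bounded-size host determined by the two segments and their boundary vertices --- are numerous enough to force a~$K_{k,k}$ via the Zarankiewicz/$K_{k,k}$-free edge bound. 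Once such a gluing lemma is available, the remainder (the diameter reduction, the pigeonhole and Ramsey bookkeeping, and the high-degree case) is routine.
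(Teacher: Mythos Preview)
The paper does not prove this theorem: it is quoted verbatim from Atminas, Lozin, and Razgon~\cite{AtminasLR12} and used as a black box (see the sentence immediately preceding the statement and the citation in the theorem header). There is therefore no ``paper's own proof'' to compare against; the only thing the paper does with the statement is strengthen it slightly in Corollary~\ref{theorem:path:ramsey2} by a routine Ramsey post-processing.

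As for your proposed argument on its own merits: the overall shape (segment the long path, apply induction inside segments, Ramsey-homogenise, then either harvest a $K_{k,k}$ from density or glue two induced $P_{n-1}$'s) is plausible, but the gluing step you flag is a genuine gap, not merely a technicality. After Ramsey homogenisation the surviving segments are typically far apart on~$P$, so you have no edge of~$P$ available to connect the two induced paths; and within the bipartite pattern~$B$ between two homogeneous segments, ``sparse but nonzero'' does not force an endpoint-to-endpoint edge, so you may be unable to extend an induced $P_{n-1}$ at all. Your suggested fix---anchoring each $Q_j$ near a prescribed end of its segment---does not by itself recover an edge between non-adjacent segments, and the Zarankiewicz fallback only fires when the number of cross edges is large relative to the \emph{union} of the two segments, which need not happen just because no clean connecting edge exists. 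The actual proof in~\cite{AtminasLR12} avoids this trap by a different mechanism (it does not try to glue two shorter induced paths directly), so if you want a self-contained argument you should consult that paper rather than push the present outline further.
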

Note that in Theorem~\ref{theorem:path:ramsey}, the biclique $K_{k,k}$ appears
as a subgraph, not as an induced subgraph. However, it is easy to
strengthen Theorem~\ref{theorem:path:ramsey} in a way that we can
assume that~$K_{k,k}$ is induced.

\begin{corollary}\label{theorem:path:ramsey2}
For every choice of positive integers~$n$ and~$k$, there exists a number~$P(n,k)$ such that any graph with a path on~$P(n,k)$ vertices either contains an induced path on~$n$ vertices or a $K_{k}$ subgraph, or an induced $K_{k,k}$ subgraph.
\end{corollary}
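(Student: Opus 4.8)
The plan is to combine Theorem~\ref{theorem:path:ramsey} with two applications of the ordinary (two-colour) Ramsey Theorem, one for each partite class of the biclique that Theorem~\ref{theorem:path:ramsey} produces. Given positive integers $n$ and $k$, let $k' := R(k,2)$ be the number supplied by Theorem~\ref{theorem:ramsey}, so that in every colouring of the edges of $K_{k'}$ with two colours there is a monochromatic $k$-vertex clique. We then set $P(n,k) := P_0(n,k')$, where $P_0$ is the function from Theorem~\ref{theorem:path:ramsey}.

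Now suppose $G$ has a path on $P(n,k) = P_0(n,k')$ vertices. By Theorem~\ref{theorem:path:ramsey}, either $G$ contains an induced path on $n$ vertices, in which case we are done, or $G$ contains a (not necessarily induced) $K_{k',k'}$ subgraph; let $A$ and $B$ be its two partite classes, each of size $k'$, so that every vertex of $A$ is adjacent in $G$ to every vertex of $B$. Colour each pair of vertices inside $A$ with one of two colours, according to whether or not it is an edge of $G$. Since $|A| = k' = R(k,2)$, Theorem~\ref{theorem:ramsey} yields a set of $k$ vertices of $A$ that are either pairwise adjacent or pairwise non-adjacent. In the first case $G$ contains a $K_k$ subgraph and we are done; otherwise we obtain an independent set $A' \subseteq A$ with $|A'| = k$. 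Applying the identical argument to $B$, we either find a $K_k$ subgraph (and are done) or an independent set $B' \subseteq B$ with $|B'| = k$.

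It remains to check that $G[A' \cup B']$ is an induced $K_{k,k}$. The sets $A'$ and $B'$ are disjoint, being subsets of the disjoint partite classes $A$ and $B$; there are no edges of $G$ inside $A'$ and none inside $B'$ by the choice of these sets; and every pair consisting of a vertex of $A' \subseteq A$ and a vertex of $B' \subseteq B$ is an edge of $G$, because $A \cup B$ spans a complete bipartite subgraph of $G$. Hence the subgraph of $G$ induced by $A' \cup B'$ is exactly $K_{k,k}$, as required.

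No step here is genuinely hard; the only subtlety worth flagging is that the $K_{k,k}$ handed to us by Theorem~\ref{theorem:path:ramsey} need not be induced, so we cannot simply restrict it --- we must clean up \emph{both} partite classes, and it is precisely the use of only two colours (edge versus non-edge) inside each class that lets a single monochromatic clique play the role of either a $K_k$ subgraph or the independent set needed to make the biclique induced.
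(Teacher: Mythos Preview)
Your proof is correct and follows essentially the same approach as the paper: apply Theorem~\ref{theorem:path:ramsey} to obtain a (not necessarily induced) biclique, then use Ramsey's Theorem on each partite class to either find a $K_k$ or clean it up to an independent set, yielding an induced $K_{k,k}$. The only difference is that you take $k' = R(k,2)$, whereas the paper sets $k' = R(k,k)$; your choice is tighter and arguably more natural, but the argument is otherwise identical.
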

\begin{proof}
  Let $k'=R(k,k)$ for the function $R$ in Ramsey's Theorem (Theorem~\ref{theorem:ramsey}) and let
  $P(n,k)=P_0(n,k')$ for the function $P_0$ in
  Theorem~\ref{theorem:path:ramsey}. Then
  Theorem~\ref{theorem:path:ramsey} implies that every graph with at
  least $P(n,k)$ vertices contains either an induced path on $n$
  vertices (in which case we are done) or a (not necessarily induced)
  $K_{k',k'}$ subgraph. Let $X$ and $Y$ be the two partite classes of
  the $K_{k',k'}$ subgraph. By Ramsey's Theorem, $X$ contains either a
  clique on $k$ vertices (in which case we are done) of an independent
  set $X'\subseteq X$ on $k$ vertices. Similarly, we may assume that
  there is an independent set $Y'\subseteq Y$ of size $k$. Now $X\cup
  Y$ induces a $K_{k,k}$ subgraph.
\end{proof}

\subsection{WK[1]-hardness proofs} \label{section:wkhardness}

The complexity class \textup{WK[1]} can be defined~\cite[Section 4]{HermelinKSWW13} as the closure of \nExactSetCover under polynomial-parameter transformations, which are defined as follows.

\begin{definition}[\cite{BodlaenderTY11}] \label{definition:polyParamTransform}
Let~$\Q,\Q'\subseteq\Sigma^*\times \mathbb{N}$ be parameterized problems. A \emph{polynomial-parameter transformation} from~$\Q$ to~$\Q'$ is an algorithm that on input~$(x,k)\in\Sigma^*\times \mathbb{N}$ takes time polynomial in~$|x|+k$ and outputs an instance~$(x',k')\in\Sigma^*\times \mathbb{N}$ such that:
\begin{itemize}
\item The parameter value~$k'$ is polynomially bounded in~$k$.
\item $(x',k') \in \Q'$ if and only if~$(x,k) \in \Q$.
\end{itemize}
\end{definition}

\noindent We will use a \emph{regular} variant of the set cover problem as the starting point for our \textup{WK[1]}-hardness proofs. If~$\S$ is a set system over universe~$U$ then we will say that the pair~$(\S,U)$ has an exact cover if there is a subsystem~$\S' \subseteq \S$ such that each element of~$U$ is contained in exactly one set of~$\S'$. Recall that a set system is $r$-uniform if all sets have size exactly~$r$.

\parproblemdef{\nRegularExactSetCover}
{An integer~$r \geq 3$ and an $r$-uniform set system~$\S$ over a universe~$U$.}
{$n := |U|$.}
{Does~$(\S,U)$ have an exact cover?}

\begin{lemma} \label{lemma:regularexactsetcover:wkhard}
\nRegularExactSetCover is \textup{WK[1]}-hard.
\end{lemma}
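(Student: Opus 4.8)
The plan is to give a polynomial-parameter transformation witnessing \textup{WK[1]}-hardness of \nRegularExactSetCover, starting from the non-uniform \nExactSetCover problem, which is \textup{WK[1]}-hard by the definition of \textup{WK[1]} (it is precisely the problem whose closure under polynomial-parameter transformations defines the class~\cite{HermelinKSWW13}). So it suffices to transform an instance~$(\S, U)$ of \nExactSetCover, with parameter~$n = |U|$, into an $r$-uniform instance~$(\S', U')$ with~$|U'|$ polynomially bounded in~$n$, preserving the answer. The obstacle is that the sets in~$\S$ have wildly varying sizes; naively padding each set up to a common size~$r := |U|$ by adding fresh private padding elements fails, because each padding element would have to be covered, yet it appears in only one set, forcing that set into every exact cover — and different sets would then conflict on the real universe.

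The fix, as sketched in the paper, is a two-stage construction. First I would augment~$(\S,U)$ so that \emph{the number of sets used in any exact cover is a fixed value~$p$}, independent of which cover is chosen. This is done by adding a modest number of new elements (roughly~$O(\log |\S|)$, or at worst~$O(n)$ if one prefers a cruder count) together with a gadget that charges a controllable ``cost'' to each selected set: intuitively one introduces, for each set~$S \in \S$, an enforced contribution so that an exact cover of the modified universe exists using exactly~$p$ sets iff the original instance has an exact cover. One clean way: add a block of new elements and, for every~$S$, extend~$S$ with a binary-encoded identifier over these new elements, plus auxiliary ``complement'' sets covering the new elements not used, so that picking any subfamily that exactly covers~$U$ forces exactly~$p$ of the combined sets to be chosen. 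The key point is that the number of added elements is~$\poly(n)$ (indeed polylogarithmic in the instance size, hence certainly polynomial in~$n$ after noting $|\S|$ is part of the input and the reduction runs in polynomial time, so we may assume $|\S| \le 2^{\poly(n)}$ is not needed — a direct $O(n)$ construction also works), so the parameter stays polynomially bounded.

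Second, knowing that every exact cover uses exactly~$p$ sets, I can now make the system uniform. Add a fresh block~$W$ of new elements with~$|W| = p \cdot r$ for a suitable target size~$r$ (taking $r$ to be, say, the maximum original set size plus a constant, or simply $r := |U'_{\text{stage 1}}|$). For each set~$S$ in the stage-one system, pad~$S$ up to size exactly~$r$ by appending $r - |S|$ distinct elements drawn from~$W$; do this so that the padding elements assigned to distinct sets are disjoint, which is possible precisely because at most~$p$ sets are ever selected together and~$|W|$ is large enough — but one must also ensure the \emph{unselected} sets' padding does not obstruct coverage of~$W$. This is handled by also including, for each element~$w \in W$, a singleton-style collection of small uniform sets covering~$W$, or more carefully, a second family of uniform ``filler'' sets so that whatever subset of~$W$ is left uncovered by the~$p$ chosen padded sets can be exactly covered by filler sets, and—crucially—this works out to the same total count regardless of the choice. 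The correctness argument is then: an exact cover of the final universe restricts to an exact cover of~$U$ (the new elements are covered by the padding/filler machinery), and conversely any exact cover of~$(\S,U)$, which uses exactly~$p$ sets by stage one, lifts by choosing the matching padded sets and the appropriate fillers for the leftover of~$W$.

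I expect the main obstacle to be the bookkeeping in making stage two genuinely $r$-uniform \emph{including the filler sets} while keeping~$|W|$ and the whole construction of size~$\poly(n)$: one must simultaneously (i) reserve enough padding elements that any~$p$ selected sets get pairwise-disjoint pads, (ii) guarantee the complementary portion of~$W$ is always exactly-coverable by size-$r$ filler sets, and (iii) avoid creating spurious exact covers that do not come from covers of~$U$. The cleanest route is to decouple the pads from the fillers by splitting~$W$ into a ``pad pool'' $W_{\mathrm{pad}}$ indexed by (set, position) pairs so pads are automatically disjoint, and a separate ``filler universe'' $W_{\mathrm{fill}}$ with a perfect-matching-like family of size-$r$ sets whose selection is forced and contributes a fixed count; then the exactly-$p$ guarantee from stage one is exactly what makes the arithmetic close. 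Once the gadgets are fixed, verifying the two implications is routine, and the parameter bound~$|U'| = \poly(n)$ together with polynomial running time completes the polynomial-parameter transformation, hence the lemma.
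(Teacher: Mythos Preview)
Your two-stage outline matches the paper's, but both stages as you describe them have real gaps.

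\textbf{Stage 1.} You want to force exactly~$p$ sets in any cover via ``binary-encoded identifiers'' and ``complement sets''; this is vague and, as written, does not work: attaching an identifier to each set does not control the \emph{number} of sets selected. What is actually needed is only the weaker statement that an exact cover exists if and only if one exists using \emph{exactly}~$p$ sets. The paper's trick is very simple: add $n$ fresh elements $u'_1,\dots,u'_n$ and all interval sets $\{u'_i,\dots,u'_j\}$. Since the new elements can be exactly covered by any number of intervals between~$1$ and~$n$, a cover of~$U$ with~$k\le n$ sets can always be padded to exactly~$n+1$ sets, and conversely the old sets in any cover of~$U'$ restrict to a cover of~$U$.

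\textbf{Stage 2.} This is the more serious problem. You propose a pad pool $W_{\mathrm{pad}}$ ``indexed by (set, position) pairs so pads are automatically disjoint.'' But that makes $|W_{\mathrm{pad}}| \ge |\S'|$, and $|\S'|$ is \emph{not} bounded by any function of the parameter~$n$; the universe of the output instance would no longer be $\poly(n)$, destroying the polynomial-parameter transformation. Your earlier variant (assign each set $r-|S|$ pad elements from a pool of size~$p\cdot r$) has the same issue: you must assign pads at construction time to \emph{all} sets, not just the~$p$ eventually chosen, so either pads collide or the pool has size~$\Omega(|\S'|)$. The paper avoids this by \emph{re-using} pad elements via intervals: introduce $2n^2$ new elements and, for each $S\in\S'$, create one copy $S\cup\{u^*_i,\dots,u^*_{i+2n-|S|-1}\}$ for every valid shift~$i$. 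The universe stays at $2n+2n^2$, the sets are $2n$-uniform, and because stage~1 guarantees exactly $n+1$ sets with total size $2n(n+1)=|U^*|$, one can place the $n+1$ padding intervals consecutively without overlap; conversely any exact cover of $U^*$ uses exactly $n+1$ sets, whose restrictions to~$U'$ cover it exactly. No filler sets are needed at all.

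So the framework is right, but you are missing the two key gadgets (interval covers for count-flexibility; shifted interval pads for uniform size) that keep the new universe polynomial in~$n$.
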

\begin{proof}
Hermelin et al.~\cite{HermelinKSWW13} proved that the variant \nExactSetCover, where sets may have different sizes, is \textup{WK[1]}-hard. The \nExactSetCover problem asks for a given set system~$\S$ over a universe~$U$ whether~$(\S,U)$ has an exact cover. The parameter is~$n := |U|$. We may assume that~$n \geq 2$, as otherwise we can solve the problem in polynomial time and output a constant-size instance that gives the same answer. We transform an instance~$(\S, U, n)$ of \nExactSetCover to an equivalent instance of \nRegularExactSetCover in two steps.

First we create a system~$(\S', U')$ such that~$|U'| = 2|U|$ and~$\S$ has an exact set cover if and only if~$\S'$ has an exact set cover with exactly~$n+1$ sets. The system is built by adding~$n$ new elements~$u'_1, \ldots, u'_n$ to the universe and adding all sets~$\{ \{u'_i, \ldots, u'_j\} \mid 1 \leq i \leq j \leq n \}$ to~$\S$ to obtain the system~$\S'$.

\begin{claim}
$(\S,U)$ has an exact set cover if and only if~$(\S',U')$ has an exact cover consisting of~$n+1$ sets.
\end{claim}
\begin{claimproof}
Suppose that~$(\S,U)$ has an exact set cover~$S_1, \ldots, S_k$. Observe that~$1 \leq k \leq n$, otherwise some universe element is contained in two sets. The~$n$ new elements~$u'_1, \ldots, u'_n$ that have been added to the sequence can be exactly covered with the single set~$\{u'_1, \ldots, u'_n\}$, with~$n$ sets~$\{u'_1\}, \{u'_2\}, \ldots, \{u'_n\}$, and with any number of sets between one and~$n$, since all consecutive intervals of these new elements have been added to~$\S'$. Hence we may augment the exact cover for~$(\S,U)$ with an exact cover of the new elements with~$(n+1) - k$ sets. We obtain an exact cover for~$(\S',U')$ with~$n+1$ sets.

In the reverse direction, observe that all sets in~$\S'$ that contain an element from the original set~$U$, also exist in~$\S$. Hence from an exact cover of~$(\S',U')$ we can select the sets containing elements from~$U$ to obtain an exact cover for~$(\S,U)$.
\end{claimproof}

From the system~$(\S', U')$ we then construct another system~$(\S^*, U^*)$, as follows. Form~$U^*$ by adding~$2n^2$ new elements~$\{u^*_1, \ldots, u^*_{2n^2}\}$ to~$U'$. Define~$\S^*$ as~$\{S \cup \{u^*_i, \ldots, u^*_{i + (2n - |S|) - 1}\} \mid S \in \S' \wedge 1 \leq i \leq 2n^2 - (2n - |S|) + 1 \}$, which is~$2n$-uniform.

\begin{claim}
$(\S',U')$ has an exact cover consisting of~$n+1$ sets if and only if~$(\S^*, U^*)$ has an exact cover.
\end{claim}
\begin{claimproof}
Suppose that~$(\S',U')$ has an exact cover consisting of~$n+1$ sets~$S_1, \ldots, S_{n+1}$. Observe that each set in~$\S'$ has size at most~$n$. For~$i \in [n+1]$ define~$t_i := \sum _{j=1}^{i-1} (2n-|S_j|)$, implying~$t_1 = 0$. For each~$i \in [n+1]$, define~$S'_i := S_i \cup \{u^*_{t_{i-1} + 1}, \ldots, u^*_{t_{i-1} + 2n - |S_i|}\}$. Then~$S'_i \in \S^*$ by our definition of~$\S^*$. Hence the sets~$S'_1, \ldots, S'_{n+1}$ are contained in~$\S^*$, each have size~$2n$, and are pairwise disjoint. As they contain~$2n(n+1) = 2n^2 + 2n = |U^*|$ elements in total, they form an exact set cover of~$(\S^*,U^*)$.

For the reverse direction, observe that as each set in~$\S^*$ has size~$2n$ while the universe size is~$2n^2 + 2n$, any exact set cover in~$(\S^*, U^*)$ consists of exactly~$n+1$ sets~$S'_1, \ldots, S'_{n+1}$. The intersection of each set~$S'_i$ with~$U'$ is non-empty and contained in~$\S'$, by definition of~$\S^*$. Hence~$S'_1 \cap U', \ldots, S'_{n+1} \cap U'$ is an exact cover of~$(\S',U')$ consisting of~$n+1$ sets.
\end{claimproof}

Together the two claims show that the system~$(\S^*,U^*)$ has an exact set cover if and only if the input~$(\S,U)$ has one. System~$\S^*$ is $r$-uniform for~$r := 2n \geq 4$. The universe size~$n^* := |U^*|$ equals~$2n + 2n^2$, which is polynomial in the parameter of the input instance. Since the construction can be performed in polynomial time, it forms a valid polynomial-parameter transformation from \nExactSetCover to \nRegularExactSetCover, which concludes the proof by the \textup{WK[1]}-hardness of the former problem.
\end{proof}

\section{Polynomial-time solvable versus NP-complete}
\label{sec:polyn-time-solv}
In this section, we prove Theorem~\ref{theorem:intro:polysubgraph},
characterizing the herditary classes $\F$ for which \FSubgraphTest can
be solved in randomized polynomial time.  In
Section~\ref{sec:poly-upper-bound}, we use the randomized matching
algorithm of Mulmuley, Vazirani, and Vazirani \cite{MulmuleyVV87}
(Theorem~\ref{prop:wmatching}) to solve \FSubgraphTest in randomized
polynomial-time for matching-splittable hereditary families.  In
Section~\ref{sec:poly-lower-bound}, we prove
Theorem~\ref{theorem:ramsey:matchingsplittable} characterizing
matching-splittable herditary classes. In
Section~\ref{subsection:poly:theorem}, we simply put together these
results to complete the proof of
Theorem~\ref{theorem:intro:polysubgraph}.

\subsection{Upper bound}
\label{sec:poly-upper-bound} \label{section:polynomialvsnpcomplete:upperbounds}

Our algorithm builds on the following algebraic matching procedure.

\begin{proposition}[Mulmuley, Vazirani, and Vazirani \cite{MulmuleyVV87}]\label{prop:wmatching}
  There exists a randomized algorithm with false negatives that, given
  a multigraph $G$ with nonnegative integer weights and a target weight $w_0$, checks in time polynomial in $|V(G)|+|E(G)|$
  and $w_0$ whether there exists a perfect matching in $G$ of weight
  exactly $w_0$.
\end{proposition}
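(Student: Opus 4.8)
\textbf{Proof plan for Proposition~\ref{prop:wmatching}.}
The plan is to reduce the exact-weight perfect matching question to the determinant/Pfaffian machinery underlying the classical Tutte--Lov\'asz randomized matching test, tracking weights with an auxiliary indeterminate. First I would build the \emph{skew-symmetric Tutte matrix}~$A$ of~$G$: introduce an independent variable~$x_{uv}$ for each edge~$\{u,v\}\in E(G)$, and set~$A_{uv} = x_{uv}$, $A_{vu} = -x_{uv}$ for~$u<v$, and~$A_{uv}=0$ if~$\{u,v\}\notin E(G)$. Tutte's theorem states that~$G$ has a perfect matching if and only if the polynomial~$\det A$ is not identically zero. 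To refine this to weights, I would substitute~$x_{uv} \mapsto y_{uv}\cdot z^{w(\{u,v\})}$, where~$z$ is a single new indeterminate and the~$y_{uv}$ remain independent variables. Expanding~$\det A$ as a polynomial in~$z$, each perfect matching~$M$ contributes a monomial whose~$z$-exponent is exactly~$\sum_{e\in M} w(e)$, and monomials coming from distinct perfect matchings of the same total weight cannot cancel because their~$y$-parts are distinct squared monomials (up to sign), which are preserved by the standard sign-cancellation argument of Tutte. Hence the coefficient of~$z^{w_0}$ in~$\det A$ is a nonzero polynomial in the~$y_{uv}$ if and only if~$G$ has a perfect matching of weight exactly~$w_0$.

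The second step is to make this test efficient and randomized. I would evaluate the~$y_{uv}$ at independent uniformly random values from a field (or a ring~$\mathbb{Z}_p$ for a suitable prime~$p$) of size~$\mathrm{poly}(|V(G)|)$, reducing the question to computing a univariate polynomial in~$z$, namely~$\det A(z)$ for the resulting matrix whose entries are of the form~$c_{uv} z^{w(\{u,v\})}$. One can obtain the coefficient of~$z^{w_0}$ by evaluating~$\det A(z)$ at~$\deg_z(\det A)+1 \le |V(G)|\cdot w_{\max}+1$ distinct points~$z=\alpha_j$ (each a polynomial-time determinant computation over the field) and interpolating; since~$w_0$ and all weights are given and polynomially bounded in the input size together with~$w_0$ itself is part of the input size budget, this is polynomial in~$|V(G)|+|E(G)|+w_0$. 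By the Schwartz--Zippel lemma, if the true coefficient polynomial in the~$y$'s is nonzero, the random evaluation is nonzero with probability at least~$1/2$ (choosing the field large enough); if it is identically zero, the evaluation is always zero. This yields exactly the promised one-sided error: never a false positive, false negatives with probability at most~$1/2$.

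The main obstacle — and the place where the argument needs care rather than mere bookkeeping — is justifying the \emph{non-cancellation} claim: that contributions to the coefficient of~$z^{w_0}$ in~$\det A$ from genuine perfect matchings survive, while contributions from non-matching permutations (those with cycles of length~$\ge 3$ in their cycle decomposition) cancel out just as in the unweighted Tutte argument. The key point is that the weighting by~$z$ is \emph{uniform across the edges of any fixed cycle} and is insensitive to traversal direction, so the classical pairing of a permutation with the one obtained by reversing one odd-length nontrivial cycle still maps terms to terms with opposite sign and identical monomial (in both the~$y$'s and~$z$), hence the cancellation is unaffected by the substitution. Simultaneously, two distinct perfect matchings~$M\ne M'$ with~$w(M)=w(M')=w_0$ yield the monomials~$\prod_{e\in M} y_e^2$ and~$\prod_{e\in M'} y_e^2$ (up to sign), which are distinct as monomials in the~$y$'s, so they cannot cancel against each other in the coefficient of~$z^{w_0}$. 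I would write this out by adapting, essentially verbatim, the standard proof of Tutte's theorem via the Lov\'asz/Schwartz--Zippel method, inserting the extra variable~$z$ and noting at each step that it behaves as an inert multiplicative tag.

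\medskip
\noindent\emph{Remark.} An equivalent route, which is in fact closer to the original paper of Mulmuley, Vazirani, and Vazirani~\cite{MulmuleyVV87}, is to reduce exact-weight perfect matching to \emph{detecting a unique minimum-weight perfect matching} via the Isolation Lemma and then to extract it from the adjugate of the Tutte matrix with small random weights; but for the purpose of this proposition the determinant-with-an-extra-variable argument above is the most direct and is what I would present.
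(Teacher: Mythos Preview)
First, note that the paper does not give its own proof of this proposition; it is quoted as a black-box result of Mulmuley, Vazirani, and Vazirani and then used to derive Theorem~\ref{theorem:coloredmatching}.

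Your proposal has a genuine gap in the ``only if'' direction of the key equivalence. You claim that the coefficient of $z^{w_0}$ in $\det A$ (after the substitution $x_{uv}\mapsto y_{uv}z^{w(\{u,v\})}$) is a nonzero polynomial in the $y_{uv}$ if and only if $G$ has a perfect matching of weight $w_0$. The ``if'' direction is essentially fine: the matching permutation for $M$ contributes the monomial $\prod_{e\in M}y_e^{2}\cdot z^{2w(M)}$ (note the factor of $2$, since both $A_{uv}$ and $A_{vu}$ occur), and this squared monomial is not cancelled by any other permutation. But the ``only if'' direction fails. Your cancellation argument handles only \emph{odd} cycles of length $\geq 3$; permutations built from even cycles of length $\geq 4$ are \emph{not} cancelled by cycle reversal (for an even cycle the product of entries of a skew-symmetric matrix is unchanged under reversal). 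These surviving terms are precisely the cross terms in the identity $\det A=\mathrm{Pf}(A)^2$. Concretely, take $C_4$ on vertices $1,2,3,4$ with $w(\{1,2\})=w(\{3,4\})=1$ and $w(\{2,3\})=w(\{1,4\})=2$. The two perfect matchings have weights $2$ and $4$, so no matching has weight $3$; yet $\mathrm{Pf}(A)=y_{12}y_{34}z^{2}+y_{14}y_{23}z^{4}$, hence $[z^{6}]\det A = 2\,y_{12}y_{23}y_{34}y_{14}\neq 0$. Your test would therefore produce \emph{false positives}, which is the wrong kind of one-sided error for the proposition.

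The fix is to replace the determinant by the Pfaffian throughout: $\mathrm{Pf}(A)=\sum_M \mathrm{sgn}(M)\prod_{e\in M}y_e\,z^{w(M)}$, so $[z^{w_0}]\mathrm{Pf}(A)$ is a nonzero polynomial in the $y$'s if and only if some perfect matching has weight exactly $w_0$ (distinct matchings yield distinct $y$-monomials, so no cancellation is possible). The Pfaffian of a numerical skew-symmetric matrix is computable in polynomial time, so your evaluate-and-interpolate strategy in $z$ goes through unchanged; after discarding edges of weight exceeding $w_0$ one has $\deg_z\mathrm{Pf}(A)\le \tfrac{1}{2}|V(G)|\cdot w_0$, keeping the running time polynomial in $|V(G)|+|E(G)|+w_0$. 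With this correction your argument is complete and is one of the standard routes to the proposition; your remark about the alternative isolation-lemma approach of~\cite{MulmuleyVV87} is also accurate.
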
 

Note that originally the result is not stated for multigraphs, but it is easy to modify the proof accordingly. Alternatively, one can 
get rid of multiple edges by subdividing each edge twice and setting the weights appropriately. Let us prove Theorem~\ref{theorem:coloredmatching} using this result.
\restatecoloredmatching*
\begin{proof}
  We may assume that $C=[c]$ for some integer $i\ge 1$. Let $n=|V(G)|$
  and let $s=\sum_{i=1}^{c}f(i)$ be the size of the matching we are
  looking for. Let us set the weight of each edge of color $i$ to
  $n^i$. Additionally, let us introduce a set $X$ of $n-2s$ vertices
  and let us connect every original vertex to every vertex of $X$ with
  and edge of weight 0. Let $G'$ be the resulting graph. We claim that
  $G$ has a matching with the required number of colors if and only
  $G'$ has a perfect matching of weight exactly
  $w_0=\sum_{i=1}^cf(i)n^i=n^{O(c)}$ (we may assume that $f(i)\le
  n$ otherwise there is no solution). The existence of such a perfect
  matching can be tested with the algorithm of
  Proposition~\ref{prop:wmatching} in time polynomial in $|E(G)|$ and
  $n^{O(c)}$.

  Suppose that $M$ is a matching of $G$ with $f(i)$ edges of color $i$
  for every $i\in C$. Then the total weight of $M$ in $G'$ is exactly
  $w_0$. Let us extend $M$ the following way: if vertex $v\in V(G)$ is
  not covered by the matching $M$, then let us add the edge $uv$ (of weight 0) for
  some $u\in X$ to the matching. As exactly $n-2s=|X|$ vertices of $G$
  are not covered by $M$, we may select a distinct $u\in X$ for each
  such edge, resulting in a perfect matching $M'$ of $G'$ having
  weight exactly $w_0$.

  Conversely, suppose that $M'$ is a perfect matching of $G'$ having
  weight exactly $w_0$. Ignoring the edges of weight 0, we get a
  matching $M$ of $G$. It is easy to see that the only way $M'$ can
  have weight exactly $w_0$, is if $M$ has exactly $f(i)$ edges of color
  $i$ for every $1\le i \le c$. Indeed, interpreting $w_0$ as a number
  in base-$n$ notation, this is the only way the weight of the
  edges in $M'$ add up to $w_0$ (note that no ``overflow'' can occur,
  as clearly there are at most $n/2$ edges of weight $n^i$ in
  $M'$).
\end{proof}

Equipped with Theorem~\ref{theorem:coloredmatching}, we prove
Theorem~\ref{theorem:fsubgraphtest:alg}.  \restatesubgraphtestalg*
\begin{proof}
  Given a graph $H\in \F$ and arbitrary graph $G$, we proceed the
  following way. By assumption, there is a set $S\subseteq V(H)$ of
  size at most $c_\F$ such that every component of $H- S$ has
  at most two vertices. We can find such a set $S$ by brute force in
  time $|V(H)|^{O(c_\F)}\le |V(G)|^{O(c_\F)}$, which is polynomial in the input size.  Let
  $s=2^{|S|}$ and let us fix an arbitrary bijection $\iota:[s]\to
  2^{S}$ defining a numbering of the subsets of $S$.  The
  single-vertex components of $H- S$ can be classified
  according to the neighborhood of the vertex in $S$: let $n^1_{i}$ be
  the number of components of $H- S$ where the neighborhood of
  the vertex is exactly $\iota(i)$. For the two-vertex components, we
  need to take into account the neighborhood of both vertices: for
  $1\le i \le j \le s$, let $n^2_{i,j}$ be the number of two-vertex
  components where the two vertices of the component have
  neighborhoods $\iota(i)$ and $\iota(j)$ in $S$, respectively.
This way, we classify the two-vertex components into $s+\binom{s}{2}$ different types.

  The algorithm finds a subgraph model of $H$ in $G$ by considering
  all $S$-partial subgraph model $\phi_0$; this adds a factor of at
  most $|V(G)|^{|S|}\le |V(G)|^{c_\F}$ to the running time. For a
  fixed $\phi_0$, we need to find images for the components of
  $H- S$. We construct an edge-colored multigraph $G'$ the
  following way.  Let $S^*=\phi_0(S)$. For every $v\in V(G)\setminus
  S^*$, we introduce vertices $v$ and $v'$ into $G'$ and add an
  edge $\{v,v'\}$ of color $\iota^{-1}(X)$ for every $\emptyset \subseteq X
  \subseteq \phi^{-1}_0(N_G(v)\cap S^*))$. If $v_1$ and $v_2$ are adjacent
  vertices in $G- S^*$, then we add parallel edges between $v_1$
  and $v_2$ the following way. For every $\emptyset \subseteq X
  \subseteq \phi^{-1}_0(N_G(v_1)\cap S^*)$ and $\emptyset \subseteq Y
  \subseteq \phi^{-1}_0(N_G(v_2)\cap S^*)$, we let $i=\iota^{-1}(X)$,
  $j=\iota^{-1}(Y)$, and add an edge $\{v_1,v_2\}$ of color $(i,j)$ (if $i\le
  j$) or $(j,i)$ (if $j\le i$). Observe that set $C$ of colors we have
  used on the edges of $G$ has size $s+s+\binom{s}{2}=2^{O(c_{\F})}$.

  We can use the algorithm of Theorem~\ref{theorem:coloredmatching} to
  decide if there is a matching $M$ containing exactly $n^1_{i}$ edges
  of color $i$ and exactly $n^2_{i,j}$ edges of color $(i,j)$. Note
  that the number of colors is a fixed constant depending only on
  $c_{\F}$ and the size of the graph $G'$ is polynomial in $|V(G)|$ and the number of colors (as this bounds the number of parallel edges between two vertices). Therefore, the algorithm of
  Theorem~\ref{theorem:coloredmatching} runs in polynomial time for fixed $c_{\F}$. We
  claim that $\phi_0$ can be extended to a full subgraph model if and
  only if such a matching $M$ exists.

  Suppose that $\phi$ is a full subgraph model extending $\phi_0$. If
  vertex $u$ is a single-vertex component of $H- S$ and $u'$
  is a neighbor of $u$ in $S$, then $\phi_0(u')$ is a neighbor of
  $\phi(u)$, or in other words, $N_H(u)\cap S\subseteq
  \phi^{-1}_0(N_{G}(\phi(u))\cap S^*)$. Therefore, if $v=\phi(u)$,
  then by construction an edge $\{v,v'\}$ of color $\iota^{-1}(N_H(v)\cap
  S)$ exists; let us add it to the matching $M$. If $u_1$ and $u_2$
  form a two-vertex component of $H- S$ and $v_1=\phi(u_1)$,
  $v_2=\phi(u_2)$, then $X=N_H(u_1)\cap S$ is a subset of
  $\phi^{-1}_0(N_{G}(v_1)\cap S^*)$ and $Y=N_H(u_2)\cap S$ is a subset
  of $\phi^{-1}_0(N_{G}(v_2)\cap S^*)$. Therefore, by construction,
  there is an edge of color $(\iota^{-1}(X),\iota^{-1}(Y))$ (or
  $(\iota^{-1}(Y),\iota^{-1}(X))$) between $v_1$ and $v_2$; let us add
  it to the matching $M$. Observe that $M$ is indeed a matching
  and contains exactly $n^1_i$ edges of color $i$ and exactly
  $n^2_{i,j}$ edges of color $(i,j)$.

  For the reverse direction, suppose that $M$ contains the required
  number of edges from each color. In particular, there are $n^1_i$
  edges of color $i$. Each edge of color $i$ is of the form $\{v,v'\}$ for
  some $v\in V(G)\setminus S^*$; let $V_i$ contain every such
  $v$. The fact that the edge $\{v,v'\}$ of color $i$ exists implies that
  $\iota(i)$ is a subset of $\phi^{-1}_0(N_G(v)\cap S^*)$. Therefore,
  if $u$ is a single-vertex component of $H- S$ with
  $N_H(u)=\iota(i)$, then $u$ can be mapped to any vertex $v\in V_i$,
  as $\phi_0(\iota(i))\subseteq N_G(v)\cap S^*$. Let us extend
  $\phi_0$ by mapping the $n^1_i$ such single-vertex components to
  $V_i$. Similarly, suppose that $u_1$ and $u_2$ form a two-vertex
  component of $H- S$ where the neighborhoods of the two
  vertices in $S$ are $\iota(i)$ and $\iota(j)$, respectively. Then
  $\{u_1,u_2\}$ can be mapped to any edge of $M$ with color
  $(i,j)$. This way, we can extend $\phi_0$ to a full subgraph model
  $\phi$: as $M$ is a matching, the images of distinct components of
  $H- S$ are disjoint.
\end{proof}

\subsection{Lower bound}
\label{sec:poly-lower-bound}
We prove Theorem~\ref{theorem:ramsey:matchingsplittable} characterizing
matching-splittable graphs by a relative simple application of
Ramsey's Theorem. (Recall that $P_3$ is the path on 3 vertices.)

\restateramseymatching*
\begin{proof}
Assume for contradiction that none of the four cases holds. Then there is an integer $Q\ge 1$ such that $\F$ does not contain any of $\Gclique{Q}$, $\Gbiclique{Q}$, $Q\cdot K_3$, or $Q\cdot P_3$.

Let~$r$ be the Ramsey number~$R(4Q, 2^9)$. Let~$H$ be a graph in~$\F$ that is not $3r$-matching-splittable; by the assumption that $\F$ is not matching-splittable, such an $H$ exists. Observe that if the maximum number of vertex-disjoint connected three-vertex graphs that can be packed in~$H$ is~$r$, then it is $3r$-matching-splittable: by maximality of the packing, the union of the vertices in the~$r$ graphs gives the required set of size $3r$. Therefore, if $H$ is not $3r$-matching-splittable, then there is a packing of~$r$ vertex-disjoint connected three-vertex subgraphs in~$H$. Let~$H_1, \ldots, H_r$ be such subgraphs and fix an arbitrary ordering~$v_{i,1}, v_{i,2}, v_{i,3}$ of the three vertices in each subgraph~$H_i$. If we consider two subgraphs~$H_i$ and~$H_{i'}$, then the adjacencies between the three vertices of~$H_i$ and the three vertices of~$H_{i'}$ are characterized exactly by the $3 \times 3$ adjacency matrix whose rows correspond to~$V(H_i)$ and whose columns correspond to~$V(H_j)$, with a one in cells corresponding to adjacent vertex pairs and a zero in the remaining cells. Since there are~$2^{3 \cdot 3} = 2^9$ different incidence matrices, the number of distinct ways in which two subgraphs~$H_i, H_j$ can be adjacent (under the chosen vertex ordering) is~$2^9$. 

We use these $3\times 3$ matrices to create an auxiliary graph~$F$ as follows: it is an $r$-vertex complete graph whose vertices are in correspondence with the subgraphs~$H_1, \ldots, H_r$. Number the~$2^9$ possible $3 \times 3$ adjacency matrices arbitrarily from~$1$ to~$2^9$ and give an edge~$\{i,j\} \in E(F)$ the color corresponding to the adjacency between~$H_i$ and~$H_j$. Since~$F$ has~$r = R(4Q, 2^9)$ vertices and its edges have been colored with~$2^9$ distinct colors, there is a monochromatic complete subgraph on~$4Q$ vertices. This subgraph of~$F$ corresponds to~$4Q$ subgraphs in the list~$H_1, \ldots, H_r$ that pairwise all have the same adjacencies to each other. Since there are only two different connected three-vertex graphs ($K_3$ and~$P_3$) this implies there are at least~$2Q$ pairwise vertex-disjoint, isomorphic subgraphs of~$H$ that all have the same adjacency to each other; denote the indices of these subgraphs by~$\I$. Suppose that the adjacency-type corresponding to the color of the monochromatic subgraph has at least one edge, i.e., that there are indices~$a,b \in [3]$ such that~$v_{i,a}$ is adjacent to~$v_{i',b}$ for all~$i \neq i' \in \I$. If this occurs for~$a = b$, then the set~$\{ v_{i,a} \mid a \in \I \}$ is a clique of size~$|\I| \geq 2Q$ in~$H$; but then, since~$\F$ is hereditary, family~\F contains the $\Gclique{2Q}$, which contradicts our choice of~$Q$. Now suppose that $a\neq b$ holds. Let $\I_1$ and $\I_2$ be two disjoint subsets of $\I$, each of size $Q$, and consider the vertex set~$\{v_{i,a} \mid i \in \I_1\} \cup \{v_{i,b} \mid i \in \I_2\}$. Since there are no edges between~$v_{i,a}$ and~$v_{i', a}$ for~$i,i' \in \I$ (by assumption that the previous case~$a=b$ does not apply), and similarly there are no edges between~$v_{i,b}$ and~$v_{i',b}$ for~$i,i' \in \I$, while all edges between~$v_{i,a}$ and~$v_{i',b}$ are present for~$i \in \I_1$ and~$i' \in \I_2$, the defined vertex set of size~$2Q$ induces a balanced biclique in~$H$. But then~\F contains $\Gbiclique{Q}$ (since~\F is hereditary), again contradicting our choice of~$Q$.

We may therefore conclude that vertices in different subgraphs~$H_i, H_{i'}$ with~$i \neq i' \in \I$ are not adjacent to each other. But then the vertices in the graphs~$H_i$ for~$i \in \I$ induce~$2Q$ disjoint copies of the same three-vertex graph, that is, either $Q\cdot K_3$ or $Q\cdot P_3$ is an induced subgraph of $H$, contradicting the choice of~$Q$.
\end{proof}

\subsection{Proof of the dichotomy for polynomial-time solvability of subgraph
  problems} \label{subsection:poly:theorem} By combining the algorithm of
Section~\ref{sec:poly-upper-bound}  and the
characterization proved in Section~\ref{sec:poly-lower-bound}, the proof of Theorem~\ref{theorem:intro:polysubgraph} follows.

\restateintropolysubgraph*
\begin{proof}
  Let $\F$ be a hereditary class of graphs. If $\F$ is
  matching-splittable, then Theorem~\ref{theorem:fsubgraphtest:alg}
  shows that \FSubgraphTest can be solved in randomized polynomial
  time.  If $\F$ is not matching-splittable, then $\F$ is the superset
  of one of the four classes listed in
  Theorem~\ref{theorem:ramsey:matchingsplittable}. If $\F$ is a superset
  of $\Fclique$, then the NP-hard \textsc{Clique} problem can be
  reduced to \FSubgraphTest and hence \FSubgraphTest is also
  NP-hard. Similarly, if $\F$ is a superset of $\Fbiclique$, then
  \textsc{Biclique} can be reduced to \FSubgraphTest. Suppose now that
  $\F$ contains $n\cdot K_3$ for every $n\ge 1$. Then
  $K_3$-\Packing, which is NP-hard by
  Theorem~\ref{theorem:fpacking:pvsnp}, can be reduced to
  \FSubgraphTest: an instance $(G,K_3,t)$ of
  $K_3$-\Packing can be expressed as an instance $(G,t\cdot
  K_3)$ of \FSubgraphTest. The situation is similar if $\F$
  contains $n\cdot P_3$ for every $n\ge 1$; note that
  $P_3$-\Packing is also NP-hard by
  Theorem~\ref{theorem:fpacking:pvsnp}. Therefore, we have shown that
  if $\F$ is not matching-splittable, then \FSubgraphTest is NP-hard,
  completing the proof Theorem~\ref{theorem:intro:polysubgraph}.
\end{proof}

\section{Computing representative sets for subgraph detection}
\label{sec:comp-repr-sets}

In this section, we present the main technology behind the
kernelization results of the paper: a marking algorithm that can be used
to find a representative set of small and thin bipartite subgraphs
that is sufficient for the solution. This marking algorithm is used in
the positive side of both Theorem~\ref{theorem:intro:packing} for
small/thin classes (Section~\ref{subsection:packing:upperbounds}) and
in the positive side of Theorem~\ref{theorem:intro:turingsubgraph} for
splittable classes
(Section~\ref{subsection:subgraph:turing:upperbounds}). The
application for  Theorem~\ref{theorem:intro:turingsubgraph} is more
general, as it involves the more general splittable proprety.  We
present the results in a way suitable for this more general
application. While this is more general than what is need for
Theorem~\ref{theorem:intro:packing}, it makes no sense to present two
versions of essentially the same algorithm.

\subsection{Finding thin bipartite subgraphs}
The definition of small/thin and splittable graphs involves components
of bounded size and thin bipartite graphs. Therefore, the very least,
we should be able to find such components efficiently. If a component
has at most $a$ vertices, then we can try all $n^a$ possible images in
$G$ by brute force. If a component is $b$-thin bipartite, then we can
try all $n^b$ possible images for the partite class containing at
most $b$ vertices and then we have to solve a bipartite matching
problem to find the location of the vertices in the larger class. The
follow lemma presents this reduction to matching in the slightly more
general context when the images of a set of vertices, including every
vertex in the smaller class, are already fixed.

\begin{lemma} \label{lemma:compute:extension:partial:specifiedbipartite:subgraph}
Let~$G$ be a graph, let~$H$ be a bipartite graph with partite sets~$A = \{a_1, \ldots, a_\alpha\}$ and~$B = \{b_1, \ldots, b_\beta\}$, and let~$\phi$ be a $P$-partial subgraph model of~$H$ with~$A \subseteq P \subseteq V(H)$. One can compute a full subgraph model of~$H$ in~$G$ that extends~$\phi$, or determine that no such model exists, in polynomial time by computing a maximum matching in a bipartite graph of order at most~$2|V(G)|$.
\end{lemma}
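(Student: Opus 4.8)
The plan is to reduce the extension problem to a single bipartite matching computation. The key observation is that since $A \subseteq P$, the model $\phi$ already fixes the images of all vertices in the small partite class $A$; what remains is to choose images in $V(G)$ for the vertices of $B \setminus P$ (the vertices of $B \cap P$ already have their images fixed by $\phi$, and we only need to verify consistency). For each vertex $b_j \in B$, its required neighborhood in $H$ is a subset $N_H(b_j) \subseteq A$, so any valid image $\phi(b_j) = w$ must be a common neighbor in $G$ of the already-fixed set $\phi(N_H(b_j)) \subseteq V(G)$, and must avoid all images used elsewhere in $\phi$ (injectivity). Conversely, once we have chosen, for each $b_j \in B \setminus P$, a distinct vertex $w_j \in V(G) \setminus \phi(P)$ that is a common neighbor of $\phi(N_H(b_j))$, Observation~\ref{observation:merge:models} (or a direct check) tells us that gluing these choices onto $\phi$ yields a full subgraph model of $H$: all $H$-edges are between $A$ and $B$, every such edge is realized either because it lies inside $P$ (where $\phi$ already guarantees it) or because we selected $w_j$ in the common neighborhood of the image of $b_j$'s required neighbors.

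First I would build a bipartite auxiliary graph $G'$. On one side put the vertices of $B \setminus P$; on the other side put the vertices of $V(G) \setminus \phi(P)$, i.e.\ the candidate image vertices not yet used by $\phi$. Add an edge between $b_j \in B \setminus P$ and $w \in V(G) \setminus \phi(P)$ exactly when $w$ is adjacent in $G$ to every vertex of $\phi(N_H(b_j))$ (when $N_H(b_j) = \emptyset$ this condition is vacuous, so $b_j$ is joined to all candidates). Before doing this, I would first do an $O(|E(H)|)$ consistency check on the part of the model already fixed: for every edge $\{a_i, b_j\} \in E(H)$ with $b_j \in P$, verify $\{\phi(a_i), \phi(b_j)\} \in E(G)$; if any such check fails, report that no extension exists. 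Then $G'$ has at most $|B| + |V(G)| \le 2|V(G)|$ vertices (using $\alpha + \beta = |V(H)| \le |V(G)|$ when a model can exist; otherwise we again report failure), and we compute a maximum matching in $G'$.

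The correctness claim is then: $\phi$ extends to a full subgraph model of $H$ in $G$ if and only if the consistency check passes and $G'$ has a matching saturating $B \setminus P$, which (since the $B$-side has size exactly $|B \setminus P|$) is equivalent to the maximum matching having size $|B \setminus P|$. Given a saturating matching $M$, define $\phi'(b_j)$ to be the $M$-partner of $b_j$ for $b_j \in B \setminus P$ and $\phi'(v) = \phi(v)$ otherwise; injectivity holds because $M$ matches distinct $b_j$ to distinct $w$'s all outside $\phi(P)$, and every $H$-edge is realized by the adjacency condition built into $G'$ together with the passed consistency check. Conversely, a full model extending $\phi$ restricts to a set of distinct images for $B \setminus P$ satisfying the adjacency constraints, which is precisely a matching saturating $B \setminus P$ in $G'$. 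All steps run in polynomial time, and the only nontrivial computation is the bipartite maximum matching. I do not anticipate a genuine obstacle here; the one point requiring mild care is the bookkeeping that the relevant adjacency constraint for $b_j$ involves only $\phi(N_H(b_j))$ and not the images of the not-yet-placed vertices of $B$ — which is exactly why the bipartiteness of $H$ makes the reduction to a single matching instance work, rather than something more complicated.
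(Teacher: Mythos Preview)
Your proposal is correct and takes essentially the same approach as the paper: both build an auxiliary bipartite graph with one side consisting of the unplaced $B$-vertices and the other side consisting of $V(G)\setminus\phi(P)$, with edges encoding the common-neighbor constraint, and both reduce the extension question to finding a matching saturating the $B$-side. The only cosmetic differences are that the paper groups the vertices of $B\setminus P$ by their $H$-neighborhood type (replacing each group by a set of false twins, which changes nothing), and your explicit consistency check on edges inside $P$ is redundant since $\phi$ is already assumed to be a valid $P$-partial subgraph model.
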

\begin{proof}
If~$|V(H)| > |V(G)|$ then obviously there is no full model of~$H$ in~$G$ and we output \no. Otherwise we proceed as follows. Let~$\N := \{N_H(b) \mid b \in B \setminus P\}$ be the set of different neighborhoods in~$A$ that have to be realized for the vertices that are not yet specified in the partial model. Since~$H$ is bipartite we have~$A' \subseteq A$ for all~$A' \in \N$. Number the sets in~$\N$ as~$\N = \{A'_1, \ldots, A'_t\}$. For each~$i \in [t]$ let~$n(i)$ be the number of vertices in~$B \setminus P$ whose neighborhood in~$H$ is exactly~$A'_i$. As~$A \subseteq P$, it follows that there is an extension of~$\phi$ to a full model if and only if we can assign to each set~$A'_i$ a set~$B'_i \subseteq \bigcap _{a \in A'_i} N_G(\phi(a)) \setminus \phi(P)$ of size~$n(i)$ such that the sets~$B'_i$ are pairwise disjoint. The crucial insight is that this condition can be checked by computing a matching in a related bipartite graph~$\hat{G}$ that is defined as follows.

The partite set~$\hat{B}$ of~$\hat{G}$ consists of the vertices~$V(G) \setminus \phi(P)$. The other partite set~$\hat{A}$ contains, for each set~$A'_i$ with~$i \in [t]$, exactly~$n(i)$ vertices~$w_{i,1}, \ldots, w_{i, n(i)}$ that are false twins in~$\hat{G}$. Each vertex~$w_{i,j}$ for~$j \in [n(i)]$ is adjacent in~$\hat{G}$ to~$\bigcap _{a \in A'_i} N_G(\phi(a)) \setminus \phi(P)$. Since~$|V(H)| \leq |V(G)|$, the order of~$\hat{G}$ is at most~$2|V(G)|$.

\begin{claim}
If there is a maximum matching in~$\hat{G}$ that saturates~$\hat{A}$, then~$\phi$ can be extended to a full $H$-subgraph model in~$G$: for each~$i \in [t]$, let~$v_{i,1}, \ldots, v_{i, n(i)} \subseteq V(H) \setminus P$ be the vertices whose $H$-neighborhood is~$A'_i$ and set~$\phi(v_{i,j})$ to the matching partner of~$w_{i,j}$. If a maximum matching in~$\hat{G}$ has size less than~$|\hat{A}|$ then there is no extension of~$\phi$ to a full subgraph model of~$H$ in~$G$.
\end{claim}
\begin{claimproof}
Suppose that~$M$ is a matching in~$\hat{G}$ saturating~$\hat{A}$. Consider the extension of~$\phi$ suggested in the claim. Then a vertex~$v_{i,j}$ is mapped to the matching partner of~$w_{i,j}$, which is a neighbor of~$w_{i,j}$ in~$\hat{G}$. As the neighbors of~$w_{i,j}$ in~$\hat{G}$ are exactly the vertices in~$\bigcap _{a \in A'_i} N_G(\phi(a)) \setminus \phi(P)$, this maps~$v_{i,j}$ to a vertex of~$G$ not yet used in the model that is adjacent to the $\phi$-model of all of~$v_{i,j}$'s neighbors in~$H$. As the matching ensures that all images assigned in this way are distinct, we obtain a valid $H$-subgraph model in~$G$.

We prove the second statement by contraposition: if there is an extension of~$\phi$ to a full subgraph model of~$H$, then there is a matching in~$\hat{G}$ saturating~$\hat{A}$. Suppose that~$\phi^*$ is a full $H$-subgraph model that extends~$\phi$. For each~$i \in [t]$ let~$v_{i,1}, \ldots, v_{i,n(i)}$ be as in the statement of the claim. Match~$w_{i,j} \in \hat{A}$ to~$\phi(v_{i,j}) \in \hat{B}$ for all~$i$ and~$j$ to obtain a matching saturating~$\hat{A}$, which has size~$|\hat{A}|$. This proves the claim.
\end{claimproof}

The claim shows how to construct a $H$-model that extends~$\phi$, if one exists. Since the construction of~$\hat{G}$ can be done in time polynomial in~$|V(G)|$ and bipartite matching is polynomial-time, for example using the Hopcroft-Karp algorithm~\cite{HopcroftK73}, this concludes the proof of Lemma~\ref{lemma:compute:extension:partial:specifiedbipartite:subgraph}.
\end{proof}

We now present the algorithm for finding a $b$-thin bipartite
graph. The following lemma formulates this in a more general way
suitable for use in $(a,b,c,d)$-splittable graphs: there is a set $D$
such that $H-D$ is a thin bipartite graph, and all but a bounded number
of vertices in $H-D$ are universal to $D$.
\begin{lemma} \label{lemma:compute:extension:partial:separated:subgraph}
Let~$G$ be a graph, let~$H$ be a graph with a (possibly empty) vertex set~$D \subseteq V(G)$ such that~$H' := H - D$ is a bipartite graph with partite sets~$A = \{a_1, \ldots, a_\alpha\}$ and~$B = \{b_1, \ldots, b_\beta\}$, and let~$\phi_0$ be a $P_0$-partial subgraph model of~$H$ in~$G$ with~$D \subseteq P_0$. Let~$B_N \subseteq B$ contain the vertices whose closed neighborhood in~$H$ is not universal to~$D$. One can compute a full subgraph model of~$H$ in~$G$ that extends~$\phi_0$, or determine that no such model exists, in time~$|V(G)|^{\Oh(1 + |(A \cup B_N) \setminus P_0|)}$.
\end{lemma}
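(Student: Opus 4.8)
The plan is a simple branch‑and‑match: I would guess the images, under the model we are trying to build, of the (few) vertices of~$H$ that are not already fixed by~$\phi_0$ and that may fail to be universal to~$D$, and then observe that what is left is a pure bipartite matching problem that is handled by Lemma~\ref{lemma:compute:extension:partial:specifiedbipartite:subgraph}.

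Concretely, if~$|V(H)| > |V(G)|$ I output \no. Otherwise I put~$P_1 := P_0 \cup A \cup B_N$ and enumerate all functions~$\phi_1 \colon P_1 \to V(G)$ with~$\phi_1|_{P_0} = \phi_0$; there are at most~$|V(G)|^{|(A \cup B_N) \setminus P_0|}$ of these. For each~$\phi_1$ I check in polynomial time whether it is an injective~$P_1$‑partial subgraph model of~$H$, i.e.\ whether it is injective and maps every edge of~$H$ with both endpoints in~$P_1$ to an edge of~$G$, discarding~$\phi_1$ otherwise. Since~$D \subseteq P_0 \subseteq P_1$ we have~$A \cup D \subseteq P_1$. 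Let~$\tilde H$ be the graph obtained from~$H$ by deleting all edges with both endpoints in~$A \cup D$. Because~$H - D$ is bipartite with classes~$A,B$ and is an \emph{induced} subgraph of~$H$, the set~$B$ is independent in~$\tilde H$ and~$A \cup D$ is independent in~$\tilde H$, so~$\tilde H$ is bipartite with classes~$A \cup D$ and~$B$, retaining exactly the edges of~$H$ between~$B$ and~$A \cup D$. A surviving~$\phi_1$ is in particular a valid~$P_1$‑partial subgraph model of~$\tilde H$ with the class~$A \cup D$ contained in its domain, so I invoke Lemma~\ref{lemma:compute:extension:partial:specifiedbipartite:subgraph} on~$\tilde H$,~$G$ and~$\phi_1$; if it returns a full~$\tilde H$‑subgraph model~$\psi$ I output~$\psi$, and if every~$\phi_1$ fails I output \no.

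For correctness I claim that a full~$H$‑subgraph model extending~$\phi_0$ exists if and only if the procedure outputs one. If~$\phi$ is such a model, then~$\phi|_{P_1}$ is among the enumerated~$\phi_1$, it survives the filter, and since~$V(\tilde H) = V(H)$ and~$\tilde H$ has a subset of the edges of~$H$, the map~$\phi$ itself is a full~$\tilde H$‑model extending~$\phi_1$; hence Lemma~\ref{lemma:compute:extension:partial:specifiedbipartite:subgraph} returns some full~$\tilde H$‑model for that~$\phi_1$. Conversely, suppose Lemma~\ref{lemma:compute:extension:partial:specifiedbipartite:subgraph} returns~$\psi$ for a surviving~$\phi_1$. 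Then~$\psi$ is an injection on~$V(\tilde H) = V(H)$, it extends~$\phi_1$ and hence~$\phi_0$, and it realizes every edge of~$\tilde H$. The edges of~$H$ split into those inside~$A \cup D$, those between~$B$ and~$A \cup D$, and those inside~$B$; the last kind is empty as~$B$ is independent in the induced subgraph~$H - D$; the middle kind is exactly~$E(\tilde H)$ and is realized by~$\psi$; and for the first kind we use that~$\psi$ agrees with~$\phi_1$ on~$A \cup D \subseteq P_1$ and that the filter guarantees~$\phi_1$ realizes all edges of~$H$ inside~$P_1$. So~$\psi$ is a full~$H$‑model extending~$\phi_0$.

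The running time is~$|V(G)|^{|(A \cup B_N) \setminus P_0|}$ choices of~$\phi_1$, each handled by a polynomial‑time validity check and one call to Lemma~\ref{lemma:compute:extension:partial:specifiedbipartite:subgraph} (a maximum matching in a bipartite graph of order at most~$2|V(G)|$), for a total of~$|V(G)|^{\Oh(1 + |(A \cup B_N) \setminus P_0|)}$. The only delicate point is the bookkeeping in the reduction to Lemma~\ref{lemma:compute:extension:partial:specifiedbipartite:subgraph}: one must pass the partial model on the \emph{full} domain~$P_1$ rather than only on~$A \cup D$, so that the returned model still extends~$\phi_0$ on any~$B$‑vertices already lying in~$P_0$, and one must make sure that replacing~$H$ by~$\tilde H$ discards only edges that are already certified by~$\phi_1$ — which is precisely why the edge check is performed on all of~$P_1$ before the matching is invoked.
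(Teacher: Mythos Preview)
Your proof is correct and follows essentially the same approach as the paper: branch over all images of~$(A \cup B_N)\setminus P_0$, verify the partial model, and reduce the remaining extension problem to the bipartite matching of Lemma~\ref{lemma:compute:extension:partial:specifiedbipartite:subgraph}. The only cosmetic difference is in how the adjacency constraints to~$D$ are enforced in that reduction: the paper restricts the host graph to~$G[\phi'(V(H')) \cup \bigcap_{v\in D} N_G(\phi(v))]$ and works with~$H'$, whereas you keep the full host graph~$G$ and instead fold~$D$ into the ``$A$-side'' of a bipartite pattern~$\tilde H$; both encodings force the unassigned~$B$-vertices into the common~$G$-neighbourhood of~$\phi_0(D)$ and are equivalent.
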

\begin{proof}
We consider all~$\Oh(|V(G)|^{|A \cup B_N \setminus P_0|})$ possible ways to map the unspecified vertices~$(A \cup B_N) \setminus P_0$ to distinct vertices of~$V(G) \setminus P_0$. For each resulting $P'$-partial $H$-subgraph model~$\phi'$ we test whether it is valid, i.e., whether for all edges~$\{a,b\} \in E(H)$ with~$a, b \in P'$ we have~$\{\phi'(a), \phi'(b)\} \in E(G)$. If this is the case, then we want to invoke Lemma~\ref{lemma:compute:extension:partial:specifiedbipartite:subgraph} to determine whether~$\phi'$ can be extended to a full model of~$H$ that extends~$\phi'$ (and therefore~$\phi$). The crucial observation is that such an extension of~$\phi'$ exists if and only if the partial $H'$-subgraph model~$\phi'|_{V(H')}$ can be extended to a full model of~$H'$ in the graph~$G[\phi'(V(H')) \cup \bigcap _{v \in D} N_G(\phi(v))]$. This follows from the fact that all vertices of~$H'$ that are not assigned an image by~$\phi'$ are contained in~$B \setminus B_N$, are therefore universal in~$H$ to~$D$, and must therefore be mapped to members of~$\bigcap _{v \in D} N_G(\phi(v))$ by any extension. Hence by invoking Lemma~\ref{lemma:compute:extension:partial:specifiedbipartite:subgraph} we can test whether a particular choice of~$\phi'$ can be extended to a full model of~$H$ in~$G$ that extends~$\phi_0$. If any valid extension~$\phi'$ of~$\phi_0$ results in a full model then the first such model is given as the output. If Lemma~\ref{lemma:compute:extension:partial:specifiedbipartite:subgraph} never returns an extension then, since we try all possibilities for mapping~$(A \cup B_N) \setminus P_0$ to the free vertices~$V(G) \setminus \phi_0(P_0)$ in~$G$, no full model extending~$\phi_0$ exists and we output \no. The time bound follows from the fact that we try~$\Oh(|V(G)|^{|(A \cup B_N) \setminus P_0|})$ possibilities that can each be tested for feasibility in polynomial time.
\end{proof}

\subsection{Representative sets for small separators into unbalanced bipartite graphs}
\label{sec:repr-sets-thin}

This section contains the main technical part of the kernelization
algorithms: the marking algorithm for thin bipartite graphs. We
present it in a way suitable for $(a,b,c,d)$-splittable graphs: there
is a set $D$ such that $H-D$ is a thin bipartite graph, and all but a
bounded number of vertices in $H-D$ are universal to $D$.
\begin{lemma} \label{lemma:compute:representative:set:subgraph}
There is an algorithm with the following specifications. 
The input is a graph~$G$, a graph~$H$ with a (possibly empty) vertex set~$D \subseteq V(H)$ such that~$H' := H - D$ is a connected bipartite graph with partite sets~$A = \{a_1, \ldots, a_\alpha\} \neq \emptyset$ and~$B = \{b_1, \ldots, b_\beta\}$, an integer~$\ell$, and a partial subgraph model~$\phi_0$ of~$H$ with domain~$P_0 \supseteq D$. Define~$B_U := \{b \in B \mid \bigcap_{v \in N_{H'}[b]} N_H(v) \supseteq D\}$ and let~$B_N := B \setminus B_U$. Let~$h := |V(H)|$. The output is a set~$X \subseteq V(G)$ with the following properties.
\begin{enumerate}[(P1)]
	\item $|X| \leq (h^2 \ell + h^3)^{\mu+1}(1 + 2^{\alpha}(\ell + h))$, where~$\mu := |(A \cup B_N) \setminus P_0| + \sum _{a \in A} \max (0, \alpha - |N_H(a) \cap B_U \cap P_0|)$. \label{rep:set:subgr:size}
	\item For any vertex set~$Z \subseteq V(G)$ of size at most~$\ell$, if~$G - Z$ contains a full subgraph model~$\phi_1$ of~$H$ that extends~$\phi_0$, then~$G[X] - Z$ contains a full subgraph model~$\phi'_1$ of~$H$ with~$\phi'_1(v) = \phi_0(v)$ for all~$v \in D$.\label{rep:set:subgr:preservesmodel}
\end{enumerate}
The running time of the algorithm is polynomial in~$2^\alpha + |V(G)|^\mu$.
\end{lemma}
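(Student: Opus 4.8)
**The plan is to prove Lemma~\ref{lemma:compute:representative:set:subgraph} by induction on the measure~$\mu$, using a branching/marking scheme that mirrors the biclique case (Lemma~\ref{lemma:bicliquemarking:specialcase:general}) but is adapted to general $b$-thin bipartite graphs.** The base case is when~$\mu = 0$: then~$A \cup B_N \subseteq P_0$ and every~$a \in A$ already has all~$\alpha$ of its $H'$-neighbors in~$B_U$ fixed by~$\phi_0$ (here I am using that~$H'$ is bipartite so~$N_{H'}(a) \subseteq B$, and the max-term being zero forces~$N_H(a) \cap B_U \cap P_0 = B \cap B_U \supseteq N_{H'}(a)$... actually I should be careful: the condition is that the number of $H'$-neighbors of~$a$ not yet placed is zero). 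In this situation all that remains unplaced are vertices of~$B_U$, which are universal to~$D$; so we can simply mark, for each of the at most~$2^\alpha$ neighborhood-types~$A' \subseteq A$ realized by vertices of~$B_U$, a set of~$\ell + h$ common neighbors of~$\phi_0(A')$ (or fewer, if there are not that many), together with all of~$\phi_0(P_0)$. This gives the~$2^{\alpha}(\ell+h) + h$ term in (P\ref{rep:set:subgr:size}). Correctness follows from a matching/Hall-type argument: any~$Z$ of size~$\le \ell$ kills at most~$\ell$ marked common neighbors of each type, and since we marked~$\ell + h \geq \ell + n(i)$ of each, enough survive to realize all the~$B_U$ vertices of that type disjointly — this is essentially Lemma~\ref{lemma:compute:extension:partial:specifiedbipartite:subgraph} applied inside~$G[X]$.

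**For the inductive step I branch to reduce~$\mu$.** There are two kinds of progress. (i) If some vertex~$v \in (A \cup B_N) \setminus P_0$ is not yet placed, I try all~$|V(G)|$ choices of image for~$v$; each choice extends~$\phi_0$ to a new~$\phi_0'$ with strictly smaller first summand of~$\mu$, and I recurse, taking the union of the returned sets. (ii) Otherwise~$A \cup B_N \subseteq P_0$ but some~$a \in A$ still has~$\alpha - |N_H(a) \cap B_U \cap P_0| > 0$, i.e.\ some $H'$-neighbor~$b \in B_U$ of~$a$ is unplaced. I want to bound how many vertices of~$G$ can serve as the image of such a~$b$ and branch over those. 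The key structural fact — the analogue of the "$b$-thin biclique inside" step flagged in the introduction — is: look at the common neighbors~$T := \bigcap_{a' \in A \cap P_0, b \in N_{H'}(a')\ ?}$... more precisely, since every placed vertex of~$A$ already constrains the image of~$b$ to lie in~$\bigcap_{a' \in N_H(b) \cap A \cap P_0} N_G(\phi_0(a'))$, either this set~$T$ is small ($< \ell + h$, say), in which case I mark all of~$T$ and recurse once for each choice of which element of~$T$ is the image of~$b$ (adding~$b$ to the domain, dropping~$\mu$); or~$T$ is large, in which case I greedily pack~$\ell + 1$ pairwise-disjoint (except on the already-fixed part) completions of~$H'$ using vertices of~$T$ for the~$B_U$-vertices and free vertices of~$G$ for the rest, and — if fewer than~$\ell+1$ such completions exist — the at most~$\ell \cdot h$ vertices in their union form a hitting set, so I mark that union and recurse over choices of which vertex realizes the "first missing" unplaced~$B_U$-neighbor. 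This is where I expect to have to be most careful, since unlike the pure biclique case the $b$-thin bipartite graph has up to~$2^b$ neighborhood types in its large side, and I must make sure the greedy packing and the recursion bookkeeping correctly track the measure~$\mu$; the factor~$h^2\ell + h^3$ in the size bound is exactly the product of (branching width~$\leq \ell \cdot h$ or~$\ell + h$) times the base~$(3k^2)$-style growth, and the~$\mu + 1$ exponent comes from the recursion depth being at most~$\mu$.

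**The correctness invariant I carry through the induction is property (P\ref{rep:set:subgr:preservesmodel}):** given any~$Z$ with~$|Z| \le \ell$ and a full model~$\phi_1 \supseteq \phi_0$ in~$G - Z$, I identify the branch consistent with~$\phi_1$ — either the correct image of the unplaced vertex~$v$, or (in the packing case) one of the~$\ell+1$ packed completions that~$Z$ misses, or the correct element of the small set~$T$ — apply the inductive guarantee to that branch's set~$X_{\text{branch}} \subseteq X$ to get a model~$\phi_1'$ in~$G[X_{\text{branch}}] - Z$ agreeing with~$\phi_0$ on~$D$, and conclude since~$X_{\text{branch}} \subseteq X$. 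At the base case, the matching argument described above closes the induction. The size bound (P\ref{rep:set:subgr:size}) is then a routine computation: at each of the~$\le \mu$ recursion levels the number of branches is at most~$h^2\ell + h^3$ (a crude common upper bound for~$|V(G)|$-branching restricted to the relevant~$T$, for~$\ell\cdot h$-branching, and for~$\ell+h$-branching after the restriction makes the candidate set small), and at the leaves we pay~$1 + 2^\alpha(\ell+h)$, giving~$(h^2\ell+h^3)^{\mu+1}(1 + 2^\alpha(\ell+h))$ after also accounting for the~$+h$ for carrying~$\phi_0(P_0)$ and for an arithmetic slack that lets~$(h^2\ell+h^3)^{\mu}\cdot(\text{branching})$ be absorbed into~$(h^2\ell+h^3)^{\mu+1}$. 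The running time~$\poly(2^\alpha + |V(G)|^\mu)$ follows because the recursion tree has~$|V(G)|^{O(\mu)}$ nodes and each node does polynomial work plus, at the leaves, the~$2^\alpha$-size enumeration of neighborhood types.

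**The main obstacle** will be the structural lemma underlying case (ii): proving that when the common-neighbor set~$T$ available for an unplaced~$B_U$-neighbor is large, one can genuinely commit to marking only a polynomially-bounded greedy packing without losing a model — this requires that the "contamination" does not propagate, i.e.\ that replacing one completion of~$H'$ by another does not disturb the already-placed~$A$-vertices or the~$D$-vertices. Because every unplaced vertex in this case lies in~$B_U$ (universal to~$D$) and~$A$ is already fully placed, the replacement is local to the~$B$-side and the separation argument (Observation~\ref{observation:merge:models}, with separator~$A \cup D \cup P_0$) goes through; but verifying that the measure~$\mu$ strictly decreases in every branch — in particular that adding one~$B_U$-vertex to the domain decreases~$\sum_{a} \max(0, \alpha - |N_H(a)\cap B_U \cap P_0|)$ by at least one — needs the observation that this~$b$ is an $H'$-neighbor of the chosen~$a$, so placing it strictly increases~$|N_H(a) \cap B_U \cap P_0|$ for that~$a$, hence strictly drops~$\mu$.
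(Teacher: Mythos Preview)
Your inductive framework on~$\mu$ and your terminal step (mark $\phi_0(P_0)$ plus $\ell+h$ common $G$-neighbors of $\phi_0(A'\cup D)$ for each $A'\subseteq A\cap P_0$) are right and match the paper. The genuine gap is in your case~(i): branching over \emph{all} $|V(G)|$ images for an unplaced $v\in(A\cup B_N)\setminus P_0$ gives the recurrence $|X|\le |V(G)|\cdot f(\mu-1)$, which solves to $|X|=|V(G)|^{\Oh(\mu)}\cdot(\text{base})$. That is the running-time bound, not the size bound; (P\ref{rep:set:subgr:size}) requires $|X|$ to be independent of $|V(G)|$. Your parenthetical that this branching is ``restricted to the relevant $T$'' has no justification in case~(i): when $P_0=D$ and $v$ is, say, the first $A$-vertex to be placed, no placed neighbor constrains its image at all. (A secondary issue: in case~(ii), $|N_H(a)\cap B_U\cap P_0|<\alpha$ does \emph{not} imply $a$ has an unplaced $B_U$-neighbor---$a$ might simply have fewer than $\alpha$ such neighbors in total---so your branch there need not exist or decrease $\mu$.)

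The paper's fix is to reorganize the case split, not around whether $A\cup B_N$ is fully placed, but around whether some $a\in A\setminus P_0$ already has $\ge\alpha$ of its $B_U$-neighbors in $P_0$. If yes (Case~1), let $R$ be $\alpha$ such placed neighbors: the common $G$-neighborhood of $\phi_0(R\cup D)$ either has $\ge\ell+h$ vertices---then it contains a $K_{\alpha,\ell+h}$ biclique, which alone suffices to host all of $H'$ (since $|A|=\alpha$), so mark it and stop---or it has $<\ell+h$ vertices, and one branches over those as candidates for $\phi(a)$. If no (Case~2), apply your greedy-packing idea, but to the entire skeleton $\hat H:=H-\hat B_U$ where $\hat B_U:=\{b\in B_U:N_H(b)\subseteq P_0\}$: a maximal packing of full $\hat H$-models extending $\phi_0|_{V(\hat H)}$ either has $\ge\ell+h$ members (terminal: mark the packing together with the per-$A'$ common neighbors) or its union $S$ has size $\le h(\ell+h)$ and hits every extension, so one branches over pairs $(v,v')\in(V(\hat H)\setminus P_0)\times S$, giving width at most $h\cdot h(\ell+h)=h^2\ell+h^3$. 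The Case-2 hypothesis (every unplaced $a$ has $<\alpha$ placed $B_U$-neighbors) is precisely what guarantees $\mu$ strictly drops in every branch, including when $v\in B_U\setminus\hat B_U$: such $v$ has an unplaced $A$-neighbor $a$, and placing $v$ raises $|N_H(a)\cap B_U\cap P_0|$ from below $\alpha$. In short, your greedy packing from~(ii) is the right engine, but it must \emph{replace} the unbounded branching of~(i), not follow it.
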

\begin{proof}
The algorithm is recursive and branches in a bounded number of directions to explore different ways in which the partial subgraph model~$\phi$ can be extended to a full model. The depth of the recursion is bounded by the measure~$\mu$, which will decrease in each recursive call. Throughout the proof we use the convention (see Section~\ref{section:preliminaries:graphs}) that if~$D = \emptyset$, then the common neighborhood of~$D$ in~$G$ equals~$V(G)$. We consider several cases. These correspond to the cases in the proof of Lemma~\ref{lemma:bicliquemarking:specialcase:general}, except that the first case in the lemma is not present here because it collapses into Case 2.b.

\textbf{Case 1: There is a vertex~$a \in A \setminus P_0$ with~$|N_H(a) \cap P_0 \cap B_U| \geq \alpha$.} In this case there is a vertex~$a$ whose model is not yet specified by~$\phi_0$, but for which many neighbors in the $B$-side of the bipartite graph~$H'$ already have been assigned an image. If there are many options for valid images for~$a$ (Case 1.a) then this implies the existence of a large biclique, which we can mark to preserve models of the bipartite graph~$H'$. If there are few options for valid images for~$a$ (Case 1.b) then we can branch into all possible options, recursively calculate a representative set, and output the union of these sets. Formally, we make another distinction. Let~$R := N_H(a) \cap P_0 \cap B_U$.

\textbf{Case 1.a: The set~$\phi(R \cup D)$ has at least~$\ell + h$ common neighbors in~$G$.} If the number of common neighbors~$\bigcap _{v \in R \cup D} N_G(\phi(v))$ is at least~$\ell + h$, then there is a large biclique in the common neighborhood of~$\phi_0(D)$ in~$G$: one side of the biclique is~$\phi(R)$ and the other side is formed by the common neighbors of~$\phi(R \cup D)$. 

Let~$X$ contain~$\phi(P_0)$ together with~$\ell + h$ common $G$-neighbors of~$\phi(R \cup D)$. It is easy to see that~\ref{rep:set:subgr:size} is satisfied. To see that~\ref{rep:set:subgr:preservesmodel} holds, observe the following. If~$\phi_1$ is a full $H$-model in~$G - Z$ that extends~$\phi_0$, then~$\phi(R \cup D)$ is disjoint from~$Z$. Build a full $H$-model~$\phi'_1$ as follows. Let~$\phi'_1(d) = \phi_0(d)$ for all~$d \in D$. Map~$A$ to~$\alpha$ arbitrary vertices in~$R$. Map~$B$ to vertices in the set~$\phi(R \cup D) \setminus Z$, which has size at least~$h - \alpha \geq \beta$. It is easy to verify that~$\phi'_1$ is a full $H$-model that coincides with~$\phi_0$ on the vertices of~$D$. In particular, all vertices~$\phi_1(V(H'))$ lie in the common neighborhood of~$\phi_1(D) = \phi_0(D)$, while~$G[\phi_1(H')]$ is a biclique with partite sets of sizes~$\alpha$ and~$\beta$.

\textbf{Case 1.b: The set~$\phi(R \cup D)$ has fewer than~$\ell + h$ common neighbors in~$G$.} Observe that, to extend~$\phi_0$ to a full subgraph model~$\phi_1$ of~$H$, the image of vertex~$a$ has to be adjacent in~$G$ to the images of all vertices in~$N_H(a) \cap P_0$. Hence~$\phi_1(a) \in \bigcap _{v \in R} N_G(\phi_0(v))$ for all valid extensions~$\phi_1$ of~$\phi_0$. Since~$|N_H(a) \cap P_0 \cap B_U| = |R| \geq \alpha \geq 1$ by the assumption of Case 1, the vertex~$a$ has a neighbor in~$H$ in the set~$B_U$. Since~$B_U$ contains the vertices whose \emph{closed neighborhood} in~$H$ is universal to~$D$, this implies that~$a$ is universal to~$D$ in~$H$. Hence in any valid extension~$\phi_1$ of~$\phi_0$ to a full $H$-subgraph model in~$G$, the image of~$a$ has to be contained in~$\bigcap _{v \in D} N_G(\phi_0(v))$. Hence~$\phi_1(a) \in T := \bigcap _{v \in R \cup D} N_G(\phi_0(v))$. By the assumption of Case 1.b we have~$|T| < \ell + h$ and hence there are only few options for valid images of~$a$. We now do the following.

Initialize~$X$ as an empty set. For each~$a' \in T$ we extend the partial subgraph model~$\phi_0$ to~$\phi_0^{a \mapsto a'}$ by setting~$\phi_0^{a \mapsto a'}(a) := a'$. We then recursively invoke the algorithm for the model~$\phi_0^{a \mapsto a'}$ and add the resulting set~$X^{a \mapsto a'}$ to~$X$. Since each such extension assigns strictly more $A$-vertices an image than~$\phi_0$, the measure~$\mu^{a \mapsto a'}$ of each recursive call is strictly smaller than the measure~$\mu$ for the current invocation. We can therefore bound the size of~$X$ as follows:
\begin{align*}
|X| &\leq \sum _{a' \in T} |X^{a \mapsto a'}| \leq \sum _{a' \in T} (h^2 \ell + h^3)^{(\mu-1)+1}(1 + 2^{\alpha}(\ell + h)) \leq |T| (h^2 \ell + h^3)^{(\mu-1)+1}(1 + 2^{\alpha}(\ell + h)) \\
&\leq (\ell + h) (h^2 \ell + h^3)^{(\mu - 1)+1}(1 + 2^{\alpha}(\ell + h)) \leq (h^2 \ell + h^3)^{\mu+1}(1 + 2^{\alpha}(\ell + h)),
\end{align*}
which satisfies~\ref{rep:set:subgr:size}. To see that~\ref{rep:set:subgr:preservesmodel} is satisfied, consider an arbitrary set~$Z \subseteq V(G)$ of size at most~$\ell$. If~$\phi_1$ is a full $H$-model in~$G - Z$ that extends~$\phi_0$, then by the observation above we have~$a' := \phi_1(a) \in T$. Hence~$\phi_1$ is a full $H$-model in~$G - Z$ that extends~$\phi_0^{a \mapsto a'}$. By the guarantee of the recursive call this implies that~$G[X^{a \mapsto a'}] - Z$ contains a full $H$-subgraph model~$\phi'_1$ with~$\phi'_1(v) = \phi_0(v)$ for all~$v \in D$. As~$X^{a \mapsto a'} \subseteq X$ this model also exists in~$G[X] - Z$, proving~\ref{rep:set:subgr:preservesmodel}.

\textbf{Case 2: All~$a \in A \setminus P_0$ satisfy~$|N_H(a) \cap P_0 \cap B_U| < \alpha$.} Let~$\hat{B}_U := \{b \in B_U \mid N_H(b) \subseteq P_0\}$. Intuitively, we can easily identify a small set of vertices to add to~$X$ in order to preserve images for the vertices of~$\hat{B}_U$, because the images of all $H$-neighbors of vertices in~$\hat{B}_U$ are already specified. On the other hand, one can verify that specifying the image of a vertex in~$V(H) \setminus (P_0 \cup \hat{B}_U)$ decreases the measure~$\mu$, which allows us to deal with those vertices by branching. We proceed as follows.

Let~$\hat{H} := H - \hat{B}_U$ and let~$\hat{\phi}_0$ be the partial model~$\phi_0$ restricted to the vertices of~$\hat{H}$; we interpret it as a partial~$\hat{H}$-subgraph model. By repeated application of Lemma~\ref{lemma:compute:extension:partial:separated:subgraph} we can compute a maximal packing~$\Phi$ of partial~$H - \hat{B}_U$-subgraph models in~$G$ that extend~$\hat{\phi}_0$, such that for distinct~$\phi, \phi' \in \Phi$ we have~$\phi(V(\hat{H}) \setminus P_0) \cap \phi'(V(\hat{H}) \setminus P_0) = \emptyset$. Informally, this says that the models in~$\Phi$ extend~$\hat{\phi}_0$ onto disjoint sets of vertices. (If~$V(\hat{H}) \subseteq P_0$ then there is nothing left to extend, and we define~$\Phi$ as a set containing~$\ell + h$ copies of~$\hat{\phi}_0$ to trigger Case 2.b.) The size of the computed packing~$\Phi$ of full~$\hat{H}$-subgraph models is the basis for the last case distinction.

\textbf{Case 2.a: $|\Phi| < \ell + h$.} If the packing~$\Phi$ contains less than~$\ell + h$ models of~$\hat{H}$ in~$G$ that extend~$\hat{\phi}_0$ to disjoint sets of vertices, then this is a \emph{maximal} packing and we can use it to guide a branching step into a bounded number of relevant extensions of~$\phi_0$. Define the set~$S := \bigcup _{\phi \in \Phi} \phi(V(\hat{H}) \setminus P_0)$. Since~$|\Phi| < \ell + h$, the size of~$S$ is bounded by~$(\ell + h)\cdot h$. Every $H$-subgraph model~$\phi_1$ that extends~$\phi_0$ contains a $\hat{H}$-subgraph model~$\hat{\phi}_1$ that extends~$\hat{\phi}_0$. By the maximality of~$\Phi$, such a model~$\hat{\phi}_1$ maps at least one vertex to~$S$. Since the extension~$\hat{\phi}_1$ maps all vertices of~$P_0$ in the same way as~$\hat{\phi}_0$, we find that all~$H$-subgraph models~$\phi_1$ that extend~$\phi_0$ map at least one vertex of~$V(\hat{H}) \setminus P_0$ to a vertex in~$S$; we will branch on all possibilities, as follows. For each vertex~$v \in V(\hat{H}) \setminus P_0$, for each vertex~$v' \in S$, we extend~$\phi_0$ to~$\phi_0^{v \mapsto v'}$ by setting~$\phi_0^{v \mapsto v'}(v) := v'$. If this results in a valid partial $H$-subgraph model (i.e., if~$v'$ is adjacent in~$G$ to all vertices in~$\phi_0(N_H(v) \cap P_0)$) then we recursively invoke the algorithm to compute a set~$X^{v \mapsto v'}$ that preserves $H$-subgraph models that extend~$\phi_0^{v \mapsto v'}$. We let~$X$ be the union of the resulting sets.

To bound the size of~$X$ we have to consider the measure associated with the recursive calls. Consider a recursive call for~$\phi_0^{v \mapsto v'}$, let~$P_0^{v \mapsto v'}$ be the domain of the partial model, and let~$\mu^{v \mapsto v'}$ be the measure of the recursive call. If~$v \in (A \cup B_N) \setminus P_0$ then~$|(A \cup B_N) \setminus P_0| > |(A \cup B_N) \setminus P_0^{v \mapsto v'}|$ since~$v \in ((A \cup B_N) \cap P_0^{v \mapsto v'}) \setminus P_0$. It then easily follows from the definition of measure that~$\mu > \mu^{v \mapsto v'}$. Let us now argue that in the other cases, the measure of the recursive calls is also strictly smaller than~$\mu$. 

If~$v \in (V(\hat{H}) \setminus P_0) \setminus (A \cup B_N)$ then, as~$\hat{H}$ does not contain the vertices~$\hat{B}$ and~$D \subseteq P_0$, it follows that~$v \in B_U \setminus \hat{B}$. By the definition of~$\hat{B}$ this implies that~$N_H(v)$ is not fully contained in~$P_0$. Since~$D \subseteq P_0$ by the precondition to the lemma, while~$v$ is contained in the~$B$-side of the bipartite graph~$H' = H - D$, this implies that there is a vertex~$a \in (A \cap N_H(v)) \setminus P_0$, i.e., that at least one $H$-neighbor of~$v$ is contained in the $A$-side of the bipartite graph and no image is specified for it by~$\phi_0$. By the precondition to Case 2 this implies that~$|N_H(a) \cap P_0 \cap B_U| < \alpha$. Let~$P_0^{v \mapsto v'}$ be the domain of~$\phi_0^{v \mapsto v'}$. Then the previous argumentation shows that~$\max (0, \alpha - |N_H(a)\cap B_U \cap P_0|) > \max (0, \alpha - |N_H(a) \cap B_U \cap P^{v \mapsto v'}_0|)$, which shows that the term for~$a$ in the measure of the recursive call to~$\phi^{v \mapsto v'}_0$ is strictly smaller than for the current call. It is easy to verify that the other terms in the measure do not increase. Hence the measure associated to all recursive calls is less than~$\mu$, guaranteeing that each recursive call outputs a set~$X^{v \mapsto v'}$ of size at most~$(h^2 \ell + h^3)^{(\mu-1)+1}(1 + 2^{\alpha}(\ell + h))$. This allows us to bound the size of~$X$ as follows:
\begin{align*}
|X| &\leq \sum _{v \in V(\hat{H}) \setminus P_0} \sum _{v' \in S} |X^{v \mapsto v'}| \leq \sum _{v \in V(\hat{H}) \setminus P_0} \sum _{v' \in S} (h^2 \ell + h^3)^{(\mu-1)+1}(1 + 2^{\alpha}(\ell + h)) \\
&\leq |V(\hat{H}) \setminus P_0| \cdot |S| \cdot (h^2 \ell + h^3)^{\mu}(1 + 2^{\alpha}(\ell + h)) \\
&\leq [h] \cdot [(\ell + h)h] \cdot (h^2 \ell + h^3)^{\mu}(1 + 2^{\alpha}(\ell + h)) \\
&\leq (h^2 \ell + h^3) (h^2 \ell + h^3)^{\mu}(1 + 2^{\alpha}(\ell + h)) \leq (h^2 \ell + h^3)^{\mu+1}(1 + 2^{\alpha}(\ell + h)).
\end{align*}
%
%
Hence~\ref{rep:set:subgr:size} is satisfied. To see that~\ref{rep:set:subgr:preservesmodel} is also satisfied, consider some set~$Z \subseteq V(G)$ and assume that~$G - Z$ contains a full $H$-subgraph model~$\phi_1$ that extends~$\phi_0$. As argued above,~$\phi_1$ maps at least one vertex of~$v \in V(\hat{H}) \setminus P_0$ to a vertex~$v' \in S$. Hence these is a choice of~$v$ and~$v'$ such that~$\phi_1$ is a full $H$-subgraph model in~$G - Z$ that extends a model~$\phi_0^{v \mapsto v'}$ on which we recursed. By the correctness guarantee for the recursive call, this implies that there is a full $H$-subgraph model~$\phi'_1$ in~$G[X^{v \mapsto v'}] - Z$ such that~$\phi'_1(u) = \phi_0(u)$ for all~$u \in D$. As~$X^{v \mapsto v'} \subseteq X$ this model also exists in~$G[X] - Z$, which shows that~\ref{rep:set:subgr:preservesmodel} is satisfied.

\textbf{Case 2.b: $|\Phi| \geq \ell + h$.} In the last case the set~$\Phi$ contains at least~$\ell + h$ partial $\hat{H}$-subgraph models in~$G$ that extend~$\phi_0$ onto pairwise disjoint sets. Recall that these partial models are trivial (they coincide with~$\phi_0$) in the case that~$V(\hat{H}) \subseteq P_0$. Let~$\Phi' \subseteq \Phi$ contain exactly~$\ell + h$ disjoint extensions of~$\hat{\phi}_0$ to full models of~$\hat{H}$. We initialize~$X$ as~$\bigcup _{\phi \in \Phi} \phi(V(\hat{H}))$. For each subset~$A' \subseteq A \cap P_0$, we add~$\ell + h$ vertices from the set~$\bigcap _{v \in A' \cup D} N_G(\phi_0(v))$ to~$X$ (or all such vertices, if the number of common neighbors is less than~$\ell + h$). The result is output as the set~$X$. It is easy to see that~$|X| \leq (\ell + h)h + 2^{|A \cap P_0|}(\ell + h) \leq (h \ell + h^2) + 2^{\alpha}(\ell + h)$, which satisfies~\ref{rep:set:subgr:size} since~$\mu \geq 0$. 

It remains to argue that~\ref{rep:set:subgr:preservesmodel} holds. Let~$Z \subseteq V(G)$ be a set of size at most~$\ell$ and assume that~$G - Z$ contains a full $H$-subgraph model~$\phi_1$ that extends~$\phi_0$. We build a full $H$-subgraph model~$\phi'_1$ in~$G[X] - Z$, as follows. 

\begin{itemize}
	\item For all vertices~$v \in (A \cup D) \cap P_0$ we set~$\phi'_1(v) := \phi_0(v) = \phi_1(v)$; the last equality holds since~$\phi_1$ extends~$\phi_0$. As~$D \subseteq P_0$ by the precondition to the lemma, it remains to define an image for the vertices in~$A \setminus P_0$ and for the vertices in~$B$. As~$\phi_0$ is a partial subgraph model of~$H$, whenever there is an edge in~$H$ between vertices of~$(A \cup D) \cap P_0$, these assignments ensure there is an edge between the images of the vertices as well. Since the image of~$\phi_1$ is disjoint from~$Z$, no vertex is mapped to a member of~$Z$ in this step. As~$\Phi$ is not empty and for any~$\phi \in \Phi$ the set~$\phi((A \cup D) \cap P_0) = \phi_1((A \cup D) \cap P_0)$ is contained in~$X$, the image of the partial model~$\phi'_1$ constructed so far lies in~$G[X] - Z$.
	\item For all vertices~$b \in \hat{B}$, do the following. If~$\phi_1(b) \in X$ then let~$\phi'_1(b) := \phi_1(b)$. Now consider what happens if~$\phi_1(b) \not \in X$. By definition of~$\hat{B}$ we have~$N_H(b) \subseteq P_0$. As~$\hat{B} \subseteq B_U$ we know by definition of the latter set that~$D \subseteq N_H(b)$. Since~$H' = H - D$ is bipartite and~$b$ is contained in its $B$-side, there is a set~$A' \subseteq A$ such that~$N_H(b) = D \cup A'$. As~$\phi_1$ forms a valid $H$-subgraph model that extends~$\phi_0$ we have~$\phi_1(b) \in \bigcap _{v \in N_H(b)} \phi_1(v) = \bigcap _{v \in D \cup A'} \phi_0(v)$. While constructing the set~$X$ we have considered the set~$A' \subseteq A \cap P_0$ and have added common neighbors of~$\phi_0(A' \cup D)$ to~$X$. Since~$\phi_1(b) \not \in X$, it follows that the number of common $G$-neighbors of~$\phi_0(A' \cup D)$ exceeded~$\ell + h$ and~$\phi_1(b)$ was not chosen to be added to~$X$. This implies that~$(X \cap \bigcap _{v \in A' \cup D} N_G(\phi_0(v))) \setminus Z$ has size at least~$h$, as~$|Z| \leq \ell$. Let~$b'$ be a vertex of~$(X \cap \bigcap _{v \in A' \cup D} N_G(\phi_0(v))) \setminus Z$ that is not in the image of the partial model~$\phi'_1$ that is under construction. As~$\phi'_1$ does not yet specify an image for~$b$, its image contains at most~$h - 1$ vertices, which implies that such a vertex~$b'$ exists. Then define~$\phi'_1(b) := b'$. By the choice of~$b$ this image is contained in~$X \setminus Z$ and is adjacent to the images of all of~$b$'s neighbors in~$H$.
	\item After having defined~$\phi'_1$ on all vertices~$(A \cup D) \cap P_0$ and~$\hat{B}$ in the previous two steps, it remains to define~$\phi'_1$ on~$V(\hat{H}) \setminus P_0$. If~$V(\hat{H}) \setminus P_0 = \emptyset$ then the model is already completed and we are done. In the remaining case, the size of the current image of~$\phi'_1$ under construction is strictly less than~$h$. Observe that the set~$\Phi'$ contained exactly~$\ell + h$ partial $\hat{H}$-subgraph models such that for all distinct~$\phi, \phi' \in \Phi'$ we have~$\phi(V(\hat{H}) \setminus P_0) \cap \phi'(V(\hat{H}) \setminus P_0) = \emptyset$. Since~$Z$ has size at most~$\ell$, while the current image of~$\phi'_1$ has size less than~$h$, there is a full $\hat{H}$-subgraph model~$\phi^* \in \Phi$ such that~$\phi^*(V(\hat{H}) \setminus P_0)$ contains no vertex of~$Z$ and contains no vertex of the current image of~$\phi'_1$. We use~$\phi^*$ to define~$\phi'_1$ on the vertices in~$V(\hat{H}) \setminus P_0$: for each vertex~$v \in V(\hat{H}) \setminus P_0$ set~$\phi'_1(v) := \phi^*(v)$. Since~$X$ contains~$\phi(V(\hat{H}))$, all images assigned in this step are contained in~$X$. As~$\phi^*$ is a full $\hat{H}$-subgraph model that extends~$\phi_0$, its images realize all the edges of~$\hat{H}$. The only vertices of~$V(H) \setminus V(\hat{H})$ are those in~$\hat{B}$. As the vertices in~$\hat{B}$ have all their $H$-neighbors in~$P_0$, which~$\phi^*$ assigns the same images as~$\phi_0$, all edges of~$H$ are realized by the partial model~$\phi'_1$, which is contained in~$G[X] - Z$.
\end{itemize}

\noindent The construction of~$\phi'_1$ proves that~\ref{rep:set:subgr:preservesmodel} holds, which concludes the argumentation for Case 2.b. 

It is easy to see that the running time of the algorithm is polynomial in~$2^\alpha + |V(G)|^\mu$. There are two nontrivial computational steps: marking common neighbors for all subsets of~$A \cap P_0$ in Case 2.2 (which contributes the~$2^{\alpha}$ factor), and the invocation of Lemma~\ref{lemma:compute:extension:partial:separated:subgraph} for constructing the set~$\Phi$ in Case 2. As the time bound for Lemma~\ref{lemma:compute:extension:partial:separated:subgraph} is dominated by~$|V(G)|^{\mu}$, the algorithm obeys the claimed size bound. As the case distinction is exhaustive, this finishes the proof of Lemma~\ref{lemma:compute:representative:set:subgraph}.
\end{proof}

\subsection{Representative sets for small separators into small components}
The proof of the following lemma can be considered as a straight-forward application of the sunflower lemma. We provide its proof for completeness, and to illustrate how it provides an analogue of Lemma~\ref{lemma:compute:representative:set:subgraph} for constant-size graphs~$H$.

\begin{lemma} \label{lemma:compute:representative:set:constanth}
There is an algorithm with the following specifications. The input is a graph~$G$, a graph~$H$ with a (possibly empty) vertex set~$D \subseteq V(H)$, an integer~$\ell$, and a partial subgraph model~$\phi_0$ of~$H$ with domain~$P_0 \supseteq D$. Let~$\alpha := |V(H) \setminus P_0|$ and let~$h := |V(H)|$. The output is a set~$X \subseteq V(G)$ with the following properties.
\begin{enumerate}[(P1)] 
	\item $|X| \leq h + (\alpha+1)! \cdot (\ell+1)^{\alpha}$.\label{rep:set:constanth:size}
	\item For any vertex set~$Z \subseteq V(G)$ of size at most~$\ell$, if~$G - Z$ contains a full subgraph model~$\phi_1$ of~$H$ that extends~$\phi_0$, then~$G[X] - Z$ contains a full subgraph model~$\phi'_1$ of~$H$ that extends~$\phi_0$.\label{rep:set:constanth:preservesmodel}
\end{enumerate}
The running time of the algorithm is~$|V(G)|^{\Oh(\alpha)}$.
\end{lemma}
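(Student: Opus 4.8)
The plan is to adapt the classical sunflower-based kernel for packings of small objects to the task of extending the fixed partial model~$\phi_0$; this produces the constant-$|V(H)|$ analogue of Lemma~\ref{lemma:compute:representative:set:subgraph}. If~$\alpha = 0$ then~$P_0 = V(H)$, so~$\phi_0$ is already a full subgraph model of~$H$, and I output~$X := \phi_0(P_0)$: property~\ref{rep:set:constanth:size} holds since~$|X| \le h$, and property~\ref{rep:set:constanth:preservesmodel} holds because the only extension of~$\phi_0$ is~$\phi_0$ itself, which lies in~$G[X] - Z$ whenever it lies in~$G - Z$. So assume~$\alpha \ge 1$ from now on.

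First I would compute the family~$\mathcal{S}$ of all sets~$S \subseteq V(G) \setminus \phi_0(P_0)$ with~$|S| = \alpha$ for which there is a full $H$-subgraph model~$\psi_S$ that extends~$\phi_0$ and maps~$V(H) \setminus P_0$ onto~$S$; for every such~$S$ we store one witness~$\psi_S$. This family can be built in~$|V(G)|^{\Oh(\alpha)}$ time by enumerating all~$\Oh(|V(G)|^\alpha)$ candidate images and all~$\Oh(\alpha!)$ bijections onto~$V(H) \setminus P_0$ and testing each for validity. I then repeatedly shrink~$\mathcal{S}$: as long as~$|\mathcal{S}|$ is large enough for Lemma~\ref{lemma:sunflowers} (with~$m := \alpha$) to guarantee a sunflower with~$\ell + 2$ petals, i.e.\ while~$|\mathcal{S}| > \alpha!(\ell+1)^\alpha$, I compute such a sunflower and delete from~$\mathcal{S}$ one of the sets forming it. When the loop stops,~$|\mathcal{S}| \le \alpha!(\ell+1)^\alpha$, and I output~$X := \phi_0(P_0) \cup \bigcup_{S \in \mathcal{S}} S$. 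Since each member of~$\mathcal{S}$ has size~$\alpha$ and~$|\phi_0(P_0)| \le h$, we get~$|X| \le h + \alpha \cdot \alpha!(\ell+1)^\alpha \le h + (\alpha+1)!(\ell+1)^\alpha$, establishing~\ref{rep:set:constanth:size}; and since the loop runs for at most~$|\mathcal{S}| = |V(G)|^{\Oh(\alpha)}$ rounds, each invoking the polynomial-time sunflower algorithm, the running time is~$|V(G)|^{\Oh(\alpha)}$.

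For~\ref{rep:set:constanth:preservesmodel} I would maintain the invariant that, for every~$Z \subseteq V(G)$ with~$|Z| \le \ell$, if~$G - Z$ contains any full $H$-model extending~$\phi_0$, then it contains one whose image of~$V(H) \setminus P_0$ lies in the current family~$\mathcal{S}$. This holds trivially before the first deletion, when~$\mathcal{S}$ consists of \emph{all} such image sets, and when the loop ends it yields a full $H$-model extending~$\phi_0$ using only vertices of~$X$, hence a full $H$-model in~$G[X] - Z$. To see that a single deletion preserves the invariant, let~$S^*$ be the removed set, appearing in a sunflower with core~$C$ and petals~$\{S \setminus C\}$, and let~$\phi_1$ be a full $H$-model in~$G - Z$ extending~$\phi_0$ with~$\phi_1(V(H) \setminus P_0) = S^*$ (there is nothing to prove if the image set is a different, still-present member of~$\mathcal{S}$). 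As~$\phi_1$ avoids~$Z$ and extends~$\phi_0$, both~$S^*$ and~$\phi_0(P_0)$ are disjoint from~$Z$; in particular~$C \subseteq S^*$ is disjoint from~$Z$. The~$\ell + 1$ petals other than~$S^* \setminus C$ are pairwise disjoint, so at most~$\ell$ of them meet~$Z$, and some sunflower member~$S' \ne S^*$ has~$S' \setminus C$ disjoint from~$Z$; then~$S' = C \cup (S' \setminus C)$ is disjoint from~$Z$, so its witness~$\psi_{S'}$, whose image is~$\phi_0(P_0) \cup S'$, is a full $H$-model in~$G - Z$ extending~$\phi_0$ with image set~$S' \in \mathcal{S}$. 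This re-establishes the invariant.

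The step that needs the most care is the observation that the sunflower core~$C$ is automatically disjoint from the forbidden set~$Z$, which is exactly what makes the petal-counting argument go through; it hinges on~$C$ being contained in~$S^*$, the image under a model that already lives inside~$G - Z$. Everything else — the size accounting around the~$(\alpha+1)!$ factor and the running-time bound — is routine.
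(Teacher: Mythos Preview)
Your proof is correct and follows essentially the same approach as the paper: enumerate the possible image sets of~$V(H)\setminus P_0$, then repeatedly apply the sunflower lemma (Lemma~\ref{lemma:sunflowers}) to discard one set from a sunflower with~$\ell+2$ members while maintaining the preservation invariant, and output~$\phi_0(P_0)$ together with the remaining sets. The paper's argument is structured identically, including the key step that~$C \subseteq S^*$ is disjoint from~$Z$ because~$S^*$ is the image of a model living in~$G-Z$; your explicit treatment of the~$\alpha=0$ base case is a minor addition the paper handles implicitly.
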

\begin{proof}
Define a system of sets~$\S$ as follows. For each full $H$-subgraph model~$\phi_1$ in~$G$ that extends~$\phi_0$, add a set~$S_{\phi_1} := \phi_1(V(H) \setminus P_0)$ to~$\S$ if it was not contained in~$\S$ already. The time bound allows us to try all images of~$V(H) \setminus P_0$ by brute force; we thus obtain the family~$\S$ in time~$|V(G)|^{\Oh(\alpha)}$.

We will compute a set~$\S' \subseteq 2^{V(G)}$ with the following \emph{preservation property}: for every set~$Z \subseteq V(G)$ of size at most~$\ell$, if~$G - Z$ contains a full $H$-subgraph model~$\phi_1$ that extends~$\phi_0$, then~$G-Z$ contains a full $H$-subgraph model~$\phi'_1$ that extends~$\phi_0$ and satisfies~$\phi'_1(V(H) \setminus P_0) \in \S'$. By our choice of~$\S$ it is clear that~$\S$ has the preservation property, so we initialize~$\S'$ as a copy of~$\S$. 

\begin{claim}
If~$\S' \subseteq \S$ has the preservation property and~$|\S'| \geq \alpha!(\ell+1)^{\alpha}$, we can identify a set~$S^* \in \S'$ in time~$|V(G)|^{\Oh(\alpha)}$ such that~$\S' \setminus \{S^*\}$ also has the preservation property. 
\end{claim}
\begin{claimproof}
Suppose that~$|\S'| \geq \alpha!(\ell+1)^{\alpha}$. By Lemma~\ref{lemma:sunflowers} there is a sunflower in~$\S'$ consisting of at least~$\ell+2$ sets~$S_1, \ldots, S_{\ell+1}, S_{\ell+2}$ and this can be found in time polynomial in the size of the set family and the universe. Let~$C := \bigcap _{i=1}^{\ell+2} S_i$ be the core of the sunflower. We show that~$\S' \setminus \{S_1\}$ has the preservation property.

Let~$Z \subseteq V(G)$ have size at most~$\ell$ and suppose that~$\phi_1$ is a full $H$-subgraph model in~$G - Z$ that extends~$\phi_0$. As~$\S'$ has the preservation property, there is a full $H$-subgraph model~$\phi'_1$ in~$G-Z$ that extends~$\phi_0$ such that~$S_{\phi'_1} := \phi'_1(V(H) \setminus P_0)$ is contained in~$\S'$. If~$S_{\phi'_1} \neq S_1$ then the set~$S_{\phi'_1}$ is contained in~$\S' \setminus \{S_1\}$ which establishes the preservation property. If~$S_1 = S_{\phi'_1}$, then we proceed as follows. Since~$\phi'_1$ is an $H$-subgraph model in~$G - Z$, we have~$\phi'_1(V(H) \setminus P_0) \cap Z = S_1 \cap Z = \emptyset$ which shows in particular that none of the vertices in the core~$C$ of the sunflower are contained in~$Z$. Since the petals~$S_1 \setminus C, \ldots, S_{\ell + 2} \setminus C$ of the sunflower are pairwise disjoint, the set~$Z$ of size at most~$\ell$ can intersect at most~$\ell$ sets among~$S_2 \setminus C, \ldots, S_{\ell + 2} \setminus C$, implying that there is an~$i > 1$ such that~$(S_i \setminus C) \cap Z = \emptyset$ which implies~$S_i \cap Z = \emptyset$. Since~$i > 1$ this set~$S_i$ is contained in~$\S' \setminus \{S_1\}$. If~$\phi_1$ is a model in~$G-Z$ that extends~$\phi_0$ then~$\phi_0(P_0) \cap Z = \emptyset$. Let~$\phi_i$ be the $H$-subgraph model extending~$\phi_0$ that caused the set~$S_i$ to be added to~$\S$. Then~$\phi_i(V(H)) = \phi_i(V(H) \setminus P_0) \cup \phi_i(P_0) = S_i \cup \phi_i(P_0)$. As both sets are disjoint from~$Z$, we conclude that~$\phi_i$ is a full $H$-subgraph model in~$G-Z$ that extends~$\phi_0$ and satisfies~$\phi_i(V(H) \setminus P_0) \in \S' \setminus \{S_1\}$. Hence~$\S' \setminus \{S_1\}$ has the preservation property.
\end{claimproof}

By iterating the argument above, we arrive at a set system~$\S' \subseteq \S$ with the preservation property that contains at most~$\alpha!(\ell+1)^\alpha$ sets. The preservation property directly implies that picking~$X := \phi_0(P_0) \cup \bigcup _{S \in \S'} S$ satisfies~\ref{rep:set:constanth:preservesmodel}. Since each set in~$\S$ has size at most~$\alpha$ we find that~$|X| \leq |P_0| + \alpha \cdot \alpha!(\ell+1)^{\alpha}$, which satisfies~\ref{rep:set:constanth:size}. Since each iteration can be done in time polynomial in the size of~$\S$ and~$G$, while the number of iterations is bounded by~$|\S|$ which is~$\Oh(|V(G)|^\alpha)$, this concludes the proof.
\end{proof}

\subsection{General representative sets for subgraph detection}
In this section, we put together
Lemma~\ref{lemma:compute:representative:set:subgraph} and
\ref{lemma:compute:representative:set:constanth} to obtain a marking
procedure for $(a,b,c,d)$-splittable graphs. We show that if $D$
realizes the $(a,b,c,d)$-split and we invoke the marking procedure
separately for each component of $H-D$, then the union of the marked
sets contains a solution, if exists.

\begin{lemma} \label{lemma:kernel:generic}
For every triple of constants~$a,b,d \in \mathbb{N}$ there is an algorithm with the following specifications. The input is a graph~$G$, a graph~$H$ with a (possibly empty) vertex set~$D \subseteq V(H)$ that realizes an~$(a,b,|D|,d)$ split of~$H$, and a partial subgraph model~$\phi_0$ of~$H$ with domain~$P_0 = D$.
Let~$k := |V(H)|$. The output is a set~$X \subseteq V(G)$ of size~$\Oh(k^{\Oh(a + b^2 + d)})$ such that~$G$ contains a full $H$-subgraph model that extends~$\phi_0$ if and only if~$G[X]$ contains a full $H$-subgraph model that extends~$\phi_0$. The running time of the algorithm is polynomial in~$|V(G)|$ for fixed~$a,b$, and~$d$.
\end{lemma}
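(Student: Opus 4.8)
The plan is to handle the connected components of~$H - D$ independently, running the appropriate representative-set lemma on each of them, and to return the union of the marked sets together with~$\phi_0(D)$. For a component~$C$ of~$H - D$ write~$D_C := N_H(C) \cap D$ and~$H_C := H[D_C \cup C]$; then~$\phi_0|_{D_C}$ is a partial~$H_C$-subgraph model with domain~$D_C$. If~$|V(C)| \leq \max(a,2)$ I invoke Lemma~\ref{lemma:compute:representative:set:constanth} on~$G$ and~$H_C$, with its vertex set~``$D$'' set to~$D_C$, with~$\ell := k$, and with partial model~$\phi_0|_{D_C}$; since the number~$\alpha$ of unspecified vertices equals~$|V(C)| \leq \max(a,2)$, this produces a set~$X_C$ of size~$\Oh(k^{\Oh(a)})$ by property~\ref{rep:set:constanth:size}. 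If~$|V(C)| > \max(a,2)$ then~$C$ has more than~$a$ vertices, so by Definition~\ref{def:splittable} it is a connected~$b$-thin bipartite graph; writing its partite sets as~$A$ (with~$\alpha := |A| \leq b$) and~$B$, the component~$C$ has an edge and hence~$A \neq \emptyset$, so I invoke Lemma~\ref{lemma:compute:representative:set:subgraph} on~$G$ and~$H_C$, with~``$D$'' equal to~$D_C$, with~$\ell := k$, and with~$\phi_0|_{D_C}$. Because~$P_0 = D_C$ is disjoint from~$A \cup B$, the set~$B_N$ of that lemma is exactly the set of~$B$-vertices of~$C$ whose closed neighborhood in~$C$ is not universal to~$N_H(C) \cap D$, so~$|B_N| \leq d$ by Definition~\ref{def:splittable}; likewise~$B_U \cap P_0 = \emptyset$, so the measure of the call is~$\mu \leq \alpha + |B_N| + \sum_{v \in A}\alpha \leq b + d + b^2$. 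With~$h = |V(H_C)| \leq k$ and~$\ell = k$, property~\ref{rep:set:subgr:size} gives~$|X_C| = \Oh(k^{\Oh(b^2 + d)})$. Since~$H - D$ has at most~$k$ components and each~$X_C$ is computed in time polynomial in~$|V(G)|$ for fixed~$a,b,d$, the set~$X := \phi_0(D) \cup \bigcup_C X_C$ has size~$\Oh(k^{\Oh(a+b^2+d)})$ and is computed within the stated time bound. (If~$|V(H)| > |V(G)|$ there is no full model in either graph and we output~$X := \phi_0(D)$.)

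The ``if'' direction --- if~$G[X]$ has a full~$H$-model extending~$\phi_0$ then so does~$G$ --- is immediate, as~$G[X]$ is an induced subgraph of~$G$. For the converse, let~$\phi$ be a full~$H$-model in~$G$ extending~$\phi_0$; I will repair it into a model inside~$G[X]$ one component at a time. Order the components of~$H - D$ as~$C_1, \ldots, C_m$ and set~$\phi^{(0)} := \phi$. Suppose inductively that~$\phi^{(i-1)}$ is a full~$H$-model extending~$\phi_0$ that maps~$C_1, \ldots, C_{i-1}$ into~$X$. Put~$Z := \phi^{(i-1)}(V(H) \setminus (D \cup V(C_i))) \cup \phi_0(D \setminus D_{C_i})$; since~$\phi^{(i-1)}$ is injective and extends~$\phi_0$, the two parts are disjoint and~$|Z| = |V(H)| - |V(C_i)| - |D_{C_i}| \leq k - 1 < k = \ell$. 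The restriction of~$\phi^{(i-1)}$ to~$V(H_{C_i}) = D_{C_i} \cup V(C_i)$ is a full~$H_{C_i}$-model extending~$\phi_0|_{D_{C_i}}$ whose image is disjoint from~$Z$ (again by injectivity), hence it is a full~$H_{C_i}$-model in~$G - Z$. By property~\ref{rep:set:subgr:preservesmodel} (or~\ref{rep:set:constanth:preservesmodel} in the small case) applied with this~$Z$, the graph~$G[X_{C_i}] - Z$ contains a full~$H_{C_i}$-model~$\psi$ with~$\psi(v) = \phi_0(v)$ for all~$v \in D_{C_i}$. Let~$\phi^{(i)}$ agree with~$\phi_0$ on~$D$, with~$\phi^{(i-1)}$ on~$V(C_j)$ for~$j \neq i$, and with~$\psi$ on~$V(C_i)$. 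It is injective: $\psi(V(C_i)) \subseteq X_{C_i} \setminus Z$ avoids~$\phi^{(i-1)}(V(C_j))$ for all~$j \neq i$ and avoids~$\phi_0(D \setminus D_{C_i})$ by the definition of~$Z$, and it avoids~$\phi_0(D_{C_i}) = \psi(D_{C_i})$ by injectivity of~$\psi$. It realizes every edge of~$H$: edges inside~$D$ through~$\phi_0$; edges touching some~$C_j$ with~$j \neq i$ through~$\phi^{(i-1)}$, which~$\phi^{(i)}$ copies on~$D$ and on~$V(C_j)$; edges inside~$C_i$ and between~$C_i$ and~$D_{C_i} = N_H(C_i) \cap D$ through~$\psi$; and there are no edges between distinct components of~$H - D$. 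Hence~$\phi^{(i)}$ is again a full~$H$-model extending~$\phi_0$, now mapping~$C_1, \ldots, C_i$ into~$X$. After~$m$ steps,~$\phi^{(m)}$ is a full~$H$-model in~$G[X]$ extending~$\phi_0$, completing the proof.

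The one genuinely delicate point is the interplay with~$D$ in the repair step. A component~$C$ of~$H - D$ may be adjacent to only a small part~$D_C$ of~$D$, and one is forced to invoke the representative-set lemmas with~$H_C = H[D_C \cup C]$ rather than~$H[D \cup C]$: using all of~$D$ would make~$B_N$ (and thus the measure~$\mu$ and the size bound) as large as the entire~$B$-side, because no vertex of~$C$ need be universal to~$D \setminus D_C$. Having made this choice, the model~$\psi$ returned by the lemma carries no guarantee about avoiding~$\phi_0(D \setminus D_C)$, which is precisely why that set is folded into the forbidden set~$Z$; checking that this still leaves~$|Z| < k$ and that the lemma's promise remains applicable --- in particular~$\phi_0(D_C) \cap Z = \emptyset$, so that a model agreeing with~$\phi_0$ on~$D_C$ and avoiding~$Z$ can exist --- is the heart of the argument. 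Everything else, namely the bound~$\mu \leq b + b^2 + d$ and the multiplication of the at-most-$k$ per-component size bounds, is routine.
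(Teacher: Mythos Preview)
Your proof is correct and follows essentially the same approach as the paper: decompose into the components~$C$ of~$H-D$, invoke Lemma~\ref{lemma:compute:representative:set:constanth} or Lemma~\ref{lemma:compute:representative:set:subgraph} on~$H_C = H[D_C \cup C]$ with~$D_C = N_H(C)\cap D$, and repair the model component by component using the forbidden set~$Z = \phi^{(i-1)}(V(H)\setminus V(H_{C_i}))$. Your version is slightly cleaner in two places --- using the threshold~$\max(a,2)$ to guarantee~$A\neq\emptyset$ in the bipartite branch, and explicitly isolating why one must pass~$D_C$ rather than all of~$D$ to the subroutine --- but the argument and the size/time bounds are the same as the paper's (your choice~$\ell=k$ in place of the paper's~$\ell=k-|V(H_{C_i})|$ is harmless).
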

\begin{proof}
Fix~$a,b,d \in \mathbb{N}$ and let~$(G,H,D,\phi_0, P_0)$ be an input satisfying the requirements. Let~$C_1, \linebreak[1] \ldots, C_r$ be the connected components of~$H - D$ and define~$C'_i$ as the subgraph of~$H$ induced by~$N_H[V(C_i)]$ (i.e.,~$C'_i$ is component~$C_i$ together with its neighbors in~$D$). We build a set~$X$ as follows. Initialize~$X$ as~$\phi_0(D)$ and do the following for each~$i \in [r]$.

\begin{itemize}
	\item If~$|V(C_i)| \leq a$ then we invoke the algorithm of Lemma~\ref{lemma:compute:representative:set:constanth} with parameters~$\hat{G},\hat{H},\hat{D},\hat{\ell},\hat{\phi}_0$ chosen as follows. The source graph~$\hat{G}$ equals~$G$, the query graph~$\hat{H}$ is set to~$C'_i$, the set~$\hat{D}$ is~$V(C'_i) \cap D$, the value of~$\hat{\ell}$ is~$k - |V(C'_i)|$, and~$\hat{\phi}_0$ is the restriction~$\phi_0|_{V(C'_i)}$. Under these definitions, the value of~$\hat{\alpha}$ defined in the lemma is~$|V(C'_i) \setminus (V(C'_i) \cap D)| = |V(C_i)| \leq a$. The algorithm outputs a set~$X_{C'_i}$ of size at most~$|V(C'_i)| + (a + 1)! \cdot (k - |V(C'_i)| + 1)^a$ which is~$\Oh(k^a)$ since~$a$ is a constant. We add~$X_{C'_i}$ to~$X^*$.
	\item If~$|V(C_i)| > a$, then by the precondition~$C_i$ is a $b$-thin bipartite graph. Let~$A$ and~$B$ the partite classes of~$H$ and choose~$A$ to be the smallest class, which has size at most~$b$. The precondition ensures that the number of vertices in~$C'_i$ that are not universal to~$V(C'_i) \cap D$ in the graph~$C'_i$ is at most~$d$. We invoke the algorithm of Lemma~\ref{lemma:compute:representative:set:subgraph} with parameters~$\hat{G} := G, \hat{H} := C'_i, \hat{D} := V(C'_i) \cap D, \hat{A} := A, \hat{B} := B, \hat{\ell} := k - |V(C'_i)|$, and we let~$\hat{\phi}_0$ be the restriction~$\phi_0|_{V(C'_i)}$. Therefore the value of~$\hat{\mu}$ defined in the lemma is bounded by~$|\hat{A}| + d + \sum _{v \in \hat{A}} |\hat{A}| \leq b + d + b^2$. By~\ref{rep:set:subgr:size} the size of the set~$X_{C'_i}$ computed by the described algorithm is bounded by~$(|V(C'_i)|^2 (k - |V(C'_i)|) + |V(C'_i)|^3)^{\hat{\mu} + 1} \cdot (1 + 2^b (k - |V(C'_i)| + |V(C'_i)|))$. Since~$b$ is a constant and~$|V(C_i)| \leq k$ we find that~$|X'_{C_i}| \in \Oh(k^{\Oh(\hat{\mu})}) \in \Oh(k^{\Oh(b^2 + d)})$. We add~$X'_{C_i}$ to~$X^*$.
\end{itemize}

Observe that the number of connected components of~$H$ does not exceed its order~$k$. Hence~$X^*$ is the union of~$\phi_0(D)$ with at most~$k$ sets that each have size~$\Oh(k^a + k^{\Oh(b^2 + d)})$, which proves that~$|X^*|$ is bounded by a fixed polynomial in the parameter~$k$. It is easy to verify that the running time of the procedure is polynomial in~$|V(G)|$ for fixed~$a,b$, and~$d$. To prove the correctness of the procedure, it remains to prove that~$G$ contains an $H$-subgraph model that extends~$\phi_0$ if and only if~$G[X]$ does. As~$G[X]$ is an induced subgraph of~$G$, the implication from right to left is trivial. We therefore consider the case that~$G[X]$ contains a full~$H$-subgraph model~$\phi^*$ that extends~$\phi_0$ and proceed to prove that~$G[X]$ contains such an extension as well. We construct an $H$-subgraph model~$\phi'$ in~$G[X]$ that extends~$\phi_0$ by repeatedly moving the $\phi^*$-model of one connected component of~$H - D$ into the set~$G[X]$. This is formalized by the following claim.

\begin{claim}
If~$\phi^*$ is a full $H$-subgraph model in~$G$ that extends~$\phi_0$, then for each~$i$ with~$0 \leq i \leq r$ the graph~$G$ contains a full $H$-subgraph model~$\phi_1$ that extends~$\phi_0$ and satisfies~$\phi_1(V(C'_j)) \subseteq X$ for all~$1 \leq j \leq i$.
\end{claim}
\begin{claimproof}
We use induction on~$i$. For~$i=0$ the model~$\phi^*$ satisfies the stated condition. Consider some~$i > 0$ and assume by the induction hypothesis that~$\phi'_1$ is a full $H$-subgraph model that extends~$\phi_0$ and satisfies~$\phi'_1(V(C'_j)) \subseteq X$ for all~$1 \leq j \leq i - 1$. During the construction of~$X$ we considered the component~$C'_i$ and computed a set~$X_{C'_i}$ by invoking Lemma~\ref{lemma:compute:representative:set:constanth} or Lemma~\ref{lemma:compute:representative:set:subgraph}. Both lemmata guarantee that the set~$X_{C'_i}$ they output satisfies the following: if~$Z \subseteq V(G)$ has size at most~$\hat{\ell} = k - |V(C'_i)|$ and~$G - Z$ contains a full $C'_i$-subgraph model that extends~$\phi_0|_{V(C'_i)}$, then~$G[X_{C'_i}] - Z$ contains a full~$C'_i$-subgraph model~$\phi_{C'_i}$ that agrees with~$\phi_0|_{P_0 \cap V(C'_i)}$ on~$V(C'_i) \cap D$. 

Now choose~$Z := \phi'_1(V(H) \setminus V(C'_i))$, which has size~$k - |V(C'_i)|$, and observe that~$\phi'_1|_{V(C'_i)}$ is a full~$C'_i$ model in~$G - Z$ that extends~$\phi_0|_{V(C'_i)}$. Hence the left hand side of the implication mentioned above is satisfied and~$G[X_{C'_i}] - Z \subseteq G[X] - Z$ contains a full~$C'_i$-subgraph model~$\phi_{C'_i}$ that agrees with~$\phi_0|_{P_0 \cap V(C'_i)}$ on~$V(C'_i) \cap D$. Now build a model~$\phi_1$ as described in the claim, as follows. 

Initialize~$\phi_1$ as~$\phi'_1|_{V(H) \setminus V(C'_i)}$; by the induction hypothesis this maps the vertices of the first~$i - 1$ components into~$X$. Then augment the model~$\phi_1$ by mapping the vertices of~$V(C'_i)$ in the same way as~$\phi_{C'_i}$. Model~$\phi_{C'_i}$ maps all vertices into~$X - Z$, which implies that the vertices of~$C'_i$ are not mapped to vertices that are used in the model~$\phi'_1|_{V(H) \setminus V(C'_i)}$ as they are contained in~$Z$. Hence this extension results in an~$H$ subgraph model~$\phi_1$ in~$G[X]$ that maps the first~$i$ components into~$X$. It is a valid $H$-subgraph model since we are combining two partial models~$\phi_{C'_i}$ and~$\phi'_1|_{V(H) \setminus V(C_i)}$ for subgraphs~$C'_i$,~$V(H) \setminus V(C_i)$ of~$H$ that agree on the mapping of the separator~$V(C'_i) \cap D$, which may be verified using Observation~\ref{observation:merge:models}. To see that~$\phi_1$ extends~$\phi_0$, observe that~$\phi_1$ agrees with~$\phi_0$ on~$P_0 \setminus V(C'_i)$ as we copied the behavior on these vertices from~$\phi'_1$, which extends~$\phi_0$; it agrees with~$\phi_0$ on~$P_0 \cap V(C'_i)$ since we copied the behavior on these vertices from~$\phi_{C'_i}$ which agrees with~$\phi_0|_{P_0 \cap V(C'_i)}$ on~$V(C'_i) \cap D$. Hence it agrees with~$\phi_0$ on~$P_0$ and therefore extends~$\phi_0$.
\end{claimproof}

The case~$i = r$ of the claim proves that~$X$ preserves the existence of $H$-subgraph models extending~$\phi_0$. Note that a model as constructed in the claim for~$i = r$ maps all vertices of~$D$ into~$X$, as the model extends~$\phi_0$ while~$\phi_0(D) \subseteq X$ by construction. This concludes the proof of Lemma~\ref{lemma:kernel:generic}.
\end{proof}


\section{Kernelization complexity of packing problems}
In this section, we prove Theorem~\ref{theorem:intro:packing}
characterizing the hereditary classes $\F$ for which \FPacking admits a
polynomial many-one or Turing kernel; the outcomes coincide for \FPacking. In
Section~\ref{subsection:packing:upperbounds}, we invoke the marking
algorithm developed in Section~\ref{sec:comp-repr-sets} to give a
polynomial kernel for small/thin classes. In
Section~\ref{subsection:packing:lowerbounds}, we establish a series of
\textup{WK[1]}-hardness results for packing various basic classes of graphs. In
Section~\ref{sec:packing-ramsey}, we prove Theorem~\ref{theorem:ramsey:1}
characterizing small/thin hereditary classes. In
Section~\ref{subsection:packing:theorem}, we put together all these
results to complete the proof of Theorem~\ref{theorem:intro:packing}.

\subsection{Upper bounds} \label{subsection:packing:upperbounds}
 To prove the
positive part of Theorem~\ref{theorem:intro:packing} for
small/thin hereditary classes, we obtain a kernel by invoking the marking algorithm of 
Lemma~\ref{lemma:kernel:generic} to obtain a bounded-size instance.

\restatepackingkernel*
\begin{proof}
Let~$\F$ be a hereditary family for which such constants~$a$ and~$b$ exist. We show how to derive a polynomial many-one kernel for \kFPacking. An instance of \kFPacking consists of a tuple~$(G, H \in \F, t \in \mathbb{N})$ that asks whether~$G$ contains~$t$ vertex-disjoint subgraphs isomorphic to~$H$. Recall that the parameter is~$k := t \cdot |V(H)|$. Presented with an instance~$(G,H,t)$, the kernelization algorithm invokes Lemma~\ref{lemma:kernel:generic} to the graph~$G$, using~$t \cdot H$ as the query graph, an empty set~$D$, and an empty model~$\phi_0$. For fixed~$a$ and~$b$ the computation outputs a set~$X^* \subseteq V(G)$ in polynomial time. For the kernelization we output the instance~$(G[X^*], H, t)$. The lemma guarantees that~$|X^*| \in \Oh(k^{\Oh(a + b^2)})$ and that~$G[X^*]$ contains a $t \cdot H$-subgraph if and only if~$G$ does. Hence this procedure forms a correct kernelization.
\end{proof}

\subsection{Lower bounds} \label{subsection:packing:lowerbounds}

In this section we present the polynomial-parameter transformations that establish Theorem~\ref{theorem:packing:lowerbounds}, which we repeat here for the reader's convenience.

\restatepackinglower*

The proofs are ordered in increasing difficulty. For each of the graph families mentioned in the theorem, we give a polynomial-parameter transformation from the \nRegularExactSetCover problem, whose \textup{WK[1]}-hardness we established in Lemma~\ref{lemma:regularexactsetcover:wkhard}.

\begin{lemma} \label{lemma:fountainpacking:wkhard}
\textsc{\Ffountain{s}-Packing} is \textup{WK[1]}-hard for any odd integer~$s \geq 3$. Similarly, \textsc{\Flongfountain{s}{t}-Packing} is \textup{WK[1]}-hard for any odd integer~$s \geq 3$ and any integer~$t \geq 1$.
\end{lemma}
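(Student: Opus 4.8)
The plan is to give a polynomial-parameter transformation from \nRegularExactSetCover. Recall that an instance consists of an integer~$r \geq 3$, an $r$-uniform set system~$\S$ over a universe~$U$ with~$n = |U|$, and we must decide whether~$(\S,U)$ has an exact cover; the parameter is~$n$. I will describe the reduction for \textsc{\Ffountain{s}-Packing}; the \textsc{\Flongfountain{s}{t}-Packing} case is handled by the same idea, attaching the length-$t$ path before the pendant vertices and adjusting counts accordingly. The key observation is that an exact cover, if it exists, uses exactly~$n/r$ sets (so we may assume~$r \mid n$, otherwise output a trivial \no-instance). The pattern graph we will pack is~$\Gfountain{s}{\ell}$ for a suitable~$\ell$ chosen polynomial in~$n$: it consists of a cycle~$C_s$ with~$\ell$ pendant vertices attached to one designated cycle vertex, which we call the \emph{center}.

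The construction is as follows. For each set~$S \in \S$, create a disjoint copy of the cycle~$C_s$, with its center vertex~$c_S$ identified with nothing yet; then, for each element~$u \in S$, add an edge from~$c_S$ to a global element-vertex~$x_u$ (so the~$n$ element-vertices~$\{x_u : u \in U\}$ are shared across all set-gadgets and serve as the ``pendant vertices'' a fountain can grab). Additionally attach~$\ell - r$ private pendant leaves to each~$c_S$, so that~$c_S$ together with these private leaves and any~$r$ of the shared element-vertices~$\{x_u : u \in S\}$ forms exactly a~$\Gfountain{s}{\ell}$. Set the target number of fountains to~$t := n/r$. The correctness argument has two directions. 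If~$(\S,U)$ has an exact cover~$\S'$ with~$|\S'| = n/r$, then for each~$S \in \S'$ we place a fountain on the cycle of~$S$, using its~$\ell - r$ private leaves plus the~$r$ shared vertices~$\{x_u : u \in S\}$ as pendants; since~$\S'$ is exact, these shared vertices are used exactly once across all chosen fountains, so the~$t$ fountains are vertex-disjoint. Conversely, given~$t = n/r$ vertex-disjoint copies of~$\Gfountain{s}{\ell}$ in the constructed graph, I need to argue each copy is ``canonical'': its~$C_s$ lies entirely within one set-gadget's cycle, and its center is the corresponding~$c_S$. This is where the oddness of~$s$ and the choice of~$\ell$ large enough (say~$\ell > n + s$) matter—because the only vertices of high degree are the~$c_S$ and the element-vertices~$x_u$ have degree bounded by~$|\{S : u \in S\}| + 0$, which can be made irrelevant by noting a fountain's center needs~$\ell$ pendant neighbors while~$x_u$ cannot supply that many, and the cycle~$C_s$ being odd cannot be embedded using the bipartite-like structure between centers and element-vertices. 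Once each fountain is forced to sit on the cycle of a distinct set~$S$ with center~$c_S$, vertex-disjointness on the shared element-vertices forces each chosen~$S$ to contribute a disjoint~$r$-subset of its~$r$ elements, i.e.~all of~$S$, and since we have~$n/r$ of them covering~$n$ element-vertices total, the chosen sets form an exact cover.

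The main obstacle I expect is the ``canonicity'' step in the reverse direction: ruling out fountains that are embedded in a degenerate way, using parts of several set-gadgets or using an element-vertex as a cycle vertex. The clean way to handle this is to exploit (i) that each~$\Gfountain{s}{\ell}$ has a unique vertex of degree~$\ell + 2$ (the center, for~$\ell$ large), and in the host graph the only candidates with degree at least~$\ell+2$ are the~$c_S$ (choosing~$\ell$ to dominate all other degrees, which are~$O(s + |\S|)$ but we can pad the private leaves so that~$c_S$ wins and~$x_u$ loses), forcing a bijection between the~$t$ fountains and~$t$ distinct sets; and (ii) that once the center is pinned to some~$c_S$, the odd cycle~$C_s$ through~$c_S$ must use only edges of~$S$'s gadget cycle, because the only other edges at~$c_S$ go to element-vertices~$x_u$, which have no edges among themselves, so no cycle can route through two of them consecutively—and an odd cycle cannot be formed within the bipartite-type incidence structure. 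I will need to state the degree bookkeeping carefully (it suffices to take~$\ell := n + |\S| + s + 1$ and add that many private leaves), but these are routine once the structure is set up. The \textsc{\Flongfountain{s}{t}} variant differs only in that the high-degree vertex is separated from the cycle by a path of length~$t$; the same degree argument pins the high-degree endpoint to~$c_S$-analogues, and the rigid path of length~$t$ then forces the position of the cycle, after which the exact-cover correspondence goes through identically.
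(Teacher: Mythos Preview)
Your construction is essentially the right one, but there is a genuine gap in the parameter bound. You propose taking~$\ell := n + |\S| + s + 1$ so that a degree argument pins the fountain's center to some~$c_S$. The trouble is that~$|\S|$ is part of the \emph{input size} of the \nRegularExactSetCover instance, not the parameter: there is no polynomial bound on~$|\S|$ in terms of~$n$. With your choice, the parameter of the output instance is~$k = t \cdot |V(H)| = (n/r)(s + 1 + \ell)$, which grows with~$|\S|$ and therefore is not polynomial in~$n$. This breaks the polynomial-parameter-transformation requirement.

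The paper sidesteps this entirely by taking the pattern to have exactly~$r$ pendant vertices and adding \emph{no} private leaves, so~$|V(H)| \in \Oh(r)$ and~$k \in \Oh(n)$. The canonicity argument then cannot rely on degrees (indeed an element vertex~$x_u$ may have arbitrarily high degree). Instead, the paper argues first about the odd cycle: deleting one fixed edge from each gadget cycle leaves a graph that, after stripping degree-one vertices, is the bipartite incidence graph on~$U \cup \{c_S : S \in \S\}$; hence the only simple odd cycles in~$G$ are precisely the gadget cycles. Any~$H$-subgraph must therefore place its length-$s$ cycle on some gadget's cycle, which forces the center to be~$c_S$; since~$c_S$ has exactly~$r$ neighbours in~$U$ (namely~$S$), the pendant vertices of that fountain are exactly~$S$. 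Disjointness of the~$t = n/r$ fountains then gives~$t$ pairwise disjoint sets covering~$n$ elements, i.e.\ an exact cover. Your own remark that ``an odd cycle cannot be formed within the bipartite-type incidence structure'' is exactly the right observation---you should lead with it and drop the degree-based center-pinning (and the private leaves) altogether. The \Flongfountain{s}{t} case goes through with the same odd-cycle argument, since the extra path does not create new odd cycles.
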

\begin{proof}
Let~$(r, \S, U, n)$ be an instance of \nRegularExactSetCover as described in Section~\ref{section:wkhardness}. We show how to construct an equivalent instance of a packing problem as described in the lemma statement in polynomial time, with a parameter that is polynomial in~$n$. An instance of the packing problem consists of a host graph~$G$, a pattern graph~$H$ that is contained in the relevant family~$\F$, and an integer~$t$. The construction we present works simultaneously for the family \Ffountain{s} for any odd~$s \geq 3$ and for the family \Flongfountain{s}{t} for any odd~$s \geq 3$ and~$t \geq 1$. To make the construction generic, let~$H$ be the (long) fountain in the class~$\F$ that we are targeting where the unique high-degree vertex of the (long) fountain has exactly~$r$ pendant vertices attached to it. The instance resulting from the transformation will ask for a packing of~$t := n / r$ copies of~$H$. Let~$c$ be the unique high-degree vertex in~$H$ to which the pendant vertices are attached. Let~$H'$ be the graph~$H$ without the~$n$ pendant vertices. The host graph~$G$ is defined as follows.

\begin{itemize}
	\item Graph~$G$ contains the vertex set~$U$ as an independent set. Observe that we identify the elements of the universe~$U$ with a subset of the vertices of~$G$ by this definition.
	\item For every set~$S \subseteq U \in \S$, which has size exactly~$r$, we add a new copy~$H'_S$ of the graph~$H'$ to~$G$. Let~$c_S \in V(H'_S)$ be the copy of the vertex~$c$. We add edges between~$c_S$ and the vertices representing~$S$. This concludes the construction of~$G$.
\end{itemize}

It is easy to see that the construction can be carried out in polynomial time. The parameter~$k$ of the problem is defined as~$k := t \cdot |V(H)|$. Since~$|V(H)| \in \Oh(r)$ (the exact value depends on the constant~$s$ and possibly on~$t$) and~$t = n/r$ we have~$k \in \Oh(n)$, which is polynomially bounded in the parameter of the input instance. To complete the polynomial-parameter transformation it remains to prove that the instance~$(r, \S, U, n)$ of \nRegularExactSetCover is equivalent to the constructed instance~$(G,H,t)$ of \kFSubgraphTest.

\begin{claim}
If~$(\S,U)$ has an exact cover, then~$t \cdot H \subseteq G$.
\end{claim}
\begin{claimproof}
Assume that~$(\S,U)$ has an exact cover, which must consist of~$t=n / r$ distinct and disjoint sets~$S_1, \ldots, S_t \in \S$. For each~$S_i$ with~$i \in [t]$ the copy~$H'_{S_i}$ of~$H'$ in~$G$ together with the~$r$ vertices~$S_i$ in~$G$ form a~$H$-subgraph in~$G$. Since the sets~$S_i$ are pairwise disjoint, there are~$t$ vertex-disjoint $H$-subgraphs in~$G$. Hence~$t \cdot H$ is a subgraph of~$G$.
\end{claimproof}

Before proving the reverse direction, we establish a structural claim.

\begin{claim} \label{claim:fountain:hardness:goodmodels}
Any~$H$ subgraph in~$G$ consists of a subgraph~$H'_S$ for some~$S \in \S$ together with the~$r$ vertices in~$N_G(c_S) \cap U$.
\end{claim}
\begin{claimproof}
Recall that~$s \geq 3$ is odd. The only simple odd cycles in~$G$ are the copies of the unique length-$s$ cycle in~$H'$. To see this, pick an arbitrary edge~$e$ on the length-$s$ cycle in~$H'$ and let~$Y = \{e_1, \ldots, e_{|\S|}\}$ be the copies of this edge in the graphs~$H'_S$ for~$S \in \S$. Deleting the edges of~$Y$ from~$G$ results in a bipartite graph, which may be verified by noting that the graph~$G - Y$ can be reduced to the graph~$G[U \cup \{c_S \mid S \in \S\}]$ by repeatedly removing degree-one vertices (which does not remove any odd cycles), while~$G[U \cup \{c_S \mid S \in \S\}]$ is clearly bipartite as both~$G$ and~$\{c_S \mid S \in \S\}$ form independent sets in~$G$. So all odd cycles in~$G$ use an edge in~$Y$, and since each edge~$e_i \in Y$ is contained in a unique simple odd cycle (the cycle cannot be made longer without repeating a vertex, see Figure~\ref{fig:basic}) this shows that the only simple odd cycles in~$G$ are the length-$s$ cycles in the copies of~$H'$.

Hence the only way an $H$-subgraph in~$G$ can realize a the length-$s$ cycle is to map it to a length-$s$ cycle in one of the copies~$H'$. Since the copies of~$H'$ are not connected to each other, while they only connect to~$U$ through their center vertex~$c_S$, an $H$-subgraph in~$G$ can only appear as described in the claim.
\end{claimproof}

\begin{claim}
If~$t \cdot H \subseteq G$, then~$(\S,U)$ has an exact cover.
\end{claim}
\begin{claimproof}
Assume that~$t \cdot H \subseteq G$. By Claim~\ref{claim:fountain:hardness:goodmodels}, every $H$-subgraph in~$G$ consists of a subgraph~$H'_S$ for~$S \in \S$ together with the~$r$ vertices in~$N_G(c_S) \cap U$. If there are~$t$ vertex-disjoint copies of~$H$ in~$G$, this implies that there are~$t$ vertices~$c_{S_1}, \ldots, c_{S_t}$ such that the sets~$N_G(c_{S_i})$ and~$N_G(c_{S_j})$ are disjoint for~$i \neq j$. As these sets contain~$t \cdot r = n$ vertices in total, the sets~$N_G(c_{S_1}) = S_1, \ldots, N_G(c_{S_t}) = S_t$ cover all of~$U$. The corresponding sets~$S_1, \ldots, S_t$ therefore form an exact cover of~$U$.
\end{claimproof}

As the claims establish the equivalence of the input and output instance, this concludes the proof of Lemma~\ref{lemma:fountainpacking:wkhard}.
\end{proof}

\begin{lemma} \label{lemma:operahousepacking:wkhard}
\textsc{\Foperahouse{s}-Packing} is \textup{WK[1]}-hard for any odd integer~$s \geq 1$.
\end{lemma}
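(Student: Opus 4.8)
The plan is to mimic the reduction used for the (long) fountains in Lemma~\ref{lemma:fountainpacking:wkhard}, reducing from \nRegularExactSetCover. Given an instance~$(r,\S,U,n)$, let~$H := \Goperahouse{s}{r}$ be the opera house in the target family whose two path-endpoints are joined to exactly~$r$ apex vertices; write~$u,v$ for the two endpoints of the length-$s$ path and let~$H'$ be~$H$ minus its~$r$ apex vertices, so~$H'$ is just the length-$s$ path. The host graph~$G$ has~$U$ as an independent set, and for each~$S\in\S$ a fresh copy~$H'_S$ of the path whose two endpoints~$u_S,v_S$ are both made adjacent to all~$r$ vertices of~$S$. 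We ask for~$t:=n/r$ disjoint copies of~$H$. As before the parameter~$k=t\cdot|V(H)|\in\Oh(n)$ is polynomially bounded, and the construction is polynomial-time, so it remains to prove equivalence.

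The forward direction is routine: an exact cover~$S_1,\dots,S_t$ yields~$t$ vertex-disjoint copies of~$H$, each consisting of a path-copy~$H'_{S_i}$ together with the~$r$ vertices of~$S_i$ (which are common neighbours of both endpoints of~$H'_{S_i}$). For the reverse direction the crucial structural claim is the analogue of Claim~\ref{claim:fountain:hardness:goodmodels}: every~$H$-subgraph in~$G$ consists of some path-copy~$H'_S$ together with~$r$ of the vertices in~$N_G(u_S)\cap N_G(v_S)\cap U = S$. Here the parity argument has to be adapted, because~$\Goperahouse{s}{n}$ contains short odd cycles (each endpoint together with an apex vertex and one path-edge forms a triangle when~$s\geq 1$, and more generally the endpoints plus an apex give a cycle of length~$s+2$). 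The key observation is instead that~$H$ contains the long \emph{path} of length~$s$ between two vertices~$u,v$ that each have~$r\geq 2$ further common neighbours, and in~$G$ the only place a length-$\ge 2$ path can go \emph{between two vertices of~$U$} or \emph{from~$U$ into a gadget and back} is limited, since~$U$ is independent and the gadgets are attached to~$U$ only through their two endpoints. More precisely, one argues that in any~$H$-subgraph model the length-$s$ path of~$H$ must be mapped entirely inside one gadget copy~$H'_S$ (with~$u,v$ mapped to~$u_S,v_S$): the path has internal vertices of degree~$2$ whose neighbourhoods in~$H$ force them to stay within a single~$P_{s+1}$-component of~$G-U$, and then the~$r$ apex vertices of~$H$, being common neighbours of~$u$ and~$v$, must be mapped into~$N_G(u_S)\cap N_G(v_S)=S$.

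I expect the main obstacle to be this structural claim, specifically pinning down exactly why the length-$s$ path of the opera house cannot ``escape'' a gadget through~$U$ — one has to rule out a path that enters a gadget at~$u_S$, runs partway along~$H'_S$, and comes back out, as well as paths that thread through several gadgets via shared set-elements. The cleanest way is probably to delete from~$G$ one fixed edge from each gadget path (say the middle edge, or an edge incident to an endpoint) to expose that after contracting/removing degree-one and then the low-degree structure, the gadget paths are the unique maximal ``long induced pieces'' not passing through~$U$; alternatively, observe that every internal vertex of the path of~$H$ has degree exactly~$2$ in~$H$, so in any model its image has two specified neighbours and lies on a path of~$G$ of the right length, and a straightforward case analysis on where such a path can sit in~$G$ (using that~$U$ is independent and each~$c_S$-analogue is a cut vertex pair) forces it into a single~$H'_S$. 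Once the structural claim is in place, the equivalence with exact cover follows exactly as in the fountain case, and the three claims together finish the proof.
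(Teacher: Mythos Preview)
Your construction and overall plan match the paper's exactly: same host graph (one length-$s$ path per set, both endpoints wired to the set's elements), same target $t=n/r$, same forward direction. The issue is the structural claim.

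You correctly identify that the crux is showing every $H$-model uses one full gadget path~$H'_S$ with its endpoints mapped to~$u_S,v_S$ and its apex vertices mapped into~$S$. But you then steer \emph{away} from the parity argument, saying it ``has to be adapted'' because~$H$ contains odd cycles. This is backwards: the odd cycle in~$H$ is precisely what makes the parity argument work. The paper's proof runs as follows. The opera house~$\Goperahouse{s}{r}$ contains a cycle of (odd) length~$s+2$, so any $H$-model in~$G$ must realize such a cycle. Now delete from~$G$ the first edge of each gadget path~$P_S$; after repeatedly stripping degree-one vertices, what remains is the bipartite graph with parts~$U$ and~$\{x_S,y_S : S\in\S\}$. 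Hence every odd cycle in~$G$ uses some edge of some~$P_S$, and since the internal vertices of~$P_S$ have degree~$2$ in~$G$, the whole path~$P_S$ is forced into the cycle. Closing a length-$(s+2)$ cycle then needs exactly one common neighbour of~$x_S,y_S$, which lies in~$S\subseteq U$. On that cycle the only pair of vertices of $G$-degree at least~$r+1$ at path-distance~$s$ is~$\{x_S,y_S\}$, so the two high-degree vertices of~$H$ map there, and the~$r$ apex vertices must land in~$N_G(x_S)\cap N_G(y_S)=S$.

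Your alternative plan --- arguing directly about where a length-$s$ path can sit, via ``internal vertices of degree~$2$ in~$H$'' --- does not obviously go through: degree~$2$ in~$H$ constrains nothing about the degree of the image in~$G$, so the path could in principle weave through~$U$ and multiple gadget endpoints. Your second alternative (delete one edge per gadget and analyze the residual structure) is essentially the paper's move, but you stop short of the key observation that the residual graph is \emph{bipartite}, which is what turns this into a one-line argument about odd cycles rather than a case analysis on paths.
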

\begin{proof}
Let~$s\geq 1$ be an odd integer. We transform an instance~$(r, \S, U, n)$ of \nRegularExactSetCover in polynomial time into an equivalent instance~$(G, H := \Goperahouse{s}{r},t := n/r)$ of \textsc{\Foperahouse{s}-Packing} with~$k := t \cdot |V(H)| \in \Oh(n)$ (see Figure~\ref{fig:basic}). The host graph~$G$ is defined as follows.

\begin{itemize}
	\item Graph~$G$ contains the vertex set~$U$ as an independent set.
	\item For every size-$r$ set~$S \in \S$ we add a new length-$s$ path~$P_S$ to~$G$. Let~$x_S$ and~$y_S$ be the endpoints of~$P_S$. We make the vertices~$x_S$ and~$y_S$ adjacent to all vertices representing members of~$S$.
\end{itemize}

As the construction can be performed in polynomial time and produces an instance of the target problem whose parameter is suitably bounded, it remains to prove that the input and output instances are equivalent.

\begin{claim}
If~$(\S,U)$ has an exact cover, then~$t \cdot H \subseteq G$.
\end{claim}
\begin{claimproof}
Assume that~$(\S,U)$ has an exact cover~$S_1, \ldots, S_t \in \S$. For each~$S_i$ with~$i \in [t]$ the path~$P_{S_i}$ together with the~$r$ vertices~$S_i$ form a~$H$-subgraph in~$G$. Since the sets~$S_i$ are pairwise disjoint, there are~$t$ vertex-disjoint $H$-subgraphs in~$G$. Hence~$t \cdot H$ is a subgraph of~$G$.
\end{claimproof}

\begin{claim} \label{claim:operahouse:hardness:goodmodels}
Any~$H$ subgraph in~$G$ consists of a path~$P_S$ for some~$S \in \S$ together with the~$r$ vertices in~$N_G(x_S) \cap N_G(y_S) = S$.
\end{claim}
\begin{claimproof}
Recall that~$s \geq 1$ is odd. The only simple odd cycles in~$G$ of length~$s+2$ are those consisting of a path~$P_S$ for~$S \in \S$ together with a vertex in~$N_G(x_S) \cap N_G(y_S) \subseteq U$. To see this, observe that removing the first edge of every path~$P_S$ ($S \in \S$) results in a bipartite graph: after removing these edges and iterately removing degree-one vertices, we are left with the bipartite graph where~$U$ forms one partite class and~$\{x_S, y_S \mid S \in \S\}$ forms the other partite class. Hence every odd cycle contains an edge on a path~$P_S$, and use of such an edge forces the entire path~$P_S$ to be used. As it has length~$s$, to complete a length-$(s+2)$ path requires a vertex in the common neighborhood of the endpoints of the path; as~$s$ is odd, the only such vertices are those in~$N_G(x_S) \cap N_G(y_S) = U$.

Now consider a $H$-subgraph in~$G$. As~$H = \Goperahouse{s}{r}$ contains a length~$(s+2)$ cycle, the above argument shows that the model of this cycle is formed by a path~$P_S$ for~$S \in \S$ together with a common $U$-neighbor of~$\{x_S, y_S\}$, resulting in a length-$(s+2)$ cycle~$C'$ in~$G$. Since the vertices~$x_S$ and~$y_S$ form the unique pair of vertices on~$C'$ that have degree at least~$r + 1$ in~$G$ and are connected by a path of length~$s$ through~$C'$, the two high-degree vertices of~$\Goperahouse{s}{r}$ must be realized by~$x_S$ and~$y_S$. Since~$N_G(x_S) \cap N_G(y_S) = S \subseteq U$, the subgraph must use the~$r$ vertices in~$S$ to realize the common neighbors of the endpoint of the path.
\end{claimproof}

\begin{claim}
If~$t \cdot H \subseteq G$, then~$(\S,U)$ has an exact cover.
\end{claim}
\begin{claimproof}
Assume that~$t \cdot H \subseteq G$. By Claim~\ref{claim:operahouse:hardness:goodmodels}, every $H$-subgraph in~$G$ consists of a path~$P_S$ for~$S \in \S$ together with the vertices representing~$S$. If there are~$t$ vertex-disjoint copies of~$H$ in~$G$, then there are~$t$ disjoint sets~$S_1, \ldots, S_r \in \S$. As these sets contain~$t \cdot r = n$ vertices in total, they cover all of~$U$ and form an exact cover.
\end{claimproof}

This concludes the proof of Lemma~\ref{lemma:operahousepacking:wkhard}.
\end{proof}


\begin{lemma} \label{lemma:substarpacking:wkhard}
\textsc{\Fsubdivstar-Packing} is \textup{WK[1]}-hard.
\end{lemma}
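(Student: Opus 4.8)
The plan is to give a polynomial-parameter transformation from \nRegularExactSetCover, which is \textup{WK[1]}-hard by Lemma~\ref{lemma:regularexactsetcover:wkhard}. Given an $r$-uniform instance $(r,\S,U)$ with $n := |U|$, I would first dispose of the trivial case $r \nmid n$ (no exact cover is possible, so output a fixed \no-instance) and otherwise assume $r \mid n$. The host graph $G$ is built as follows: put the elements of $U$ into $G$ as an independent set; attach to every element $u \in U$ a private pendant vertex $p_u$; and for every set $S \in \S$ add a center vertex $c_S$ adjacent to exactly the $r$ vertices representing the elements of $S$, so that $\deg_G(c_S) = r$. The pattern is $H := \Gsubdivstar{r} \in \Fsubdivstar$ and the target is $t := n/r$. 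Since $|V(H)| = 2r+1$, the parameter $k := t \cdot |V(H)| = 2n + n/r$ is linear in $n$, and $G$ has $2n+|\S|$ vertices and $n+r|\S|$ edges, so the reduction runs in polynomial time.

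For the forward direction, an exact cover $S^{(1)}, \ldots, S^{(t)}$ yields $t$ vertex-disjoint copies of $\Gsubdivstar{r}$ in $G$: the $i$-th copy maps the center to $c_{S^{(i)}}$, the $r$ subdivision vertices to the elements of $S^{(i)}$, and the $r$ leaves to the pendants $p_u$ for $u \in S^{(i)}$; disjointness is immediate from $S^{(i)} \cap S^{(j)} = \emptyset$.

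The substance is the converse. Suppose $G$ contains $t$ vertex-disjoint copies of $\Gsubdivstar{r}$. A short degree analysis gives three facts. (i) The center of each copy is mapped to a vertex of degree at least $r$, hence either to some $c_S$ or to an element $u \in U$ of frequency at least $r$, since pendants have degree $1 < r$. (ii) A copy centered at $c_S$ must use all of $N_G(c_S) = S$ as its $r$ subdivision vertices (because $\deg_G(c_S)$ equals $r$ exactly), and then each leaf is a neighbour of some element of $S$ other than $c_S$ — that is, a pendant or another center, never an element of $U$ — so such a copy occupies exactly $r$ vertices of $U$. (iii) A copy centered at an element $u$ must map its $r$ degree-two subdivision vertices to $r$ distinct centers $c_{S_1}, \ldots, c_{S_r}$ adjacent to $u$ (the only degree-$\ge 2$ neighbours of $u$), and then each of its $r$ leaves is an element of some $S_\ell \setminus \{u\}$, so such a copy occupies at least $r+1$ vertices of $U$. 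Finally, since the copies are vertex-disjoint and $G$ contains only $n = tr$ vertices of $U$, if $b$ of the copies are of the ``cheating'' kind (iii) then the copies jointly occupy at least $(t-b)r + b(r+1) = n+b$ vertices of $U$; hence $b=0$, every copy is centered at a distinct vertex $c_{S^{(i)}}$, and by (ii) the sets $S^{(1)}, \ldots, S^{(t)}$ are pairwise disjoint and cover $tr = n = |U|$ elements, i.e., they form an exact cover.

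I expect the delicate part to be proving facts (ii) and (iii): one has to track the admissible images of the degree-two subdivision vertices and the degree-one leaves against the rigid degree profile of $G$ (centers have degree exactly $r$, pendants have degree $1$, and $U$ is independent), ruling out any ``mixed'' embedding. Once that case analysis is in place the decisive step is the elementary pigeonhole count of how many vertices of $U$ a copy of each kind must consume — which, crucially, is exactly what lets the reduction go through even though an element of $U$ may belong to arbitrarily many sets, the obstacle that forced the earlier constructions in this section (fountains, opera houses) to rely on odd cycles for rigidity.
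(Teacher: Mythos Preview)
Your proposal is correct and is essentially the same construction and argument as the paper's: the same host graph (elements as an independent set, private pendants, one center per set), the same pattern $\Gsubdivstar{r}$ with $t=n/r$, and the same dichotomy showing that a copy centered at a set-vertex uses exactly $r$ elements of $U$ while a copy centered at an element vertex uses at least $r+1$, followed by the pigeonhole count over $|U|=tr$. The only cosmetic differences are that you handle $r \nmid n$ explicitly and prove ``exactly $r$'' in case~(ii) directly rather than deducing it from the global count.
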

\begin{proof}
We transform an instance~$(r, \S, U, n)$ of \nRegularExactSetCover in polynomial time into an equivalent instance~$(G, H := \Gsubdivstar{r},t := n/r)$ of \textsc{\Fsubdivstar-Packing} with~$k := t \cdot |V(H)| \in \Oh(n)$. By the definition of \nRegularExactSetCover we have~$r \geq 3$. Let~$c \in V(H)$ be the center of the star whose subdivision yields~$H$, implying that~$\deg_H(c) \geq 3$. The host graph~$G$ is defined as follows.

\begin{itemize}
	\item Graph~$G$ contains the vertex set~$U$ as an independent set. For every vertex~$u \in U$ we add a pendant degree-one neighbor~$u'$ adjacent to~$u$.
	\item For every size-$r$ set~$S \in \S$ we add a new vertex~$x_S$ to~$G$ that is adjacent to the~$r$ vertices in~$U$ representing~$S$.
\end{itemize}

As the construction can be performed in polynomial time and produces an instance of the target problem whose parameter is suitably bounded, it remains to prove that the input and output instances are equivalent.

\begin{claim}
If~$(\S,U)$ has an exact cover, then~$t \cdot H \subseteq G$.
\end{claim}
\begin{claimproof}
Assume that~$(\S,U)$ has an exact cover~$S_1, \ldots, S_t \in \S$. For each~$S_i$ with~$i \in [t]$ the vertices~$\{u, u' \mid u \in S\}$ form a subdivided star with~$r$ leaves and~$x_{S_i}$ as its center. Since the sets~$S_i$ are pairwise disjoint, there are~$t$ vertex-disjoint $H$-subgraphs in~$G$.
\end{claimproof}

\begin{claim} \label{claim:subdivstar:hardness:goodmodels}
If~$\phi$ is a full $H$-subgraph model in~$G$, then the following holds.
\begin{itemize}
	\item If~$\phi(c) \in \{x_S \mid S \in \S\}$ then~$|\phi(V(H)) \cap U| \geq r$.
	\item If~$\phi(c) \not \in \{x_S \mid S \in \S\}$ then~$|\phi(V(H)) \cap U| \geq r + 1$.
\end{itemize}
\end{claim}
\begin{claimproof}
Consider a full $H$-subgraph model~$\phi$ in~$G$. Since~$\deg_H(c) \geq 3$ while~$\deg_G(u') = 1$ for all~$u \in U$, the image of~$c$ can only be a vertex in~$U$ or a vertex~$x_S$ for~$S \in \S$. If~$\phi(c) = x_S$ for some~$S \in \S$, then~$N_G(x_S) = S \subseteq \phi(V(H))$, as~$c$ has~$r$ neighbors in~$H$ that must be realized by the unique set of~$r$ neighbors of~$x_S$ in~$G$. This proves the first part of the claim.

To prove the second part, by the observation above it suffices to consider the case that~$\phi(c) = u \in U$. Observe that the vertex~$u'$ has degree one in~$G$ and therefore cannot form the image of a subdivider vertex of the star. Hence all~$r$ subdivider vertices are realized by other neighbors of~$u$, which all belong to~$\{x_S \mid S \in \S\}$. Consequently, the images of the degree-one endpoints of the subdivided star are neighbors of~$\{x_S \mid S \in \S\}$ distinct from~$u$. As all neighbors of~$\{x_S \mid S \in \S\}$ belong to~$U$, this implies that~$|\phi(V(H)) \cap U| \geq r+1$.
\end{claimproof}

\begin{claim}
If~$t \cdot H \subseteq G$, then~$(\S,U)$ has an exact cover.
\end{claim}
\begin{claimproof}
Assume that~$t \cdot H \subseteq G$. By Claim~\ref{claim:subdivstar:hardness:goodmodels}, every $H$-subgraph model in~$G$ uses at least~$r$ vertices of~$U$. As there are only~$n$ vertices in~$U$ in total, to realize~$t = n/r$ vertex-disjoint $H$-subgraphs, each of the~$t$ subgraphs has to use exactly~$r$ vertices of~$U$. Such subgraphs therefore map the center of the subdivided star to a vertex~$x_S$ for~$S \in \S$. Hence there are~$t$ distinct vertices~$x_{S_1}, \ldots, x_{S_t}$ forming the centers of $H$-subgraph models in~$G$. For each~$i \in [t]$, the vertices~$N_G(x_{S_i}) = S_i$ must be used in the model containing~$x_{S_i}$, since they are the unique~$r$ neighbors of~$x_{S_i}$ in~$G$. Thus~$S_1, \ldots, S_t$ are pairwise disjoint sets in~$\S$ covering~$n$ elements in total, which means they form an exact cover of~$U$.
\end{claimproof}

This concludes the proof of Lemma~\ref{lemma:substarpacking:wkhard}.
\end{proof}


\begin{lemma} \label{lemma:broompacking:wkhard}
\textsc{\Fdoublebroom{s}-Packing} is \textup{WK[1]}-hard for any odd integer~$s \geq 1$.
\end{lemma}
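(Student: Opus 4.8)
The plan is to give a polynomial-parameter transformation from \nRegularExactSetCover, in the spirit of the previous lemmas, but with the crucial twist that the \emph{two} broom centers must be realized separately. Given an instance $(r,\S,U,n)$ with $r\ge 3$, set $H:=\Gdoublebroom{s}{r}\in\Fdoublebroom{s}$ and $t:=n/r$ (if $r\nmid n$ there is no exact cover and we output a trivial \no-instance). Build $G$ as follows: let $U$ be an independent set; for each set $S\in\S$ add a fresh path $P_S$ on vertices $x_S=p^S_0,p^S_1,\dots,p^S_s=y_S$ of length $s$, make $x_S$ adjacent to the $r$ vertices of $S$, and add $r$ fresh degree-one vertices $Z_S$ all adjacent to $y_S$. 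Then $|V(H)|=s+1+2r=\Oh(r)$ and $t=n/r$, so the parameter $k=t\cdot|V(H)|$ is $\Oh(n)$, and $G$ is constructed in polynomial time. The forward direction is immediate: from an exact cover $S_1,\dots,S_t$ we obtain $t$ vertex-disjoint $H$-subgraphs, the $i$-th one using $x_{S_i}$ as one broom center with leaves $S_i$, the spine $P_{S_i}$, and $y_{S_i}$ as the other center with leaves $Z_{S_i}$.

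For the reverse direction, the heart of the argument is a structural claim: every $H$-subgraph model in $G$ uses at least $r$ vertices of $U$, and it uses exactly $r$ only if these $r$ vertices constitute one of the sets of $\S$. Given this, soundness follows by a counting argument as in the earlier proofs: the $t$ vertex-disjoint copies occupy pairwise disjoint vertex sets, hence together at least $tr=n=|U|$ vertices of $U$, so each uses exactly $r$, which therefore form a set of $\S$; disjointness of the copies forces these $t$ sets to be pairwise disjoint, i.e.\ an exact cover. To prove the claim, note that the only two vertices of $H$ with degree $r+1\ge 4$ are the two broom centers; every other vertex of $H$ has degree $1$ or $2$. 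In $G$ the vertices of degree at least $r+1$ are exactly the $x_S$, the $y_S$, and the elements $u\in U$ lying in at least $r+1$ sets, while internal spine vertices have degree $2$ and the vertices of $Z_S$ have degree $1$. Hence the two center-images lie among these vertices, and we track the image of the length-$s$ spine of $H$, which is a genuine path in $G$ since the model is injective.

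The parity of $s$ enters here. The graph $G$ is bipartite (its only cycles alternate between $U$ and the $x_S$'s, hence have even length), internal spine vertices have degree $2$ and so cannot be escaped, and $Z_S$-vertices are dead ends; consequently a spine image that starts at some $x_S$ but leaves the gadget into $U$ is forced to alternate $x_S,u_1,x_{S_2},u_2,\dots$ between $U$-vertices (at odd spine positions) and $x$-type vertices (at even positions). Since $s$ is odd, the final spine vertex then lands on a $U$-vertex, so the second center is a high-degree element of $U$; counting the $(s-1)/2$ remaining $U$-vertices on this walk together with the $r-1$ neighbours of $x_S$ in $U$ that are forced to host leaves of $x$ shows the copy uses at least $r+(s-1)/2$ vertices of $U$, which exceeds $r$ whenever $s\ge3$. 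The only alternative is that the spine stays inside a single gadget $P_S$, yielding the canonical copy that uses precisely the $r$ elements of $S$; a symmetric argument handles the case where the first center is an $x_S$ and we traverse from the $y_S$-side, and the case $s=1$ is treated slightly apart, where one checks that even the non-canonical copy still uses exactly the $r$ elements of some set $S\in\S$. I expect the main obstacle to be exactly this exhaustive case analysis: pinning down all placements of the two high-degree centers and all ways the spine can wind through $G$, and in particular using bipartiteness together with the oddness of $s$ to rule out a short spine path between two $x$-type images or between two $U$-vertices.
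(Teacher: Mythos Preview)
Your construction and the forward direction are identical to the paper's. For the reverse direction, both proofs establish the same structural fact (every $H$-model uses the full set $S\subseteq U$ for some $S\in\S$), but the paper gets there by a shorter route that avoids spine-tracking entirely. The key observation you do not exploit is that $\deg_G(x_S)=r+1=\deg_H(\text{center})$ exactly: once you know a center maps to some $x_S$, \emph{all} of $N_G(x_S)$---and hence all of $S$---is forced into the model, with no need to follow the path. The paper's case split is then: (i) if a center lands on some $y_S$, the unique vertex at distance~$s$ from $y_S$ of degree $>1$ is $x_S$, so the other center is $x_S$ and $S$ is used; (ii) otherwise both centers lie in $U\cup\{x_{S'}\}$, and since $G$ is bipartite with all of $U$ in one class and the path of odd length~$s$ forces the centers into different classes, one center is some $x_S$; degree-matching does the rest.

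Your spine-alternation argument is correct and the $s=1$ special case is handled as you say, but the ``symmetric argument'' you invoke for the $y_S$-side is not actually symmetric: $y_S$ has no $U$-neighbors, so the spine cannot alternate out of the gadget from there and is simply forced down $P_S$ to $x_S$. This is easier, not harder, than the $x_S$-side. In short, your proof is sound but does more work than needed; the degree-matching shortcut is worth noting.
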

\begin{proof}
We transform an instance~$(r \geq 3, \S, U, n)$ of \nRegularExactSetCover in polynomial time into an equivalent instance~$(G, H := \Gdoublebroom{s}{r},t := n/r)$ of \textsc{\Fdoublebroom{s}-Packing} with~$k := t \cdot |V(H)| \in \Oh(n)$. The host graph~$G$ is defined as follows.

\begin{itemize}
	\item Graph~$G$ contains the vertex set~$U$ as an independent set.
	\item For every size-$r$ set~$S \in \S$ we add a new length-$s$ path~$P_S$ to~$G$. Let~$x_S$ and~$y_S$ be the endpoints of~$P_S$. We make~$x_S$ adjacent to the~$r$ vertices in~$U$ representing~$S$. We add~$r$ new vertices and make them adjacent to~$y_S$.
\end{itemize}

As the construction can be performed in polynomial time and produces an instance of the target problem whose parameter is suitably bounded, it remains to prove that the input and output instances are equivalent.

\begin{claim}
If~$(\S,U)$ has an exact cover, then~$t \cdot H \subseteq G$.
\end{claim}
\begin{claimproof}
Assume that~$(\S,U)$ has an exact cover~$S_1, \ldots, S_t \in \S$. For each~$S_i$ with~$i \in [t]$ the vertices representing~$S$ combine with the path~$P_S$ and the~$r$ degree-one neighbors of~$y_S$ to form a \Gdoublebroom{s}{r} subgraph. Since the sets~$S_i$ are pairwise disjoint, there are~$t$ vertex-disjoint $H$-subgraphs in~$G$.
\end{claimproof}

Let~$x$ and~$y$ be the unique two vertices of degree~$r + 1$ in \Gdoublebroom{s}{r} (recall that~$s \geq 1$ and~$r \geq 3$).

\begin{claim} \label{claim:doublebroom:hardness:goodmodels}
If~$\phi$ is a full $H$-subgraph model in~$G$, then~$\phi(V(H)) \cap U \supseteq S$ for some~$S \in \S$.
\end{claim}
\begin{claimproof}
Consider a full $H$-subgraph model~$\phi$ in~$G$. If~$\phi(\{x,y\}) = \{x_S, y_S\}$ for~$S \in \S$, then as~$P_S$ is the unique length-$s$ path in~$G$ between~$x_S$ and~$y_S$, this path is used as the image of the length-$s$ path in~$H$. Hence the unique~$r$ vertices~$N_G(x_S) \setminus V(P_S) = S$ must be used as images of the degree-one neighbors of~$x$ in~$H$. Hence~$\phi(V(H)) \supseteq S$.

If~$\phi(\{x,y\}) \cap \{y_S \mid S \in \S\} \neq \emptyset$, then we must have~$\phi(\{x,y\}) = \{x_S, y_S\}$ for some~$S \in \S$, since the only vertex in~$G$ at distance exactly~$s$ from~$y_S$ that has degree more than one is~$x_S$. Hence if $\phi(\{x,y\}) \cap \{y_S \mid S \in \S\} \neq \emptyset$ then the previous argument shows that the claim holds.

It remains to consider the case that the images of~$x$ and~$y$ do not belong to~$\{y_S \mid S \in \S\}$. As~$\deg_G(\phi(x)) \geq \deg_H(x) \geq 4$ (recall~$r \geq 3$) and~$\deg_G(\phi(y)) \geq 4$, the only remaining options for~$\phi(x)$ and~$\phi(y)$ are the vertices in~$U \cup \{x_S \mid S \in \S\}$. As the constructed graph~$G$ is bipartite and all vertices of~$U$ are contained in the same partite class, while the path connecting~$x$ and~$y$ in~$H$ has odd length~$s$, it follows that the model of one of~$\{x,y\}$ belongs to~$U$ while the other does not. By the previous argument, this implies that the model of the other vertex belongs to~$\{x_S \mid S \in \S\}$. Assume without loss of generality that~$\phi(y) \in U$ and~$\phi(x) = x_S$ for some~$S \in \S$. Since~$\deg_H(x) = \deg_G(x_S) = r + 1$, all vertices of~$N_G(x_S)$ are used in the model to realize the images of~$N_H(x)$. Since~$S \subseteq N_G(x_S)$, the claim follows.
\end{claimproof}

\begin{claim}
If~$t \cdot H \subseteq G$, then~$(\S,U)$ has an exact cover.
\end{claim}
\begin{claimproof}
Assume that~$t \cdot H \subseteq G$. By Claim~\ref{claim:doublebroom:hardness:goodmodels}, the image of every $H$-subgraph model in~$G$ is a superset of some~$S \in \S$. If there is a subgraph model that uses a strict superset, then this leaves less than~$n - r$ vertices to realize the remaining~$t-1$ $H$-subgraph models, which is not possible since each requires a superset of a size-$r$ set. Hence the intersection of each $H$-subgraph in the packing with~$U$ is a set~$S \in \S$. Since the packing consists of vertex-disjoint subgraphs, there are~$t$ pairwise disjoint sets in~$\S$. As they each have size~$r$, these~$n/r$ sets cover the entire universe~$U$. Hence~$(\S,U)$ has an exact cover.
\end{claimproof}

This concludes the proof of Lemma~\ref{lemma:broompacking:wkhard}.
\end{proof}

As we have given \textup{WK[1]}-hardness proofs for all graph families listed in Theorem~\ref{theorem:packing:lowerbounds}, the theorem follows.

\subsection{Combinatorial characterizations}
\label{sec:packing-ramsey}
In this section, our goal is to prove Theorem~\ref{theorem:ramsey:1} characterizing hereditary classes that are not small/thin.
If a class $\F$ is not small/thin, then $\F$ contains graphs either with large nonbipartite components or with large non-thin bipartite components. We investigate both cases separately and show that one the basic classes listed in Theorem~\ref{theorem:ramsey:1} is contained in $\F$.

Let us treat first the case of large nonbipartite graphs.

\begin{lemma} \label{lemma:ramsey:nonbipartite0} For every $Q\ge 1$,
  there is an $a=a(Q)\ge 1$ such that if $H$ is a connected nonbipartite
  graph with at least $a$ vertices, then $H$ contains at least one of
  the following graphs as induced subgraph:
\begin{enumerate}
\item $\Gpath{Q}$,
\item $\Gclique{Q}$,
\item $\Gfountain{s}{Q}$ for some odd integer $3\le s\le Q+2$,
\item $\Glongfountain{s}{t}{Q}$ some odd integer $3\le s\le Q+2$ and integer $1\le t\le Q$, or
\item $\Goperahouse{s}{Q}$ for some odd integer $1\le s\le Q$.
\end{enumerate}
\end{lemma}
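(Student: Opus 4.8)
The plan is to take $H$ connected, nonbipartite, with many vertices, and first apply Corollary~\ref{theorem:path:ramsey2} to break into cases. If $H$ has a sufficiently long path (as a function of $Q$), then Corollary~\ref{theorem:path:ramsey2} gives us either an induced $\Gpath{Q}$ (case 1, done), an induced $\Gclique{Q}$ (case 2, done), or an induced $K_{Q',Q'}$ for suitable $Q'$. The biclique case is the one that needs further work: a large induced biclique is bipartite, so it does not by itself give any of the five target graphs, but we will use it together with the fact that $H$ is connected and nonbipartite to reach a fountain, long fountain, or opera house. So the first reduction is: either we are immediately done, or $H$ has no long path, or $H$ has a large induced biclique. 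I would set the threshold $a = a(Q)$ large enough that, by a Ramsey-type/pigeonhole argument, "$H$ has at least $a$ vertices and no path on $P(Q,Q')$ vertices" forces a large induced biclique anyway — actually that is false in general (a long induced path has no long ordinary path... no wait, it does). Let me restructure: a graph on many vertices either has a long path or has small ``path cover''/DFS-tree depth, and a DFS tree of bounded depth with many vertices has a vertex of high degree, which in a connected graph eventually yields a large star, hence a large biclique $K_{1,Q}$ as a subgraph; then Ramsey (Theorem~\ref{theorem:bipartite:ramsey} plus Theorem~\ref{theorem:ramsey}) upgrades a large biclique subgraph to a large \emph{induced} biclique. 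So in all cases we may assume $H$ contains a large induced biclique $B = K_{m,m}$ with $m$ large compared to $Q$.

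The heart of the argument is then: $H$ is connected and nonbipartite and contains a large induced biclique $B$; derive a fountain, long fountain, or opera house. Since $H$ is nonbipartite it contains an odd cycle; take a shortest odd cycle $C$ of length $s$, which is then an induced cycle, and $s$ is odd. Now I want to connect $C$ to the large biclique $B$ by a shortest path $P$ in $H$ (possibly of length zero if they already intersect), and argue that the endpoint of $P$ landing in $B$ can be taken to have many private neighbors among the other side of $B$ — i.e., after discarding the $O(|P|+|C|)$ vertices used by $C$ and $P$, one side of $B$ still has $\ge Q$ vertices all adjacent to a single vertex $v$ on the other side, and $v$ is an endpoint (or near-endpoint) of $P$. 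If $s$ stays bounded (we should bound $s \le Q+2$ by noting that if the shortest odd cycle is long then $H$ contains a long induced path, handled above — so in the remaining case $s\le Q+2$) and the path length $t$ stays bounded similarly, attaching $C$, then $P$, then the $Q$ pendant neighbors at $v$ produces exactly an induced $\Glongfountain{s}{t}{Q}$ when $P$ is nontrivial, or an induced $\Gfountain{s}{Q}$ when $v$ lies on $C$ itself. The opera house case arises when the ``pendant'' structure is not pendant but is instead a set of $Q$ common neighbors of two antipodal-ish vertices of the cycle — this happens when the large biclique's vertex set sits ``inside'' the odd cycle in the right way, giving two vertices joined by a length-$s$ path with $Q$ common neighbors, i.e.\ $\Goperahouse{s}{Q}$; I would get this by a case analysis on how the shortest connecting path $P$ attaches to $C$ (whether it attaches to one vertex or whether two ends of a detour attach to two distinct cycle vertices).

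**Main obstacle.** The delicate part is controlling \emph{inducedness} of the final subgraph: after locating the odd cycle $C$, the connecting path $P$, and the $Q$ neighbors of $v$ in $B$, there may be unwanted chords between these pieces (e.g., an edge from a $B$-neighbor of $v$ to an internal vertex of $C$, or a chord of $P$). I would handle this by (a) choosing $C$ shortest-odd so it is induced, (b) choosing $P$ shortest so it is induced and chordless to $C$ except at its endpoint, and (c) crucially exploiting that we have a \emph{huge} biclique $B$ to start with: whenever a potential chord would spoil the structure, it can only involve a bounded number of the $Q$-many candidate pendant vertices, so after throwing away a bounded set we retain $Q$ clean ones — this is the standard ``Ramsey gives slack'' trick, and it is why $a(Q)$ must be chosen as an iterated/large function of $Q$ (large enough to survive: upgrading biclique-subgraph to induced biclique via $R(\cdot,\cdot)$, then shrinking by $O(Q)$ for the cycle and path, then shrinking again by $O(Q)$ for each type of forbidden chord, and finally a pigeonhole over the $2^{O(|C|+|P|)}$ possible adjacency patterns between a $B$-vertex and the cycle/path to find one vertex $v$ with the ``right'' pattern). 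Bounding $s\le Q+2$ and $t\le Q$ is comparatively easy: a longer shortest-odd-cycle or a longer shortest connecting path directly yields an induced $\Gpath{Q}$, putting us in case~1.
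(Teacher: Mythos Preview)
Your overall strategy is right and matches the paper's, but the framing around ``a large induced biclique $K_{m,m}$'' is wrong, and it matters for how you organize the rest. Your own DFS-depth argument only produces a \emph{star}: a single high-degree vertex $v$ with (after Ramsey) many pairwise nonadjacent neighbors. It does not produce a balanced biclique, and in fact no such reduction is possible --- the graph $\Gfountain{3}{n}$ is connected, nonbipartite, arbitrarily large, and contains no $K_{2,2}$. Luckily a star is exactly what your ``heart'' actually uses (one vertex with many pendant candidates), so the substance survives; but once $B$ is a star you can no longer let $P$ be a shortest path from $C$ to $B$ and use whichever endpoint it hits, since that endpoint may be a leaf. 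You must take the shortest path from $C$ to the high-degree vertex $v$ itself.

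This is precisely what the paper does, and its case split is cleaner than your pigeonhole-over-patterns plan. The paper uses two degree thresholds $d_1 \ll d_2$. In Case~2, some vertex $w$ of the shortest odd cycle $C$ already has degree $\ge d_1$; then Ramsey on $N(w)\setminus C$ gives a large independent set, and the minimality of $C$ forces each of these vertices to be adjacent on $C$ only to $w$ or to one of the two vertices at distance two from $w$ along $C$, yielding $\Gfountain{|C|}{Q}$ or $\Goperahouse{|C|-2}{Q}$ directly. In Case~1, every vertex of $C$ has degree $<d_1$, so the guaranteed degree-$d_2$ vertex $v$ lies off $C$, and --- this is the point of the two thresholds --- since $d_2 > |C|(d_1+1)+1$, the vertex $v$ has many neighbors outside $N[C]\cup P$, not merely outside $C\cup P$. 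Those neighbors then cannot touch $C$ at all; by the shortest-path property they can only touch $p_0,p_1,p_2$, so a three-way pigeonhole attaches $Q$ independent pendants to a single $p_{i^*}$, and one more split on how $p_{\ell-1}$ meets $C$ gives $\Glongfountain{s}{t}{Q}$.

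Your $2^{O(|C|+|P|)}$-pattern pigeonhole would also go through, but without the two-threshold trick you still have to handle the case where the surviving pendant class is adjacent to some vertex of $C$ --- which just throws you back into the ``high-degree vertex on $C$'' situation with a different center. Separating those two cases up front, as the paper does, avoids the circularity.
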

\begin{proof}

We define the following constants:
\begin{align*}
d_1=&R(3Q,2)+(Q+1),\\
d_2=&R(3Q,2)+1+(Q+2)(d_1+1),\\
a=&\sum_{i=0}^{Q-1}d_2^i+1,
\end{align*}
where $R(3Q,2)$ is the Ramsey number guaranteeing a clique or
independent set of size at least $3Q$ (Theorem~\ref{theorem:ramsey}). Observe that the constant $a$ depends only on $Q$.

Let $C$ be an odd cycle of minimum length in $H$.  Observe that $C$
has no chord, otherwise the chord would split the odd cycle $C$ into a
strictly shorter odd cycle and a strictly shorter even cycle, a
contradiction. Thus $C$ is an induced cycle and hence has length at
most $Q+2$, otherwise $\Gpath{Q}$ would be an induced subgraph of $H$.

Assume that $H$ has at least $a$ vertices, but does not contain any of
the graphs listed in the lemma as induced subgraph. Select an
arbitrary vertex of $H$. If $H$ does not contain $\Gpath{Q}$ as an
induced subgraph, then every vertex of $H$ is at distance at most
$Q-1$ from the selected vertex. If $H$ has maximum degree $d_2$, then
this would imply that $H$ has at most $\sum_{i=0}^{Q-1}d_2^i<a$
vertices, a contradiction. Thus we may assume that $H$ has a vertex
$v$ of degree at least $d_2$.  We consider two cases.

\textbf{Case 1: Every vertex on the cycle $C$ has degree at most
  $d_1$.} As $d_1<d_2$, vertex $v$ is not on $C$. Let $p_0=v$, $p_1$,
$\dots$, $p_{\ell}=w$ be a shortest path $P$ from $v$ to a vertex $w$
of the cycle $C$.  Vertex $v$ has at most $|C|(d_1+1)$ neighbors in the
closed neighborhood of $C$ and (as $P$ is a shortest path), only one
neighbor on $P$, namely $p_1$. Thus we can select from the at least $d_2$ neighbors of $v$ a set $X_1$ of
at least $d_2-1-|C|(d_1+1)\ge d_2-1-(Q+2)(d_1+1)=R(3Q,2)$ vertices that are not
on $P$, not on $C$, and not in the neighborhood of $C$. As $H$ has no
clique of size $3Q$, Ramsey's Theorem (Theorem~\ref{theorem:ramsey}) implies that there is an
independent set $X_2\subseteq X_1$ of size at least $3Q$. As $P$ is a
shortest path, a vertex of $X_2$ cannot be adjacent to $p_i$ for
$i>2$. Let us partition $X_2$ into three classes: let $X_{2,i}$
contain a vertex of $X_2$ if it is adjacent to $p_i$, but not to
$p_{i'}$ for any $i'>i$ (as $p_0=v$, every vertex of $X_2$ is in exactly one of 
$X_{2,0}$, $X_{2,1}$, or $X_{2,2}$). As $|X_2|\ge 3Q$, there is an
$0\le i^* \le 2$ such that $|X_{2,i^*}|\ge Q$.  Note that $i^*<\ell$:
since $p_\ell$ is on $C$, no vertex of $X_2$ is adjacent to it. As $P$
is a shortest path, only $p_{\ell-1}$ can have neighbors on the cycle
$C$. We consider two subcases (see Figure~\ref{fig:oddcycle}).
\begin{figure}[t]
\begin{center}
{\small \svg{\linewidth}{oddcycle}}
\caption{Proof of Lemma~\ref{lemma:ramsey:nonbipartite0}.}\label{fig:oddcycle}
\end{center}
\end{figure}

\textbf{Case 1.a: $p_\ell$ is the only neighbor of $p_{\ell-1}$ on
  $C$.} Then the set $X_{2,i^*}$, the subpath $P'$ of $P$ from
$p_{i^*}$ to $p_\ell=w$, and the cycle $C$ form a
$\Glongfountain{|C|}{|P'|}{|X_{2,i^*}|}$, contradicting our
assumptions (note that $|C|\le Q+2$, $|P'|\le Q$, and $|X_{2,i^*}|\ge
Q$).

\textbf{Case 1.b: $p_\ell$ has at least two neighbors on $C$.}  Then
$C$ has a subpath $R$ of odd length such that $p_{\ell-1}$ is adjacent
to the endpoints of $R$, but not to the internal vertices of $R$ (this
is because the neighbors of $p_{\ell-1}$ split the cycle $C$ into
subpaths, and at least one such subpath has odd length). We may
observe that, by the minimality of the cycle $C$, the length of $R$
is exactly $|C|-2$, otherwise $R$ and $p_{\ell-1}$ would form an odd
cycle strictly shorter than $C$. Now $X_{2,i^*}$, the subpath $P'$
of $P$ from $p_{i^*}$ to $p_{\ell-1}$, and path $R$ form a
$\Glongfountain{|R|+2}{|P'|}{|X_{2,i^*}|}$. As $|R|+2= |C|\le Q+2$,
$|P'|\le Q$, and $|X_{2,i^*}|\ge Q$, this contradicts assumptions on
$Q$.

\textbf{Case 2: $C$ has a vertex $w$ of degree at least $d_1$.}
Vertex $w$ has a set $X_1$ of at least $d_1-(|C|-1)\ge
d_1-(Q+2-1)=R(3Q,2)$ neighbors that are not on $C$. As $H$ has no
clique of size $3Q$, Ramsey's Theorem (Theorem~\ref{theorem:ramsey}) implies that there is an
independent set $X_2\subseteq X_1$ of size at least $3Q$.  We need to
treat the case when $C$ has length 3 separately, thus we consider two
subcases (see Figure~\ref{fig:oddcycle}).

  \textbf{Case 2.a: $|C|=3$.}  Let $w$, $z_1$, and $z_2$ be the
  vertices of $C$. If $X_2$ has a subset $X_3$ of $Q$ vertices adjacent to $z_1$, then
  the set $X_3$ and the path $wz_1$ form an
  $\Goperahouse{1}{Q}$. We can get an $\Goperahouse{1}{|Q|}$ in a
  similar way if $X_2$ has a subset $X_3$ of $Q$ vertices adjacent to $z_2$. Otherwise, 
  there is a set $X_3\subseteq X_2$ of at least $Q$ vertices that are
  adjacent to neither $z_1$ nor $z_2$, and hence $X_3$ and the cycle
  $C$ form a $\Gfountain{3}{Q}$.

  \textbf{Case 2.b: $|C|\ge 5$.}  Let $q_1$, $q_2$, $w$, $q_3$, $q_4$
  be consecutive vertices appearing on $C$ in this order; as $|C|\ge
  5$, these 5 vertices are distinct. It is easy to see that no vertex
  $u$ of $X_2$ can be adjacent to any vertex $q'$ of $C\setminus
  \{q_1,q_2,w,q_3,q_4\}$: then the path $wuq'$ and the subpath of $C$
  between $w$ and $q'$ that has odd length would form an odd cycle
  strictly shorter than $C$, a contradiction. If $X_2$ has a vertex
  adjacent to either $q_2$ or $q_3$, then there is a triangle,
  contradicting the minimality of $C$.  If $X_2$ has a vertex $u$
  adjacent to both $q_1$ and $q_4$, then the path $q_1uq_4$ and the
  subpath of $C$ between $q_1$ and $q_4$ avoiding $w$ would form an
  odd cycle strictly shorter than $C$. Thus a vertex in $X_2$ can have at
  most one neighbor on $C$ besides $w$, and this neighbor can only be
  $q_1$ or $q_4$.  If there is a set $X_3\subseteq X_2$ of $Q$
  vertices that are adjacent to $q_1$, then the subpath of $C$ between
  $q_1$ and $w$ avoiding $q_2$ form an $\Goperahouse{|C|-2}{Q}$ (note
  that $|C|-2\le Q$). We get an $\Goperahouse{|C|-2}{Q}$ in a similar
  way if there is a set $X_3\subseteq X_2$ of $Q$ vertices that are
  adjacent to $q_4$. Otherwise, there is a set $X_3\subseteq X_2$ of
  $Q$ vertices that are adjacent to only $w$ on $C$; then $X_3$ and
  $C$ form a $\Gfountain{|C|}{Q}$.
\end{proof}

Lemma~\ref{lemma:ramsey:nonbipartite0} is a statement about a specific
graph. We turn it into a statement on graph classes.
\begin{lemma} \label{lemma:ramsey:nonbipartite} Let~$\F$ be a
  hereditary graph family. Then at least one of the following is true.
\begin{enumerate}
\item There is an integer $a\ge 1$ such that every connected nonbipartite graph in $\F$ has at most~$a$ vertices.
\item $\F$ is a superset of \Fpath.
\item $\F$ is a superset of \Fclique.
\item $\F$ is a superset of \Ffountain{s} for some odd integer $s\ge 3$.
\item $\F$ is a superset of \Flongfountain{s,t} for some odd integer $s\ge 3$ and integer $t\ge 1$.
\item $\F$ is a superset of \Foperahouse{s} for some odd integer $s\ge 1$.
\end{enumerate}
\end{lemma}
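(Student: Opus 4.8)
The plan is to bootstrap the single-graph statement of Lemma~\ref{lemma:ramsey:nonbipartite0} into a statement about the whole family by a pigeonhole argument over the parameter $Q$. If item~(1) holds there is nothing to prove, so assume it fails; then $\F$ contains connected nonbipartite graphs of arbitrarily large order. Hence for every $Q \ge 1$ I can apply Lemma~\ref{lemma:ramsey:nonbipartite0} with $a = a(Q)$ to a connected nonbipartite member of $\F$ on at least $a(Q)$ vertices, and, since $\F$ is hereditary, obtain a graph $G_Q \in \F$ that equals $\Gpath{Q}$, $\Gclique{Q}$, $\Gfountain{s}{Q}$ (for some odd $3 \le s \le Q+2$), $\Glongfountain{s}{t}{Q}$ (for some odd $3 \le s \le Q+2$ and $1 \le t \le Q$), or $\Goperahouse{s}{Q}$ (for some odd $1 \le s \le Q$).

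The next step is to turn the failure of the remaining items into uniform forbidden-substructure conditions. If item~(2) fails then $\Gpath{\ell_0}\notin\F$ for some $\ell_0$, and by heredity no member of $\F$ contains an induced path on $L := \ell_0+1$ vertices; similarly, if item~(3) fails then no member of $\F$ contains a clique on some constant number $K$ of vertices. Assuming now that items (1)--(3) all fail, I must derive one of (4)--(6). The combinatorial heart of the argument is that each of the graphs $\Gfountain{s}{Q}$ and $\Glongfountain{s}{t}{Q}$ contains the cycle $C_s$ as an \emph{induced} subgraph, hence an induced path on $s-1$ vertices; that $\Glongfountain{s}{t}{Q}$ also contains its length-$t$ tail as an induced path on $t+1$ vertices; and that $\Goperahouse{s}{Q}$ contains its length-$s$ spine as an induced path on $s+1$ vertices. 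Combined with the forbidden induced $P_L$ and $K_K$, this rules out $G_Q=\Gpath{Q}$ and $G_Q=\Gclique{Q}$ once $Q \ge \max(L,K)$, and forces the parameter $s$ (and $t$, when present) of a fountain/long-fountain/opera-house $G_Q$ to be at most $L$. Thus for all $Q\ge\max(L,K)$ the graph $G_Q$ belongs to one of only finitely many \emph{types}, a type recording which of the three shapes $G_Q$ has together with the now-bounded parameters $s$ (and $t$), but not the unbounded $Q$.

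To finish, I would invoke the pigeonhole principle: there are infinitely many integers $Q\ge\max(L,K)$ but only finitely many types, so some single type $\tau$ is attained by $G_Q$ for infinitely many $Q$. If $\tau$ is ``fountain with cycle length $s^*$'', then $\Gfountain{s^*}{Q}\in\F$ for infinitely many $Q$; since $\Gfountain{s^*}{i}$ is an induced subgraph of $\Gfountain{s^*}{Q}$ for every $i\le Q$ (delete $Q-i$ pendant leaves), heredity gives $\Gfountain{s^*}{i}\in\F$ for all $i\ge1$, i.e.\ $\Ffountain{s^*}\subseteq\F$ with $s^*\ge3$ odd, so item~(4) holds. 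The long-fountain type yields $\Flongfountain{s^*}{t^*}\subseteq\F$ and the opera-house type yields $\Foperahouse{s^*}\subseteq\F$ in exactly the same way (again deleting pendant leaves to exhibit the smaller members as induced subgraphs), giving items (5) and (6).

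I expect the delicate point to be the second paragraph: one must carefully justify that ``$\F$ is not a superset of $\Fpath$'' yields a genuine absolute constant $L$ bounding induced-path length in $\F$ (heredity is exactly what makes this work), and must verify that $C_s$, the tail, and the spine really are induced in the respective pattern graphs — otherwise a large $\Gfountain{s}{Q}$ in $\F$ would not actually force a long induced path into $\F$, and the argument that $s$ is bounded would break. The rest is routine bookkeeping with the pigeonhole principle and heredity.
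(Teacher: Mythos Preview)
Your proof is correct and uses the same ingredients as the paper's: Lemma~\ref{lemma:ramsey:nonbipartite0}, heredity, and the observation that large values of~$s$ (and~$t$) in a fountain, long fountain, or opera house force a long induced path. The organization differs: the paper argues the contrapositive (assume (2)--(6) all fail, then produce a single~$Q$ witnessing the failure of every relevant sequence simultaneously, and apply Lemma~\ref{lemma:ramsey:nonbipartite0} once to obtain~(1)), whereas you assume~(1) fails, apply Lemma~\ref{lemma:ramsey:nonbipartite0} for every~$Q$, and then pigeonhole over the finitely many types. The paper's route is slightly leaner---one invocation of the lemma instead of infinitely many plus a pigeonhole step---but yours is equally valid and arguably more transparent about why each of~(4)--(6) can arise.
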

\begin{proof}
Assuming that $\F$ is not a superset of $\Fpath$, there is an integer $\ell\ge 1$ such that $\Gpath{\ell}\not\in \F$. Then, 
assuming that $\F$ is not a superset of any of the other families described in the lemma, there is a $Q\ge \ell \ge 1$ such that $\F$ does not contain any of the following graphs:
\begin{itemize}
 \item $\Gclique{Q}$,
 \item $\Gfountain{s}{Q}$ for any odd integer $3\le s < \ell+2$,
 \item $\Glongfountain{s}{t}{Q}$ for any odd integer $3 \le s < \ell+2$ and integer $1\le t < \ell$,
 \item $\Goperahouse{s}{Q}$ for any odd integer $1\le s < \ell$.
 \end{itemize}
 The reason why such a $Q$ can be defined is that we have to consider
 a finite number of infinite sequences of graphs (such as
 $\Gfountain{s}{i}$ for a fixed $3\le s \le \ell+2$ and for
 $i=1,2,\dots$), we know by assumption that these sequences are not
 contained in $\F$, hence for each sequence there is an $i$ such that
 the $i$-th element of the sequence is not contained in $\F$. In fact,
 as the $i$-th element of the sequence is an induced subgraph of the
 $i'$-th element for every $i'>i$, we know that no element of the
 sequence after the $i$-th element can appear in $\F$ either. Then we
 can define $Q$ to be the maximum of all these $i$'s corresponding to
 the finitely many forbidden sequences. As $\F$ is hereditary, we also
 know that none of these graphs appear as induced subgraphs in the
 members of $\F$.

 Observe that an induced $\Gpath{\ell}$ is contained in every
 $\Gfountain{s}{Q}$ with $s\ge \ell+2$, in every
 $\Glongfountain{s}{t}{Q}$ with $s\ge \ell+2$ or $t\ge \ell$, and in
 every $\Goperahouse{s}{Q}$ with $s\ge \ell$. Therefore, none of the
 graphs listed in Lemma~\ref{lemma:ramsey:nonbipartite0} can appear in
 $\F$, hence every connected nonbipartite graph in $\F$ has size at
 most $a(Q)$.
\end{proof}

Next we consider large bipartite graphs that are not thin.

\begin{lemma} \label{lemma:ramsey:balancedbipartite0} For every $Q\ge
  1$, there is an integer $b=b(Q)\ge 1$ such that every connected
  bipartite graph is either $b$-thin or contains one the following
  graphs as induced subgraphs:
\begin{enumerate}
\item $\Gpath{Q}$,
\item $\Gclique{Q}$,
\item $\Gbiclique{Q}$,
\item $\Gsubdivstar{Q}$, or
\item $\Gdoublebroom{s}{Q}$ some odd integer $1\le s\le Q$.
\end{enumerate}
\end{lemma}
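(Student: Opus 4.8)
The plan is to mimic the structure of the proof of Lemma~\ref{lemma:ramsey:nonbipartite0}: first reduce to the case that $H$ has no long path, then locate a high-degree vertex, and finally extract one of the listed induced subgraphs by a neighbourhood analysis driven by Ramsey-type and pigeonhole arguments. I will assume $Q\ge 3$ (for $Q\le 2$ the statement is trivial by taking $b$ huge; note also that $\Gclique{Q}$ cannot occur in a bipartite graph once $Q\ge 3$, so that outcome is essentially vacuous), and I will specify the constant $b=b(Q)$ as the proof proceeds. The argument is by induction on $|V(H)|$ with $b(Q)$ fixed once and for all.

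\emph{Step 1: no long path.} Let $P_0:=P(Q+1,Q)$ be the number from Corollary~\ref{theorem:path:ramsey2}. If $H$ has a path on $P_0$ vertices, then it has an induced path on $Q+1$ vertices (that is, $\Gpath{Q}$), or a $\Gclique{Q}$, or an induced $K_{Q,Q}=\Gbiclique{Q}$, and we are done. So assume $H$ has no path on $P_0$ vertices and put $L:=P_0-1$.

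\emph{Step 2: a high-degree vertex.} Since every path of $H$ has at most $L$ vertices, every BFS tree of $H$ has depth at most $L-1$, so if $\Delta(H)\le D$ then $|V(H)|\le\sum_{i=0}^{L-1}D^i<D^L$. We pick $b(Q)$ large enough that $2b(Q)\ge D^L$ for the threshold $D=D(Q)$ fixed implicitly in Step~4; as $H$ is not $b(Q)$-thin, it has more than $2b(Q)$ vertices, hence a vertex $v$ with $\deg_H(v)>D$. Say $v\in A$ and $N:=N_H(v)\subseteq B$, so $|N|>D$.

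\emph{Step 3: a subdivided star, or structure.} Every component of $H-v$ meets $N$. A component meeting $N$ with at least two vertices contains some $u\in N$ together with a neighbour $\ell\ne v$ of $u$ lying inside it; crucially $\ell\in N_H(u)\subseteq A$ while $v\in A$, so $\ell\not\sim v$. Hence if at least $Q$ components of $H-v$ have two or more vertices, picking one such pair $(u_i,\ell_i)$ from each of $Q$ of them produces distinct vertices with $\ell_i\not\sim u_j$ for $i\ne j$ (distinct components), and $H[\{v\}\cup\{u_i,\ell_i:i\in[Q]\}]$ is an induced $\Gsubdivstar{Q}$. So assume $H-v$ has fewer than $Q$ non-singleton components $K_1,\dots,K_r$ ($r<Q$), and let $P\subseteq N$ be the set of pendant neighbours of $v$; since $|N|>D$, either $|P|$ is large or some $|N\cap K_i|$ is large.

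\emph{Step 4: bicliques, twin classes, and the double broom.} In either case we run a biclique-growing procedure, maintaining a core $C\subseteq A$ with $v\in C$ and a large set $S\subseteq B$ in the common $H$-neighbourhood of $C$; in each round we either extend the core (if many vertices of $S$ have few distinct neighbours outside $C$, one of which then serves to enlarge $C$ and shrink $S$), extract an induced $\Gsubdivstar{Q}$ (if many vertices of $S$ have neighbours outside $C$ that can be chosen pairwise non-conflicting, using a further Ramsey argument to guarantee the induced structure), or isolate a bounded set $W$ with $|W|<Q$ and a large independent ``twin'' set $S'$ with $N_H(s)=W$ for all $s\in S'$ (if many vertices of $S$ have all their neighbours in $C$). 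Since each core extension increases $|C|$, after at most $Q-1$ extensions $|C|$ reaches $Q$ while $S$ is still large, giving a $K_{Q,Q}$ subgraph of $H$; this is automatically induced, because the two sides of a $K_{Q,Q}$ subgraph of a bipartite graph lie in the two partite classes and are therefore independent. Once we are left with $W$ and $S'$, there are two possibilities. If $|B\setminus S'|>b(Q)$, then $H-S'$ is connected, bipartite, not $b(Q)$-thin, and smaller than $H$, so we finish by the induction hypothesis. Otherwise $S'$ is essentially all of $B$, attached to the small set $W\subseteq A$, which forces every vertex of $A\setminus W$ to have its (small) neighbourhood inside $B\setminus S'$; running the same twin-class analysis on the other side either produces a $K_{Q,Q}=\Gbiclique{Q}$ or yields a second bounded set $W'\subseteq B$, $|W'|<Q$, with its own large twin class. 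Linking a vertex of $W$ to a vertex of $W'$ by a shortest path of $H$ --- which has bounded length and, as it runs between the two partite classes, odd length $s$ --- and attaching $Q$ pendants at each end from the two twin classes gives an induced $\Gdoublebroom{s}{Q}$ with $s$ odd; truncating if necessary to land in the range $1\le s\le Q$ is routine, and when $|W|$ (or $|W'|$) exceeds $1$ one uses the resulting biclique instead, again landing in $\Gbiclique{Q}$.

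\emph{Main obstacle.} All the real difficulty lies in Step~4: choosing the thresholds $D(Q)$ and $b(Q)$ together with the various ``large'' quantities so that they survive the bounded-depth biclique growth and twin-class peeling; proving that the induced $\Gsubdivstar{Q}$ really can be extracted in the relevant subcase despite possible adjacencies among the chosen leaves; verifying that the terminal configuration --- a bounded separator carrying a large twin class whose deletion would destroy non-thinness --- is exactly the one that forces a double broom (with the connecting path necessarily short and of odd length) or a biclique; and keeping the bookkeeping tight enough that the extracted graph is literally one of the five graphs named in the statement. Steps~1--3 are, by comparison, short.
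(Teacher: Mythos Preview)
Your plan diverges substantially from the paper's proof, and Step~4 as sketched has real gaps that I do not see how to close without essentially rediscovering the paper's argument.

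\textbf{What the paper does.} After Step~1 (no long path), the paper does \emph{not} pick a single high-degree vertex. Instead it first bounds the vertex cover of $H$: take a DFS forest, let $Z$ be the non-leaves. If some $v\in Z$ has $Q$ children in $Z$, then $v$ together with those children and one grandchild each form an induced $\Gsubdivstar{Q}$ (bipartiteness kills the only dangerous edges). Otherwise every non-leaf has at most $Q-1$ non-leaf children, and since the tree has bounded height, $|Z|\le\kappa$ for a constant $\kappa=\kappa(Q)$. Now the punchline: if one partite class, say $A$, has maximum degree $\le d$, then picking one incident edge per vertex of $B$ gives $|B|$ edges each covered by at most $d$ vertices of the vertex cover, forcing $|B|\le\kappa d$; taking $b=\kappa d$ contradicts non-$b$-thinness. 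Hence \emph{both} partite classes contain a vertex of degree $\ge d$. A shortest path between two such vertices $x,y$ in different classes has odd length $<Q$; apply Ramsey to $N(x)$ and $N(y)$ to get large independent sets, then bipartite Ramsey (Theorem~\ref{theorem:bipartite:ramsey}) to kill all edges between them, and a short cleanup (shifting endpoints by two along the path if needed) yields an induced $\Gdoublebroom{s}{Q}$.

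\textbf{Where your plan falters.} Your Step~4 tries to manufacture the second high-degree vertex from scratch via a twin-class/biclique-growing/induction scheme, and this is where things break. The induction step ``$H-S'$ is connected'' is simply not true in general: vertices of $W$ whose entire neighbourhood lay in $S'$ become isolated. The ``biclique-growing procedure'' is not specified enough to verify that the three advertised outcomes are exhaustive, or that the $\Gsubdivstar{Q}$ you extract in the middle subcase is actually induced. And the terminal case (``running the same twin-class analysis on the other side'') is circular: you need a high-degree vertex on the $B$-side to start it, which is exactly what you are trying to produce. The paper's vertex-cover bound gives you the second high-degree vertex for free and makes all of this machinery unnecessary.
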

\begin{proof}
We define the following constants:
\begin{align*}
h&=P(Q,Q), && \textup{($P$ is from Corollary~\ref{theorem:path:ramsey2})}\\
\kappa&=\sum_{i=0}^{h}(Q-1)^i,\\
d&=1+R(\bipRamsey(2Q,2),2),&&\textup{($R$ is from Ramsey's Theorem; $\bipRamsey$ is from Theorem~\ref{theorem:bipartite:ramsey})}\\
b&=\kappa d.
\end{align*}
Observe that the constant $b$ depends only on $Q$.
Assume that $H\in \F$ is a connected $b$-thin bipartite graph not
containing any of the listed graphs as an induced subgraph. First we
bound the vertex cover number of $H$.

\begin{claim}\label{claim:vcbound}
  $H$ has a vertex cover of size at most $\kappa$.
\end{claim}
\begin{proof}
Construct a DFS tree of $H$ starting at an
arbitrary root. Let $L\subseteq V(H)$ be the set of leaves in the DFS
tree and let $Z=V(H)\setminus L$.  Observe that $Z$ is a vertex cover
of $H$: there are no edges between the leaves.

We claim that if a vertex $v\in Z$ has $Q$ children $x_1,\dots,x_Q\in
Z$ in the DFS tree, then $H$ contains $\Gsubdivstar{Q}$ as an induced
subgraph, a contradiction. To show this, let us select an arbitrary child $y_i$ of each 
$x_i$ (note that $x_i\in Z$ is not a leaf). Then it is clear that the
set $\{v,x_1,\dots, x_s,y_1,\dots,y_s\}$ forms a $\Gsubdivstar{Q}$
subgraph. To see that it is an induced subgraph, observe that there is
no edge between $\{x_i,y_i\}$ and $\{x_j,y_j\}$ for any $i\neq j$ by
the properties of the DFS tree and there is no edge $\{v,y_i\}$, as 
this would create a triangle $\{v,x_i,y_i\}$ and the graph is
bipartite by assumption. Therefore, every $v\in Z$ has at most $Q-1$
children in $Z$, otherwise the graph would contain $\Gsubdivstar{Q}$ as an
induced subgraph.

The height of the DFS tree is at most $h=P(Q,Q)$, otherwise
Corollary~\ref{theorem:path:ramsey2} would imply that $H$ contains
$\Gpath{Q}$, $\Gclique{Q}$, $\Gbiclique{Q}$ as an induced
subgraph. This means that $|Z|\le \sum_{i=0}^{h}(Q-1)^i=\kappa$. As we
have observed, $Z$ is a vertex cover of $H$, thus $H$ has vertex cover
number at most $\kappa$.  \cqed
\end{proof}

Assume first that one partite class of $H$ has maximum degree $d$.  Let $A$ and $B$ be the two partite classes and assume that the
vertices in $A$ have degree at most $d$. 
 As $H$ is not $b$-thin, both partite classes of $H$ have size more than
$b$. Pick one edge incident to
each vertex of $B$, we get $|B|\ge b+1=\kappa d+1$ distinct edges. Now
each vertex can cover at most $d$ of these edges: a vertex in $A$ can
cover at most $d$ (as it has degree at most $d$) and a vertex in $B$
covers exactly one such edge. This contradicts Claim~\ref{claim:vcbound}.

Therefore, we can assume that $H$ has two vertices $x$ and $y$ having
degree at least $d$ in each partite class of $H$.  Let $p_0=x$,
$p_1$, $\dots$, $p_{\ell-1}$, $p_\ell=y$ be a shortest path $P$
between $x$ and $y$ (see Figure~\ref{fig:2broom}).
\begin{figure}[t]
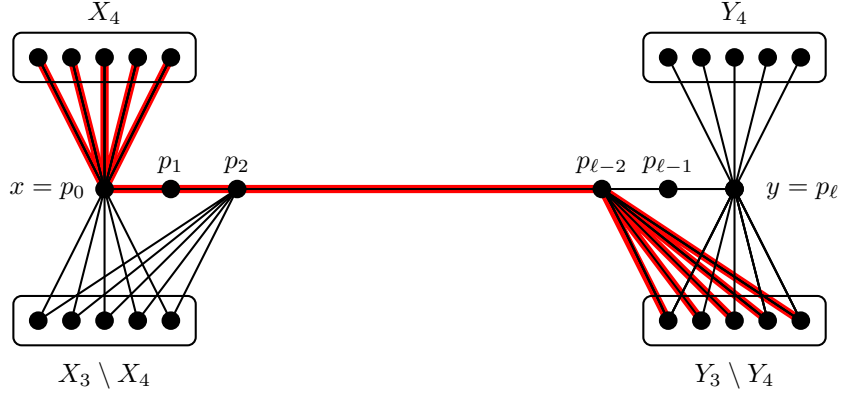

\begin{center}
{\small \svg{0.7\linewidth}{2broom}}
\end{center}
\caption{Proof of Lemma~\ref{lemma:ramsey:balancedbipartite0}: the case when $|X_4|\ge Q$ and $|X_3\setminus X_4|\ge Q$.}\label{fig:2broom}
\end{figure}
As $x$ and $y$ are in different partite classes, we have that
$\ell\ge 1$ is odd and $|P|<Q$, otherwise it would form an induced
$\Gpath{Q}$. Let $X_1$ be a set of $d-1$ neighbors of $x$ different
from $p_1$ and let $Y_1$ be a set of $d-1$ neighbors of $y$ different
from $p_{\ell-1}$. Note that $X_1$ and $Y_1$ are in different partite
classes, hence disjoint. Moreover, no vertex of $X_1$ or $Y_1$ is on
the path $P$, as $P$ is a shortest path between $x$ and $y$. The
definition of $d$ and Ramsey's Theorem (Theorem~\ref{theorem:ramsey}) implies that $X_1$ has a clique
or independent set of size $\bipRamsey(2Q,2)$. As $H$ does not contain
$\Gclique{Q}$ and certainly $Q$ is less than $\bipRamsey(2Q,2)$, the only
possibility is that there is an independent set $X_2\subseteq X_1$ of
size $\bipRamsey(2Q,2)$. Similarly, there is an independent set $Y_2\subseteq
Y_1$ of size $\bipRamsey(2Q,2)$. As $H$ contains no $\Gbiclique{2Q}$ as
induced subgraph, Theorem~\ref{theorem:bipartite:ramsey} implies that
there are sets $X_3\subseteq X_2$ and $Y_3\subseteq Y_2$ of size $2Q$
such that there is no edge between $X_3$ and $Y_3$. It is clear that
the sets $X_3$, $Y_3$, and the path $P$ form a $\Gdoublebroom{|P|}{Q}$
subgraph, but it is not necessarily an induced subgraph.  As $P$ is a
shortest $x-y$ path, there is no edge between $X_3$ and $p_i$ for
$i\ge 3$, and there is no edge between $X_3$ and $p_1$, as it would
form a triangle with $x=p_0$. Therefore, the only possible extra edges
are between $X_3$ and $p_2$ and between $Y_3$ and $p_{\ell-2}$ (and
this is only possible if $\ell\ge 3$). Let $X_4\subseteq X_3$ be those
vertices of $X_3$ that are not adjacent to $p_2$ and let $Y_4\subseteq
Y_3$ be those vertices of $Y_3$ that are not adjacent to
$p_{\ell-2}$. If $|X_4|\ge Q$, then let $X'$ be a $Q$ element subset
of $X_4$ and let $x'=p_0$; if $|X_4|< Q$, then let $X'$ be a $Q$
element subset of $X_3\setminus X_4$ and let $x'=p_2$.  Similarly, if
$|Y_4|\ge Q$, then let $Y'$ be a $Q$ element subset of $Y_4$ and let
$y'=p_\ell$; if $|Y_4|< Q$, then let $Y'$ be a $Q$ element subset of
$Y_3\setminus Y_4$ and let $y'=p_{\ell-2}$. Let $P'$ be the subpath of
$P$ between $x'$ and $y'$. Note that this path has odd length (if
$\ell=3$, it is possible that $x'=p_2$ and $y'=p_1$, but the length is
still an odd number, namely one). Then $X'$, $Y'$, and $P'$ form an
induced $\Gdoublebroom{|P'|}{Q}$ subgraph, a contradiction.
\end{proof}

As in Lemma~\ref{lemma:ramsey:nonbipartite}, we turn Lemma~\ref{lemma:ramsey:balancedbipartite0} into a statement on graph classes.
\begin{lemma} \label{lemma:ramsey:balancedbipartite}
Let~$\F$ be a hereditary family of bipartite graphs. Then at least one of the following is true.
\begin{enumerate}
\item There is an integer $b\ge 1$ such that every connected graph in $\F$ is $b$-thin.
\item $\F$ is a superset of \Fpath.
\item $\F$ is a superset of \Fclique.
\item $\F$ is a superset of \Fbiclique.
\item $\F$ is a superset of \Fsubdivstar.
\item $\F$ is a superset of \Fdoublebroom{s} for some odd integer $s\ge 1$.
\end{enumerate}
\end{lemma}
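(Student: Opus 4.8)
The plan is to deduce Lemma~\ref{lemma:ramsey:balancedbipartite} from Lemma~\ref{lemma:ramsey:balancedbipartite0} in exactly the way Lemma~\ref{lemma:ramsey:nonbipartite} was deduced from Lemma~\ref{lemma:ramsey:nonbipartite0}: assuming that none of the conditions~(2)--(6) holds, I would exhibit a constant $Q$ such that $\F$ omits every graph in the list of Lemma~\ref{lemma:ramsey:balancedbipartite0}, and then invoke that lemma (whose hypothesis is satisfied because $\F$ is a family of bipartite graphs) to conclude that condition~(1) holds with $b := b(Q)$.

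The one point that needs care is that condition~(6) involves the infinitely many families $\Fdoublebroom{s}$ with $s$ odd, so I must reduce to finitely many forbidden sequences. Assuming $\F$ is not a superset of $\Fpath$, fix $\ell \ge 1$ with $\Gpath{\ell} \notin \F$. For every $Q \ge 1$ and every odd $s \ge \ell - 2$, the graph $\Gdoublebroom{s}{Q}$ contains an induced copy of $\Gpath{s+2}$ (take the path of the broom together with one pendant vertex at each of its two endpoints), and since $s + 2 \ge \ell$ this induced path in turn contains an induced $\Gpath{\ell}$; hence by heredity $\Gdoublebroom{s}{Q} \notin \F$ already. Therefore only the finitely many families $\Fdoublebroom{s}$ with $s$ odd and $1 \le s \le \ell - 3$ require attention, alongside the single sequences $\Fclique$, $\Fbiclique$, and $\Fsubdivstar$.

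For each of these finitely many infinite sequences, the hypothesis that $\F$ is not a superset of it gives an element absent from $\F$; since in each of these sequences the $i$-th graph is an induced subgraph of the $i'$-th for all $i' > i$ and $\F$ is hereditary, all sufficiently large elements are absent. Let $Q$ exceed $\ell$ and exceed all the finitely many indices witnessing this. Then $\F$ omits $\Gclique{Q}$, $\Gbiclique{Q}$, $\Gsubdivstar{Q}$, and $\Gdoublebroom{s}{Q}$ for every odd $1 \le s \le \ell - 3$; together with the previous paragraph this means $\F$ omits $\Gdoublebroom{s}{Q}$ for every odd $1 \le s \le Q$. Finally $\Gpath{Q} \notin \F$ because $Q \ge \ell$ and $\Gpath{\ell}$ is an induced subgraph of $\Gpath{Q}$. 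Since $\F$ is hereditary, no member of $\F$ contains any of these graphs as an induced subgraph, so Lemma~\ref{lemma:ramsey:balancedbipartite0} applied with this $Q$ forces every connected bipartite graph in $\F$ — that is, every connected graph in $\F$ — to be $b(Q)$-thin, which is condition~(1). The whole argument is routine bookkeeping; the only genuine subtlety, as noted, is the reduction of condition~(6) to finitely many sequences via the observation that large-$s$ double-brooms are already excluded by the absence of a fixed long induced path.
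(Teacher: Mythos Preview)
Your proposal is correct and follows essentially the same route as the paper: assume (2)--(6) fail, use the absence of long paths to reduce the double-broom families to finitely many values of~$s$, pick a single~$Q$ large enough to witness exclusion from each of the finitely many remaining sequences, and then invoke Lemma~\ref{lemma:ramsey:balancedbipartite0}. The only cosmetic difference is that you observe $\Gdoublebroom{s}{Q}$ contains an induced $\Gpath{s+2}$ (via the central path plus one pendant at each end), giving the slightly sharper cutoff $s \ge \ell-2$, whereas the paper just uses the central path itself and the cutoff $s \ge \ell$; both work.
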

\begin{proof}
Assuming that $\F$ is not a superset of $\Fpath$, there is an integer $\ell\ge 1$ such that $\Gpath{\ell}\not\in \F$. Then, 
as in the proof of Lemma~\ref{lemma:ramsey:nonbipartite}, we can assume that there is a $Q\ge 1$ such that $\F$ does not contain any of the following graphs:
\begin{itemize}
\item $\Gpath{Q}$
 \item $\Gclique{Q}$,
 \item $\Gbiclique{Q}$,
 \item $\Gdoublebroom{s}{Q}$ for any odd integer $1\le s < \ell$,
 \item $\Gsubdivstar{Q}$.
 \end{itemize}
 For every $s\ge \ell$, $\Gdoublebroom{s}{Q}$ contains $\Gpath{\ell}$
 as induced subgraph, hence none of the graphs listed in
 Lemma~\ref{lemma:ramsey:balancedbipartite0} is contained in $\F$. It
 follows that every connected bipartite graph in $\F$ is $b(Q)$-thin.
\end{proof}

Combining Lemmas~\ref{lemma:ramsey:nonbipartite} and
\ref{lemma:ramsey:balancedbipartite}, the proof of
Theorem~\ref{theorem:ramsey:1} follows.  \restateramseysmallthin*
\begin{proof}
  Let us apply first Lemma~\ref{lemma:ramsey:nonbipartite} on $\F$. If
  $\F$ is a superset of any of the families listed in items 2--6 of
  Lemma~\ref{lemma:ramsey:nonbipartite}, then we are done. Assume
  therefore that there is an integer $a\ge 1$ such that every
  connected nonbipartite graph in $\F$ has size at most $a$.  Let
  $\F'$ contain every bipartite graph in $\F$ and let us apply
  Lemma~\ref{lemma:ramsey:balancedbipartite} on $\F'$. Again, if $\F$
  is a superset of any of the families listed in items 2--6 of
  Lemma~\ref{lemma:ramsey:balancedbipartite}, then we are
  done. Therefore, we may assume that there is an integer $b\ge 1$
  such that every connected graph in $\F'$ is $b$-thin. Observe now
  that if $C$ is a bipartite component of some $H\in \F$, then $C$
  itself is in $\F$, as $\F$ is hereditary. This implies that $C$ is $b$-thin. Therefore, we have shown
  that $\F$ is $a$-small/$b$-thin: every nonbipartite component has
  at most $a$ vertices and every bipartite component is $b$-thin.
\end{proof}

\subsection{Proof of the dichotomy for packing
  problems} \label{subsection:packing:theorem} Using the algorithm of
Section~\ref{subsection:subgraph:turing:upperbounds}, the hardness
results for the basic families proved in
Section~\ref{subsection:subgraph:turing:lowerbounds}, and the
characterization proved in Section~\ref{sec:subgraph-ramsey}, we can
prove Theorem~\ref{theorem:intro:packing}.

\restateintropacking*
\begin{proof}
  Let $\F$ be a hereditary class of graphs. If $\F$ is small/thin,
  then Theorem~\ref{theorem:kernel:packing} shows that \FSubgraphTest
  admits a polynomial many-one kernel. 
	
  If $\F$ is not small/thin, then it is a superset of one of the
  families listed in Theorem~\ref{theorem:ramsey:1}.  If $\F$ contains
  $\Ffountain{s}$ for some odd integer $s\ge 3$, then the \textup{WK[1]}-hard
  problem $\Ffountain{s}$-\Packing
  (Lemma~\ref{lemma:fountainpacking:wkhard}) can be reduced to
  \FPacking, hence \FPacking is also \textup{WK[1]}-hard. The situation is
  similar if $\F$ is a superset of any of the families listed in
  items 5--8 of Theorem~\ref{theorem:ramsey:1}. As explained in Section~\ref{section:outline:hardness}, if~$\F$ is a superset of \Fclique or \Fbiclique then \FPacking is \textup{W[1]}-hard and does not admit a polynomial many-one kernel unless \containment. Therefore, we have shown
  that if $\F$ is not small/thin, then \FPacking is \textup{WK[1]}-hard, \textup{W[1]}-hard, or \kPath-hard in all
  cases, completing the proof Theorem~\ref{theorem:intro:packing}.
\end{proof}


\section{Turing kernelization complexity of subgraph testing}

In this section, we prove Theorem~\ref{theorem:intro:turingsubgraph}
characterizing the hereditary classes $\F$ for which \FSubgraphTest
admits a polynomial Turing kernel. In
Section~\ref{subsection:subgraph:turing:upperbounds}, we invoke the
marking algorithm developed in Section~\ref{sec:comp-repr-sets} to
give a polynomial Turing kernel for splittable classes. In
Section~\ref{subsection:subgraph:turing:lowerbounds}, we establish the
basic hardness results that $\Fdiamondfan$-\SubgraphTest and
$\Fsubdivtree{s}$-\SubgraphTest for any $s\ge 1$ are \textup{WK[1]}-hard. In
Section~\ref{sec:subgraph-ramsey}, we prove
Theorem~\ref{theorem:ramsey:separator} characterizing splittable hereditary
classes. In Section~\ref{sec:proof-dich-subgr}, we put together all
these results to complete the proof of Theorem~\ref{theorem:intro:turingsubgraph}.

\subsection{Upper
  bounds} \label{subsection:subgraph:turing:upperbounds} To prove the
positive part of Theorem~\ref{theorem:intro:turingsubgraph} for
splittable classes, we try every possible image for the set $D$ that
realizes the $(a,b,c,d)$-split and then invoke
Lemma~\ref{lemma:kernel:generic} to obtain a bounded-size instance for
each possible way of fixing the image. 

 \restatesubgraphturing*
\begin{proof}
Let~$\F$ be a hereditary family that is $(a,b,c,d)$-splittable and let~$(G,H)$ be an instance of the \kFSubgraphTest problem. Recall that the parameter is~$k := |V(H)|$. We will construct a list~$\L$ of~$|V(G)|^{\Oh(1)}$ instances~$(G[X_1], H), \ldots, (G[X_t], H)$ of the \kFSubgraphTest problem, each of size polynomial in~$k$, such that~$H \subseteq G$ if and only if~$H \subseteq G[X_i]$ for some~$i \in [t]$. This easily implies the existence of a polynomial-size Turing kernel following Definition~\ref{definition:turing:kernelization}: we can query the oracle for the answer to each subinstance~$(G[X_i], H)$ of size polynomial in~$k$ and output the logical OR of the answers, which is the correct answer to the instance~$(G, H)$. The description in terms of a list of small instances highlights the fact that the Turing kernelization is non-adaptive and therefore amenable to parallelization.

The kernelization algorithm starts by searching for a set~$D$ that realizes an~$(a,b,c,d)$-split of~$H$. As such a set~$D$ has size at most~$c$, which is a constant, can we try all possible sets~$\binom{V(G)}{\leq c}$ to determine whether there is one that realizes the split. Note that given a candidate set, it is easy to determine in polynomial time whether it realizes an~$(a,b,c,d)$-split of~$H$ or not. If the split cannot be realized then~$H \not \in \F$ and the instance does not satisfy the input requirements; we output \no. In the remainder we can work with a set~$D \subseteq V(G)$ of size at most~$c$ that realizes the split. We then proceed as follows.

For each partial $H$-subgraph model~$\phi_i$ in~$G$ with domain~$D$, we invoke the algorithm of Lemma~\ref{lemma:kernel:generic} to the source graph~$G$, the query graph~$H$ with the separator~$D$, and the partial $H$-subgraph model~$\phi_0$. As~$D$ realizes an~$(a,b,c,d)$-split of~$H$, each connected component~$C$ of~$H - D$ that has size more than~$a$ is a $b$-thin bipartite graph in which the number of vertices whose closed neighborhood is not universal to~$N_H(C) \cap D$ is bounded by~$d$. As the domain of~$\phi_0$ is exactly~$D$, the lemma outputs a set~$X_i$ of size~$\Oh(k^{\Oh(a + b^2 + d)})$ such that~$G$ contains a full $H$-subgraph model that extends~$\phi_i$ if and only if~$G[X_i]$ contains a full $H$-subgraph model that extends~$\phi_i$. We add~$(G[X_i], H)$ to the list~$\L$. Since~$|D| \leq c$, which is a constant, the number of distinct partial $H$-subgraph models in~$G$ with domain~$D$ is~$\Oh(|V(G)|^c)$, which is polynomial in~$|V(G)|$ as~$c$ is a constant. As the computation of Lemma~\ref{lemma:kernel:generic} takes polynomial time for constant~$(a,b,c,d)$, the entire algorithm runs in polynomial time and the size of~$\L$ is bounded by a polynomial. 

To complete the proof it therefore suffices to show that if and only if~$H \subseteq G$ then~$H \subseteq G[X_i]$ for some~$i$. The reverse direction is trivial as~$G[X_i]$ is an induced subgraph of~$G$. For the forward direction, assume that~$\phi^*$ is a full~$H$-subgraph model in~$G$ and consider its restriction~$\phi^*|_D$ to the vertices of~$D$. Then~$\phi^*|_D$ is a partial $H$-subgraph model with domain~$D$, hence it occurred as a model~$\phi_i$ in our enumeration. As~$\phi^*$ is a full $H$-subgraph model in~$G$ that extends~$\phi_i = \phi^*|_D$, the guarantee of Lemma~\ref{lemma:kernel:generic} for set~$X_i$ ensures that~$G[X_i]$ contains a full $H$-subgraph model. Hence there is an index~$i$ such that~$H \subseteq G[X_i]$. By the argumentation given earlier, this concludes the proof.
\end{proof}

\subsection{Lower bounds} \label{subsection:subgraph:turing:lowerbounds}
In this section we present the polynomial-parameter transformations that establish Theorem~\ref{theorem:subgraphtest:lowerbounds}, which we repeat here for the reader's convenience.

\restatesubgraphlower*

The proof of Theorem~\ref{theorem:subgraphtest:lowerbounds} follows
from the following to lemmas, which show the \textup{WK[1]}-hardness of
\Fdiamondfan-\SubgraphTest and \Fsubdivtree{s}-\SubgraphTest by
polynomial-para\-meter transformations from \nRegularExactSetCover.
\begin{lemma}\label{lemma:subgraph:diamondfan:wkhard}
\Fdiamondfan-\SubgraphTest is \textup{WK[1]}-hard.
\end{lemma}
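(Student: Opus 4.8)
The plan is a polynomial-parameter transformation from \nRegularExactSetCover, which is \textup{WK[1]}-hard by Lemma~\ref{lemma:regularexactsetcover:wkhard}. Starting from an~$r$-uniform system~$\S$ over a universe~$U$ with~$|U| = n$ and~$r \geq 3$, I would first dispose of degenerate cases in polynomial time by emitting a fixed yes/no-instance (when~$r \nmid n$, when~$|\S| < n/r$, or when~$n \leq r$), and then, after replacing~$(\S,U)$ by the disjoint union of~$r$ copies of itself, assume that the number~$m := |U|/r$ of sets in any exact cover satisfies~$m > r$. This replacement preserves~$r$-uniformity and the answer and enlarges~$|U|$ only to~$rn \leq n^2$. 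The output is~$(G, \Gdiamondfan{m})$, and since~$|V(\Gdiamondfan{m})| = 1 + m + m^2 = O(n^2)$ its parameter is polynomially bounded; so it suffices to build~$G$ in polynomial time with~$\Gdiamondfan{m} \subseteq G$ if and only if~$(\S,U)$ has an exact cover.

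The graph~$G$ has a \emph{hub}~$v^*$; an \emph{element vertex}~$u_e$ for each~$e \in U$; and, for each~$S \in \S$, a \emph{set vertex}~$x_S$ adjacent to the~$r$ element vertices of~$S$ together with~$m-r$ fresh private \emph{dummy} vertices. Every element vertex and every dummy is made adjacent to~$v^*$, the element vertices additionally form a clique, and the set vertices are pairwise nonadjacent and nonadjacent to~$v^*$. The forward direction is then immediate: from an exact cover~$S_1,\dots,S_m$ one builds a copy of~$\Gdiamondfan{m}$ by sending the apex to~$v^*$, the~$i$-th degree-$m$ vertex to~$x_{S_i}$, and the~$m$ leaves of the~$i$-th petal to the~$r$ element vertices of~$S_i$ together with the~$m-r$ dummies of~$x_{S_i}$; injectivity of this map restricted to the leaves is exactly the pairwise disjointness of the cover, and all edges of~$\Gdiamondfan{m}$ are realised (extra edges, such as those inside the element clique, do no harm since the subgraph need not be induced).

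The converse rests on a \emph{rigidity} lemma, which I expect to be the main obstacle: every subgraph model~$\phi$ of~$\Gdiamondfan{m}$ in~$G$ sends the apex to~$v^*$ and each degree-$m$ vertex to a distinct set vertex. Since set vertices and dummies have degrees~$m$ and~$2$, both below the apex degree~$m^2$, $\phi$ maps the apex either to~$v^*$ or to an element vertex~$u_e$ — and since element vertices can have arbitrarily large degree, ruling out the second possibility is not a mere degree count. This is exactly what the clique on the element vertices and the inequality~$m > r$ are meant to buy: a set vertex has at most~$r < m$ common neighbours with any element vertex, so in a model rooted at~$u_e$ the~$m^2$ leaves would have to be squeezed almost entirely into the element vertices and the hub, of which there are only~$|U|+1 = rm+1 < m^2$; making this airtight (in particular bounding how many leaves of a single petal can land on set vertices through~$e$, possibly via a light preprocessing of~$\S$) is where I expect the real care to be needed. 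Once~$\phi$ sends the apex to~$v^*$, a counting argument that again uses~$m > r$ forces each degree-$m$ vertex~$w_i$ to map to a set vertex~$x_{S_i}$ (an element-valued~$\phi(w_i)$ would, together with its petal, consume more than~$|U|$ element vertices); the~$S_i$ are then pairwise distinct, the leaves of the~$i$-th petal are forced to be all~$m$ common neighbours of~$v^*$ and~$x_{S_i}$ — namely the~$r$ element vertices of~$S_i$ and the dummies of~$x_{S_i}$ — and pairwise disjointness of the petals forces the~$S_i$ to be pairwise disjoint, so these~$m = |U|/r$ disjoint~$r$-sets exactly cover~$U$. With the rigidity lemma in hand the remaining verifications (polynomial running time and the parameter bound) are routine.
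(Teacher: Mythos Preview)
Your construction has a genuine gap that is not fixable by the ``light preprocessing'' you allude to. The clique on the element vertices, which you introduce to ensure that a set vertex~$x_S$ has at most~$r<m$ common neighbours with any element vertex (and hence cannot play the role of a~$w_i$), simultaneously allows set vertices to serve as \emph{leaves} in a diamond fan rooted at an element vertex. Concretely, if~$\phi(c)=u_e$ and~$\phi(w_i)=u_{e'}$ (both elements), then any~$x_S$ with~$\{e,e'\}\subseteq S$ is a common neighbour of~$u_e$ and~$u_{e'}$ and is therefore a legal leaf of petal~$i$. Your sentence ``the~$m^2$ leaves would have to be squeezed almost entirely into the element vertices and the hub'' is thus unjustified: the element count only shows that at least roughly~$m^2-rm$ leaves must be set vertices, and there is no obstacle to this when~$d_e:=|\{S\in\S:e\in S\}|$ is large.

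This yields a real false positive, not merely a gap in the argument. Take as original instance~$U_0=[10]$,~$r=5$, and~$\S_0=\{S\in\binom{[10]}{5}:1\in S\}$; every set contains~$1$, so there is no exact cover. After your preprocessing (five disjoint copies, giving~$m=10$), element~$1^{(1)}$ still lies in~$\binom{9}{4}=126$ sets, and one can assemble a~$\Gdiamondfan{10}$ in your~$G$ with apex~$u_{1^{(1)}}$,~$\phi(w_1)=v^*$,~$\phi(w_i)=u_{i^{(1)}}$ for~$i=2,\dots,10$, the ten leaves of petal~$1$ taken from copy~$2$, and the remaining petals filled by a mix of element vertices and about~$60$ set vertices~$x_S$ with~$\{1^{(1)},i^{(1)}\}\subseteq S$ (each such petal has~$\binom{8}{3}=56$ candidate set vertices, amply sufficient for a disjoint assignment). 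The preprocessing you would need must force~$d_e<m^2-(r-1)m$ for every~$e$, but~$|\S|$ is not polynomially bounded in the parameter~$n$ and you cannot inflate~$m$ past~$|\S|$ without destroying the parameter bound.

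The paper sidesteps the whole issue by a different construction: it takes~$Q=n+1$ (so~$H=\Gdiamondfan{Q}$ has more petals than there are universe elements), keeps~$U$ an \emph{independent} set that is not adjacent to the hub~$z$, and pads with~$Q-t$ free copies of~$K_{2,Q}$ glued at~$z$. Then the only vertices at distance two from any~$u\in U$ that could have degree at least~$Q$ lie in~$U$ itself, of which there are only~$n<Q$; hence no~$Q$-petal fan can be rooted at an element, regardless of how many sets contain it. This is the idea your rigidity lemma is missing.
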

\begin{proof}
 We transform an instance~$(r, \S, U, n)$ of \nRegularExactSetCover in polynomial time into an equivalent instance~$(G, H)$ of \Fdiamondfan-\Packing with $H:= \Gdiamondfan{Q}$ and $Q:=n+1$. Note that the parameter is $k := |V(H)| =Q^2+Q+1\in \Oh(n^2)$ (see Figure~\ref{fig:basic}). Let $t:=n/r$, the number of sets in a solution. The host graph~$G$ is defined as follows.

\begin{itemize}
	\item Graph~$G$ contains the vertex set~$U$ as an independent set.
          \item We introduce $Q-t$ copies of $K_{2,Q}$ and identify into a single vertex $z$ one degree-$Q$ vertex from each a copy.
	\item For every size-$r$ set~$S \in \S$, we introduce a vertex $v_S$ adjacent to all vertices representing members of~$S$.
        \item For every size-$r$ set $S\in S$, we introduce a set
          $X_S$ of $Q-r\ge Q-n\ge 0$ vertices that are adjacent to both $v_S$ and
          $z$.
\end{itemize}
Note that the degree of every $v_S$ is exactly $Q$.

As the construction can be performed in polynomial time and produces an instance of the target problem whose parameter is suitably bounded, it remains to prove that the input and output instances are equivalent.

\begin{claim}
If~$(\S,U)$ has an exact cover, then~$H \subseteq G$.
\end{claim}
\begin{claimproof}
  Assume that~$(\S,U)$ has an exact cover~$S_1, \ldots, S_t \in
  \S$. For each~$S_i$ with~$i \in [t]$, the vertex $z$, vertex $v_S$,
  and the $Q$ neighbors of $v_S$ form a copy of $K_{2,Q}$. As the sets
  $S_i$'s are disjoint, these copies of $K_{2,Q}$ intersect only in
  $z$. Therefore, these copies together with the $Q-t$ copies
  introduced in the construction of $G$ form a $\Gdiamondfan{Q}$
  subgraph centered at $z$ in $G$.
\end{claimproof}

Let $c$ be the unique vertex of $H$ having degree $Q^2$. Graph $H$ contains $Q$ degree-$Q$ vertices at distance two 
   from $c$, let $W$ be the set of all these vertices.
\begin{claim} \label{claim:diamondfan:hardness:goodmodels}
If $\phi$ is a full subgraph model of $H$ in $G$, then $\phi(c)=z$.
\end{claim}
\begin{claimproof}
  Every vertex $v_S$ has degree $Q$ and every vertex in every $X_S$
  has degree 2. The vertices of the copies of $K_{2,Q}$ introduced in
  the construction of $G$ (other than $z$) have degree two or
  $Q$. Therefore, $\phi(c)\neq z$ is only possible if $\phi(c)\in
  U$. The vertices of $W$ have to be mapped to vertices of degree
  at least $Q$ at distance exactly two from $\phi(c)$ in $G$. As
  $\phi(c)\in U$, the vertices of $G$ at distance two from $\phi(c)$ are
  the vertices of $U$, the degree-2 vertices in some $X_S$, and the
  degree-2 vertices of some $K_{2,Q}$ adjacent to $z$. Therefore, only
  at most $|U|=n<Q$ of them can have degree at least $Q$, a
  contradiction.
\end{claimproof}

\begin{claim}
If~$H \subseteq G$, then~$(\S,U)$ has an exact cover.
\end{claim}
\begin{claimproof}
  Assume that $\phi$ is a full subgraph model of $H$ in $G$.  By
  Claim~\ref{claim:diamondfan:hardness:goodmodels}, we have
  $\phi(c)=z$.  The vertices of $W$ have to be mapped to vertices of
  $G$ at distance two from $z$. The candidates for these vertices are
  the vertices $v_S$ and the $Q-t$ degree-$Q$ vertices introduced in
  the $Q-t$ copies of $K_{2,Q}$. Therefore, at least $t$ vertices of
  $W$ are mapped to the $v_S$'s. Note that each vertex $v_S$ has
  degree exactly $Q$, hence the neighborhood of a vertex $w\in W$ is
  mapped bijectively to the neighborhood of $\phi(w)$. The
  neighborhoods of the vertices in $W$ are disjoint, hence the
  neighborhoods of the images should be disjoint as well. Therefore,
  the sets corresponding to the images of $W$ are disjoint, implying
  that there exists at least $t$ disjoint sets in $\S$. As these sets
  contain~at least $t \cdot r = n$ vertices in total, this is only
  possible if there are exactly $t$ of these sets and they cover all
  of~$U$, forming an exact cover.
\end{claimproof}

This concludes the proof of Lemma~\ref{lemma:subgraph:diamondfan:wkhard}.

\end{proof}

\begin{lemma}\label{lemma:subgraph:subdivtree:wkhard}
For every $s\ge 1$, \Fsubdivtree{s}-\SubgraphTest is \textup{WK[1]}-hard.
\end{lemma}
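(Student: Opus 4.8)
The plan is to give a polynomial-parameter transformation from \nRegularExactSetCover, whose \textup{WK[1]}-hardness is given by Lemma~\ref{lemma:regularexactsetcover:wkhard}. Given an instance $(r,\S,U,n)$ with $r\ge 3$, first dispose of the trivial cases ($r\nmid n$, or $n$ bounded by a constant) by outputting a constant-size instance with the correct answer; otherwise set $Q:=n+2$ and $t:=n/r$ and let the pattern graph be $H:=\Gsubdivtree{s}{Q}\in\Fsubdivtree{s}$, whose centre I call $c$. Its order is $Q^2+Qs+1=\Oh(n^2)$, so the parameter $|V(H)|$ is polynomially bounded in $n$. The host graph $G$ is built as follows: put $U$ in as an independent set; add a hub vertex $z$; for every $S\in\S$ add a path of length $s$ from $z$ to a new vertex $v_S$ and make $v_S$ adjacent to the $r$ vertices of $U$ representing $S$ and to $Q-r$ new private pendant vertices; finally add $Q-t$ ``dummy arms'', each a length-$s$ path from $z$ to a new vertex $w_j$ carrying $Q$ new private pendants. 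The construction is clearly polynomial time. For the easy direction, an exact cover $S_1,\dots,S_t$ yields $H\subseteq G$: send $c$ to $z$, route $t$ of the arms of $H$ along the dedicated paths to $v_{S_1},\dots,v_{S_t}$ and the other $Q-t$ arms along the dummy paths, and realise the $Q$ pendant-leaves of the arm ending at $v_{S_i}$ by the $r$ vertices of $S_i$ together with the $Q-r$ private pendants of $v_{S_i}$; disjointness of the $S_i$ keeps the map injective.

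For the converse, let $\phi$ be a full $H$-subgraph model in $G$; I must produce an exact cover. The heart of the argument is the claim that $\phi(c)=z$. Since $\deg_H(c)=Q\ge 3$, the image of $c$ has degree at least $3$ in $G$, which excludes all subdivision vertices of the paths and all private pendants. The image of $c$ is not some $v_S$ or $w_j$ either: the $Q$ arms of $H$ leave $c$ through $Q$ distinct neighbours, each arm being a path of length $s$ ending at a vertex of degree $\ge Q+1$ (the image of a leaf), but the private pendants have degree $1$ and can start no such arm, so a vertex $v_S$ admits at most $r+1\le n+1<Q$ and a vertex $w_j$ at most $1<Q$ arm-directions. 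This leaves the possibility $\phi(c)=u$ for some $u\in U$, which is the crux.

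If $\phi(c)=u\in U$, then every arm leaves $u$ into a set-vertex $v_S$ with $u\in S$, and whenever an arm sits at a set-vertex in its interior it is forced to continue into one of the $r$ element-vertices of that set: the only other neighbours are private pendants (degree $1$, dead ends) and the length-$s$ path heading toward $z$, which, entered from a set-vertex, visits only degree-$2$ vertices during the $\le s-1$ steps remaining to the arm and so cannot end the arm at a degree-$\ge Q+1$ vertex. Hence each arm zigzags between set-vertices and element-vertices. If $s$ is even, every arm therefore terminates its leaf at an element of $U$, so $\phi$ would require $Q=n+2>n=|U|$ distinct vertices of $U$, which is impossible. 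If $s$ is odd, each arm terminates its leaf at a set-vertex $v_S$, whose $Q$ pendant-leaves are then forced to be exactly: the predecessor of $v_S$ on its dedicated path, the $r-1$ element-vertices of $S$ other than the one the arm arrived from, and the $Q-r$ private pendants of $v_S$. Counting element-vertices of $U$: apart from the shared vertex $u$, each arm together with its pendant-leaves consumes at least $r-1\ge 2$ further vertices of $U$, all mutually distinct, so more than $2Q>n$ vertices of $U$ would be used, again impossible. Therefore $\phi(c)=z$.

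Once $\phi(c)=z$, each arm of $H$ runs along one dedicated or dummy path (whose interior vertices have degree $2$) and terminates exactly at the corresponding $v_S$ or $w_j$ (it cannot stop earlier, at a degree-$2$ vertex, since the leaf-image has degree $\ge Q+1$). At most $Q-t$ arms can use the dummy paths, so at least $t$ arms end at set-vertices $v_{S_1},\dots,v_{S_m}$ with $m\ge t$; exactly as in the even-$s$ discussion above, the $Q$ pendant-leaves of the arm ending at $v_{S_i}$ are forced to be the $r$ vertices of $S_i$ together with the $Q-r$ private pendants of $v_{S_i}$, so injectivity of $\phi$ makes $S_1,\dots,S_m$ pairwise disjoint; then $mr\le|U|=tr$ gives $m=t$, and $S_1,\dots,S_t$ is an exact cover. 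This establishes that $(\S,U)$ has an exact cover if and only if $H\subseteq G$, completing the transformation. \textbf{The main obstacle} is the case $\phi(c)\in U$: one must argue both that arms cannot leak out of the element/set-vertex incidence structure via the dedicated paths and that the even and odd values of $s$ behave differently (the odd case being settled by a global count of how many vertices of $U$ any embedding must occupy); choosing $Q=n+2$ rather than $n+1$ is what makes the various counting arguments go through uniformly.
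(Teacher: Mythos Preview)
Your proof is correct and follows the same construction and overall strategy as the paper: reduce from \nRegularExactSetCover with $Q=n+2$, build $G$ from a hub $z$, dedicated length-$s$ paths to set-vertices $v_S$ (with $Q-r$ private pendants), and $Q-t$ dummy arms, then argue that any $H$-model must map $c$ to $z$.

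The one substantive difference is in the proof of the key claim $\phi(c)=z$. The paper's argument only checks that $x_i$ and $v_S$ have fewer than $Q$ neighbours of degree $\ge 2$ and concludes; it does not explicitly address the possibility $\phi(c)\in U$, even though an element vertex can have arbitrarily many $v_S$-neighbours (all of degree $Q+1$) and so is not ruled out by the stated degree argument. You fill this gap with a genuine structural argument: arms starting at an element vertex are forced to zigzag in the element/set incidence structure (entering a dedicated path toward $z$ would strand the leaf at a degree-$2$ vertex), and then a global count of how many elements of $U$ the $Q$ arms and their pendant-leaves must consume yields a contradiction, treating even and odd $s$ separately. This is a real addition over the paper's proof as written. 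Your final step (once $\phi(c)=z$, the arms are forced down the dedicated/dummy paths, the pendants of $v_{S_i}$ are forced to be $S_i$ plus the private pendants, and disjointness plus counting gives an exact cover) matches the paper's.
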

\begin{proof}
  We transform an instance~$(r, \S, U, n)$ of \nRegularExactSetCover
  in polynomial time into an equivalent instance~$(G, H)$ of \Fsubdivtree{s}-\Packing with~$H:=\Gsubdivtree{s}{Q}$ and $Q:=n+2$. Note
  that the parameter is $k := |V(H)| =Q^2+Qs+1\in \Oh(n^2)$ (see
  Figure~\ref{fig:basic}). Let $t:=n/r$, the number of sets in a
  solution. The host graph~$G$ is defined as follows.

\begin{itemize}
	\item Graph~$G$ contains the vertex set~$U$ as an independent set.
          \item We introduce a distinguished vertex $z$.
\item We introduce $Q-t\ge Q-n\ge 0$ vertices $x_1$, $\dots$, $x_{Q-t}$, connect each $x_i$ to $z$ with a path of length $s$, and attach $Q$ pendant vertices to each $x_i$.
\item For every size-$r$ set~$S \in \S$, we introduce a vertex $v_S$
  adjacent to all vertices representing members of~$S$, connect $v_S$
  and $z$ with a path of length $s$, and attach $Q-r\ge Q-n\ge 0$
  pendant vertices to $v_S$.
\end{itemize}

As the construction can be performed in polynomial time and produces
an instance of the target problem whose parameter is suitably bounded,
it remains to prove that the input and output instances are
equivalent.

\begin{claim}
If~$(\S,U)$ has an exact cover, then~$H \subseteq G$.
\end{claim}
\begin{claimproof}
  Assume that~$(\S,U)$ has an exact cover~$S_1, \ldots, S_t \in
  \S$. Consider the set $Z$ of vertices containing $v_{S_i}$ for $i
  \in [t]$ and $x_j$ for $j\in [Q-t]$. Each of these $Q$ vertices are
  connected to $z$ with a path of length $s$, and these paths intersect only in
  $z$. Each vertex of $Z$ has degree $Q+1$, that is, has $Q$ neighbors
  in addition to their neighbor on the path. Distinct vertices in $Z$
  have disjoint neighborhoods: these neighbors are either degree-one 
  (thus disjointness is trivial) or appear in $U$, where disjointness
  follows from the fact that $S_1$, $\dots$, $S_t$ are disjoint.
  Therefore, we have found a copy of $\Gsubdivtree{s}{Q}$ centered at
  $z$.
\end{claimproof}

Let $c$ be the unique vertex of $H$ having degree $Q$.
\begin{claim} \label{claim:subdivtree:hardness:goodmodels}
If $\phi$ is a full subgraph model of $H$ in $G$, then $\phi(c)=z$.
\end{claim}
\begin{claimproof}
  Clearly, $\phi(c)$ has degree at least $Q$. Vertices $x_i$ have
  degree $Q+1$, but they have only one neighbor with degree more than
  one, while every neighbor of $c$ has degree at least two. Vertices $v_S$
  have degree $Q+1$, but they have at most $r+1<Q$ neighbors with
  degree more than one.  Therefore, $\phi(c)=z$ is the only
  possibility.\end{claimproof}

\begin{claim}
If~$H \subseteq G$, then~$(\S,U)$ has an exact cover.
\end{claim}
\begin{claimproof}
  Assume that $\phi$ is a full subgraph model of $H$ in $G$.  By
  Claim~\ref{claim:subdivtree:hardness:goodmodels}, we have
  $\phi(c)=z$. Graph $H$ contains $Q$ degree-$(Q+1)$ vertices at
  distance $s$ from $c$, let $W$ be the set of all these vertices. The
  vertices of $W$ have to be  mapped to vertices of $G$ at distance exactly $s$ from
  $z$. The candidates for these vertices are the vertices $v_S$ and
  the $Q-t$ vertices $x_i$. Therefore, at least $t$ vertices of $W$
  are mapped to the $v_S$'s. Note that each vertex $v_S$ has degree
  exactly $Q+1$, hence the neighborhood of a vertex $w\in W$ is mapped
  bijectively to the neighborhood of $\phi(w)$. The neighborhoods of
  the vertices in $W$ are disjoint, hence the neighborhood of the
  images should be disjoint as well. Therefore, the sets corresponding
  to the images of $W$ are disjoint, implying that there exists at
  least $t$ disjoint sets in $\S$. As these sets contain at least $t \cdot r =
  n$ vertices in total, this is only possible if there are exactly $t$ of these sets and they cover all of~$U$, forming an exact cover.
\end{claimproof}

This concludes the proof of Lemma~\ref{lemma:subgraph:diamondfan:wkhard}.

\end{proof}

\subsection{Combinatorial characterizations}
\label{sec:subgraph-ramsey}
To prove Theorem~\ref{theorem:ramsey:separator} characterizing hereditary classes that are not splittable, we need first the following auxiliary result.

\begin{lemma} \label{lemma:ramsey:separator0}
Let~$\F$ be a hereditary graph family. Then at least one of the following holds:
\begin{enumerate}
\item There is an integer $M\ge 1$ such that every $H\in \F$ has a set
  $S_0\subseteq V(H)$ of at most $M$ vertices such that every
  component of $H- S_0$ has vertex cover number at most $M$.
\item $\F$ is a superset of at least one of
\begin{itemize}
\item $\Fpath$,
\item $\Fclique$,
\item $\Fbiclique$,
\item $Q\cdot \Fsubdivstar$, or
\item $Q\cdot \Ffountain{3}$.
\end{itemize}
\end{enumerate}
\end{lemma}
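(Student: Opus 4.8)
\textbf{The plan} is to show that if alternative~2 fails then alternative~1 holds. So suppose $\F$ is not a superset of $\Fpath$, of $\Fclique$, or of $\Fbiclique$, and that $\F$ does not contain $Q\cdot\Gsubdivstar{Q}$ for all $Q$, nor $Q\cdot\Gfountain{3}{Q}$ for all $Q$. Since $\Gpath{q_1}$ is an induced subgraph of $\Gpath{q_2}$ for $q_1\le q_2$, and likewise $q_1\cdot\Gsubdivstar{q_1}$ is an induced subgraph of $q_2\cdot\Gsubdivstar{q_2}$ and $q_1\cdot\Gfountain{3}{q_1}$ of $q_2\cdot\Gfountain{3}{q_2}$, heredity of $\F$ yields a single constant $Q$ such that no graph of $\F$ contains $\Gpath{Q}$, $\Gclique{Q}$, $Q\cdot\Gsubdivstar{Q}$, or $Q\cdot\Gfountain{3}{Q}$ as an induced subgraph, nor $K_{Q,Q}$ as a (not necessarily induced) subgraph. (To arrange the last two items simultaneously one passes through a larger Ramsey number, as in Corollary~\ref{theorem:path:ramsey2}: a $K_{q,q}$-subgraph with $q$ large enough forces, inside each side, either a $\Gclique{Q}$ or an independent set of size $Q$, hence an induced $\Gbiclique{Q}$; one enlarges $Q$ similarly so that ``no induced $\Gpath{Q}$'' will bound the diameters appearing below.)

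Call a vertex $v$ of a graph $H$ \emph{big} if it has at least $d$ neighbours of degree at least $2$ in $H$, where $d$ (together with an auxiliary constant $m$) is a sufficiently large constant depending only on $Q$, to be fixed in the proof of the following claim. \emph{Claim~A: there is a constant $M_2=M_2(Q)$ such that every $H\in\F$ has at most $M_2$ big vertices.} Granting Claim~A, set $S_0:=\{v\in V(H)\mid v\text{ is big}\}$, so $|S_0|\le M_2$. Let $C$ be a connected component of $H-S_0$, and let $C^{*}$ be the subgraph of $C$ induced by the vertices of $C$-degree at least $2$. A shortest path in $C$ between two vertices of $C^{*}$ uses only vertices of $C$-degree $\ge 2$ as internal vertices, so $C^{*}$ is connected; since no vertex of $C$ is big in $H$ while every $C^{*}$-neighbour of a vertex has degree $\ge 2$ in $H$, we get $\Delta(C^{*})<d$; and since $C^{*}$ is an induced subgraph of $H$ it contains no induced $\Gpath{Q}$, so it has diameter at most $Q-2$ and hence at most $d^{Q}$ vertices. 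Because $C$ is connected, the vertices of $C$-degree at least $2$ form a vertex cover of $C$ unless $C\cong K_2$; in all cases the vertex cover number of $C$ is at most $\max(1,d^{Q})$. Taking $M:=\max(M_2,d^{Q})$ then establishes alternative~1, so it remains to prove Claim~A.

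For Claim~A, suppose $H\in\F$ has a large number $N$ of big vertices $v_1,\dots,v_N$ and derive a contradiction, in the spirit of the proof of Theorem~\ref{theorem:ramsey:matchingsplittable}. (i)~Colour the pairs $\{v_i,v_j\}$ by adjacency; Ramsey's theorem (Theorem~\ref{theorem:ramsey}) gives either a $\Gclique{Q}$ among them (contradiction) or, after reindexing, a large independent set $v_1,\dots,v_{N_1}$. (ii)~For each such $v_i$ its set $T_i$ of degree-$\ge 2$ neighbours has $|T_i|\ge d$ and induces no $K_{Q-1}$ (else $H$ contains $\Gclique{Q}$), so by Ramsey's theorem $T_i$ contains an independent set $I_i$ of size $m$; thus $\{v_i\}\cup I_i$ induces a $K_{1,m}$ all of whose leaves have degree $\ge 2$ in $H$. (iii)~The sets $\{v_i\}\cup I_i$ have the fixed size $m+1$, so by the Sunflower Lemma (Lemma~\ref{lemma:sunflowers}) a large subfamily forms a sunflower with core $Z$; as the $v_i$ are independent, $v_i\notin Z$, so the $v_i$ lie in distinct petals. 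If $|Z|\ge Q$, then $Q$ of the $v_i$ together with $Q$ vertices of $Z$ induce a $\Gbiclique{Q}$ (contradiction), so $|Z|<Q$, and after reindexing we obtain star centres $v_1,\dots,v_k$ ($k$ large) with pairwise disjoint private leaf sets $I'_i:=I_i\setminus Z$ of size $>m-Q$, each $I'_i$ an independent set of degree-$\ge 2$ vertices inducing a star with its centre. (iv)~For $u\in I'_i$ pick a neighbour $w_u\ne v_i$ (one exists, and necessarily $w_u\notin Z\cup I'_i$); by pigeonhole we may pass to a large family of centres, and for each many of its leaves, on which one of three situations holds: $w_u$ is adjacent to $v_i$ (a triangle on $\{v_i,u,w_u\}$); $w_u$ is non-adjacent to $v_i$ and has degree $1$ in $H$ (an induced subdivided leg $v_i$--$u$--$w_u$ with a private pendant); or $w_u$ is non-adjacent to $v_i$ and has degree $\ge 2$. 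A final Ramsey/pigeonhole step on the centres, colouring pairs by how their local gadgets interact, either exposes a $\Gclique{Q}$ or $\Gbiclique{Q}$ inside strongly overlapping gadgets, or produces $Q$ centres whose gadgets are essentially vertex-disjoint: in the second situation this yields an induced $Q\cdot\Gsubdivstar{Q}$, in the first a $Q\cdot\Gfountain{3}{Q}$ subgraph, and the third situation is folded into the first two by following the edges $uw_u$ one further step, the process terminating after boundedly many steps because $\F$ has no induced $\Gpath{Q}$. Each outcome contradicts the choice of $Q$, so $N$ is bounded.

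The delicate point is step~(iv): even after the sunflower extraction the $k$ local gadgets around $v_1,\dots,v_k$ can overlap (a single ``outer'' vertex $w_u$ may be shared by many gadgets, or may itself be another centre), and one must choose the auxiliary colouring so that each colour class is either sparse enough to carve out $Q$ pairwise disjoint copies of one of the bad graphs, or dense enough to force a $\Gclique{Q}$ or $\Gbiclique{Q}$. Managing this bookkeeping on top of the Ramsey and sunflower reductions is where the real work lies; the reduction of the lemma to Claim~A is, by contrast, routine.
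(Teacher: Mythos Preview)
Your reduction of the lemma to Claim~A (take $S_0$ to be the set of ``big'' vertices, then bound the vertex cover of each component of $H-S_0$ via the connected bounded-degree graph $C^*$) is correct and is a clean alternative packaging. The substantive gap is in the proof of Claim~A itself, specifically step~(iv), which you flag as ``the delicate point'' but do not actually carry out.

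Two concrete issues. First, in your triangle case you only exhibit many triangles $\{v_i,u,w_u\}$ sharing the apex $v_i$; to build an \emph{induced} $\Gfountain{3}{Q}$ you need one triangle $\{v_i,u_0,w_{u_0}\}$ together with $Q$ further pendants of $v_i$ that are non-adjacent to both $u_0$ and $w_{u_0}$. Non-adjacency to $u_0$ comes from $I'_i$ being independent, but nothing you have set up bounds the degree of $w_{u_0}$ into $I'_i$, so $w_{u_0}$ may be adjacent to all remaining $u$'s. Second, your handling of the third case (``fold into the first two by following $uw_u$ one further step, terminating because there is no induced $\Gpath{Q}$'') does not produce the right shapes: walking one step further gives legs of length~$3$, not the length-$2$ legs of $\Gsubdivstar{Q}$ or the single triangle of $\Gfountain{3}{Q}$, and it is not at all clear how an induction on path length collapses this back to cases~1 and~2. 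Even in case~2, making the $Q$ copies of $\Gsubdivstar{Q}$ pairwise non-adjacent still requires a further Ramsey round to kill edges between $\{v_i\}\cup I'_i$ and $\{v_j\}\cup I'_j$; this is plausible but is precisely the work you have deferred.

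The paper sidesteps all of this by working with a DFS forest of $H$. Its height is bounded by $h=P(Q,Q)$ via Corollary~\ref{theorem:path:ramsey2}. With $Z=V(H)\setminus\{\text{leaves}\}$ and $Z^*$ the vertices of $Z$ having at least $Q+1$ children in $Z$, the key observation is that if some level contains $2Q$ vertices $v_1,\dots,v_{2Q}$ of $Z^*$, then for each $v_i$ its $Z$-children $x_{i,1},\dots,x_{i,Q+1}$ together with one grandchild $y_{i,j}$ of each already induce either a $\Gfountain{3}{Q}$ (if some $y_{i,j'}$ is adjacent to $v_i$) or a $\Gsubdivstar{Q}$ (otherwise). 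The point is that in a DFS forest there are \emph{no edges between subtrees rooted at the same level}, so these $2Q$ gadgets are automatically pairwise non-adjacent, and pigeonhole gives $Q$ of the same type. This is exactly the disjointness/non-adjacency that your step~(iv) tries to manufacture with Ramsey and sunflower arguments; the DFS structure supplies it for free. One then takes $S_0$ to be $Z^*$ together with all its ancestors (size $\le 2Q(h+1)^2$) and checks that each component of $H-S_0$ is a subtree whose internal $Z$-part has bounded branching, hence bounded size, and therefore bounded vertex cover.
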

\begin{proof}
Assuming that $\F$ does not contain any of the forbidden families, there is a $Q\ge 1$ such that $\F$ does not contain any of
\begin{itemize}
\item $\Gclique{Q}$,
\item $\Gbiclique{Q}$,
\item $Q\cdot \Gsubdivstar{Q}$, and
\item $Q\cdot \Gfountain{3}{Q}$.
\end{itemize}
We set the following constants:
\begin{align*}
h&=P(Q,Q), \ \ \textup{(for the function $P$ in Corollary~\ref{theorem:path:ramsey2})}\\
M_1&=2Q(h+1)^2,\\
M_2&=\sum_{i=0}^{h}Q^i,\\
M&=\max\{M_1,M_2\}.
\end{align*}
Pick an arbitrary $H\in \F$ and compute a DFS tree for each component
of $H$. The height of the DFS forest obtain this way is at most $h=P(Q,Q)$ by
Corollary~\ref{theorem:path:ramsey2}, otherwise $H$ would contain
$\Gpath{Q}$, $\Gclique{Q}$, or $\Gbiclique{Q}$ as induced
subgraph. Let $L$ be the set of leaves of this forest and let
$Z=V(H)\setminus L$.

Let $Z^*$ be the those vertices of $Z$ that have at least $Q+1$
children in the DFS forest that belong to~$Z$. We claim that if there are $v_1$, $\dots$,
$v_{2Q}$ vertices of $Z^*$ on the same level of the DFS forest, then
$H$ contains $Q\cdot \Gfountain{s}{Q}$ or $Q\cdot \Gsubdivstar{Q}$ as
an induced subgraph. As $\F$ is hereditary, it would follow that
these graphs are in $H$, a contradiction. Let $x_{i,1}$, $\dots$,
$x_{i,Q+1}$ be children of $v_i$ in $Z$. As they are not leaves,
$x_{i,j}$ has a child $y_{i,j}$. By the properties of the DFS forest,
there is no edge between $\{x_{i,j},y_{i,j}\}$ and
$\{x_{i,j'},y_{i,j'}\}$ for any $j\neq j'$. If there is an edge
between $v_i$ and $y_{i,j'}$, then the triangle
$\{v_i,x_{i,j'},y_{i,j'}\}$ and the vertices $x_{i,j}$, $1 \le j \le
Q+1$, $j\neq j'$ form an induced $\Gfountain{3}{Q}$. If there is no
edge between $v_i$ and $y_{i,j}$ for any $1\le j \le Q$, then
$\{v_i,x_{i,1},\dots,x_{i,Q},y_{i,1},\dots, y_{i,Q}\}$ induces a
$\Gsubdivstar{Q}$. Thus for every $1\le i \le 2Q$, we get
an induced $\Gfountain{3}{Q}$ or $\Gsubdivstar{Q}$ on $v_i$, its children, and
its grandchildren. As all the $v_i$'s are on the same level of the DFS
forest, there is no edge between these $2Q$ graphs. Thus we get either
$Q$ independent copies of $\Gfountain{3}{Q}$ or $Q$ independent copies
of $\Gsubdivstar{Q}$, a contradiction.

We have proved that each level of the DFS forest contains less than
$2Q$ vertices of $Z^*$, implying that $|Z^*|\le 2Q(h+1)$. Let $S_0$
contain every vertex of $Z^*$ and every ancestor of every vertex in
$Z^*$: as the DFS forest has height at most $h$ (that is, at most
$h+1$ levels), we have $|S_0|\le (h+1)|Z^*|=2Q(h+1)^2=M_1$.  Observe
that if $u$ and $v$ are adjacent vertices in $H- S_0$ and $u$
is an ancestor of $v$ in the DFS forest, then the unique $u-v$ path of
the DFS tree is disjoint from $S_0$: if any vertex of this path is in
$S_0$, then $u$ itself is in $S_0$. Therefore, if $C$ is the set of
vertices of a component of $H- S_0$, then $C$ induces a
connected subtree of the DFS forest. As every vertex of $C\cap Z$ has
at most $Q$ children in $Z$ (otherwise it would be in $S_0$) and the
tree has height at most $h$, we have that $|C\cap Z|\le
\sum_{i=0}^{h}Q^i=M_2$. Observe furthermore that $C\cap Z$ is a vertex
cover of $H[C]$: if there is an edge between $u,v\in C\setminus Z$,
then it is an edge between two leaves of the DFS forest. Therefore,
$S_0$ is a set of at most $M_1\le M$ vertices such that every
component of $H- S_0$ has vertex cover number at most $M_2\le
M$, what we had to show.
\end{proof}

Now we are ready to prove Theorem~\ref{theorem:ramsey:separator}.
\restateramseysplittable*
\begin{proof}
  By Lemma~\ref{lemma:ramsey:separator0}, the assumption that $\F$ is
  not the superset of the first 5 families listed in the lemma implies
  that there is an $M\ge 1$ such that every $H\in \F$ has a set
  $S_0\subseteq V(H)$ of at most~$M$ vertices such that every component of
  $H- S_0$ has vertex cover number at most $M$.

  Assuming that $\F$ does not contain any of the forbidden
  families, there is a $Q\ge 1$ such that $\F$ does not contain
\begin{itemize}
\item $\Gpath{Q}$,
\item $\Gclique{Q}$,
\item $\Gbiclique{Q}$,
\item $Q\cdot \Gsubdivstar{Q}$,
\item $Q\cdot \Gfountain{s}{Q}$ for  any integer $3\le s \le Q+2$,
\item $Q\cdot \Goperahouse{s}$ for any odd integer $1 \le s \le Q$,
\item $Q\cdot \Gdoublebroom{s}$ for any odd integer $1 \le s \le Q$,
\item $Q\cdot \Glongfountain{s}{t}$ for any odd integer $3\le s \le Q+2$ and integer $1\le t \le Q$,
\item $\Gsubdivtree{s}{Q}$ for any $1\le s \le Q$,
\item $\Gdiamondfan{Q}$.
\end{itemize}
(The argument why there is such a finite $Q$ is the same as in the
proof of Lemma~\ref{lemma:ramsey:nonbipartite}.) 
We set
the following constants:
\begin{align*}
a&=a(Q),\ \ \textup{(for the function $a(Q)$ in Lemma~\ref{lemma:ramsey:nonbipartite0})}\\
b&=b(Q),\ \ \textup{(for the function $b(Q)$ in Lemma~\ref{lemma:ramsey:balancedbipartite0})}\\
k&=Q(Q^2+2Q)+Q(Q+1)+QM+Q^2M,\\
c&=(k+1)M,\\
d&=Q\cdot 2^{b+M}+b.
\end{align*}

Select an arbitrary $H\in F$ and let $S_0$ be given by
Lemma~\ref{lemma:ramsey:separator0}.  We say that component is $C$ of
$H- S_0$ is {\em good} if it has at most $a$ vertices or it is a 
$b$-thin bipartite graph with at most $d$ vertices whose neighborhood in~$H[C]$
is not universal to $N(C)\cap S_0$. Otherwise, we say that the
component is {\em bad,} which means that at least one of the following
holds:
\begin{enumerate}
\item $C$ is nonbipartite and has more than $a$ vertices.
\item $C$ is bipartite, but not $b$-thin.
\item $C$ is bipartite, is $b$-thin, but has
  more than $d$ vertices whose closed neighborhood in~$H[C]$ is not universal to
  $N(C)\cap S_0$.
\end{enumerate}
The following claim bounds the number of bad components. Then we show how to extend $S_0$ to destroy the bad components.
\begin{claim}\label{claim:boundbad}
 $H - S_0$ has at most $k$ bad components.
\end{claim}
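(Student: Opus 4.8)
The plan is to bound separately the number of bad components of each of the three kinds listed in the definition of ``bad'': (1) nonbipartite with more than $a$ vertices; (2) bipartite, with more than $a$ vertices, but not $b$-thin; (3) $b$-thin bipartite with more than $a$ vertices but with more than $d$ vertices whose closed neighborhood in $H[C]$ is not universal to $N(C)\cap S_0$. I will show the three counts are below $Q(Q^2+2Q)$, $Q(Q+1)$, and $QM+Q^2M$ respectively, which sum to $k$. In every case the engine is the same observation: distinct components of $H-S_0$ are pairwise nonadjacent in $H$, so a union of vertex subsets taken from different components induces exactly their disjoint union; hence if there are ``too many'' bad components of one kind, heredity of $\F$ lets us exhibit inside $H$ either an induced copy of some $Q\cdot\Gfountain{s}{Q}$, $Q\cdot\Glongfountain{s}{t}{Q}$, $Q\cdot\Goperahouse{s}{Q}$, $Q\cdot\Gsubdivstar{Q}$, or $Q\cdot\Gdoublebroom{s}{Q}$ with a parameter in the forbidden range, or a single induced $\Gdiamondfan{Q}$ or $\Gsubdivtree{s}{Q}$ with $s\le Q$ — all excluded by the choice of $Q$.

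For kind (1) I would apply Lemma~\ref{lemma:ramsey:nonbipartite0} with parameter $Q$ to each such component $C$, obtaining an induced copy of $\Gpath{Q}$, $\Gclique{Q}$, or one of $\Gfountain{s}{Q}$, $\Glongfountain{s}{t}{Q}$, $\Goperahouse{s}{Q}$ with the stated bounded parameter ranges. The first two are impossible since $\F$ is hereditary and avoids them; the remaining possibilities span fewer than $Q^2+2Q$ distinct graphs, so assigning each kind-(1) component one such graph and noting that $Q$ components assigned the same graph would produce an induced $Q\cdot(\cdot)$, there can be fewer than $Q(Q^2+2Q)$ of them. Kind (2) is analogous via Lemma~\ref{lemma:ramsey:balancedbipartite0}: a connected bipartite graph that is not $b$-thin contains an induced $\Gpath{Q}$, $\Gclique{Q}$, $\Gbiclique{Q}$ (all forbidden), $\Gsubdivstar{Q}$, or $\Gdoublebroom{s}{Q}$ for odd $s\le Q$; the last two span at most $Q+1$ graphs, giving a bound of $Q(Q+1)$.

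The substantive case is kind (3). Let $A_C$ be the side of $C$ of size at most $b$ and $W_C:=N_H(V(C))\cap S_0$, so $|W_C|\le M$. Of the more than $d=Q\cdot 2^{b+M}+b$ contaminated vertices of $C$, at most $b$ lie in $A_C$, so more than $Q\cdot 2^{b+M}$ lie on the large side; classifying a large-side vertex $v$ by the pair $(N_C(v), N_H(v)\cap W_C)$ gives at most $2^{b+M}$ classes, so some class contains $Q$ contaminated vertices $v_1,\dots,v_Q$ with a common $C$-neighborhood $A'$ (nonempty since $C$ is connected with at least two vertices) and common trace $W'$ on $W_C$. If $W'=W_C$, every $v_i$ is adjacent to all of $W_C$, so $v_i$ is contaminated only because some $a_0\in A'$ misses a vertex $w_1\in W_C$; then $\{w_1,a_0,v_1,\dots,v_Q\}$ induces a $K_{2,Q}$ — a ``diamond-fan blade'' rooted at $w_1\in S_0$. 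If $W'\subsetneq W_C$, fix $w_0\in W_C\setminus W'$ (missed by all $v_i$); since $w_0$ has a neighbor in the connected graph $C$, take a shortest (hence induced) path in $H[V(C)\cup\{w_0\}]$ from $w_0$ to the set $A'$, ending at some $y\in A'$, and hang $v_1,\dots,v_Q$ off $y$: the interior of this path avoids $A'$, hence (as $C$ is bipartite and $N_C(v_i)=A'$) is nonadjacent to every $v_i$, so the result is a clean ``subdivided-tree leg'' rooted at $w_0$ whose handle is an induced path in $C$ and therefore has length at most $Q$ (else $C$, hence $\F$, would contain an induced $\Gpath{Q}$). Thus each kind-(3) component yields a blade rooted at a vertex of $S_0$ (at most $M$ choices) of one of at most $Q+1$ shapes (the $K_{2,Q}$-blade, or a subdivided-tree leg of handle length $s\in[Q]$); assigning each component one such (root, shape) pair, $Q$ components sharing the same pair have blades that are vertex-disjoint apart from the common root and pairwise nonadjacent otherwise, so their union induces exactly $\Gdiamondfan{Q}$ or $\Gsubdivtree{s}{Q}$, a contradiction. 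Hence there are fewer than $Q\cdot M\cdot(Q+1)=QM+Q^2M$ kind-(3) components, and summing the three estimates gives at most $k$.

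I expect the main obstacle to be exactly this kind-(3) analysis: identifying the right ``blade'' invariant so that the $K_{2,Q}$-blade versus subdivided-leg dichotomy is genuinely exhaustive for contaminated components, arranging the handle of the leg to be an induced path disjoint from the neighborhoods of the $v_i$ (the shortest-path-to-$A'$ truncation), and verifying that gluing $Q$ blades along a shared vertex of $S_0$ reproduces the forbidden pattern with the correct degree sequence rather than something denser. The rest is careful bookkeeping — counting the parameterized families and the blade shapes precisely enough that the three bounds add up to exactly the stated $k$.
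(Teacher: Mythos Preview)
Your proposal is correct and mirrors the paper's proof: bound kinds (1) and (2) via Lemmas~\ref{lemma:ramsey:nonbipartite0} and~\ref{lemma:ramsey:balancedbipartite0} and pigeonhole over the finitely many output types, and for kind (3) extract from each component either a $K_{2,Q}$-blade or a path-with-$Q$-pendants leg rooted in $S_0$, then pigeonhole over (root, shape) pairs to assemble an induced $\Gdiamondfan{Q}$ or $\Gsubdivtree{s}{Q}$. One small point worth making explicit in your subdivided-tree case is that the penultimate vertex of your shortest $w_0$-to-$A'$ path could itself lie in the selected class (it is the only path vertex that can), but since the class has strictly more than $Q$ members you can choose the $Q$ pendants to avoid it; the paper sidesteps this by first fixing a neighbor $y\in C$ of $w_0$ and then routing a shortest path inside $C$ from the class toward $y$, which automatically keeps the class off the path.
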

\begin{proof}
Let us bound first the number 
  nonbipartite components that have at least
  $a$ vertices. As $H$ does not contain $\Gpath{Q}$ or $\Gclique{Q}$ as an induced subgraph, Lemma~\ref{lemma:ramsey:nonbipartite0} implies that each
  such nonbipartite component contains either 
 $\Gfountain{s}{Q}$ for some odd integer $3\le s\le Q+2$, 
 $\Glongfountain{s}{t}{Q}$ some odd integer $3\le s\le Q+2$ and integer $1\le t\le Q$, or
$\Goperahouse{s}{Q}$ for some odd integer $1\le s\le Q$ as induced subgraph.
That is, each nonbipartite component of size at least $a$ contains one of these $Q^2+2Q$ graphs, hence if there are
  $Q(Q^2+2Q)$ such components, one of these graphs appears at least
  $Q$ times, contradicting our assumptions.

  Next we bound the number of bipartite components that are not
  $b$-thin. As $\Gpath{Q}$, $\Gclique{Q}$, and $\Gbiclique{Q}$ do not
  appear in $H$ as induced subgraph,
  Lemma~\ref{lemma:ramsey:balancedbipartite0} implies that each such
  bipartite component contains $\Gsubdivstar{Q}$ or $\Gdoublebroom{s}{Q}$
  for some $1\le s \le Q$ as induced subgraph.  Therefore, if there
  are $Q(Q+1)$ bipartite non-thin components, one of these graphs appears at least $Q$
  times, contradicting our assumptions.

  Finally, we bound the number of components that are bipartite,
  $b$-thin, but have more than $d$ vertices whose closed neighborhood
  is not universal to $N(C)\cap S_0$. In particular, there is a set
  $X_0$ of at least $d-b$ such vertices in the larger bipartite
  class. Each vertex in the larger class of $C$ has degree at most
  $b+|S_0|\le b+M$ in~$H$, hence we can partition $X_0$ into at most
  $2^{b+M}$ classes according to their neighborhood. Therefore, there
  is a set $X\subseteq X_0$ of at least $|X_0|/2^{b+M}\ge
  (d-b)/2^{b+M}=Q$ such vertices in the larger side of $C$ whose
  neighborhoods are the same. There are two possibilities why the
  closed neighborhoods of the vertices in $X$ are not universal to
  $N(C)\cap S_0$: either some neighbor of $X$ is not universal to
  $N(C)\cap S_0$, or the vertices of $X$ are not universal to
  $N(C)\cap S_0$.
\begin{figure}[t]
\begin{center}
{\small \svg{\linewidth}{diamondfan2}}
\caption{Proof of Theorem~\ref{theorem:ramsey:separator}, Claim~\ref{claim:boundbad}. Finding (a) a \Gdiamondfan{4} or (b) a \Gsubdivtree{5}{4} in four bad components.}\label{fig:dfan}
\end{center}
\end{figure}

  The first case is when every vertex of $X$ is universal to
  $N(C)\cap S_0$, but they have a neighbor $x\in C$ that is not
  adjacent to some $z\in N(C)\cap S_0$. If there are at least $Q|S_0|$
  components where this case happens, then there are at least $Q$
  of them for which this case happens with the same $z\in S_0$. Then
  the corresponding vertices $x$ and sets $X$ form a $\Gdiamondfan{Q}$ (see Figure~\ref{fig:dfan}(a)).

  The second case is that the vertices in $X$ are not adjacent to some
  vertex $z\in N(C)\cap S_0$.  The fact that $z$ is in $N(C)$ implies
  that $z$ has a neighbor $y\in C$. Let us find a shortest path in
  $H[C]$ from each vertex of $X$ to $y$. As every vertex in $X$ has
  the same set of neighbors, we may assume that the second vertex is
  the same for each of these $|X|$ paths. In other words, there is a
  $v\in C$ and a $v-y$ path $P$ such that $v$ is adjacent to every
  vertex in $X$ and, for every $x\in X$, the path $xP$ is a shortest
  $x-y$ path. This implies that $P$ is an induced path and $x$ is not
  adjacent to any vertex of $P$ except $v$. As $H$ does not contain
  $\Gpath{Q}$ as induced subgraph, we also get that $|P|<Q$. If there are at least
  $Q^2|S_0|$ components where this subcase happens, then there are $Q$
  of them for which this subcase happens with the same $z\in S_0$ and
  the length of $P$ is the same integer $1\le s \le Q$. Then the
  corresponding paths $P$ and sets $X$ form a $\Gsubdivtree{s}{Q}$ (see Figure~\ref{fig:dfan}(b)).

  Summing up all cases, we get that $H- S_0$ has at most
  $k=Q(Q^2+2Q)+Q(Q+1)+QM+Q^2M$ bad components.  \cqed\end{proof}

We have shown that at most $k$ of the components of $H- S_0$
are bad components. Then we destroy these bad components by extending
$S_0$ by a minimum vertex cover of each bad component $C$; let $S$ be
the resulting set of vertices. We have seen that
(Lemma~\ref{lemma:ramsey:separator0}) that each component $C$ has a
vertex cover of size $M$, hence $|S|\le |S_0|+kM\le (k+1)M=c$. Observe
that if $C$ is a bad component of $H- S_0$, then every vertex
of $C\setminus S$ is an isolated vertex of $H- S$. Moreover,
the good components of $H- S_0$ are unaffected by extending
$S_0$ to $S$ and it is also true that $N(C)\cap S_0=N(C)\cap S$ for
every good component $C$ of $H- S_0$. Therefore, every
component $C$ of $H- S$ either has size at most $a$ or it is a
$b$-thin bipartite graph having at most $d$ vertices whose closed
neighborhood is not universal to $N(C)\cap S$. As this is true for
every $H\in \F$, we have shown that $\F$ is splittable.
\end{proof}

\subsection{Proof of the dichotomy for subgraph testing}
\label{sec:proof-dich-subgr}

Using the algorithm of
Section~\ref{subsection:subgraph:turing:upperbounds}, the hardness
results for the basic families proved in
Section~\ref{subsection:subgraph:turing:lowerbounds}, the
characterization proved in Section~\ref{sec:subgraph-ramsey}, and the
hardness results for \FPacking obtained in
Section~\ref{subsection:packing:lowerbounds}, we can prove
Theorem~\ref{theorem:intro:turingsubgraph}.

  \restatemainsubgraph*
\begin{proof}
  Let $\F$ be a hereditary class of graphs. If $\F$ is splittable,
  then Theorem~\ref{theorem:kernel:subgraph} shows that \FSubgraphTest
  admits a polynomial Turing kernel. Otherwise,
  Lemma~\ref{theorem:fpacking:pvsnp} gives a list of classes such that
  one of these classes is fully contained in $\F$. If $\F$ is a
  superset of $\Fpath$, then \FSubgraphTest is clearly \textsc{Long
    Path}-hard. If $\F$ is superset of $\Fclique$ or~$\Fbiclique$, then \FSubgraphTest
  W[1]-hard~\cite{DowneyF13,Lin14}.

  If $\F$ contains $n\cdot \Gsubdivstar{n}$ (that is, the disjoint
  union of $n$ copies of $\Gsubdivstar{n}$, then, as $\F$ is
  hereditary, it also contains $t\cdot \Gsubdivstar{n}$ for every
  $t,n\ge 1$. This means that an instance $(G,H,t)$ of \Packing with
  $H=\Gsubdivstar{n}$ can be expressed as an instance $(G,H')$ of
  $\FSubgraphTest$ with $H'=t\cdot H$. Therfore,
  $\Fsubdivstar$-\Packing, which was shown to be \textup{WK[1]}-hard in
  Theorem~\ref{claim:subdivstar:hardness:goodmodels}, can be reduced
  to $\FSubgraphTest$, implying that the latter problem is \textup{WK[1]}-hard
  as well. The situation is similar for items 5-8 in
  Theorem~\ref{theorem:ramsey:separator}: a \textup{WK[1]}-hard packing problem can
  be reduced to \FSubgraphTest.

  If $\F$ contains $\Fsubdivtree{s}$ for some $s\ge 1$, then
  $\FSubgraphTest$ is \textup{WK[1]}-hard by
  Theorem~\ref{lemma:subgraph:subdivtree:wkhard}. Similarly, if $\F$
  contains $\Fdiamondfan$, then $\FSubgraphTest$ is \textup{WK[1]}-hard by
  Theorem~\ref{lemma:subgraph:diamondfan:wkhard}. Therefore, we have
  shown that if $\F$ is not splittable, then it is \textsc{Long Path}-
  or \textup{WK[1]}-hard in all cases, completing the proof of
  Theorem~\ref{theorem:intro:turingsubgraph}.
\end{proof}

\section{Many-one kernelization complexity of subgraph testing}

For the case of many-one kernelization, we cannot determine the kernelization complexity for all hereditary graph families \F. The reason is that the complexity landscape is very diverse in this case, which we show by presenting several surprising kernelization upper and lower bounds. Obviously, since many-one kernelization is more restrictive than Turing kernelization, the negative results from Turing kernelization for \kFSubgraphTest carry over. However, we will see that for some of the families~\F where polynomial-size Turing kernels exist there is no polynomial-size many-one kernel unless \containment.

\subsection{Lower bounds} \label{subsection:subgraph:karp:lowerbounds}

\subsubsection{Kernelization lower bounds by OR-cross-composition}
The kernelization lower bounds presented until this point employed polynomial-parameter transformations, as they serve simultaneously as \textup{WK[1]}-hardness proofs (obtaining Turing kernelization lower bounds under the assumption that no \textup{WK[1]}-hard problem admits a polynomial Turing kernel) and transformations from incompressible problems (obtaining many-one kernelization lower bounds under the assumption that \ncontainment). The lower bounds in Section~\ref{subsection:subgraph:karp:lowerbounds} have to employ different machinery as they apply only to many-one kernelization (Theorem~\ref{theorem:kernel:subgraph} provides polynomial-size Turing kernels for these problems). We use the technique of OR-cross-composition~\cite{BodlaenderJK14}, which builds on earlier results by Bodlaender et al.~\cite{BodlaenderDFH09} and Fortnow and Santhanam~\cite{FortnowS11}.

\begin{definition} \label{definition:poly:eqv:relation}
An equivalence relation~\eqvr on $\Sigma^*$ is called a \emph{polynomial equivalence relation} if the following two conditions hold:
\begin{enumerate}
	\item There is an algorithm that given two strings~$x,y \in \Sigma^*$ decides whether~$x$ and~$y$ belong to the same equivalence class in~$(|x| + |y|)^{\Oh(1)}$ time.
	\item For any finite set~$S \subseteq \Sigma^*$ the equivalence relation~$\eqvr$ partitions the elements of~$S$ into at most~$(\max _{x \in S} |x|)^{\Oh(1)}$ classes.
\end{enumerate}
\end{definition}
\begin{definition} \label{definition:cross:composition}
Let~$L \subseteq \Sigma^*$ be a set and let~$\Q \subseteq \Sigma^* \times \mathbb{N}$ be a parameterized problem. We say that~$L$ \emph{OR-cross-composes} into~$\Q$ if there is a polynomial equivalence relation~$\eqvr$ and an algorithm that, given~$r$ strings~$x_1, x_2, \ldots, x_r \in \Sigma^*$ belonging to the same equivalence class of~$\eqvr$, computes an instance~$(x^*,k^*) \in \Sigma^* \times \mathbb{N}$ in time polynomial in~$\sum _{i \in [r]} |x_i|$ such that:
\begin{enumerate}
	\item~$(x^*, k^*) \in \Q \Leftrightarrow x_i \in L$ for some~$i \in [r]$,
	\item~$k^*$ is bounded by a polynomial in~$\max _{i \in [r]} |x_i|+\log r$.
\end{enumerate}
\end{definition}
\begin{theorem}[\cite{BodlaenderJK14}] \label{theorem:cross:composition:no:kernel}
If a set~$L \subseteq \Sigma^*$ is NP-hard under many-one reductions and~$L$ OR-cross-composes into the parameterized problem~$\Q$, then there is no polynomial many-one kernel for~$\Q$ unless \containment.
\end{theorem}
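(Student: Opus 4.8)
The plan is to show that a polynomial many-one kernelization for~$\Q$ would imply \containment. So suppose $K$ is such a kernelization, with size bound $k\mapsto k^{c}$ for a constant~$c$. I will use~$K$ together with the OR-cross-composition to build an \emph{OR-distillation} of~$L$ and then invoke the distillation lower bound of Fortnow and Santhanam~\cite{FortnowS11} (which builds on~\cite{BodlaenderDFH09}). Concretely, it suffices to exhibit a polynomial-time algorithm that, given $t$ strings $x_1,\dots,x_t$ over~$\Sigma$ of a common length~$n$, outputs a single \textsc{SAT} formula~$\psi$ with $|\psi|\le \poly(n)$ (crucially, independent of~$t$) such that $\psi$ is satisfiable if and only if $x_i\in L$ for some~$i$; since~$L$ is NP-hard, routing the reduction $\mathrm{SAT}\le_m L$ through such an algorithm shows that \textsc{SAT} itself has an OR-distillation in the sense of~\cite{FortnowS11}.

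First I would deduplicate the input, so that $t\le |\Sigma|^{n}$ and hence $\log t = \Oh(n)$; this standard step is what lets us absorb the additive $\log r$ slack permitted in Definition~\ref{definition:cross:composition}. Next I would partition $\{x_1,\dots,x_t\}$ by the polynomial equivalence relation~$\eqvr$ coming with the cross-composition; since all inputs have length~$n$, the second property of a polynomial equivalence relation (Definition~\ref{definition:poly:eqv:relation}) bounds the number of classes $C_1,\dots,C_m$ by $q(n)$ for a polynomial~$q$, and the partition is computable in time polynomial in $\sum_i|x_i|$. For each class~$C_\ell$ (whose members are pairwise $\eqvr$-equivalent) I would run the cross-composition to get an instance $(x^*_\ell,k^*_\ell)$ of~$\Q$ with $(x^*_\ell,k^*_\ell)\in\Q$ exactly when some $x_i\in C_\ell$ lies in~$L$, and with $k^*_\ell\le \poly(n+\log t)=\poly(n)$ (using $\log t=\Oh(n)$); then I would run~$K$ on it to obtain an equivalent instance $(x'_\ell,k'_\ell)$ of size at most $(k^*_\ell)^{c}=\poly(n)$. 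Assuming the classical version of~$\Q$ lies in \textup{NP} (as it does in all our applications), each $(x'_\ell,k'_\ell)$ translates in polynomial time into a \textsc{SAT} formula $\psi_\ell$ of size $\poly(n)$ with the same answer, and $\psi:=\psi_1\vee\cdots\vee\psi_m$ is a single \textsc{SAT} formula of size $m\cdot\poly(n)=\poly(n)$ that is satisfiable if and only if $\bigvee_i(x_i\in L)$ holds. This is exactly the desired OR-distillation.

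By the theorem of Fortnow and Santhanam~\cite{FortnowS11}, an NP-complete language possessing an OR-distillation forces $\mathrm{coNP}\subseteq\mathrm{NP}/\poly$; taking complements, this is \containment, which by Yap's theorem collapses the polynomial hierarchy. Hence no polynomial many-one kernel for~$\Q$ can exist unless \containment. The step demanding the most care is the size bookkeeping: one must chain the cross-composition parameter bound $k^*_\ell\le \poly(n+\log t)$ with the kernel size bound $(k^*_\ell)^{c}$ and verify that, once deduplication has made $\log t=\Oh(n)$, the composite is polynomial in~$n$ \emph{alone} and the number of classes is polynomial in~$n$; this is precisely where the hypothesis of a \emph{polynomial} (rather than, say, subexponential) kernel is indispensable. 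If one prefers not to assume the classical version of~$\Q$ is in \textup{NP}, Drucker's strengthening~\cite{Drucker12} allows the distillate to lie in an arbitrary target language and still yields the same collapse; the distillation lower bound itself is used only as a black box.
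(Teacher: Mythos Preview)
The paper does not supply its own proof of this theorem; it is quoted verbatim from~\cite{BodlaenderJK14} and used as a black box. So there is nothing in the paper to compare your argument against.

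Your sketch is the standard argument from~\cite{BodlaenderJK14} and is correct in outline. One simplification: the detour through \textsc{SAT} (and hence the assumption that the classical version of~$\Q$ lies in \textup{NP}, with Drucker invoked as a fallback) is unnecessary. The Fortnow--Santhanam lower bound already applies when the distillation target is an \emph{arbitrary} language. You can therefore let the distillate be the tuple $\bigl((x'_1,k'_1),\ldots,(x'_m,k'_m)\bigr)$ itself, with target language ``at least one component belongs to~$\Q$''; the size bound $m\cdot\poly(n)=\poly(n)$ is unchanged and no hypothesis on~$\Q$ beyond decidability is needed. This is how~\cite{BodlaenderJK14} presents it.
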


\subsubsection{Canonical template graphs for packing problems}

Before presenting the two OR-cross-compositions that prove the two main lower bounds for subgraph testing problems, we discuss the common general idea behind the constructions and give some preliminary lemmas. We will give a superpolynomial many-one kernelization lower bound for detecting a subdivided star together many vertex-disjoint triangles as a subgraph (Theorem~\ref{theorem:karp:lowerbound:onesubstar:manytriangles}), and for testing two subdivided stars together with many vertex-disjoint $P_3$'s as a subgraph (Theorem~\ref{theorem:karp:lowerbound:twosubstar:manyps}; recall that~$P_3$ is the length-two path on three vertices). Hence in both cases the task is to detect a constant number (one or two) large, constant-radius subgraphs together with many constant-size subgraphs. For an OR-cross-composition, we have to embed the logical OR of many instances of an NP-hard problem into a single instance of the target problem with a small value of the parameter. To obtain this OR behavior we make use of the fact that both for~$K_3$-packing and for~$P_3$-packing there are \emph{canonical template graphs} containing NP-complete instances: there are polynomial-time constructable graph families~$\G^{K_3}$ and~$\G^{P_3}$ such that all size-$n$ instances of the NP-complete \XTC can be reduced to~$K_3$ packing (respectively~$P_3$ packing) instances on induced subgraphs of the $n$-th member of the family~$\F^{K_3}$ (resp.~$\F^{P_3}$).

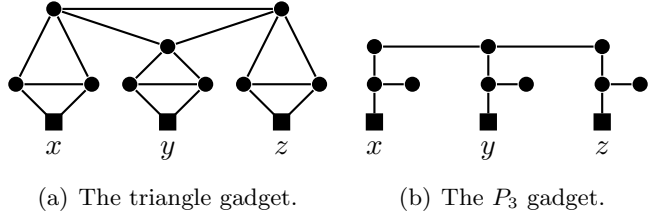
\begin{figure}[t]
\begin{center}
\subfigure[The triangle gadget.]{\label{fig:trianglegadget}
\begin{tikzpicture}[thick,>=stealth,x=0.5cm,y=0.5cm]
\foreach \i \j \k in {a/0/0,b/3/0,c/6/0,al/-1/1,ar/1/1,bl/2/1,br/4/1,cl/5/1,cr/7/1,at/0/3,bt/3/2,ct/6/3}
	\node[fill=black,circle,inner sep=0,minimum size=0.2cm] (\i) at (\j, \k) {};
\foreach \i \j in {a/al,a/ar,at/al,at/ar,b/bl,b/br,bt/bl,bt/br,c/cl,c/cr,ct/cl,ct/cr,at/bt,bt/ct,ct/at,al/ar,bl/br,cl/cr}
	\draw (\i) -- (\j);
\node[fill=black,rectangle,minimum size=0.2cm, inner sep=0,label=below:$x$,draw] (adesc) at ($(a)$) {};
\node[fill=black,rectangle,minimum size=0.2cm, inner sep=0,label=below:$y$,draw] (bdesc) at ($(b)$) {};
\node[fill=black,rectangle,minimum size=0.2cm, inner sep=0,label=below:$z$,draw] (cdesc) at ($(c)$) {};
\end{tikzpicture}
}
\subfigure[The~$P_3$ gadget.]{\label{fig:pathgadget}
\begin{tikzpicture}[thick,>=stealth,x=0.5cm,y=0.5cm]
\foreach \i \j \k in {a/0/0,b/3/0,c/6/0,al/0/1,ar/1/1,bl/3/1,br/4/1,cl/6/1,cr/7/1,at/0/2,bt/3/2,ct/6/2}
	\node[fill=black,circle,inner sep=0,minimum size=0.2cm] (\i) at (\j, \k) {};
\foreach \i \j in {a/al,at/al,b/bl,bt/bl,c/cl,ct/cl,at/bt,bt/ct,al/ar,bl/br,cl/cr}
	\draw (\i) -- (\j);
\node[fill=black,rectangle,minimum size=0.2cm, inner sep=0,label=below:$x$,draw] (adesc) at ($(a)$) {};
\node[fill=black,rectangle,minimum size=0.2cm, inner sep=0,label=below:$y$,draw] (bdesc) at ($(b)$) {};
\node[fill=black,rectangle,minimum size=0.2cm, inner sep=0,label=below:$z$,draw] (cdesc) at ($(c)$) {};
\end{tikzpicture}
}
\caption{Gadgets for the canonical template graph families~$\G^{K_3}$ and~$\G^{P_3}$. The vertices represented by circles are private to the gadget, while the three vertices~$x,y,z$ represented by squares are shared by other gadgets.}
\end{center}
\end{figure}

\begin{definition} \label{definition:canonicalgraphs}
For each positive integer~$n$, the $n$-th graph~$\G^{K_3}_n$ in the sequence of \emph{canonical template graphs for triangle packing}~$\G^{K_3}$ is obtained as follows:
\begin{itemize}
	\item Start with an independent set~$U$ of size~$n$. 
	\item For each set~$\{x,y,z\} \in \binom{U}{3}$, add a copy of the nine private vertices in the triangle gadget of Figure~\ref{fig:trianglegadget} to the graph and connect them to the vertices~$x,y,z \in U$ as in Figure~\ref{fig:trianglegadget}.
\end{itemize}
The $n$-th graph~$\G^{P_3}_n$ in the sequence of \emph{canonical template graphs for~$P_3$ packing} is obtained in the same way, using the path gadget of Figure~\ref{fig:pathgadget} instead of the triangle gadget.
\end{definition}

From this definition it easily follows that for each~$n$, the graphs~$\G^{K_3}_n$ and~$\G^{P_3}_n$ can be constructed in time polynomial in~$n$ and have size exactly~$n + 9\binom{n}{3}$. The graph families are called canonical because the following problems are NP-complete.

\problemdef{\TrianglePackingCanonical}
{An integer~$n$ encoded in unary and a subset~$S$ of the vertices of~$\G^{K_3}_n$.}
{Can the vertices of~$\G^{K_3}_n[S]$ be partitioned into triangles?}

\problemdef{\PathPackingCanonical}
{An integer~$n$ encoded in unary and a subset~$S$ of the vertices of~$\G^{P_3}_n$.}
{Can the vertices of~$\G^{P_3}_n[S]$ be partitioned into~$P_3$'s?}

\begin{lemma} \label{lemma:canonicalpacking:npcomplete}
\TrianglePackingCanonical and \PathPackingCanonical are NP-complete.
\end{lemma}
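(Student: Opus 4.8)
The plan is to prove membership in NP directly and NP-hardness by a polynomial-time many-one reduction from the classical NP-complete problem \XTC. Membership in NP is immediate: since~$n$ is part of the input in unary, the graph~$\G^{K_3}_n$ (resp.~$\G^{P_3}_n$) has size polynomial in the input length, so a partition of~$V(\G^{K_3}_n[S])$ into triangles (resp.~$P_3$'s) is a polynomially-bounded certificate that can be verified in polynomial time.

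For the hardness reduction, take a \XTC instance given by a universe~$U_0$ and a family~$\mathcal{C} \subseteq \binom{U_0}{3}$; we may assume~$|U_0| = 3q$ for some integer~$q$, as otherwise the instance is trivially negative and we output a fixed negative instance of the target problem. Set~$n := |U_0|$, construct~$\G^{K_3}_n$, and identify its independent set~$U$ with~$U_0$. Let~$S$ consist of~$U$ together with the nine private vertices of the gadget~$g_T$ belonging to each~$T \in \mathcal{C}$; this produces, in polynomial time, an instance of \TrianglePackingCanonical. The claim is that~$\G^{K_3}_n[S]$ can be partitioned into triangles if and only if~$(\mathcal{C}, U_0)$ has an exact cover. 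For \PathPackingCanonical one performs the verbatim construction with~$\G^{P_3}_n$ and the path gadget of Figure~\ref{fig:pathgadget}.

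The technical core is a structural analysis of a single gadget~$g_T$ with~$T = \{x,y,z\}$. First, in any partition of~$\G^{K_3}_n[S]$ no part meets two distinct gadgets: $U$ is independent and no two private vertices of distinct gadgets are adjacent, so the only conceivable crossing part would be a~$P_3$ through a shared vertex, but the degree-one private vertices~$ar,br,cr$ of the path gadget force the part covering such a vertex to keep its neighbor~$al$/$bl$/$cl$ local. It therefore suffices to understand how a gadget is partitioned locally, and a short case analysis yields a clean dichotomy: each of~$g_T$'s three private pairs~$\{al,ar\},\{bl,br\},\{cl,cr\}$ lies in a single part together with either the corresponding shared vertex (one of~$x,y,z$) or the corresponding apex vertex (one of~$at,bt,ct$), and choosing the shared vertex for one pair frees the apex vertices only for the part~$\{at,bt,ct\}$, which in turn forces the shared vertex to be chosen for the other two pairs as well. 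Hence~$g_T$ either \emph{uses}~$T$, contributing the parts~$\{x,al,ar\},\{y,bl,br\},\{z,cl,cr\},\{at,bt,ct\}$, or \emph{does not use}~$T$, contributing~$\{at,al,ar\},\{bt,bl,br\},\{ct,cl,cr\}$; no other local configuration is possible. I expect this dichotomy to be the main obstacle, in particular ruling out the ``mixed'' parts that become available for~$P_3$'s — e.g.\ the path~$at\!-\!bt\!-\!bl$, which must be rejected because it strands the degree-one vertex~$br$ — whereas the triangle case is strictly easier since triangles are far more rigid.

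With the dichotomy in hand, correctness is routine. Given an exact cover~$\mathcal{C}' \subseteq \mathcal{C}$, partition each gadget~$g_T$ with~$T \in \mathcal{C}'$ in the ``uses~$T$'' way and each~$g_T$ with~$T \in \mathcal{C} \setminus \mathcal{C}'$ in the ``does not use~$T$'' way; since~$\mathcal{C}'$ covers every element of~$U$ exactly once, each vertex of~$\G^{K_3}_n[S]$ lands in exactly one part, so we obtain a triangle partition. Conversely, from a triangle partition of~$\G^{K_3}_n[S]$ set~$\mathcal{C}' := \{ T \in \mathcal{C} : g_T \text{ uses } T\}$; by the dichotomy the only parts incident to~$U$ are the~$\{x,al,ar\}$-type triangles of used gadgets, so each~$u \in U$ lies in exactly one~$T \in \mathcal{C}'$, i.e.\ $\mathcal{C}'$ is an exact cover. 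The identical argument with~$P_3$'s handles \PathPackingCanonical, which together with the NP membership established above completes the proof.
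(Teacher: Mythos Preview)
Your proof is correct and follows the same reduction from \XTC that the paper uses; the only difference is that the paper outsources the correctness of the gadget analysis to Kirkpatrick and Hell~\cite{KirkpatrickH78}, whereas you carry out the dichotomy argument explicitly. Your self-contained treatment of the cross-gadget issue and the ``uses~$T$'' versus ``does not use~$T$'' dichotomy is sound for both the triangle and the~$P_3$ gadget.
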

\begin{proof}
Membership in NP is trivial. Completeness follows from the fact that the existing NP-completeness reductions for triangle packing and~$P_3$ packing, which follow from more general results by Kirkpatrick and Hell~\cite{KirkpatrickH78}, construct induced subgraphs of the canonical graphs. In particular, in these two specific cases their NP-completeness proof reduces an instance of \XTC to instances of triangle packing or~$P_3$ packing. An instance of \XTC consists of a universe~$U$ among with a collection~$T \subseteq \binom{U}{3}$ of size-three subsets of the universe, and asks whether there is a subset~$T' \subseteq T$ such that each element of~$U$ is contained in exactly one set of~$T'$. To reduce such an instance~$(U,T)$ of \XTC to triangle packing or~$P_3$ packing, it suffices to perform the construction of Definition~\ref{definition:canonicalgraphs} but to only make gadgets for the size-three subsets of~$U$ that appear in~$T'$. Consequently, we can obtain instances of \TrianglePackingCanonical and \PathPackingCanonical that are equivalent to~$(U,T$) by letting~$S$ contain the vertices~$U$ together with those of the gadgets that would be created in the reduction.
\end{proof}

\subsubsection{One subdivided star and many triangles}

Before presenting the technical details, we describe the main intuition of the construction. We OR-cross-compose a sequence of instances of the NP-complete \TrianglePackingCanonical problem and choose a polynomial equivalence relation that enforces that all instances we are working with share the same value of~$n$ and have the same size set~$S$. The instance of the \kFSubgraphTest problem that we create contains the canonical graph~$\G^{K_3}_n$ for instances of size~$n$. For each vertex of~$\G^{K_3}_n$ we will add a private triangle of two new vertices that cannot be used in any other triangles; this private triangle will therefore be a \emph{cheap} way to get a triangle subgraph, as they consume only one vertex of~$\G^{K_3}_n$. The only other triangles in~$G$ will be within~$\G^{K_3}_n$, which are \emph{expensive} since they consume three vertices of~$\G^{K_3}_n$. We encode the input instances through the following mechanism. For each input instance~$i$, which is completely characterized by the set~$S_i$ for which it asks whether~$\G^{K_3}_n[S_i]$ can be partitioned into triangles, we add one possibility of realizing a large subdivided star to the host graph~$G$, such that realizing the subdivided star in this way blocks the cheap triangles using~$S_i$. As~$H$ will consist of one large subdivided star together with many disjoint triangles, to find an $H$-subgraph one must find a subdivided star together with many disjoint triangles. We will choose the number of triangles in such a way that, after a realization of a subdivided star is chosen that blocks the cheap triangles of~$S_i$, it will be optimal to take the cheap triangles for all vertices of~$\G^{K_3}_n - S_i$, effectively eliminating these vertices. The number of required triangles will be set such that, after taking these cheap triangles, one has to partition the remaining graph~$\G^{K_3}_n[S_i]$ into triangles to get a sufficient number. Armed with this intuition we present the formal proof. 


\restatekarponesubstarmanytriangles*

\begin{proof}
We will prove that \TrianglePackingCanonical OR-cross-composes into the problem of testing whether a graph~$G$ contains a subgraph of the form~$H \in \Fsubdivstar + \ell \cdot K_3$. Following the definition of OR-cross-composition we first define a polynomial equivalence relation~\eqvr on strings. We let all strings that do not encode a valid instance of the problem be equivalent, and we let valid instances be equivalent if they agree on the value of~$n$ and the size of the deleted set~$S$. As the size of~$\G^{K_3}_n$ is polynomial in~$n$, for each value of~$n$ there are only polynomially many choices for the size of the set~$S$ which easily implies that this is a polynomial equivalence relation.

In the remainder, it suffices to show how to OR-cross-compose a sequence of instances that is equivalent under~\eqvr. If the strings do not encode valid instances, then we output a constant-size \no-instance as the result of the OR-cross-composition. From now on we may therefore assume that~$x_1, \ldots, x_r$ is a series of strings encoding valid instances~$(n, S_1), \ldots, (n, S_r)$ whose sets~$S_i$ all have the same size~$m$. If~$n < 10$ or~$m < 10$, then we can solve all instances in constant time and output the appropriate answer; hence we may assume~$n,m \geq 10$. Let~$\G := \G^{K_3}_n$ be the canonical graph for the current value of~$n$. We construct graphs~$G$ and~$H$ such that~$H \subseteq G$ if and only if there is a \yes-instance among the inputs.
\begin{enumerate}
	\item We initialize~$G$ as a copy of~$\G$. For every vertex~$v \in V(\G)$, we add a \emph{dummy} vertex~$v'$ and an \emph{activator} vertex~$v''$ and turn~$\{v,v',v''\}$ into a triangle.
	\item For each instance number~$i \in [r]$, we create an \emph{instance selector}~$u_i$ and make it adjacent to~$\{v'' \mid v \in S_i\}$.
\end{enumerate}

This concludes the description of~$G$. 

\begin{observation}\label{observation:triangle:lowerbound}
Each triangle in~$G$ contains at least one vertex of~$V(\G)$. Each triangle in~$G$ that contains less than three vertices of~$V(\G)$ contains one vertex~$v \in V(\G)$ and the corresponding dummy vertex~$v'$ and activator vertex~$v''$.
\end{observation}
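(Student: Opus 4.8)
The plan is to prove Observation~\ref{observation:triangle:lowerbound} by a routine case distinction on the types of vertices occurring in a triangle of~$G$, driven by the fact that the construction gives the dummy, activator, and instance-selector vertices very small and structured neighborhoods. First I would record, for each class of new vertex, its neighborhood in~$G$: a dummy vertex~$v'$ satisfies $N_G(v') = \{v, v''\}$; an activator vertex~$v''$ satisfies $N_G(v'') = \{v, v'\} \cup \{u_i \mid v \in S_i\}$; and an instance selector~$u_i$ satisfies $N_G(u_i) = \{v'' \mid v \in S_i\}$. From this description I would immediately extract the three facts that carry the whole argument: no two activator vertices are adjacent, no two instance selectors are adjacent, and no instance selector is adjacent to a vertex of~$V(\G)$ or to a dummy vertex. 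I would also note that the construction adds no edges among the vertices of~$V(\G)$ beyond those already present in~$\G = \G^{K_3}_n$, so that a triangle all of whose vertices lie in~$V(\G)$ is genuinely a triangle of~$\G$.

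Next I would treat the triangles according to whether they contain a dummy vertex. If a triangle contains a dummy vertex~$v'$, then since $v'$ has exactly the two neighbors $v$ and $v''$, the triangle must be $\{v, v', v''\}$, which contains precisely one vertex of~$V(\G)$; this is exactly the description in the second sentence of the observation. So it remains to analyze triangles with no dummy vertex. If such a triangle contains an instance selector~$u_i$, its other two vertices lie in $N_G(u_i)$ and are therefore both activators; but activators are pairwise non-adjacent, so the triangle cannot close, a contradiction. Hence a dummy-free triangle uses only vertices of~$V(\G)$ together with activator vertices. It cannot contain two activators (again by non-adjacency), and if it contained exactly one activator~$v''$ together with two vertices of~$V(\G)$, then $v''$ would need two neighbors in~$V(\G)$, contradicting $N_G(v'') \cap V(\G) = \{v\}$. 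Therefore such a triangle has all three vertices in~$V(\G)$.

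Combining the two cases: every triangle of~$G$ either equals $\{v, v', v''\}$ for some $v \in V(\G)$, or has all three of its vertices in~$V(\G)$; in both cases it contains at least one vertex of~$V(\G)$, and whenever it contains fewer than three such vertices it is of the first kind. The hard part is really only the bookkeeping — making sure the list of neighborhoods is complete and that the ``independent set'' properties of the activators and of the instance selectors are invoked correctly to eliminate every triangle that might otherwise live purely among the newly added vertices; beyond that, the observation follows directly.
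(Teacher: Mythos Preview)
Your proof is correct and is exactly the routine verification one would carry out; the paper states this as an observation without proof, and your neighborhood bookkeeping and case analysis are precisely what is needed to justify it.
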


Define~$t := |V(\G)| - m + m/3$. We let~$H'$ be a subdivided star with~$m$ leaves and pick~$H := H + t \cdot K_3$ to complete the description of the \kFSubgraphTest instance~$(G,H)$. Recall that the parameter to the problem is the value~$k := |V(H)| = (2m + 1) + 3 t = (2m+1) + 3(|V(\G)| - m + m/3)$. As~$\G = \G^{K_3}_n$ has size polynomial in~$n$ and~$m$ measures the size of a vertex subset of~$\G$, it follows that~$k$ is polynomial in~$n$ and therefore in the size of the largest input instance. Hence~$k$ is bounded appropriately for an OR-cross-composition. It is easy to see that the instance~$(G,H)$ can be constructed in polynomial time. It remains to prove that it acts as the logical OR of the input instances. To this end, we first establish the following claim.

\begin{claim} \label{claim:triangle:tightbudget}
Any packing of~$t$ vertex-disjoint triangles in~$G$ that uses at most~$|V(\G)| - m$ activator vertices, uses exactly~$|V(\G)| - m$ activator vertices and all vertices of~$V(\G)$.
\end{claim}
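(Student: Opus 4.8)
The plan is to prove the claim by a short counting argument built on the structural description of triangles in~$G$ given by Observation~\ref{observation:triangle:lowerbound}. First I would record one extra structural fact not yet stated: no triangle of~$G$ can contain an instance selector~$u_i$, because~$u_i$ is adjacent only to activator vertices~$v''$ with~$v \in S_i$, and distinct activator vertices are pairwise non-adjacent (each~$v''$ is adjacent only to~$v$,~$v'$, and some selectors~$u_i$). Combining this with Observation~\ref{observation:triangle:lowerbound}, every triangle in~$G$ falls into exactly one of two types: an \emph{expensive} triangle contained entirely in~$V(\G)$, which occupies three vertices of~$\G$ and no activator; or a \emph{private} triangle~$\{v, v', v''\}$ for some~$v \in V(\G)$, which occupies one vertex of~$\G$, its dummy, and its activator.

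The second step is the bookkeeping. Given a packing of~$t$ vertex-disjoint triangles that uses at most~$|V(\G)| - m$ activator vertices, let~$a$ denote the number of expensive triangles and~$b$ the number of private ones in the packing, so that~$a + b = t$. Since each private triangle uses exactly one activator and each expensive triangle uses none, the number of activators consumed is exactly~$b$, whence~$b \le |V(\G)| - m$. On the other side, an expensive triangle uses three vertices of~$V(\G)$ and a private triangle uses one, so the packing occupies~$3a + b$ of the~$|V(\G)|$ vertices of~$\G$, giving~$3a + b \le |V(\G)|$.

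The final step is pure arithmetic. Substituting~$a = t - b$ with~$t = |V(\G)| - m + m/3$ gives~$3a + b = 3|V(\G)| - 2m - 2b$, so the inequality~$3a + b \le |V(\G)|$ rearranges to~$b \ge |V(\G)| - m$. Together with~$b \le |V(\G)| - m$ this forces~$b = |V(\G)| - m$, and then~$3a + b = |V(\G)|$, i.e.\ the packing covers every vertex of~$V(\G)$. This establishes both conclusions of the claim.

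I do not expect a real obstacle here; the content is entirely in the structural dichotomy of triangles, which is essentially Observation~\ref{observation:triangle:lowerbound} once one checks that selectors lie in no triangle. The only point requiring a word of care is that~$t$ must be an integer, i.e.\ that~$m \equiv 0 \pmod 3$; this is harmless, since instances with~$m \not\equiv 0 \pmod 3$ are trivially \no-instances of \TrianglePackingCanonical and can be discarded before invoking this construction.
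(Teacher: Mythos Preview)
Your proof is correct and follows essentially the same approach as the paper: both arguments rest on the dichotomy from Observation~\ref{observation:triangle:lowerbound} (every triangle is either ``expensive'' inside~$V(\G)$ or a ``private'' triple~$\{v,v',v''\}$) and then count vertices of~$V(\G)$ and activators. Your version is in fact more explicit than the paper's, which handles the ``fewer than $|V(\G)|-m$ activators'' case by a hand-wave, whereas you derive both bounds $b \le |V(\G)|-m$ and $b \ge |V(\G)|-m$ from a single inequality chain; the extra remark about selectors being triangle-free is redundant given the Observation but harmless.
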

\begin{claimproof}
First consider a packing using exactly~$|V(\G)| - m$ activator vertices. By Observation~\ref{observation:triangle:lowerbound}, every triangle uses at least one vertex of~$V(\G)$. For each activator vertex used, we can get a cheap triangle containing only one vertex of~$V(\G)$. The remaining~$t - (|V(\G)| - m) = m/3$ triangles each use three vertices from~$V(\G)$. Since~$m$ vertices of~$V(\G)$ are left after discarding those used in cheap triangles, these remaining~$m$ vertices must all be used in a triangle to get the additional~$m/3$ triangles needed to find~$t$ in total. Using similar arguments it is easy to show that if fewer than~$|V(\G)| - m$ activator vertices are used, one cannot obtain~$t$ in total. The claim follows.
\end{claimproof}

\begin{claim}
$H \subseteq G$ if and only if there is an~$i \in [r]$ such that~$\G^{K_3}_n[S_i]$ can be partitioned into triangles.
\end{claim}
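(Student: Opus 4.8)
The plan is to prove the two directions of the equivalence separately; the ($\Leftarrow$) direction is a direct construction, while ($\Rightarrow$) needs a case analysis on the image of the subdivided star, driven by a budget count. (We may assume $3\mid m$, since otherwise no input can partition its set into triangles, $t$ is not even integral, and we simply emit a constant-size \no-instance during the composition.)

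For ($\Leftarrow$), assume $\G[S_i]$ partitions into triangles $T_1,\dots,T_{m/3}$. I would exhibit an explicit $H$-subgraph model in $G$: map the center of the subdivided star $H'$ to the instance selector $u_i$, map the $m$ subdivider vertices bijectively onto $N_G(u_i)=\{v''\mid v\in S_i\}$, and map each leaf to the dummy $v'$ attached to the activator $v''$ playing its subdivider. For the $t$ triangles, use the cheap triangle $\{v,v',v''\}$ for each of the $|V(\G)|-m$ vertices $v\in V(\G)\setminus S_i$, and use $T_1,\dots,T_{m/3}$ for the rest; these $|V(\G)|-m+m/3=t$ triangles are vertex-disjoint and disjoint from the star image by construction. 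Hence $H\subseteq G$.

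For ($\Rightarrow$), suppose $\phi$ is an $H$-subgraph model in $G$ and let $c$ be the image of the center of $H'$, so $\deg_G(c)\ge m\ge 10$. Since dummy vertices have degree $2<m$ and private gadget vertices have degree at most $6<10\le m$ in $G$, the vertex $c$ is an original universe vertex of $\G$, an activator, or an instance selector. I would first rule out $c=v\in V(\G)$: the $m$ subdividers are distinct neighbours of $v$, all but at most two of which (namely $v'$ and $v''$) lie in $V(\G)$, so the star image contains at least $m-1$ vertices of $V(\G)$; by Observation~\ref{observation:triangle:lowerbound} each of the $t$ vertex-disjoint triangles uses a vertex of $V(\G)$, forcing $t+(m-1)\le|V(\G)|$ and hence $m\le 3$, a contradiction. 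I would then rule out that $c$ is an activator $v''$: its only non-selector neighbours are $v$ and $v'$, and these cannot both be subdividers (else the sole available leaf for $v'$ is $v$, already a subdivider), so at least $m-1$ subdividers are instance selectors, each sending its leaf to a distinct activator; this puts at least $m$ activators in the star image. Then at most $|V(\G)|-m$ activators remain for cheap triangles, and the count below forces $j:=|\phi(V(H'))\cap V(\G)|=0$, exactly $|V(\G)|-m$ cheap triangles and $m/3$ expensive ones covering all of $V(\G)$; but the one non-selector subdivider must then be $v'$ with leaf $v\in V(\G)$ (contradicting $j=0$), or else all $m$ subdividers are selectors, pushing $m+1$ activators into the star and contradicting the count. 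Either possibility is absurd.

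Therefore $c=u_i$ for some $i$. Then the $m$ subdividers map bijectively onto $N_G(u_i)=\{v''\mid v\in S_i\}$, and since no neighbour of such a $v''$ is itself an activator, the star uses exactly these $m$ activators. Writing $a$ for the number of cheap triangles and noting that cheap triangles use one activator each while expensive ones use none, disjointness gives $a\le|V(\G)|-m$; meanwhile, counting $V(\G)$-vertices (cheap triangles use $a$, the $m/3+(a-a)$… i.e.\ the $t-a$ expensive triangles use $3(t-a)$, the star uses $j$, all disjoint) yields $a+3(t-a)+j\le|V(\G)|$, which simplifies to $a\ge|V(\G)|-m+j/2$ using $3t=3|V(\G)|-2m$. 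Hence $j=0$, $a=|V(\G)|-m$, and there are $m/3$ expensive triangles; together they cover all of $V(\G)$. A cheap triangle with head $v\in S_i$ is impossible since its activator $v''$ lies in the star, so the cheap triangles cover precisely $V(\G)\setminus S_i$, and the $m/3$ expensive triangles cover exactly $S_i$; as every triangle of $G$ contained in $V(\G)$ is a triangle of $\G$, they partition $\G[S_i]$ into triangles, as required. (Alternatively, Claim~\ref{claim:triangle:tightbudget} can be invoked once $a\le|V(\G)|-m$ is established.) I expect the main obstacle to be making this budget bookkeeping airtight in the activator case, where activators, dummies, and $V(\G)$-vertices must be tracked simultaneously, and the "selector-subdividers force many activators" observation has to be combined cleanly with the triangle-count equalities.
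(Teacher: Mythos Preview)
Your proof is correct and follows essentially the same strategy as the paper: both directions are handled the same way, with the forward direction proceeding by case analysis on the image of the star's center, ruling out dummies by degree, ruling out $V(\G)$-vertices and activators by a budget count on $V(\G)$-vertices and activators respectively, and then extracting the triangle partition of~$S_i$ once the center is pinned to an instance selector.

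The only noteworthy differences are stylistic. Where the paper invokes Claim~\ref{claim:triangle:tightbudget} to handle the activator case and the final extraction, you carry out the budget arithmetic explicitly via the inequality $a + 3(t-a) + j \le |V(\G)|$, which rearranges to $a \ge |V(\G)| - m + j/2$ and forces $j=0$, $a=|V(\G)|-m$. In the instance-selector case the paper uses a ``without loss of generality'' swap argument (replacing arbitrary triangles by the cheap triangles $\{v,v',v''\}$ for $v\notin S_i$, which are always available since their activators and dummies lie outside the star), whereas your counting shows directly that the packing \emph{must} consist of exactly these cheap triangles plus $m/3$ expensive ones inside~$S_i$. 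Your route is arguably tighter, since it avoids the iterative swap and yields the exact structure of any valid packing rather than just one canonical one; the paper's route is shorter once Claim~\ref{claim:triangle:tightbudget} is available. Your separate dismissal of the private gadget vertices by degree is a harmless redundancy, since the argument you give for $c\in V(\G)$ already covers them.
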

\begin{claimproof}
($\Leftarrow$) We first prove the easy reverse direction. Assume that~$\G^{K_3}_n[S_i]$ can be partitioned into triangles. As~$|S_i| = m$, this implies that~$G^{K_3}_n[S_i]$ contains~$m/3$ vertex-disjoint triangles. We extend this packing of triangles to an $H$-subgraph in~$G$ as follows. For each vertex~$v \in V(\G) \setminus S_i$, add the triangle~$\{v,v',v''\}$ to the subgraph model, increasing the number of triangles to~$(|V(\G)| - m) + m/3$ in total. Observe that none of these triangles contains vertices~$\{v'' \mid v \in S_i\}$. Hence we can realize a subdivided star that is disjoint from this packing of triangles by centering the star at vertex~$u_i$ and using the~$2m$ vertices~$\{v', v'' \mid v \in S_i\}$ for the~$m$ legs of the star. Using the construction of~$G$, it is easy to verify that all the required edges are present. As the model of the subdivided star is disjoint from the model of the triangles, we find~$H$ as a subgraph in~$G$.

($\Rightarrow$) For the forward direction, assume that~$\phi$ is a full $H$-subgraph model in~$G$. Let~$c \in V(H)$ be the center of the subdivided star, and let~$\phi'$ be the model of the subdivided star obtained by restricting~$\phi$. We will first establish that~$\phi'(c) = \phi(c) \in \{u_1, \ldots, u_r\}$.

\begin{itemize}
	\item As dummy vertices of the form~$v'$ for~$v \in V(\G)$ have degree two in~$G$, they cannot model the center of a star of degree~$m \geq 10$.
	\item Suppose that~$\phi'(c) \in V(\G)$. Then the images of the~$m$ subdivider vertices of the star lie in~$N_G(\phi'(c))$. Observe that all neighbors of~$v \in V(\G)$ except~$v', v''$ belong to~$V(\G)$. Hence at least~$m - 2$ vertices of~$V(\G)$ are used in the model of the subdivided star. By Observation~\ref{observation:triangle:lowerbound}, each triangle in~$G$ uses at least one vertex of~$V(\G)$. As the models of the triangles are disjoint from the model of the subdivided star and there are at most~$|V(\G)| - (m - 2)$ vertices of~$V(\G)$ not used by the star, there can be at most~$|V(\G)| - (m - 2) < t$ triangles in the subgraph model. Hence~$\phi$ does not model~$H$; a contradiction.
	\item Suppose that~$\phi'(c)$ is an activator vertex of the form~$v''$ for~$v \in V(\G)$. Observe that the activator vertex~$v''$ has the unique $G$-neighbor~$v$ in the set~$V(\G)$, it has the dummy vertex~$v'$ as a neighbor, and it can have many instance selector vertices as $G$-neighbors. The vertices~$v$ and~$v'$ in~$G$ cannot be the images of vertices of two \emph{different} legs of the subdivided star: if~$v'$ is used in model~$\phi'$, then it either models a subdivider vertex (which means that its unique neighbor~$v$ distinct from~$\phi'(c)$ models the degree-one leaf of that leg of the star), or~$v'$ models a degree-one leaf which means that its only other neighbor~$v$ must model the subdivider vertex of the same leg of the star. Finally, if~$v'$ is not used in the model, then~$\{v, v'\}$ models part of (at most) one leg of the star. 
	
	From these observations, we deduce the following. Since the only $G$-neighbors of the activator vertex~$\phi'(c) = v''$ are~$v$,~$v'$, and instance selector vertices, at least~$m - 1$ subdivider vertices of the star are mapped to instance selectors by~$\phi'$. As the only neighbors of instance selector vertices in~$G$ are activator vertices, for each instance selector that is used by~$\phi'$, an activator vertex distinct from~$\phi'(c) = v''$ must also be used by~$\phi'$. Hence we find: if~$v \in V(\G)$ is not used in model~$\phi'$, then~$m$ instance selector vertices are used, implying that~$m$ activator vertices distinct from~$\phi'(c)$ are also used, giving a total number of~$m + 1$ activator vertices if we include~$\phi'(c) = v''$ itself. By Claim~\ref{claim:triangle:tightbudget} this shows that the subdivided star model~$\phi'$ does not leave enough activator vertices free to realize~$t$ disjoint triangles. On the other hand, if~$v \in V(\G)$ is used in the model~$\phi'$ of the subdivided star, still at least~$m$ activator vertices are used. But by Claim~\ref{claim:triangle:tightbudget}, if~$v \in V(\G)$ is not used to model a triangle, then we cannot pack~$t$ triangles using the~$|V(\G)| - m$ activator vertices that are left after discarding those used in the subdivided star. Again,~$\phi$ does not model~$H$ and we reach a contradiction.
\end{itemize}

As the vertices of~$G$ can be partitioned into instance selectors, dummies, activators, and vertices of~$V(\G)$, we find that if~$\phi$ is a full $H$-subgraph model in~$G$, then~$\phi(c)$ is an instance selector, say~$u_i$. As~$\deg_G(u_i) = \deg_H(c)$, all $G$-neighbors of~$u_i$ (which are activator vertices) are used to model vertices that subdivide the star. Hence there is a packing of~$t$ disjoint triangles in the graph~$G - N_G[u_i]$. Observe that for each vertex~$v \not \in S_i$, the graph~$G - N_G[u_i]$ contains~$v'$ and~$v''$. As these two vertices do not occur in any triangles except~$\{v,v',v''\}$, we may assume that the packing of triangles in~$G - N_G[u_i]$ contains all triangles~$\{v,v',v''\}$ for~$v \in V(\G) \setminus S_i$. This accounts for~$|V(\G)| - m$ of the~$t$ triangles. The remaining~$t - (|V(\G)| - m) = m/3$ triangles cannot use any dummy vertex, as dummy vertices only form triangles with activator vertices that have already been used fully. So the remaining~$m/3$ triangles are in fact triangles in~$G[S_i] = \G[S_i]$. As~$m = |S_i|$ this implies that~$\G[S_i] = \G^{K_3}_n[S_i]$ can be partitioned into triangles, which concludes the proof.
\end{claimproof}

\noindent As all criteria of an OR-cross-composition have been met, Theorem~\ref{theorem:karp:lowerbound:onesubstar:manytriangles} follows from Theorem~\ref{theorem:cross:composition:no:kernel}.
\end{proof}

\subsubsection{Two subdivided stars and many \texorpdfstring{$P_3$}{P3}'s}

The next kernelization lower bound concerns pattern graphs that contain two subdivided stars together with many~$P_3$'s. The following lemma employs a padding argument to prove that \PathPackingCanonical remains NP-complete under some degree restrictions. These degree restrictions will be useful to during the OR-cross-composition to reason about how the center of a large subdivided star can appear in the constructed host graph.

\begin{lemma} \label{lemma:canonicalpacking:larges:npcomplete}
\PathPackingCanonical remains NP-complete when restricted to instances~$(n, S)$ where, if~$n \geq 10$, we have~$|S| > \Delta (\G^{P_3}_n) + 1$.
\end{lemma}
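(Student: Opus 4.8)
The plan is to establish NP-hardness by a polynomial-time many-one reduction from the (unrestricted) \PathPackingCanonical problem, which is NP-hard by Lemma~\ref{lemma:canonicalpacking:npcomplete}; membership in NP is immediate. Given an instance $(n,S)$ of \PathPackingCanonical, if $n < 10$ I output $(n,S)$ unchanged, since the restricted problem imposes no size constraint for such small $n$. For $n \geq 10$ the reduction produces an equivalent instance $(2n, S')$, and the real work lies in choosing $S'$ so that $\G^{P_3}_{2n}[S']$ is partitionable into $P_3$'s exactly when $\G^{P_3}_n[S]$ is, while being large enough that $|S'| > \Delta(\G^{P_3}_{2n}) + 1$.

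To build $S'$ I would use the fact that $\G^{P_3}_n$ is isomorphic to the subgraph of $\G^{P_3}_{2n}$ induced by the first $n$ universe vertices together with the private vertices of all gadgets on triples among them --- this holds because the nine private vertices of a gadget are adjacent only to the three universe vertices of its triple. I keep $S$ inside this copy and additionally place into $S'$ the nine private vertices of the gadget for every triple $\{i,j,k\}$ of the $n$ new universe vertices $u_{n+1},\dots,u_{2n}$, \emph{without} adding those new universe vertices themselves. Since the private vertices of distinct gadgets are distinct and are adjacent only to universe vertices not in $S'$, the graph $\G^{P_3}_{2n}[S']$ is then the disjoint union of $\G^{P_3}_n[S]$ with $\binom{n}{3}$ copies of the nine-vertex ``gadget interior'' graph.

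The combinatorial core is the observation that the nine-vertex interior of a $P_3$-gadget (the nine circled vertices of Figure~\ref{fig:pathgadget} with the induced edges) partitions into three $P_3$'s, for instance $\{ar,al,at\}$, $\{br,bl,bt\}$, $\{cr,cl,ct\}$. Hence every added component of $\G^{P_3}_{2n}[S']$ is $P_3$-partitionable on its own, so $\G^{P_3}_{2n}[S']$ admits a $P_3$-partition if and only if $\G^{P_3}_n[S]$ does, which gives the equivalence of the two instances; the whole construction is plainly polynomial-time in $n$.

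Finally I would check the size condition. Each universe vertex of $\G^{P_3}_{2n}$ has degree $\binom{2n-1}{2}$ (one edge into each gadget whose triple contains it) whereas every private vertex has degree at most three, so $\Delta(\G^{P_3}_{2n}) = \binom{2n-1}{2}$ for $n \geq 10$; on the other hand $|S'| \geq 9\binom{n}{3}$, and $9\binom{n}{3} > \binom{2n-1}{2} + 1$ holds for all $n \geq 10$ by a routine comparison of a cubic with a quadratic, so $(2n,S')$ is a legitimate instance of the restricted problem. The one point that requires care is that the new universe vertices must be \emph{excluded} from $S'$: this is what makes the added gadget interiors genuinely separate components, so that their always-possible $P_3$-partitions cannot interfere with the partitionability of $\G^{P_3}_n[S]$ --- correspondingly, one must verify that a standalone gadget interior really is $P_3$-partitionable, which is the small explicit computation above. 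I expect this disjointness bookkeeping for $\G^{P_3}_{2n}[S']$ to be the main (though mild) obstacle.
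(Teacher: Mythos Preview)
Your proof is correct, but it takes a genuinely different route from the paper's. The paper goes back to \XTC: it pads an \XTC instance $(U,T)$ with $3n^5$ fresh elements and all their triples, then applies the standard reduction to \PathPackingCanonical; the sheer number of new triples forces $|S| \geq 9\binom{3n^5}{3}$, which dominates $\Delta(\G^{P_3}_{n+3n^5}) \leq \binom{n+3n^5-1}{2}$. Your approach instead stays inside the canonical-graph framework and reduces from \PathPackingCanonical itself: you only double $n$ and add the nine private vertices of every gadget on the new universe vertices, excluding those universe vertices so that each added component is an isolated gadget interior. This is more economical (linear blow-up versus a fifth power) and arguably cleaner, but it hinges on the small computation that the nine-vertex gadget interior is $P_3$-partitionable, which you carry out correctly; the paper's padding at the \XTC level sidesteps any need to inspect the gadget's internal structure.
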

\begin{proof}
Observe that for~$n \geq 10$ the maximum degree the graph~$\G^{P_3}_n$ is determined by the degree of the vertices~$U$ (which all have the same degree), since the private vertices of the gadgets have degree three. A vertex~$u \in U$ is adjacent in~$\G^{P_3}_n$ to one vertex of each gadget corresponding to a subset in~$\binom{U}{3}$ that contains~$u$. As there are~$\binom{|U| - 1}{2}$ such subsets, the degree of~$u \in U$ is~$\binom{n-1}{2}$. We can alter the NP-completeness transformation from \XTC to \PathPackingCanonical described in Lemma~\ref{lemma:canonicalpacking:npcomplete} as follows.

Given an input~$(U, T)$ of \XTC with~$|U| = n$, add~$3n^5$ new elements~$x_1, y_1, z_1, \ldots, x_{n^5}, y_{n^5}, z_{n^5}$ to the universe to obtain~$U'$ and add all size-three subsets of these new elements to~$T$ to obtain~$T'$. It is easy to see that instance~$(U,T)$ is equivalent to instance~$(U',T')$. Now reduce~$(U',T')$ to an instance~$(n + 3 n^5, S)$ of \PathPackingCanonical as described in Lemma~\ref{lemma:canonicalpacking:npcomplete}. The maximum degree of the canonical graph~$\G^{P_3}_{n + 3 n^5}$ is at most~$\binom{n + 3 n^5 - 1}{2}$. Recall that~$S$ contains the vertices of the universe of the exact cover instance together with the gadgets created for triples that are contained in~$T'$. As~$T'$ contains at least~$\binom{3n^5}{3}$ sets, the constructed set~$S$ contains the nine private vertices of at least~$\binom{3n^5}{3}$ gadgets. As~$n \geq 10$, a simple computation shows that~$\Delta(\G^{P_3}_{n + 3 n^5}) \leq \binom{n + 3 n^5 - 1}{2} < 9 \binom{3n^5}{3} - 1 \leq |S| - 1$. As we pad with polynomially many new triples, the running time remains polynomial. Hence the overall construction of padding and then performing the construction of Lemma~\ref{lemma:canonicalpacking:npcomplete} gives a polynomial-time transformation from \XTC to the restricted form of \PathPackingCanonical, which concludes the proof.
\end{proof}


\restatekarptwostarmanyps*

\begin{proof}
Using the same polynomial equivalence relation as in Theorem~\ref{theorem:karp:lowerbound:onesubstar:manytriangles} and by discarding the same trivial cases, it suffices to OR-cross-compose instances~$(n, S_1), \ldots, (n, S_r)$ of \PathPackingCanonical with~$n \geq 10$ whose sets~$S_i$ all have the same size~$m \geq 10$. By Lemma~\ref{lemma:canonicalpacking:larges:npcomplete}, we may assume that~$m = |S_i| > \Delta (\G^{P_3}_n) + 1$ for all~$i \in [r]$. Let~$\G := \G^{P_3}_n$ be the canonical graph for these instances and define~$n' := |V(\G)|$. The construction of~$G$ is a bit more involved because we have to give a construction that embeds two subdivided stars, rather than one. We proceed as follows. Label the vertices of~$\G$ as~$v_1, \ldots, v_{n'}$ in an arbitrary way.

\begin{itemize}
	\item Initialize~$G$ as a copy of the graph~$\G$.
	\item Add a matching~$M$ of size~$n'$ to the graph. For each index~$j \in [n']$ the $j$-th edge of~$M$ has the \emph{blocker vertex~$b_j$} as one endpoint and the \emph{propagator vertex~$p_j$} as the other endpoint. For each~$j \in [n']$, add the edge~$\{b_j, v_j\}$ to~$G$.
	\item Add another matching~$M'$ of size~$n'$ to the graph. For each index~$j \in [n']$ the $j$-th edge of~$M'$ has the \emph{dummy vertex~$d_j$} as one endpoint and the \emph{communicator vertex~$c_j$} as the other endpoint. For each~$j \in [n']$, add the edge~$\{c_j, p_j\}$ to~$G$.
	\item Add a special \emph{carving vertex}~$z^*$ and make it adjacent to the endpoints of matching~$M$.
	\item Add the \emph{carving matching}~$M''$ of size~$(n')^2$ to the graph and make one endpoint of each edge adjacent to~$z^*$. The carving matching ensures that there is a large subdivided star centered at the carving vertex, using~$M''$ and some other neighbors for $z^*$ for the legs of the star. This will force models of~$H$ to center one subdivided star at~$z^*$, as the constructed graph~$G$ will not contain other possibilities for the centers of such large subdivided stars.
	\item For each~$i \in [r]$, add an \emph{instance selector} vertex~$u_i$ to~$G$. For each vertex~$v_j \in S_i$ add the edge~$\{u_i, c_j\}$ to~$G$.
\end{itemize}

This concludes the description of~$G$. The graph~$H$ is defined as follows. It consists of~$m / 3$ copies of the graph~$P_3$, one subdivided star~$H_1$ with~$(2n' - m) + (n')^2$ leaves, and one subdivided star~$H_2$ with~$m$ leaves. Let~$y_1 \in V(H_1)$ be the center of~$H_1$ and let~$y_2 \in V(H_2)$ be the center of~$H_2$. From this choice it is clear that the parameter~$k$ of the constructed \kFSubgraphTest problem, which equals~$|V(H)|$, is polynomially bounded in the size of the largest input instance. It is easy to see that the construction can be performed in polynomial time. 

\begin{observation} \label{observation:twostars:lowerbound:firstcenter}
The carving vertex~$z^*$ is the unique vertex of~$G$ for which both the first neighborhood~$N_G(z^*)$ and the second neighborhood~$N_G(N_G(z^*))$ have size at least~$(2n' - m) + (n')^2$: the vertices of~$V(\G)$ have degree in~$G$ at most~$\Delta(\G) + 1 \leq n' + 1 < (2n' - m) + (n')^2$, the vertices in~$V(M'')$, and the blocker, dummy, and propagator vertices have degree at most four, the instance selector vertices have degree~$m \leq n'$, and a communicator vertex~$c_j$ has second neighborhood~$N_G(N_G(c_j)) \subseteq \{b_j, d_j, p_j, z^*\} \cup \{c_\ell \mid \ell \in [n']\}$ of size at most~$4 + n' < (2n' - m) + (n')^2$.
\end{observation}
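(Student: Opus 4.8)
The plan is to prove Observation~\ref{observation:twostars:lowerbound:firstcenter} by an exhaustive case analysis over the vertex types of~$G$, in each case bounding the first neighborhood (and, when necessary, the second neighborhood, computed with the paper's convention $N_G(X) = \bigcup_{v \in X} N_G(v) \setminus X$) and comparing it against the threshold~$\tau := (2n' - m) + (n')^2$. First I would record the elementary facts that $m \leq n'$ (since~$S_i$ is a set of vertices of~$\G$, so $m = |S_i| \leq |V(\G)| = n'$), hence $\tau \geq n' + (n')^2$, and that $n' = n + 9\binom{n}{3}$ is large because $n \geq 10$; in particular $n' + (n')^2$ is strictly larger than each of the quantities $4$, $n'+1$, $m$, and $4 + n'$ that will arise as upper bounds below. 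I would also note the crude bound $\Delta(\G) \leq n' - 1$ (any graph on~$n'$ vertices has maximum degree below~$n'$), which is all that is needed even though $\Delta(\G) = \binom{n-1}{2}$ could be computed exactly.

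Next I would partition $V(G)$ into: the vertices of~$V(\G)$; the endpoints of the carving matching~$M''$; the blocker, propagator, and dummy vertices; the communicator vertices; the instance selectors; and~$z^*$ itself. For all types except~$z^*$ and the communicators, the degree in~$G$ is already below~$\tau$, which suffices: a vertex~$v_j \in V(\G)$ gains over its~$\G$-degree only the single edge~$\{b_j, v_j\}$, so $\deg_G(v_j) \leq \Delta(\G) + 1 \leq n' + 1 < \tau$; an endpoint of~$M''$ is adjacent only to~$z^*$ and its matching partner, so has degree at most~$2 \leq 4 < \tau$; each blocker~$b_j$ (neighbors~$p_j, v_j, z^*$), propagator~$p_j$ (neighbors~$b_j, c_j, z^*$), and dummy~$d_j$ (neighbor~$c_j$) has degree at most~$3 \leq 4 < \tau$; and each instance selector~$u_i$ has $\deg_G(u_i) = |S_i| = m < \tau$.

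For a communicator~$c_j$ the first neighborhood $N_G(c_j) = \{p_j, d_j\} \cup \{u_i \mid v_j \in S_i\}$ may be large, so here I would bound the second neighborhood instead. Using the neighborhood computations above, $N_G(p_j) \subseteq \{b_j, c_j, z^*\}$, $N_G(d_j) = \{c_j\}$, and $N_G(u_i) \subseteq \{c_\ell \mid \ell \in [n']\}$, so $N_G(N_G(c_j)) \subseteq \{b_j, d_j, p_j, z^*\} \cup \{c_\ell \mid \ell \in [n']\}$, which has size at most~$4 + n' < \tau$. Finally, for~$z^*$ itself, $N_G(z^*)$ contains the~$2n'$ endpoints of~$M$ together with the~$(n')^2$ endpoints of~$M''$ that were made adjacent to~$z^*$, so $\deg_G(z^*) = 2n' + (n')^2 \geq \tau$; following edges out of these, $N_G(N_G(z^*))$ contains all~$n'$ vertices~$v_j$ (via~$b_j$), all~$n'$ communicators~$c_j$ (via~$p_j$), and all~$(n')^2$ matching partners of the~$M''$-endpoints, so $|N_G(N_G(z^*))| \geq 2n' + (n')^2 \geq \tau$ as well. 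Hence~$z^*$ is the unique vertex of~$G$ whose first and second neighborhoods both have size at least~$\tau$, as claimed.

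There is no deep obstacle here: the argument is purely bookkeeping against the construction of~$G$. The only points requiring care are applying the convention for iterated neighborhoods consistently (so that ``second neighborhood'' really means $N_G(N_G(\cdot))$) and ensuring the one bound that invokes~$\Delta(\G)$ — the case $v_j \in V(\G)$ — is phrased with enough slack to stay below~$\tau$; as noted, even the trivial estimate $\Delta(\G) \leq n'$ does the job, so no sharp degree computation is needed.
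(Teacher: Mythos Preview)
Your proof is correct and follows exactly the approach the paper takes: the observation in the paper is stated together with its own inline justification (the degree bounds for each vertex type and the second-neighborhood computation for communicator vertices), and you have simply expanded that one-sentence argument into a careful case-by-case verification. The only addition on your part is the explicit check that $z^*$ itself does meet the threshold for both neighborhoods, which the paper leaves implicit.
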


\begin{observation} \label{observation:twostars:lowerbound:secondcenter}
As~$m > \Delta(\G) + 1 \geq \deg_G(v_j)$ for every $v_j \in V(\G)$, the only vertices in~$G$ that have degree at least~$m$ are the communicator vertices, the instance selector vertices, and~$z^*$.
\end{observation}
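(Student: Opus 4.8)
The plan is to prove Observation~\ref{observation:twostars:lowerbound:secondcenter} by a direct degree count over the vertex classes that make up $V(G)$, showing that every vertex outside the three listed types has degree strictly less than $m$. This is exactly what is needed downstream, since the centre $y_2$ of the smaller subdivided star $H_2$ has degree exactly $m$ in $H$, so any $H$-subgraph model must send $y_2$ to a vertex of $G$-degree at least $m$; the observation will cut the list of candidate images down to the communicator vertices, the instance selectors, and $z^*$.

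First I would dispose of the vertices of $\G$. A vertex $v_j$ has at most $\Delta(\G)$ neighbours inside the embedded copy of $\G$ and exactly one further neighbour, the blocker $b_j$ (edge $\{b_j, v_j\}$), so $\deg_G(v_j) \le \Delta(\G)+1$. Combined with the assumption $m = |S_i| > \Delta(\G)+1$ that the proof inherits from Lemma~\ref{lemma:canonicalpacking:larges:npcomplete}, this gives $\deg_G(v_j) < m$. It is worth stating explicitly that this is the reason the padded variant of \PathPackingCanonical was proved: without the degree gap the vertices of $\G$ could themselves serve as centres of an $m$-leaf star, which would break the intended behaviour of the construction, and $m \ge 10$ alone would not suffice here.

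Next I would handle the ``wiring'' vertices, all of which have bounded degree. Each blocker $b_j$ is adjacent only to its $M$-partner $p_j$, to $v_j$, and to $z^*$; each propagator $p_j$ only to $b_j$, to its partner $c_j$ (edge $\{c_j,p_j\}$), and to $z^*$; each dummy $d_j$ only to its $M'$-partner $c_j$; and each endpoint of the carving matching $M''$ is adjacent only to its own matching partner and possibly to $z^*$. Hence all of these vertices have degree at most $3$, which is below $m$ because the trivial case $m < 10$ was discarded. Since $V(G)$ partitions into the vertices of $\G$, the endpoints of $M$ (blockers and propagators), the endpoints of $M'$ (dummies and communicators), the endpoints of $M''$, the carving vertex $z^*$, and the instance selectors $u_1,\dots,u_r$, the vertices just treated together with the vertices of $\G$ exhaust everything except the communicators, the instance selectors, and $z^*$; the observation follows.

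I do not expect a genuine obstacle here, as the argument is a routine case analysis; the only point requiring care is the one above — the bound for the $\G$-vertices uses the padded NP-hardness lemma rather than $m \ge 10$ — together with checking that the partition of $V(G)$ used in the case analysis matches the construction verbatim, so that no vertex class (in particular the two distinct endpoint types of $M''$) is overlooked.
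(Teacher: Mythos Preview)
Your proposal is correct and matches the paper's intent: the observation is stated without proof there, as it follows immediately from the construction, and your case-by-case degree count is exactly the computation the reader is expected to carry out. Your identification of why the padded variant (Lemma~\ref{lemma:canonicalpacking:larges:npcomplete}) is needed---to force $m > \Delta(\G)+1$ so that the $\G$-vertices cannot serve as centres---is also on point.
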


\begin{observation} \label{observation:twostars:wherearepaths}
The only~$P_3$ subgraphs in~$(G - \{z^*\}) - \{b_j, c_j \mid j \in [n']\}$ are contained in~$G[V(\G)]$, as removing these vertices turns~$M''$ into an isolated matching while the dummy vertices, instance selector vertices, and propagator vertices become isolated.
\end{observation}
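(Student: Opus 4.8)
The plan is to show that after the stated deletions the graph splits into one component that can host a $P_3$ and a collection of components that are too small to host one. Write $G' := (G - \{z^*\}) - \{b_j, c_j \mid j \in [n']\}$ for the graph under consideration. The first step is to go through the edge list of $G$ as given in the construction and record which edges survive in $G'$. All edges of the embedded copy $\G$ of the canonical graph survive, since $z^*$ together with all blocker and communicator vertices lie outside $V(\G)$; consequently $G'[V(\G)] = G[V(\G)]$. Every edge incident to a blocker vertex (namely $\{b_j,p_j\}$ from $M$, $\{b_j,v_j\}$, and $\{z^*,b_j\}$), every edge incident to a communicator vertex (namely $\{c_j,d_j\}$ from $M'$, $\{c_j,p_j\}$, and $\{u_i,c_j\}$), and every edge incident to $z^*$ (the edges to the $b_j$, the edges to the $p_j$, and the edges from $z^*$ into the carving matching $M''$) is destroyed. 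The only edges of $G$ that are neither of these types nor inside $\G$ are the $(n')^2$ edges of $M''$ itself, each of which has both endpoints in $V(M'')$.

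From this bookkeeping I would read off the component structure of $G'$: it is the disjoint union of $G[V(\G)]$, the carving matching $M''$ sitting as $(n')^2$ pairwise disjoint edges, and a set of isolated vertices, namely every propagator vertex $p_j$ (whose $G$-neighbourhood was exactly $\{b_j,c_j,z^*\}$), every dummy vertex $d_j$ (whose only $G$-neighbour was $c_j$), and every instance selector $u_i$ (all of whose $G$-neighbours are communicator vertices). In particular, no vertex of $G'$ outside $V(\G)$ has degree larger than one.

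The last step is the conclusion: a $P_3$ is connected and has three vertices, so any $P_3$ subgraph of $G'$ is contained in a single connected component of $G'$ of order at least three. The only such component is $G[V(\G)]$, since each edge of $M''$ forms a component of order two and the remaining vertices form components of order one. Hence every $P_3$ subgraph of $G'$ lies in $G[V(\G)]$, which is exactly the statement. I do not anticipate a real obstacle: the claim is a direct consequence of the adjacency relations fixed in the construction of $G$, and the only point deserving attention is to make the case analysis of edge types exhaustive, so that no edge incident to a surviving vertex is overlooked.
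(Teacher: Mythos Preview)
Your proposal is correct and follows exactly the approach the paper sketches in the observation itself: the paper simply asserts that after the deletions $M''$ becomes an isolated matching and the propagator, dummy, and instance-selector vertices become isolated, and you have carefully verified this by going through the edge list of the construction. There is no separate proof in the paper beyond that one-line justification, so your expanded bookkeeping is a faithful elaboration of the same argument.
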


To establish the correctness of the OR-cross-composition we have to prove that~$H \subseteq G$ if and only if there is a \yes-instance among the inputs. To illustrate how the construction is intended to work, we first prove the reverse direction.

\begin{claim}\label{claim:crosscomp:twostars:backwards}
If~$i \in [r]$ such that~$\G[S_i]$ can be partitioned into~$P_3$'s, then~$H$ is a subgraph of~$G$.
\end{claim}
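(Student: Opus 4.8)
The plan is to prove this reverse implication by an explicit construction. Fix an index $i \in [r]$ for which $\G[S_i]$ can be partitioned into $P_3$'s; in particular $\G[S_i]$ then contains $m/3$ pairwise vertex-disjoint $P_3$ subgraphs whose union is exactly $S_i$ (here $|S_i| = m$ is divisible by $3$). I would build a full $H$-subgraph model $\phi$ in $G$ by specifying, in turn, the images of the $m/3$ copies of $P_3$ in $H$, of the small subdivided star $H_2$ with center $y_2$, and of the large subdivided star $H_1$ with center $y_1$, and then checking that the three pieces occupy pairwise-disjoint vertex sets so that $\phi$ is well-defined and injective.

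For the $m/3$ copies of $P_3$ in $H$ I would use the $m/3$ disjoint $P_3$ subgraphs inside $\G[S_i]$; their union is exactly $S_i \subseteq V(\G)$. For $H_2$, which is a subdivided star with $m$ legs, I would set $\phi(y_2) := u_i$. Since $u_i$ is adjacent in $G$ precisely to the communicator vertices $\{c_j : v_j \in S_i\}$ and therefore has degree $m$, the $m$ subdivider vertices of $H_2$ map bijectively onto these $c_j$, and, using the matching edges $\{c_j, d_j\} \in M'$, the $m$ degree-one leaves of $H_2$ map onto the dummy vertices $\{d_j : v_j \in S_i\}$. This piece uses $u_i$ together with $\{c_j, d_j : v_j \in S_i\}$, which contains no vertex of $V(\G)$ and hence is disjoint from the $P_3$-images.

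The delicate step, and the heart of the argument, is the large subdivided star $H_1$, which has $L := (2n'-m) + (n')^2$ legs; I would set $\phi(y_1) := z^*$ and realize the legs as follows. Each of the $(n')^2$ edges of the carving matching $M''$ yields one leg (from $z^*$ to the endpoint adjacent to $z^*$ to the other endpoint). For each of the $n'-m$ indices $j$ with $v_j \notin S_i$, both $b_j, p_j$ are still free and so are $v_j, c_j$, which gives the two legs through $z^*, b_j, v_j$ and through $z^*, p_j, c_j$ (valid since $z^*b_j$, $b_jv_j \in E(G)$ and $z^*p_j$, $p_jc_j \in E(G)$). For each of the $m$ indices $j$ with $v_j \in S_i$, the vertices $v_j, c_j, d_j$ are already occupied by the earlier pieces, but $b_j$ and $p_j$ remain free, giving the single leg through $z^*, b_j, p_j$ (valid since $\{b_j,p_j\} \in M$). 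The main obstacle is purely the bookkeeping needed to make this exact: one verifies that these legs are pairwise vertex-disjoint and disjoint from the images of the $P_3$'s and of $H_2$ — using that $H_1$ touches a vertex $c_j$ only when $v_j \notin S_i$ (whereas $H_2$ touches it only when $v_j \in S_i$), touches no $d_j$ at all, touches $v_j$ only when $v_j \notin S_i$, and that $z^*$ is distinct from $u_i$ and from all $V(\G)$ vertices — and that the number of legs produced is exactly $(n')^2 + 2(n'-m) + m = (2n'-m) + (n')^2 = L$, with a matching count of vertices used, $1 + 2(n')^2 + 4(n'-m) + 2m = |V(H_1)|$, which guarantees injectivity. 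Gluing the three disjoint pieces yields a full $H$-subgraph model in $G$, establishing $H \subseteq G$.
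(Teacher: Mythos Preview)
Your proof is correct and follows essentially the same construction as the paper: the $P_3$'s are placed inside $\G[S_i]$, the small star $H_2$ is centered at $u_i$ using the communicator/dummy pairs $\{c_j,d_j : v_j \in S_i\}$, and the large star $H_1$ is centered at $z^*$ with the $(n')^2$ legs in $M''$, two legs $\{b_j,v_j\}$ and $\{p_j,c_j\}$ for each $j$ with $v_j \notin S_i$, and one leg $\{b_j,p_j\}$ for each $j$ with $v_j \in S_i$. Your explicit disjointness and cardinality bookkeeping is more detailed than the paper's, but the argument is the same.
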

\begin{claimproof}
Assume that~$\G[S_i]$ can be partitioned into~$P_3$'s. We construct a full $H$-subgraph model~$\phi$ in~$G$. As~$|S_i| = m$, a partition of~$\G[S_i]$ into~$P_3$'s is a packing of~$m/3$ vertex-disjoint~$P_3$'s. As~$\G$ is a subgraph of~$G$, there exists a packing of~$m/3$ vertex-disjoint~$P_3$'s in~$G[S_i]$. We show how to realize models of~$H_1$ and~$H_2$ that are disjoint from each other and from~$S_i$.

The model of~$H_2$ is centered at~$u_i$. Recall that~$N_G(u_i) = \{c_j \mid v_j \in S_i\}$. These~$|S_i| = m$ vertices model the subdivider vertices of the subdivided star~$H_2$. We use the dummy vertices~$\{d_j \mid v_j \in S_i\}$ as the degree-one endpoints of the star~$H_2$. We obtain a realization of~$H_2$ using~$\{u_i\} \cup \{d_j, c_j \mid v_j \in S_i\}$.

The model of~$H_1$ is centered at the carving vertex~$z^*$. Each of the~$(n')^2$ edges in the carving matching~$M''$ is used to realize one leg of the subdivided star. The remaining~$2n' - m$ legs are realized as follows. For each index~$j \in [n']$ such that~$v_j \in S_i$, we realize one leg of the subdivided star through the $j$-th edge~$\{p_j, b_j\}$ in the matching~$M$. For each~$j \in [n']$ such that~$v_j \not \in S_i$, we realize \emph{two} legs of the subdivided star: one leg~$\{p_j, c_j\}$ and one leg~$\{b_j, v_j\}$. From these definitions it is clear that the models of~$H_2$ and~$H_1$ are disjoint from each other and from~$S_i$.  As we realize one leg of the subdivided star~$H_2$ for each of the~$m$ indices~$j$ with~$v_j \in S_i$, while we realize two legs for the~$n' - m$ indices~$j$ with~$v_j \not \in S_i$, together these realize all~$(n')^2 + m + 2(n' - m) = (2n' - m) + (n')^2$ legs of the subdivided star~$H_2$. Hence~$G$ contains~$H$ as a subgraph.
\end{claimproof}

The following series of claims will establish the reverse of Claim~\ref{claim:crosscomp:twostars:backwards}: if~$H \subseteq G$ then the answer to some input instance is \yes.

\begin{claim} \label{claim:crosscomp:twostars:carvingcenter}
If~$\phi$ is a full subgraph model of~$H$ in~$G$, then~$\phi(y_1) = z^*$.
\end{claim}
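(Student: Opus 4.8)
The plan is to pin down the image $w := \phi(y_1)$ of the center of the large subdivided star $H_1$ and show it must equal the carving vertex $z^*$. Write $N := (2n' - m) + (n')^2$ for the number of leaves (hence the degree of $y_1$) in $H_1$, and recall $m \le n'$ since $S_i \subseteq V(\G)$. The first step is a pure degree argument: since $\phi$ is injective and edge-preserving, $\deg_G(w) \ge \deg_H(y_1) = N$. Using the per-vertex degree bounds recorded in Observation~\ref{observation:twostars:lowerbound:firstcenter} — vertices of $V(\G)$ have $G$-degree at most $\Delta(\G) + 1 \le n' + 1$; the vertices of $V(M'')$ and the blocker/dummy/propagator vertices have $G$-degree at most $4$; each instance selector has $G$-degree $m \le n'$ — together with the elementary inequality $N = (2n' - m) + (n')^2 \ge n' + (n')^2 > n' + 1$ (valid since $m \le n'$ and $n' \ge 2$), I can exclude every one of these vertex types as a candidate for $w$. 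This narrows the possibilities to $w = z^*$ or $w = c_j$ for some communicator index $j \in [n']$.

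The second step eliminates the communicators. Suppose $w = c_j$. Let $s_1, \dots, s_N$ be the $\phi$-images of the $N$ subdivider vertices of $H_1$ and let $t_1, \dots, t_N$ be the images of the $N$ corresponding leaves, so that $t_i$ is adjacent in $G$ to $s_i$ for each $i$ and the $2N+1$ vertices $w, s_1, \dots, s_N, t_1, \dots, t_N$ are pairwise distinct by injectivity. Each $s_i$ lies in $N_G(c_j) = \{d_j, p_j\} \cup \{u_i \mid v_j \in S_i\}$, so each $t_i$ lies in $\bigcup_{v \in N_G(c_j)} N_G(v)$. Reading off the construction of $G$ gives $N_G(d_j) = \{c_j\}$, $N_G(p_j) = \{b_j, c_j, z^*\}$, and $N_G(u_i) = \{c_\ell \mid v_\ell \in S_i\} \subseteq \{c_1, \dots, c_{n'}\}$; consequently every $t_i$ belongs to $\{b_j, z^*\} \cup \{c_1, \dots, c_{n'}\}$, and since $t_i \neq \phi(y_1) = c_j$ it in fact belongs to a set of size at most $n' + 1$. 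As the $t_i$ are $N > n' + 1$ distinct vertices, this is a contradiction, so $w = z^*$, which is the assertion of Claim~\ref{claim:crosscomp:twostars:carvingcenter}.

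The only mildly delicate point, and what I expect to need the most care in the write-up, is precisely why this hands-on argument is necessary rather than simply invoking the "both $N_G(z^*)$ and $N_G(N_G(z^*))$ are large" characterization of $z^*$ from Observation~\ref{observation:twostars:lowerbound:firstcenter} as a black box: a leaf of $H_1$ is adjacent in $H$ only to its subdivider, so $\phi$ does not forbid a leaf image from being adjacent to $w$, and hence one cannot directly conclude that the $N$ leaf images witness a second neighborhood of size $N$. Restricting attention to the very sparse neighborhood structure of $N_G(c_j)$ is what lets the contradiction go through; the remaining bookkeeping (verifying the stated neighborhoods $N_G(d_j)$, $N_G(p_j)$, $N_G(u_i)$ and the numeric inequalities) is routine.
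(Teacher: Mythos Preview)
Your proof is correct and follows essentially the same approach as the paper. The paper's proof simply notes that $\deg_G(\phi(y_1)) \geq (2n'-m)+(n')^2$ and then invokes Observation~\ref{observation:twostars:lowerbound:firstcenter} as a black box; you unpack that invocation explicitly, first excluding all vertex types with small degree and then eliminating the communicator vertices via the bound on where the leaf images can land. Your remark that the ``large second neighborhood'' part of the observation cannot be applied entirely mechanically (since leaf images might in principle be adjacent to the center) is a fair point, and your direct bound on $\bigcup_{v \in N_G(c_j)} N_G(v)$ is exactly the content of that part of the observation, made precise.
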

\begin{claimproof}
If~$\phi$ is a full subgraph model of~$H$ in~$G$, then the degree of~$\phi(y_1)$ in~$G$ has to be at least~$\deg_H(y_1) = (2n' - m) + (n')^2$. By Observation~\ref{observation:twostars:lowerbound:firstcenter}, vertex~$z^*$ is the only possible choice.
\end{claimproof}

\begin{claim}\label{claim:crosscomp:twostars:manycommunicators}
Let~$\phi$ be a full subgraph model of~$H$ in~$G$ and let~$C := \{c_j \mid j \in[n'] \wedge c_j \in \phi(V(H_1))\}$. The following hold.
\begin{enumerate}
	\item $|C| \geq n' - m$.
	\item If~$|C| = n' - m$, then (a)~$\phi(V(H_1))$ contains all vertices~$\{v_j \mid c_j \in C\}$ and (b)~$\phi(V(H_1))$ contains all blocker vertices.
\end{enumerate}
\end{claim}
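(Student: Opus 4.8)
The plan is to dissect the model $\phi$ of $H_1$ using Claim~\ref{claim:crosscomp:twostars:carvingcenter}, which pins its centre to the carving vertex $z^*$. Since $z^*$ has exactly $2n' + (n')^2$ neighbours in $G$ — the $2n'$ endpoints of the matching $M$ (i.e.\ the $b_j$ and $p_j$) together with the $(n')^2$ endpoints of the carving matching $M''$ adjacent to $z^*$ — while $H_1$ has $(2n'-m)+(n')^2$ legs, the model of $H_1$ must use all but exactly $m$ of these neighbours as images of subdivider vertices. First I would record the local structure of each index gadget $j\in[n']$: the only neighbours of $z^*$ in the gadget are $b_j,p_j$; moreover $N_G(b_j)=\{z^*,p_j,v_j\}$ and $N_G(p_j)=\{z^*,b_j,c_j\}$; and $c_j$ (resp.\ $v_j$) has $p_j$ (resp.\ $b_j$) as its unique neighbour that is itself adjacent to $z^*$. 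Hence $c_j$ can occur in $\phi(V(H_1))$ only as the leaf image of the leg whose subdivider image is $p_j$, and similarly $v_j$ only via $b_j$; also each $M''$-endpoint adjacent to $z^*$ has degree two and so forms at most one leg of the star.

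For each $j$ let $n_j$ be the number of legs of the $H_1$-model whose subdivider image lies in $\{b_j,p_j\}$, so $n_j\in\{0,1,2\}$. The first key step is the global bound $\sum_j n_j \ge 2n'-m$: every leg not counted by $\sum_j n_j$ has its subdivider image among the $(n')^2$ endpoints of $M''$, and there are at most $(n')^2$ such legs, so at least $(2n'-m)+(n')^2-(n')^2 = 2n'-m$ legs run through $\{b_j,p_j\}$. The second key step is a per-gadget case analysis: if $c_j\notin\phi(V(H_1))$ then $n_j\le 1$, since realizing two disjoint legs inside $\{b_j,p_j,v_j\}$ is impossible (distinctness of images forbids reusing $b_j$ or $p_j$, and $p_j$'s only other non-$z^*$ neighbour is $c_j$); and if $c_j\in\phi(V(H_1))$ then $n_j\le 2$ with $c_j$ the leaf of the $p_j$-leg. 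Writing $J_c=\{j: c_j\in\phi(V(H_1))\}$, so that $|C|=|J_c|$, these bounds give $\sum_j n_j \le 2|J_c| + (n'-|J_c|) = n' + |C|$; combined with the first step this yields $n'+|C|\ge 2n'-m$, i.e.\ $|C|\ge n'-m$, proving part~(1).

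For part~(2), I would assume $|C|=n'-m$, forcing both inequalities to be tight: $\sum_j n_j = 2n'-m$ and equality in $\sum_j n_j \le 2|J_c|+(n'-|J_c|)$. This makes $n_j=2$ for every $j\in J_c$ and $n_j=1$ for every $j\notin J_c$. When $n_j=2$, the two legs of gadget $j$ must be $z^*\!-\!b_j\!-\!v_j$ and $z^*\!-\!p_j\!-\!c_j$ (all other non-$z^*$ neighbours being already occupied), so $v_j,b_j\in\phi(V(H_1))$; since $\{v_j: c_j\in C\}=\{v_j: j\in J_c\}$, this establishes 2(a) and also shows $b_j\in\phi(V(H_1))$ for all $j\in J_c$. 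When $n_j=1$ (i.e.\ $j\notin J_c$), the unique leg is either subdivided by $b_j$, or subdivided by $p_j$ with leaf necessarily $b_j$ (since $c_j\notin\phi(V(H_1))$); in both cases $b_j\in\phi(V(H_1))$. Therefore every blocker vertex lies in $\phi(V(H_1))$, which is 2(b).

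The main obstacle I anticipate is the bookkeeping in the gadget case analysis — in particular using injectivity of $\phi$ correctly (e.g.\ ruling out $p_j$ as the leaf of the $b_j$-leg when $p_j$ is simultaneously a subdivider image) and verifying that once a leg enters a gadget it cannot leave the set $\{b_j,p_j,v_j,c_j\}$. Once the two per-gadget bounds ($n_j\le 2$ always, and $n_j\le 1$ when $c_j$ is unused) are secured, the global counting and the tightness argument are routine.
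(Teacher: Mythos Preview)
Your proposal is correct and follows essentially the same approach as the paper's proof: both anchor the centre at $z^*$, observe that at most $(n')^2$ legs can go through $M''$, and then do a per-edge analysis of $M$ showing that $\{b_j,p_j\}$ can host two subdivider vertices only if $c_j$ is used (since $N_G(p_j)\setminus\{z^*\}=\{b_j,c_j\}$), which forces the counting inequality and, under tightness, the conclusions of part~(2). Your explicit counter $n_j$ makes the double-counting slightly more transparent than the paper's phrasing, but the argument is the same.
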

\begin{claimproof}
Let~$\phi$ be a full subgraph model of~$H$ in~$G$, which implies by the previous claim that~$\phi(y_1) = z^*$. Matching~$M''$ can realize~$(n')^2$ of the legs of~$H_1$ and, as~$\phi(y_1) = z^*$, it is easy to verify they cannot be useful in any other role. Consider how the remaining~$2n' - m$ legs of~$H_1$ are realized and observe that~$N_G(z^*) \setminus V(M'') = V(M)$. A priori, each vertex of~$N_G(z^*) \setminus V(M'')$ can be used to model a subdivider vertex of a leg of the star. However, for each~$j \in [n']$, if~$c_j \not \in \phi(V(H_1))$, then vertices~$\{p_j, b_j\}$ cannot both model subdivider vertex of different legs of the star: as~$N_G(p_j) = \{z^*, c_j, b_j\}$, the only way to utilize~$p_j$ to realize a leg without~$c_j$ is by using~$\{p_j, b_j\}$ as one leg, but then only one of the vertices of~$\{p_j, b_j\}$ models a subdivider vertex. It follows that if~$|C| <  n' - m$, then there are more than~$n' - m$ edges~$\{p_j, b_j\}$ of~$M$ through which only one leg of the subdivided star is realized. Through the remaining edges of~$M$, at most two legs can be realized. Hence the subdivided star can realize less than~$(n')^2 + m + 2(n' - m)$, and consequently~$\phi$ does not model the subdivided star with~$(2n' - m) + (n')^2$ leaves. We conclude that~$|C| \geq n' - m$, proving the first part of the claim.

For the proof of the second part, assume that~$|C| = n' - m$. Besides the~$(n')^2$ legs of~$H_1$ realized in~$M''$, there are~$(2n' - m) - |M'| = n' - m$ edges of~$M$ for which both endpoints of the edge realize subdivider vertices of different legs of the star. By the argument above, the vertices of an edge~$\{p_j, b_j\}$ of~$M$ can only model subdivider vertices of two different legs of the star if these legs are realized as~$\{p_j, c_j\}$ and~$\{b_j, v_j\}$. As each of these contributes one vertex to~$C$, while~$|C| = n' - m$, it follows that whenever~$c_j \in C$ the model must realize two legs of the star through~$\{p_j, b_j\}$ and therefore use~$\{b_j, v_j\}$ to model one leg of the star, forcing~$v_j \in \phi(V(H_1))$. This proves (a) of the second part of the claim. For~(b), observe that if~$b_j \not \in \phi(V(H_1))$, then either~$c_j \in \phi(V(H_1))$ while only one leg of the star is realized for the $j$-th edge of~$M$, or~$c_j \not \in \phi(V(H_1))$ and no legs of the star are realized for the $j$-th edge of~$M$. It follows that not all legs can be realized, a contradiction.
\end{claimproof}

\begin{claim} \label{claim:crosscomp:twostars:selectsinstance}
If~$\phi$ is a full subgraph model of~$H$ in~$G$, then~$\phi(y_2) \in \{u_1, \ldots, u_r\}$.
\end{claim}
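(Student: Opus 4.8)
The plan is to combine the degree restrictions on $G$ recorded in Observation~\ref{observation:twostars:lowerbound:secondcenter} with the structural consequences of Claim~\ref{claim:crosscomp:twostars:carvingcenter} and the communicator-counting of Claim~\ref{claim:crosscomp:twostars:manycommunicators}. Since $y_2$ is the center of the subdivided star $H_2$ with $m$ leaves, its degree in $H$ is $m$, so $\phi(y_2)$ must be a vertex of $G$ of degree at least $m$. By Observation~\ref{observation:twostars:lowerbound:secondcenter} the only such vertices are the communicator vertices, the instance selectors, and $z^*$. Claim~\ref{claim:crosscomp:twostars:carvingcenter} gives $\phi(y_1)=z^*$, and since $\phi$ is injective while $y_1\neq y_2$, we get $\phi(y_2)\neq z^*$. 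Hence it only remains to rule out the possibility that $\phi(y_2)$ is some communicator vertex $c_j$.

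First I would analyze, under the assumption $\phi(y_2)=c_j$, where the $m$ subdivider vertices of $H_2$ and their attached leaves can land. The images of the subdividers lie in $N_G(c_j)=\{d_j,p_j\}\cup\{u_i\mid v_j\in S_i\}$; the dummy vertex $d_j$ has degree one in $G$ and so cannot realize a subdivider (which has degree two in $H_2$), so at least $m-1$ of the subdividers map to pairwise distinct instance selectors, and the degree-one leaf attached to each such subdivider must map to a communicator vertex (a neighbour of the relevant $u_i$) that is distinct from $c_j$ and from all the other such leaves. If, in addition, $p_j$ is used as a subdivider, then its leaf lies in $N_G(p_j)\setminus\{c_j\}=\{z^*,b_j\}$, and as $z^*=\phi(y_1)$ is already occupied, the leaf must be the blocker vertex $b_j$. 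Consequently $\phi(V(H_2))$ contains $c_j$ together with at least $m-1$ further communicator vertices, all disjoint from $\phi(V(H_1))$ because $H_1$ and $H_2$ are distinct components of $H$.

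The decisive step is then the counting argument. Claim~\ref{claim:crosscomp:twostars:manycommunicators} gives $|C|\geq n'-m$, where $C$ is the set of communicators used by the model of $H_1$. If all $m$ subdividers of $H_2$ are instance selectors, then $\phi(V(H_2))$ uses at least $m+1$ communicators, so $G$ has at least $(n'-m)+(m+1)>n'$ communicator vertices --- impossible, as there are exactly $n'$. Otherwise exactly one subdivider is $p_j$, so $\phi(V(H_2))$ uses exactly $m$ communicators; combined with $|C|\geq n'-m$ this forces $|C|=n'-m$, which triggers part~2(b) of Claim~\ref{claim:crosscomp:twostars:manycommunicators} and places \emph{all} blocker vertices in $\phi(V(H_1))$, contradicting $b_j\in\phi(V(H_2))$. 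In either case $\phi(y_2)=c_j$ is impossible, so $\phi(y_2)\in\{u_1,\dots,u_r\}$. I expect the only delicate point to be the precise bookkeeping of which vertices of $G$ can realize subdividers and leaves of $H_2$, and making sure the communicator count is tight enough to invoke the borderline case $|C|=n'-m$ of Claim~\ref{claim:crosscomp:twostars:manycommunicators}.
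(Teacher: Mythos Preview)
Your proposal is correct and follows essentially the same approach as the paper: use Observation~\ref{observation:twostars:lowerbound:secondcenter} and Claim~\ref{claim:crosscomp:twostars:carvingcenter} to reduce to the case $\phi(y_2)=c_j$, then split on whether $p_j$ is used as a subdivider and derive a contradiction from Claim~\ref{claim:crosscomp:twostars:manycommunicators} via a communicator count in the first case and via part~2(b) (all blockers lie in $\phi(V(H_1))$, yet $b_j\in\phi(V(H_2))$) in the second. Your bookkeeping is tight and matches the paper's argument.
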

\begin{claimproof}
Let~$\phi$ be a full subgraph model of~$H$ in~$G$. By Claim~\ref{claim:crosscomp:twostars:carvingcenter} we have~$\phi(y_1) = z^*$. By Observation~\ref{observation:twostars:lowerbound:secondcenter} and the fact that~$\deg_H(y_2) = m$, it follows that~$\phi(y_2)$ can only be an instance selector vertex or a communicator vertex. It remains to prove that the latter cannot happen.

Suppose that~$\phi(y_2) = c_j$ for some~$j \in [n']$. Consider how the legs of the star~$H_2$ are realized in~$H$. Observe that~$N_G(c_j) \subseteq \{d_j, p_j\} \cup \{u_1, \ldots, u_r\}$. As~$\deg_G(d_j) = 1$ vertex~$d_j$ cannot form the subdivider vertex of a leg of the star. Hence~$p_j$ can be one subdivider vertex of the star~$H_2$ and the remaining subdivider vertices of the star~$H_2$ must be instance selectors. For each instance selector~$u_j$ that models a subdivider vertex, a neighbor~$N_G(u_j) \subseteq \{c_\ell \mid \ell \in [n']\}$ is used to model the degree-one endpoint of the corresponding leg of the star. We distinguish two cases.

\begin{itemize}
	\item If~$p_j$ is not used to model a subdivider vertex of~$H_2$, then~$m$ instance selectors are used for this role, which leads to~$m$ communicator vertices other than~$\phi(y_2)$ being used in~$\phi(V(H_2))$. Including~$\phi(y_2)$ itself,~$\phi(V(H_2))$ contains~$m + 1$ communicator vertices. But then~$\phi(V(H_1))$, which is disjoint from~$\phi(V(H_2))$, realizes~$H_1$ using strictly less than~$n' - m - 1$ communicator vertices, a contradiction to Claim~\ref{claim:crosscomp:twostars:manycommunicators}.
	\item If~$p_j$ models a subdivider vertex of~$H_2$, then~$b_j$ is used for the degree-one endpoint of that leg of the star (as~$N_G(p_j) = \{z^*, c_j, b_j\}$ and~$z^* = \phi(y_1)$). Hence~$\phi(V(H_2))$ contains~$\{p_j, b_j\}$, the endpoints of the $j$-th edge of~$M$. The remaining~$m-1$ legs of the star are realized through instance selectors and their communicator neighbors, leading to an additional~$m-1$ communicator vertices being used in~$\phi(V(H_2))$. Hence~$\phi(V(H_2))$ contains~$m$ communicator vertices and the endpoints of one edge in~$M$. But then the model of~$H_1$, which is disjoint from~$\phi(V(H_2))$, contains at most~$n' - m$ communicator vertices and is disjoint from~$\{b_j, p_j\}$ for at least one~$j \in [n']$, contradicting Claim~\ref{claim:crosscomp:twostars:manycommunicators}.
\end{itemize}
This proves Claim~\ref{claim:crosscomp:twostars:selectsinstance}.
\end{claimproof}

\begin{claim}
If~$\phi$ is a full subgraph model of~$H$ in~$G$ with~$\phi(y_1) = z^*$ and~$\phi(y_2) = u_i$ for some~$i \in [r]$, then~$\G[S_i]$ can be partitioned into~$P_3$'s.
\end{claim}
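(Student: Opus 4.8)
The plan is to reconstruct enough of the model~$\phi$ from the two fixed images~$\phi(y_1) = z^*$ and~$\phi(y_2) = u_i$ to force the~$m/3$ copies of~$P_3$ in~$H$ to land inside~$\G[S_i]$, and then to finish by a counting argument.

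First I would analyse the model of the star~$H_2$. Since~$\deg_G(u_i) = m = \deg_H(y_2)$ and~$\phi$ is injective, $\phi$ maps the~$m$ subdivider vertices of~$H_2$ bijectively onto~$N_G(u_i) = \{c_j \mid v_j \in S_i\}$; in particular all these communicator vertices lie in~$\phi(V(H_2))$. Next I would combine this with Claim~\ref{claim:crosscomp:twostars:manycommunicators}: setting~$C := \{c_j \mid c_j \in \phi(V(H_1))\}$, the disjointness of~$\phi(V(H_1))$ and~$\phi(V(H_2))$ shows~$C$ is disjoint from the~$m$ communicators just placed in~$\phi(V(H_2))$, so~$|C| \leq n' - m$; since Claim~\ref{claim:crosscomp:twostars:manycommunicators}(1) gives~$|C| \geq n' - m$, we get~$|C| = n' - m$ and~$C = \{c_j \mid v_j \notin S_i\}$. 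Then Claim~\ref{claim:crosscomp:twostars:manycommunicators}(2) applies and yields that~$\phi(V(H_1))$ contains every vertex~$v_j$ with~$v_j \notin S_i$ and every blocker vertex~$b_j$. Consequently every communicator~$c_j$ (for~$j \in [n']$) lies in~$\phi(V(H_1)) \cup \phi(V(H_2))$: those with~$v_j \in S_i$ in the latter, all others in the former.

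Finally I would locate the~$m/3$ copies of~$P_3$. Their~$\phi$-images are pairwise disjoint and disjoint from~$\phi(V(H_1)) \cup \phi(V(H_2))$, which by the previous step contains~$z^* = \phi(y_1)$, every blocker~$b_j$, and every communicator~$c_j$. Hence these~$P_3$-models live in~$(G - \{z^*\}) - \{b_j, c_j \mid j \in [n']\}$, so by Observation~\ref{observation:twostars:wherearepaths} they are contained in~$G[V(\G)] = \G$. As their images also avoid the~$\G$-vertices outside~$S_i$, all~$m/3$ of these vertex-disjoint~$P_3$-subgraphs lie inside~$\G[S_i]$; since~$|S_i| = m = 3 \cdot (m/3)$, they use every vertex of~$\G[S_i]$ and therefore partition~$\G[S_i] = \G^{P_3}_n[S_i]$ into~$P_3$'s, as required.

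The step I expect to be the crux is establishing that~$C = \{c_j \mid v_j \notin S_i\}$ \emph{exactly} (not merely~$|C| = n' - m$): this is what lets Claim~\ref{claim:crosscomp:twostars:manycommunicators}(2) force precisely the vertices~$V(\G) \setminus S_i$ into~$\phi(V(H_1))$, and it relies both on the bijectivity coming from the degree equality~$\deg_G(u_i) = \deg_H(y_2)$ and on the disjointness of the two star models; additionally one must check that \emph{all}~$n'$ communicators are absorbed by~$\phi(V(H_1)) \cup \phi(V(H_2))$ so that Observation~\ref{observation:twostars:wherearepaths} can be invoked. The rest is routine bookkeeping.
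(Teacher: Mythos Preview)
Your proposal is correct and follows essentially the same argument as the paper: use the degree equality $\deg_G(u_i)=\deg_H(y_2)$ to place all communicators $\{c_j\mid v_j\in S_i\}$ in $\phi(V(H_2))$, then invoke both parts of Claim~\ref{claim:crosscomp:twostars:manycommunicators} to force $C=\{c_j\mid v_j\notin S_i\}$ and hence $\{v_j\mid v_j\notin S_i\}$ and all blockers into $\phi(V(H_1))$, and finally apply Observation~\ref{observation:twostars:wherearepaths} plus the count $|S_i|=m$ to confine the $m/3$ disjoint $P_3$'s to a partition of $\G[S_i]$. Your identification of the crux (pinning down $C$ exactly so that part~(2) of the claim applies to the right vertex set) matches the pivotal step in the paper's proof.
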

\begin{claimproof}
Assume that the stated conditions hold. As~$N_G(u_i) = C_j := \{c_j \mid v_j \in S_i\}$ and the degree of~$u_i$ in~$G$ matches the degree of~$y_2$ in~$H_2$, it follows that all~$m$ vertices of~$C_j$ are used in~$\phi(V(H_2))$. Consider~$\overline{C_j} := \{c_j \mid j \in [n']\} \setminus C_j$. Since no vertex of~$C_j$ can be used in~$\phi(V(H_1))$, the model of~$H_1$ uses at most~$n' - m$ communicator vertices. By Claim~\ref{claim:crosscomp:twostars:manycommunicators} it follows that~$\phi(V(H_1))$ contains exactly~$n' - m$ communicator vertices, which must be those in~$\overline{C_j}$ since they are the only ones left. By the second part of Claim~\ref{claim:crosscomp:twostars:manycommunicators}, we find that~$\phi(V(H_1))$ contains~$\{v_j \mid c_j \in \overline{C_j}\}$ and all blocker vertices. Hence none of these vertices can be used to model~$P_3$'s. Similarly, no vertices of~$C_j \cup \overline{C_j}$ can be used to model a~$P_3$, nor can~$z^*$ be used since it is used as the center of~$H_1$. It follows that no vertices of~$\{z^*\} \cup \{c_j, b_j \mid j \in [n']\}$ are used to model~$P_3$'s, which implies by Observation~\ref{observation:twostars:wherearepaths} that all~$P_3$ models are contained in~$G[V(\G)]$. As they are disjoint from~$\phi(V(H_1))$, it follows that all~$P_3$ models are contained in~$G[V(\G)] - \{v_j \mid c_j \in \overline{C_j}\} = G[V(\G)] - \{v_j \mid v_j \not \in N_G(u_i)\} = G[V(\G)] - \{v_j \mid v_j \not \in S_i\} = G[S_i] = \G[S_i]$. Hence the~$m/3$ vertex-disjoint~$P_3$ models are contained in~$\G[S_i]$. As~$|S_i| = m$ this implies that~$\G[S_i]$ can be partitioned into~$P_3$'s.
\end{claimproof}

As the claims establish that instance~$(G,H)$ acts as the logical OR of the input instances, this concludes the proof of Theorem~\ref{theorem:karp:lowerbound:twosubstar:manyps}.
\end{proof}

\subsection{Upper bounds} \label{subsection:subgraph:karp:upperbounds}

We now show that the kernelization lower bounds of Section~\ref{subsection:subgraph:karp:lowerbounds} are fragile in the sense that, if we slightly change the considered graph classes~$\F$, then the resulting \kFSubgraphTest problem admits polynomial (many-one) kernels. Both results we present in this section rely on the same underlying idea. The graph classes for which we obtained kernel lower bounds in Section~\ref{subsection:subgraph:karp:lowerbounds} are $(3,0,2,2)$-splittable. Theorem~\ref{theorem:kernel:subgraph} therefore gives Turing kernels for the \kFSubgraphTest problem on these graph families, which are based on guessing the model of a vertex set that realizes the split. In the many-one kernelization setting, this strategy fails since we have to produce a single, small output instance and therefore cannot cover the~$|V(G)|^c$ different options for the model of a vertex set that realizes the split. In the two cases highlighted below, the split is realized by just a single vertex and the mentioned problem can be circumvented: by ad-hoc arguments we can compute a representative vertex set~$Y$ of size~$k^{\Oh(1)}$ such that, if~$G$ contains~$H$ as a subgraph, then~$G$ has an $H$-subgraph model where the split vertex is realized by a member of~$Y$. Since the size of~$Y$ is polynomially bounded in the parameter, this allows us to compute representative sets for all relevant models of the split vertex. For each choice, Lemma~\ref{lemma:kernel:generic} gives a representative set of size polynomial in~$k$. The union of the representative sets over all~$y \in Y$ then gives us a kernel.

\subsubsection{One subdivided star and many \texorpdfstring{$P_3$}{P3}'s} \label{section:subdividedstar:manypthrees}
Throughout Section~\ref{section:subdividedstar:manypthrees}, we will use the term subdivided star for any graph that can be obtained from a star by subdividing each edge \emph{at most} once. Observe that under this definition, any path on at most three vertices is a subdivided star. If~$H$ is a subdivided star and~$v \in V(H)$ is the unique maximum-degree vertex in~$H$, then we will call~$v$ the \emph{center} of the subdivided star and we say that~$H$ has a center. A subgraph model of a subdivided star~$H$ in~$G$ is \emph{centered at~$v \in V(G)$} if~$v$ is the image of the center. In the degenerate case that~$H$ is a path on at most five vertices (which can be obtained by subdividing all edges of~$K_{1,2}$), all connected components of the pattern graph have constant size which allows us to obtain a polynomial kernel through Lemma~\ref{lemma:kernel:generic}. In the intermediate lemmas leading up to the kernelization, we therefore restrict ourselves to the cases that there is a unique center.

We will need the following proposition for solving constrained weighted matching in polynomial time, which is possible by reducing it to an unconstrained minimum weight perfect matching computation.

\begin{proposition}[{\cite{Plesnik99}}] \label{proposition:constrained:weighted:matching}
There is a polynomial-time algorithm that, given an integer~$k$ and a graph~$G$ with nonnegative integer edge weights at most~$w_0$, computes in time polynomial in~$|V(G)| + |E(G)| + w_0$ a minimum-weight matching of cardinality~$k$, or determines that no matching of cardinality~$k$ exists.
\end{proposition}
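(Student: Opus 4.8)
The plan is to reduce the problem of computing a minimum-weight matching of cardinality exactly~$k$ to a single call of a polynomial-time minimum-weight perfect matching algorithm (for which strongly polynomial algorithms such as Edmonds' weighted blossom algorithm are available). First I would dispose of the degenerate cases: if~$2k > |V(G)|$ then no matching of cardinality~$k$ exists and we report this immediately, and if~$k = 0$ the empty matching is the unique answer. So in the remainder assume~$0 < k$ and set~$n := |V(G)| \geq 2k$.

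The core construction is an auxiliary graph~$\hat{G}$. We take~$G$ with its original edge weights and add a set~$D$ of exactly~$n - 2k$ fresh \emph{dummy} vertices. For every~$d \in D$ and every~$v \in V(G)$ we add the edge~$\{d, v\}$ with weight~$0$; crucially, we add \emph{no} edges between two dummy vertices. Then~$\hat{G}$ has~$2n - 2k$ vertices, which is even, and it can be built in polynomial time with at most~$|E(G)| + O(n^2)$ edges and weights bounded by~$w_0$.

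Next I would establish a weight-preserving correspondence. Since~$\hat{G}$ has no dummy--dummy edge, in any perfect matching~$\hat{M}$ of~$\hat{G}$ each of the~$n - 2k$ dummies is matched to a \emph{distinct} original vertex; hence exactly~$2k$ original vertices are matched to one another, via~$k$ edges of~$G$. Deleting the dummy edges from~$\hat{M}$ therefore yields a matching~$M \subseteq E(G)$ of cardinality exactly~$k$, and as the dummy edges have weight~$0$ we get~$w(M) = w(\hat{M})$. Conversely, a cardinality-$k$ matching~$M$ in~$G$ covers~$2k$ vertices; matching the remaining~$n - 2k$ uncovered vertices bijectively to the~$n - 2k$ dummies extends~$M$ to a perfect matching of~$\hat{G}$ of the same weight. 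Consequently~$\hat{G}$ has a perfect matching if and only if~$G$ has a matching of cardinality~$k$, the minimum weight of a perfect matching of~$\hat{G}$ equals the minimum weight of a cardinality-$k$ matching of~$G$, and an optimal~$\hat{M}$ restricts to an optimal~$M$.

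Finally I would run a polynomial-time minimum-weight perfect matching algorithm on~$\hat{G}$; if it reports that no perfect matching exists we output that no cardinality-$k$ matching exists, and otherwise we strip the dummy edges from the returned matching and output the result. The total running time is polynomial in~$|V(G)| + |E(G)| + w_0$, as required. I do not expect a genuine obstacle here, since the hard work is delegated to the known weighted matching algorithm; the one point that needs care is the construction of~$\hat{G}$, specifically the omission of dummy--dummy edges. If those edges were present, a perfect matching could pair some dummies with each other and thereby force a matching of cardinality \emph{larger} than~$k$ inside~$G$, breaking the correspondence; forbidding them is exactly what pins the induced matching down to cardinality precisely~$k$.
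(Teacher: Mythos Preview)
Your proposal is correct and matches the approach the paper indicates: the paper does not prove this proposition but cites it from \cite{Plesnik99}, noting only that it ``is possible by reducing it to an unconstrained minimum weight perfect matching computation,'' which is precisely the dummy-vertex reduction you carry out.
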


Proposition~\ref{proposition:constrained:weighted:matching} allows us to find a subgraph model of a subdivided star in polynomial time, if one exists. While we could also perform this task using the randomized algorithm of Theorem~\ref{theorem:fsubgraphtest:alg}, the algorithm we present next has the advantage of being deterministic.

\begin{lemma} \label{lemma:subdividedstartest:alg}
There is a polynomial-time algorithm that, given a graph~$G$ and a subdivided star~$H$ with center~$c$ and a vertex~$v \in V(G)$, outputs a full $H$-subgraph model centered at~$v$ in~$G$ if one exists.
\end{lemma}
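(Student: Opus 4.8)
The plan is to reduce the problem to a single computation of a minimum-weight matching of prescribed cardinality, using Proposition~\ref{proposition:constrained:weighted:matching}. Since~$H$ is a subdivided star with center~$c$, every leg of the star is either a single edge (an unsubdivided edge from~$c$ to a leaf) or a path of length two (a subdivider vertex followed by a leaf). Let~$a$ be the number of length-one legs and~$b$ be the number of length-two legs, so~$\deg_H(c) = a + b$. A full $H$-subgraph model centered at~$v$ must send~$c$ to~$v$ and then map the legs to paths of the appropriate length that start at~$v$ and are pairwise vertex-disjoint apart from sharing only~$v$. Thus the task is exactly to find~$a + b$ internally-disjoint paths leaving~$v$, of which~$a$ have length one and~$b$ have length two, all avoiding each other.

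First I would set up an auxiliary bipartite-style graph on which a matching encodes such a system of paths. Root everything at~$v$: the length-one legs must occupy~$a$ distinct neighbors of~$v$; the length-two legs must occupy~$b$ distinct neighbors~$w$ of~$v$ together with, for each such~$w$, a further distinct vertex in~$N_G(w) \setminus \{v\}$ not used elsewhere. The natural encoding is as a matching problem: introduce for each neighbor~$w \in N_G(v)$ a decision of whether~$w$ is the start of a length-two leg and, if so, which vertex of~$N_G(w) \setminus \{v\}$ serves as the corresponding leaf. Concretely, build a graph~$\hat G$ whose vertices are~$V(G) \setminus \{v\}$ together with~$a$ fresh ``stub'' vertices~$s_1, \ldots, s_a$; put an edge of weight~$1$ between each~$s_i$ and each~$w \in N_G(v)$ (these will select the length-one legs), and an edge of weight~$0$ between~$w$ and~$w'$ whenever~$w \in N_G(v)$ and~$w' \in N_G(w) \setminus \{v\}$ (these select a length-two leg through~$w$). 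A matching in~$\hat G$ of cardinality exactly~$a + b$ that saturates all~$a$ stubs then corresponds to choosing~$a$ neighbors of~$v$ as length-one legs and~$b$ pairs~$(w, w')$ as length-two legs, with all chosen vertices pairwise distinct because the matching is a set of disjoint edges. Forcing the stubs to be saturated is exactly what the weight-$1$ edges accomplish: a minimum-weight matching of cardinality~$a + b$ has weight~$a$ if and only if all~$a$ weight-one edges are included, equivalently all stubs are matched, and otherwise the minimum weight is strictly smaller; so we invoke Proposition~\ref{proposition:constrained:weighted:matching} with~$k := a+b$ and~$w_0 := 1$, and the model exists iff the returned matching has weight exactly~$a$.

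Given such a matching, I would read off the model explicitly: set~$\phi(c) := v$; for each stub edge~$\{s_i, w\}$, assign~$w$ as the leaf of the $i$-th length-one leg; for each weight-$0$ edge~$\{w, w'\}$ in the matching, assign~$w$ as the subdivider vertex and~$w'$ as the leaf of a length-two leg. Conversely, any full $H$-subgraph model centered at~$v$ yields such a matching by the reverse reading, so the algorithm is correct. The only mild subtlety is bookkeeping when~$a + b$ exceeds~$\deg_G(v)$ or when~$|V(G)|$ is too small to contain~$|V(H)|$ vertices, in which cases we immediately output that no model exists; and one should double-check that a weight-$0$ edge~$\{w,w'\}$ with both~$w, w' \in N_G(v)$ cannot be mistakenly reused as a length-one leg selector, which is handled because stubs only connect to~$N_G(v)$ and each vertex is in at most one matching edge. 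The main obstacle is getting this encoding exactly right so that ``matching of cardinality~$a+b$ of minimum weight'' faithfully captures ``$a$ disjoint singleton legs plus $b$ disjoint length-two legs, all mutually disjoint''; once the gadget is correct, the running time is polynomial by Proposition~\ref{proposition:constrained:weighted:matching} since~$w_0 = 1$ and~$\hat G$ has size polynomial in~$|V(G)|$. (When~$H$ has no unique center, i.e. it is a path on at most five vertices, one handles it directly by brute force or by Lemma~\ref{lemma:kernel:generic}, as noted above.)
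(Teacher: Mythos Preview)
Your reduction to a constrained minimum-weight matching is the same idea the paper uses, but your weight assignment is the wrong way round and this breaks correctness. With stub edges of weight~$1$ and leg edges of weight~$0$, a minimum-weight matching of cardinality~$a+b$ will \emph{avoid} the stubs whenever possible, not saturate them. Concretely, take $a=b=1$, let $v$ have neighbors $w_1,w_2$, and let each $w_i$ have a private neighbor $x_i$. A valid model centered at $v$ exists (use $w_1$ for the short leg and $w_2,x_2$ for the long leg), yet the matching $\{w_1x_1,\,w_2x_2\}$ has cardinality $a+b=2$ and weight $0<1=a$, so your test ``minimum weight equals $a$'' reports failure. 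Your own sentence ``otherwise the minimum weight is strictly smaller'' already shows the direction is inverted: you want a situation where failing to saturate the stubs makes the weight \emph{larger}, not smaller. Swapping the two weights (stub edges weight~$0$, leg edges weight~$1$) fixes this: any cardinality-$(a+b)$ matching then has weight at least $b$, with equality precisely when all $a$ stubs are saturated, and the returned minimum-weight matching can be decoded into a model exactly as you describe.

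For comparison, the paper avoids stub vertices altogether. It looks only at the $b$ long legs: it takes the subgraph $G_c$ of all edges with at least one endpoint in $N_G(v)$, weights each such edge by the number of its endpoints lying in $N_G(v)$, and computes a minimum-weight matching of cardinality exactly~$b$. The weight $w_c$ of that matching is then exactly the number of $v$-neighbors consumed by an optimal placement of the long legs, so a model exists iff $\deg_G(v)-w_c\ge a$. This is a little slicker because it needs no auxiliary vertices and the feasibility check is a single inequality, but once your weights are flipped the two encodings are equivalent.
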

\begin{proof}
Let~$k_1$ be the number of one-vertex components in~$H - \{c\}$ and let~$k_2$ be the number of two-vertex components in~$H - \{c\}$. Let~$G_c$ be the subgraph of~$G$ containing only the edges that have at least one endpoint in~$N_G(v)$. Define the weight of an edge in~$G_c$ as the number of endpoints the edge has in~$N_G(v)$; then every weight will be one or two. We invoke Proposition~\ref{proposition:constrained:weighted:matching} to compute a minimum-weight matching~$M_c$ in~$G_c$ of cardinality exactly~$k_2$. Let~$w_c$ be the weight of~$M_c$.

\begin{claim}
There is a full $H$-subgraph model~$\phi$ with~$\phi(c) = v$ if and only if the matching~$M_c$ exists and~$\deg_G(v) - w_c \geq k_1$.
\end{claim}
\begin{claimproof}
For the forward direction, suppose that a matching of cardinality~$k_2$ exists in~$G_c$ and that~$M_c$ is a minimum-weight matching of this size of weight~$w_c$ satisfying~$\deg_G(v) - w_c \geq k_1$. We can realize a model of a subdivided star centered at~$v$ as follows: we use the edges in~$M_c$ for the two-vertex components of~$H - \{c\}$ (the legs of the star), which is possible since each edge has at least one endpoint that is adjacent to~$v$. By our choice of weight function we know that~$|N_G(v) \cap V(M_c)| = w_c$. By the degree requirement, we therefore find that~$N_G(v) \setminus V(M_c)$ consists of at least~$k_1$ vertices, which we can use to realize the size-one components of~$H - \{c\}$. Hence we obtain a full model of~$H$ in~$G$.

For the reverse direction, suppose that~$\phi$ is a full $H$-subgraph model in~$G$ centered at~$v$. Since all the two-vertex components of~$H - \{v\}$ have at least one endpoint adjacent to~$v$, as this edge of~$H$ has to be realized, the model of the two-vertex components is a matching~$M$ in~$G_c$ of cardinality~$k_2$. The models of the one-vertex components consist of vertices of~$N_G(v) \setminus V(M)$, hence~$|N_G(v) \setminus V(M)| \geq k_1$. As~$M$ is a matching of cardinality~$k$ in~$G_c$, while~$M_c$ is a matching of cardinality~$k$ in~$G_c$ that minimizes the number of vertices of~$N_G(v)$ it uses (by our choice of weight function), it follows that~$|N_G(v) \setminus V(M)| \geq \deg_G(v) - w_c \geq k_1$, which concludes the reverse direction of the proof.
\end{claimproof}

The claim shows how to extract a model centered at~$v$ from the matching~$M_c$, if it exists. As~$M_c$ can be computed in polynomial time, the claim follows.
\end{proof}

\begin{lemma} \label{lemma:starsandpaths:findrelevantcenters}
There is a polynomial-time algorithm with the following specifications. The input consists of a graph~$G$, a subdivided star~$H'$ with center~$c$, and a graph~$H''$ where each connected component is a path on at most three vertices. Let~$H := H' + H''$ and~$k := |V(H)|$. The output is a set~$Y \subseteq V(G)$ of size~$\Oh(k^2)$ such that if~$H \subseteq G$, then there is a full $H$-subgraph model in~$G$ in which the model of~$H'$ is centered at a vertex in~$Y$.
\end{lemma}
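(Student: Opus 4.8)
The plan is to identify a polynomially bounded set $Y \subseteq V(G)$ of candidate vertices that is guaranteed to contain the image of the center $c$ of $H'$ in \emph{some} subgraph model of $H$, if any exists at all. Let $H''$ have $q$ connected components, each a path on at most three vertices; note $q \leq k/1 = k$ since each component uses at least one vertex, and in fact $q \le k/2$ for the nontrivial components. Since $H \subseteq G$ means we must simultaneously realize the subdivided star $H'$ \emph{and} $q$ vertex-disjoint paths from $H''$, any valid model uses at most $k$ vertices in total. The first step is to observe that the center $c$ of $H'$ has degree at least $\deg_{H'}(c) = \Delta(H')$, so its image $\phi(c)$ must have $G$-degree at least $\Delta(H')$; call a vertex of $G$ \emph{eligible} if it has degree at least $\Delta(H')$ and $G$ admits a full $H'$-subgraph model centered at it (testable in polynomial time per vertex by Lemma~\ref{lemma:subdividedstartest:alg}). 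If there are at most, say, $k$ eligible vertices, we simply take $Y$ to be all of them and are done since $|Y| \le k = O(k^2)$. The interesting case is when there are \emph{many} eligible vertices.

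The key idea for the many-eligible case is a pigeonhole/swapping argument: if there are more than $k$ eligible vertices, then given any full $H$-subgraph model $\phi$, the image $\phi(c)$ can be replaced. Concretely, fix a model $\phi$ and let $R := \phi(V(H)) \setminus \{\phi(c)\}$ be the at most $k-1$ other image vertices. If $\phi(c)$ already lies in some small canonical set $Y$ we build, we are done; otherwise we want to re-route. The cleanest way: among the eligible vertices, greedily select a maximal collection $v_1, v_2, \dots$ together with $H'$-subgraph models $\psi_1, \psi_2, \dots$ centered at $v_1, v_2, \dots$ respectively, that are pairwise vertex-disjoint. This can be computed in polynomial time because testing for an $H'$-model centered at a given vertex and avoiding a forbidden set is still a matching computation (adapt Lemma~\ref{lemma:subdividedstartest:alg} by deleting forbidden vertices first). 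If this disjoint packing has size at least $k$, then for any full $H$-model $\phi$, the model of $H''$ occupies at most $k$ vertices, so it intersects at most $k-1$ of the $k$ packed star-models (since they are disjoint and each has $\ge 1$ vertex), hence some $\psi_j$ is disjoint from $\phi|_{V(H'')}$; gluing $\psi_j$ with $\phi|_{V(H'')}$ (using Observation~\ref{observation:merge:models} with the trivial empty separator, since $H = H' + H''$ is a disjoint union) yields a full $H$-model centered at $v_j \in Y$. Thus we set $Y := \{v_1, \dots, v_k\}$ in this subcase, and $|Y| = k = O(k^2)$. If instead the maximal disjoint packing has size $t < k$, then by maximality \emph{every} eligible vertex is contained in $\bigcup_{j \le t} V(\psi_j)$, a set of at most $t \cdot |V(H')| \le k \cdot k = O(k^2)$ vertices; we take $Y$ to be this set of $O(k^2)$ vertices (intersected with the eligible vertices), and since $\phi(c)$ is eligible it lies in $Y$.

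The main obstacle I anticipate is making the maximality argument fully rigorous: I need the disjoint star-packing to be maximal in the precise sense that \emph{no} additional $H'$-model centered at a not-yet-covered eligible vertex can be added disjointly. This requires that the greedy step, at each stage, correctly tests whether some eligible vertex outside the current union admits an $H'$-model avoiding the current union — which again reduces to the constrained matching computation of Proposition~\ref{proposition:constrained:weighted:matching} after deleting the already-used vertices, and so runs in polynomial time. A secondary subtlety is the degenerate case where $H'$ is a path on at most five vertices and has no unique center; but as remarked in the text preceding Lemma~\ref{lemma:starsandpaths:findrelevantcenters}, in that case all components of $H$ have constant size and we may just return $Y := V(G)$ truncated appropriately — actually we should just return a set $Y$ obtained by the size-$O(k^2)$ bound via Lemma~\ref{lemma:kernel:generic} applied directly, or simply note that the lemma is invoked elsewhere only when $H'$ has a genuine center. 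Modulo these bookkeeping points, the pigeonhole swap plus maximal-packing dichotomy gives $|Y| = O(k^2)$ and the required guarantee, all in polynomial time.
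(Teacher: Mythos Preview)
Your large-packing case (step 4) is fine: if the greedy packing yields $k$ pairwise-disjoint $H'$-models, then any $H$-model's $H''$-part, having at most $k-|V(H')|$ vertices, misses at least one of them, and you can splice. The gap is in the small-packing case (step 5). You write that if the maximal disjoint packing has size $t<k$ with union $U$, then ``by maximality every eligible vertex is contained in $U$.'' This does not follow. Maximality tells you only that $G-U$ contains no $H'$-subgraph, i.e.\ every $H'$-model in $G$ meets $U$. An eligible vertex $v$ outside $U$ can still exist: all $H'$-models centered at $v$ may be forced through $U$ (say, because $N_G(v)\subseteq U$), yet $v$ itself is uncovered. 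So $Y=U$ need not contain $\phi(c)$, and you have given no argument that any vertex of $U$ can serve as an alternative center of a full $H$-model. You flag exactly this point as ``the main obstacle'' but then proceed as if the inference held; it does not, and the argument stops here.

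The paper's proof avoids packing $H'$-models and instead analyzes $G$. Let $S$ be the vertices of $G$ of degree larger than $3k$. If $|S|\ge k$, any single $H'$-model extends to an $H$-model because the high-degree vertices manufacture all needed $P_3$'s, so $|Y|=1$. Otherwise $|S|<k$; take a maximal $P_3$-packing $\mathcal P$ in $G-S$. If $|\mathcal P|\ge k$, again $|Y|=1$. If $|\mathcal P|<k$, then after removing $S$, the $O(k)$ vertices $T$ of $\mathcal P$, and the $O(k^2)$ vertices in components of $G-(S\cup T)$ adjacent to $T$, what remains is a graph $G[X]$ whose components have at most two vertices and whose only neighbours lie in $S$. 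The crucial step is an exchange argument showing that one carefully selected vertex $x^*\in X$ suffices: any $H'$-model centered in $X$ must route at least $k_1+k_2-1$ of its legs through $S$, and an $H'$-model centered at $x^*$ can be arranged to use at most that many $S$-vertices while avoiding any prescribed set $Z\subseteq T'$, so the $P_3$-budget balances. This yields $Y=S\cup T'\cup\{x^*\}$ of size $O(k^2)$. The exchange argument is where the real work is, and it is precisely what your packing approach does not supply.
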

\begin{proof}
On input~$(G,H',H'')$, the algorithm proceeds as follows. We first invoke Lemma~\ref{lemma:subdividedstartest:alg} to~$G$ and~$H'$ for all possible choices~$v \in V(G)$ for the center to determine whether~$G$ contains~$H'$ as a subgraph. If not, then there is no $H$-subgraph in~$G$ and we may output~$Y := \emptyset$. In the remainder we assume that~$H' \subseteq G$ and that~$\phi_{H'}$ is a full $H'$-subgraph model in~$G$. Let~$S$ be the set of vertices that have degree at least~$3k+1$ in~$G$.

\begin{claim}
If~$|S| \geq k$, then there is a full $H$-subgraph model in~$G$ where the model of~$H'$ is centered at~$\phi_{H'}(c)$.
\end{claim}
\begin{claimproof}
Assume that~$|S| \geq k$. We construct an $H$-subgraph model~$\phi$ in~$G$, as follows. Define~$\phi(v) = \phi_{H'}(v)$ for all~$v \in V(H')$, which ensures that the center of the subdivided star is at~$\phi_{H'}(v)$. While there is a connected component~$C$ of~$H''$ that has not yet been assigned an image under~$\phi$, do the following. Let~$Z$ be the vertices in the current image of the partial $H$-subgraph model~$\phi$; then~$|Z| < k$ since the model is not yet complete. Hence there is at least one high-degree vertex~$s \in S \setminus Z$. As~$\deg_G(s) \geq 3k + 1$ and~$|Z| < k$, there are at least two vertices~$\{s_1, s_2\}$ in~$N_G(s) \setminus Z$. These three vertices form a path on three vertices in~$G$. As~$C$ is a path on at most three vertices, we can specify a model for~$C$ in~$\phi$ by mapping~$C$ to (a subset of)~$\{s, s_1, s_2\}$. By iterating this process we augment~$\phi$ to a full $H$-subgraph model in~$G$ with~$\phi(c) = \phi_{H'}(c) = v$, which proves the claim.
\end{claimproof}

The claim shows that if~$|S| \geq k$, then~$Y = \{\phi_{H'}(c)\}$ is a valid output for the procedure. In the remainder we therefore assume that~$|S| < k$. Let~$\P$ be a maximal packing of vertex-disjoint copies of~$P_3$ in the graph~$G-S$, which can be found by a greedy polynomial-time algorithm.

\begin{claim}
If~$\P$ contains at least~$k$ vertex-disjoint copies of~$P_3$, then there is a full $H$-subgraph model in~$G$ where the model of~$H'$ is centered at~$\phi_{H'}(c)$.
\end{claim}
\begin{claimproof}
Assuming~$|\P| \geq k$ we show how to construct an $H$-subgraph model~$\phi$ in~$G$ with~$\phi(c) = \phi_{H'}(c)$. Define~$\phi(v) = \phi_{H'}(v)$ for all~$v \in V(H')$. While there is a connected component~$C$ of~$H''$ that has not yet been assigned an image under~$\phi$, let~$Z$ be the vertices currently used in the image of~$\phi$ and observe that~$|Z| < k$. Let~$P \in \P$ be a $P_3$-subgraph disjoint from~$Z$, which exists as~$|\P| \geq k$. As~$C$ is a subgraph of~$P$, we can map component~$C$ under~$\phi$ to (a subgraph of)~$P$. Iterating the procedure results in a full $H$-subgraph model as required.
\end{claimproof}

The claim shows that if~$|\P| \geq k$ then~$Y = \{\phi_{H'}(c)\}$ is a valid output for the procedure. In the remainder we assume that~$|\P| < k$. Let~$T$ be the vertices that occur in a~$P_3$ subgraph in~$\P$; then the previous assumption implies that~$|T| < 3k$. By our choice of~$\P$ as a maximal packing of~$P_3$'s, every connected component of~$G - (S \cup T)$ has at most two vertices. Let~$C_T$ be the connected components of~$G - (S \cup T)$ that are adjacent to a vertex in~$T$. As each vertex of~$T \subseteq V(G - S)$ has degree at most~$3k$, each vertex of~$T$ is adjacent to at most~$3k$ components in~$C_T$ and therefore~$|C_T| \leq 3k \cdot |T| \leq 9k^2$. Let~$T'$ be the union of~$T$ and the vertices in~$C_T$. As each component of~$C_T$ has at most two vertices,~$|T'| \leq |T| + 2 \cdot |C_T| \leq 3k + 18k^2$. By our choice of~$C_T$, the only vertices of~$G$ that a connected component of~$G - (S \cup T')$ can be adjacent to, are those in~$S$. Let~$X := V(G) \setminus (S \cup T')$.

For each~$v \in X$ we invoke Lemma~\ref{lemma:subdividedstartest:alg} to determine whether there is an $H'$-subgraph model in~$G$ centered at~$v$. If no such subgraph model exists, then any $H$-subgraph in~$G$ must center the star at a vertex of~$V(G) \setminus X = S \cup T'$, so we may safely output~$Y := S \cup T'$ of size at most~$4k + 18k^2$. In the remainder of the proof we deal with the case that an $H'$-subgraph model can be centered in a vertex of~$X$. As the size of~$X$ is not bounded in~$k$, we cannot include all these vertices in the output set~$Y$. To deal with this issue, we will identify a single vertex~$x^* \in X$ such that if there is an $H$-subgraph model in~$G$ that centers the subdivided star at a member of~$X$, then there is such a model that centers the star at~$x^*$. The output set~$Y$ will then consist of~$S \cup T' \cup \{x^*\}$ of size at most~$4k + 18k^2 + 1$. It remains to define~$x^*$ and show that it has the claimed properties. We choose~$x^*$ as follows.
\begin{itemize}
	\item If~$k_1 = 0$ then we let~$x^*$ be an arbitrary vertex of~$X$ such that~$G$ has an~$H'$-subgraph model centered there, which can be tested using Lemma~\ref{lemma:subdividedstartest:alg}. We say that we found an \emph{small} candidate for the center.
	\item If~$k_1 > 0$ then we have to be slightly more careful. Recall that all connected components of~$G[X] = G - (S \cup T')$ have at most two vertices. If there is a connected component~$C = \{u,v\}$ in~$G[X]$ such that~$G$ has an $H'$-subgraph model centered at~$u$, then we set~$x^* := u$ and we say that we found a \emph{large} candidate for the center. If no $H'$-subgraph model in~$G$ can be centered in a two-vertex component of~$G[X]$, then we let~$x^*$ be an arbitrary vertex of~$X$ that can be the center of an $H'$-subgraph model in~$G$ and we say that we found a \emph{small} candidate for the center.
\end{itemize}

Vertex~$x^*$ can be identified in polynomial time. The algorithm outputs the set~$Y := S \cup T' \cup \{x^*\}$. The following series of claims proves that this choice of~$Y$ satisfies the requirements in the statement of the claim. To simplify the notation in the rest of the proof, let~$\delta := 1$ if we found a small candidate for the center and~$0$ otherwise.

\begin{claim} \label{claim:starsandpaths:model:uses:many:highdegree}
Let~$\phi_{H'}$ be a full $H'$-subgraph model in~$G$ centered at a vertex~$x \in X$. Then $|\phi_{H'}(V(H')) \cap S| \geq k_1 + k_2 - \delta$.
\end{claim}
\begin{claimproof}
We consider the three defining cases for~$x^*$ separately.

\begin{itemize}
	\item If~$k_1 = 0$, then all connected components of~$H' - \{c\}$ consist of two vertices. Since~$\phi_{H'}(c)$ is contained in a connected component~$C$ of~$G[X]$, which has at most two vertices, no connected component~$C'$ of~$H' - \{c\}$ can have its image entirely within~$C$ because~$C$ also contains the image of~$c$. Hence the image of every such connected component~$C'$ of~$H' - \{c\}$ contains a vertex in~$N_G(C)$. By our choice of~$T'$, connected components of~$G[X] = G - (S \cup T')$ only have neighbors in~$S$. Hence the image of every connected component~$C'$ of~$H' - \{c\}$ intersects~$S$. As the images are vertex-disjoint and there are~$k_2$ such components, we find~$|\phi_{H'}(V(H')) \cap S| \geq k_2 = k_1 + k_2$ (as~$k_1 = 0$), which proves the claim in this case.
	\item If~$k_1 > 0$, and we found a large candidate for the center, then there is at most one connected component of~$H' - \{c\}$ whose image under~$\phi_{H'}$ does not intersect~$S$: if~$\phi_{H'}(c)$ is contained in a connected component~$C$ of~$G[X]$ of two vertices, then the vertex of~$C$ unequal to~$\phi_{H'}(c)$ can model a one-vertex component of~$H' - \{c\}$. The images of all other connected components intersect~$S$, by the same argument as in the previous case. Hence~$|\phi_{H'}(V(H')) \cap S| \geq k_1 + k_2 - 1$.
	\item If~$k_1 > 0$, but we found a small candidate for the center, then the mechanism for defining~$x^*$ ensures that it is impossible to have a~$H'$-subgraph model centered in a two-vertex component of~$G[X]$ where the non-center vertex of the component is used to model a one-vertex component of~$H' - \{c\}$. Hence all connected components of~$H' - \{c\}$ intersect~$S$, implying that~$|\phi_{H'}(V(H')) \cap S| \geq k_1 + k_2$.
\end{itemize}
This concludes the proof of Claim~\ref{claim:starsandpaths:model:uses:many:highdegree}.
\end{claimproof}

\begin{claim} \label{claim:starsandpaths:centeredmodel:fewhighdegree}
For any set~$Z \subseteq T'$ of size at most~$k$ there is an $H'$-subgraph model~$\phi'$ in~$G - Z$ centered at~$x^*$ such that~$|\phi'(V(H')) \cap S| \leq k_1 + k_2 - \delta$.
\end{claim}
\begin{claimproof}
We first prove the claim when we found a small candidate for the center. Afterwards we show how to adapt the proof in case of a large candidate for the center.

\textbf{Small candidate.} Suppose we found a small candidate for the center. By the definition of~$x^*$, there is an $H'$-subgraph~$\phi'$ in~$G$ centered at~$x^*$. Recall that~$H - \{c\}$ has~$k_1 + k_2$ connected components. Each of the~$k_1$ one-vertex components can only contribute one to~$\phi'(V(H')) \cap S$. This implies that as long as $|\phi'(V(H')) \cap S| > k_1 + k_2$ holds, there is a two-vertex connected component~$C = \{u,v\}$ of~$H' - \{c\}$ with~$\{\phi'(u), \phi'(v)\} \subseteq S$. By the topology of a star, exactly one vertex of~$\{u,v\}$ is adjacent in~$H'$ to~$c$. Assume without loss of generality that this is~$v$, which implies that~$\{u,c\} \not \in E(H')$. Hence we may change the image of~$u$ without violating the validity of the $H'$-subgraph model, as long as the new image is disjoint from the rest of the model and is adjacent in~$G$ to~$\phi'(v)$. Since~$\phi'(v) \in S$ has degree more than~$2k$ in~$G$, there is a neighbor~$u'$ of~$\phi'(v)$ in~$G$ that is not used in the model~$\phi'$. By the preceding argument we may set~$\phi'(u) := u'$ to decrease the number of vertices of~$S$ used by the model by one, without violating the validity of the model. By iterating this argument we may assume that~$|\phi'(V(H')) \cap S| = k_1 + k_2$.

We will now show how to alter~$\phi'$ to turn it into a model in~$G - Z$, i.e., how to avoid using vertices of~$Z$ in the image of~$\phi'$. Assume that~$z \in \phi'(V(H'))$. Since~$Z \subseteq T'$ and~$x^* \in X$ is not adjacent in~$G$ to any vertex in~$T'$, it follows that~$z$ is the image of a degree-1 endpoint~$e$ of the subdivided star that is not adjacent in~$H'$ to~$c$. Let~$d$ be the unique neighbor of~$e$ in~$H'$. Since~$\phi'(d)$ is adjacent to both~$x^* \in X$ and~$z \in T'$, we must have~$\phi'(d) \in S$ by our choice of~$T'$. Hence the degree of~$\phi'(d)$ in~$G$ is at least~$3k+1$. Since~$|Z| \leq k$, the current model uses at most~$k$ vertices, and~$|S| \leq k$, there is a vertex~$d'$ in~$N_G(\phi'(d)) \setminus (Z \cup \phi'(V(H')) \cup S)$. Consequently, we may update the model~$\phi'$ by setting~$\phi'(d) := d'$ while preserving a valid~$H'$-subgraph model without increasing the number of vertices of~$S$ used by the model. By iterating this procedure we arrive at a subgraph model of~$H'$ disjoint from~$Z$ (which is therefore a model in~$G - Z$) using exactly~$k_1 + k_2$ vertices in~$S$.

\textbf{Large candidate.} The case where we found a large candidate for the center is similar; the main difference is that an $H'$-model can use one vertex of~$S$ less, by using the neighbor of~$x^*$ in~$G[X]$ as the image for a one-vertex component of~$H' - \{c\}$. In the case that we found a large candidate, the mechanism for defining~$x^*$ ensures that there is a~$H'$-subgraph~$\phi'$ in~$G$ centered at~$x^*$ such that~$x^*$ is contained in a two-vertex connected component~$C_{x^*} = \{x^*, y^*\}$ of~$G[X]$. We first show how to obtain a model~$\phi'$ where~$y^*$ is the image of a one-vertex component of~$H - \{c\}$. If~$y^* \not \in \phi'(V(H'))$, then we may take any one-vertex component of~$H - \{c\}$ (which exists since~$k_1 > 0$ if we found a large candidate) and map it to~$y^*$. Assume then that~$y^*$ is used as part of the image of a two-vertex component~$C_2$ of~$H - \{c\}$, and let~$C_1 = \{w\}$ be a one-vertex component of~$H - \{c\}$. Then~$w' := \phi'(w) \in S$ since~$N_G(x^*) \subseteq \{y^*\} \cup S$ while~$y^*$ is used for a different role in the model. Now we can swap the images of~$C_1$ and~$C_2$ under~$\phi$, as follows. We let~$\phi'(w) := y^*$. We map the vertex of~$C_2 \cap N_{H'}(c)$ to the vertex~$w' \in S$. Since~$w'$ has degree at least~$3k + 1$ in~$G$, it has a neighbor that is not yet used in the model; we use such a neighbor as the image of~$C_2 \setminus N_{H'}(c)$. 

We obtain a valid $H'$-subgraph model in~$G$ that is centered in~$x^*$ and in which the neighbor of~$x^*$ in~$G[X]$ is the image of a one-vertex component of~$H' - \{c\}$. While there is a connected component of~$H' - \{c\}$ whose image under~$\phi'$ contains two vertices in~$S$, we can update the model to reduce this number to one, just as in the previous case. Iterating this argument we obtain a model of~$H'$ centered at~$x^*$ where one component of~$H' - \{c\}$ does not use any vertex of~$S$, while the remaining components each use at most one vertex of~$S$. Hence this model~$\phi'$ satisfies~$|\phi'(H') \cap S| \leq k_1 + k_2 - 1$. Finally, just as in the previous case we may eliminate the vertices of~$Z$ from the model without increasing the number of~$S$-vertices that are used. This concludes the proof of Claim~\ref{claim:starsandpaths:centeredmodel:fewhighdegree}.
\end{claimproof}

\begin{claim} \label{claim:starsandpaths:centeratxprime}
If there is a full $H$-subgraph model~$\phi$ in~$G$ in which the subdivided star~$H'$ is centered at a vertex~$x \in X$, then there is a full $H$-subgraph model in~$G$ in which the subdivided star~$H'$ is centered at~$x^*$.
\end{claim}
\begin{claimproof}
We build a model~$\phi'$ of~$H$ in~$G$ that centers the subdivided star at~$x^*$. Let~$S' := \phi(V(H)) \cap S$. By Claim~\ref{claim:starsandpaths:model:uses:many:highdegree} we know that the~$H'$ submodel of~$\phi$ contributes at least~$k_1 + k_2 - \delta$ vertices to~$S'$. Hence there are at most~$|S'| - (k_1 + k_2 - \delta)$ vertices in~$S'$ that are the image of a vertex of a path in~$H$. Define~$\phi'$ as follows.

\begin{itemize}
	\item For every path~$P$ in~$H''$ such that~$\phi(V(P)) \cap S = \emptyset$, set~$\phi'(v) = \phi(v)$ for all~$v \in V(P)$. Let~$Z \subseteq V(G)$ be the vertices used in the image of the partial model constructed in this way.
	\item By Claim~\ref{claim:starsandpaths:centeredmodel:fewhighdegree}, there is a $H'$-subgraph model~$\phi_{H'}$ of~$H'$ in~$G - Z$ that is centered at~$x^*$ and uses at most~$k_1 + k_2 - \delta$ vertices from~$S$. Set~$\phi'(v) = \phi_{H'}(v)$ for all~$v \in V(H')$; as~$\phi_{H'}$ is a model of~$G - Z$ no vertex is used twice in the partial model constructed so far. Let~$Z'$ be the vertices used by the model after this step.
	\item It remains to define an image for the connected components of~$H''$ whose image under~$\phi$ contains a vertex of~$S$. There are at most~$|S'| - (k_1 + k_2 - \delta) \leq |S| - (k_1 + k_2 - \delta)$ of such components by the observation above. On the other hand, there are at least~$|S| - (k_1 + k_2 - \delta)$ vertices in~$S$ that are not used by the partial model~$\phi'$ constructed so far. As each vertex of~$S$ has degree at least~$3k+1$ and the partial model contains at most~$k$ vertices, we can choose for each~$s \in S \setminus Z$ two vertices~$s_1, s_2 \in N_G(s) \setminus Z$ such that the assigned pairs are disjoint over all~$s \in S \setminus Z$. Each resulting triple is a path on three vertices in~$G$ that can form the image of one of the~$|S| - (k_1 + k_2 - \delta)$ remaining connected components of~$H''$. Hence there are sufficient triples to realize all remaining components of~$H''$. As~$H = H' + H''$ this gives a $H$-subgraph model in~$G$ centered at~$x^*$, as required.
\end{itemize}
\end{claimproof}

As Claim~\ref{claim:starsandpaths:centeratxprime} shows that any model of~$H$ centered at a vertex of~$X$ can be transformed into a model centered at~$x^*$, the set~$Y := V(G) \setminus (X \setminus \{x^*\}) = S \cup T' \cup \{x^*\}$ satisfies the claimed requirements and may be used as the output. This concludes the proof of Lemma~\ref{lemma:starsandpaths:findrelevantcenters}.
\end{proof}

Using Lemma~\ref{lemma:starsandpaths:findrelevantcenters} we can prove the following theorem.

\restatekarpstarspaths*

\begin{proof}
On input~$(G,H)$, the kernelization algorithm proceeds as follows. If~$H$ is not of the correct form, which is easily determined in polynomial time, then we output a constant-size \no-instance. If~$H'$ does not have a center then the subdivided star~$H'$ has at most five vertices, implying that~$H$ is $5$-small. Hence we obtain a polynomial kernel using Theorem~\ref{theorem:kernel:packing}.


If~$H'$ has a unique center~$c$, then observe that all components of~$H'' := H - V(H')$ are paths on at most three vertices. We invoke Lemma~\ref{lemma:starsandpaths:findrelevantcenters} to compute a set~$Y$ of size~$\Oh(k^2)$ such that, if~$G$ contains~$H$ as a subgraph, then there is a subgraph model of~$H$ in~$G$ where~$c$ is mapped to a member of~$Y$. For each~$y \in Y$ construct a partial $H$-subgraph model~$\phi^{c \mapsto y}$ by setting~$\phi^{c \mapsto y}(c) := y$. Observe that every connected component of~$H - \{c\}$ has at most three vertices. Letting~$D := \{c\}$ the tuple~$(G,H,\phi^{c \mapsto y}, D)$ therefore satisfies the requirements of Lemma~\ref{lemma:kernel:generic} with~$(a,b,d) = (3, 0, 0)$. We may therefore invoke the lemma to compute a set~$X^{c \mapsto y}$ of size~$\Oh(k^{\Oh(a + b^2 + d)}) = \Oh(k^{\Oh(1)})$ such that if~$G$ has a full $H$-subgraph model extending~$\phi^{c \mapsto y}$, then~$G[X^{c\mapsto y}]$ contains a full $H$-subgraph model that extends~$\phi^{c \mapsto y}$.

Let~$X$ be the union of~$X^{c \mapsto y}$ for all~$v \in Y$. As~$|Y| \in \Oh(k^2)$, the size of~$X$ is polynomial in~$k$. Lemma~\ref{lemma:starsandpaths:findrelevantcenters} guarantees that if~$G$ contains~$H$ as a subgraph, then there is a $H$-subgraph model in~$G$ that extends~$\phi^{c \mapsto y}$ for some~$v \in Y$ and therefore~$G[X]$ contains~$H$ as a subgraph. It follows that the instance~$(G[X], H)$ of \kFSubgraphTest, whose size is polynomial in~$k$, is equivalent to~$(G,H)$. As the set~$X$ can be computed in polynomial time, this gives a valid kernelization algorithm of polynomial size.
\end{proof}

\subsubsection{One fountain and many three-vertex components}
In this section, we consider pattern graphs consisting of one (induced subgraph of a) fountain and many connected components of at most three vertices. It will be convenient to define a notion of center for a fountain, similarly as we did for subdivided stars. We will only concern ourselves with fountains whose cycle has length three; these are triangles where pendant vertices are attached to one vertex of the triangle. If such a fountain has a unique high-degree vertex, then this is the center of the fountain. In the degenerate case that there are no pendant vertices and the graph is two-regular, the graph has no center and the pattern graph~$H$ is $3$-small. As in the previous section we say that a subgraph model of a fountain~$H'$ in~$G$ is \emph{centered at~$v \in V(G)$} if the image of the center is~$v$. The main algorithmic tool will again be a lemma that identifies a small representative vertex set for the set of possible centers of $H$-subgraph models in~$G$.

\begin{lemma} \label{lemma:fountainandtriangles:findrelevantcenters}
There is a polynomial-time algorithm with the following specifications. The input consists of a graph~$G$, a graph~$H' \in \Ffountain{3}$ with center~$c$, and a graph~$H''$ where each connected component has at most three vertices. Let~$H := H' + H''$ and~$k := |V(H)|$. The output is a set~$Y \subseteq V(G)$ of size~$\Oh(k^3)$ such that if~$H \subseteq G$, then there is a full $H$-subgraph model in~$G$ in which the model of~$H'$ is centered at a vertex in~$Y$.
\end{lemma}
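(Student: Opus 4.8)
The plan is to mimic the structure of the proof of Lemma~\ref{lemma:starsandpaths:findrelevantcenters}, adapting the argument from subdivided stars plus $P_3$'s to fountains plus arbitrary three-vertex components. First I would give a polynomial-time subroutine that, given a target vertex $v\in V(G)$, decides whether $G$ has an $H'$-subgraph model centered at $v$: since $H'\in\Ffountain{3}$ consists of a triangle on $\{c,z_1,z_2\}$ with some number $p$ of pendant leaves attached to $c$, it suffices to try all $\Oh(|V(G)|^2)$ choices for the images of $z_1,z_2$ among the neighbors of $v$ that form a triangle with $v$, and then check that $v$ has at least $p$ further free neighbors; this is the fountain analogue of Lemma~\ref{lemma:subdividedstartest:alg} (and could alternatively be done via Theorem~\ref{theorem:fsubgraphtest:alg}, since $H'$ is $1$-matching-splittable). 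If $G$ contains no $H'$-subgraph at all we output $Y:=\emptyset$; otherwise fix one model $\phi_{H'}$ with center $\phi_{H'}(c)$.

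Next I would peel off ``absorbing'' structure in $G$ exactly as in Lemma~\ref{lemma:starsandpaths:findrelevantcenters}: let $S$ be the set of vertices of degree at least (some $\Theta(k)$ threshold); if $|S|\ge k$, then any large-degree vertex can absorb a leftover three-vertex component of $H''$ (pick the vertex plus two free neighbors), so $Y:=\{\phi_{H'}(c)\}$ works. Otherwise greedily pack vertex-disjoint copies of each possible connected three-vertex graph ($K_3$ and $P_3$) in $G-S$; if the packing has $\ge k$ members we are again done with a singleton $Y$, since the $H''$-components are all subgraphs of these. Otherwise the packed vertices $T$ have size $\Oh(k)$, every component of $G-(S\cup T)$ has at most two vertices, and we further absorb into $T'$ (still of size $\Oh(k^2)$) those small components adjacent to $T$, so that every component of $G-(S\cup T')$ is only adjacent to $S$. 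At this point the only remaining issue is which vertex of $X:=V(G)\setminus(S\cup T')$ should be allowed to host the center of the fountain, and we must replace the unbounded set $X$ by a bounded representative set.

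The main obstacle — and the place where the fountain case differs from the star case — is choosing the representative center vertex $x^*\in X$ and proving the exchange argument, because the center $c$ of a fountain is \emph{not} a cut vertex of $H'$: removing $c$ leaves a two-vertex component $\{z_1,z_2\}$ (an edge) rather than only isolated/edge components hanging off $c$. So a model of $H'$ centered at a vertex $x\in X$ must realize the edge $z_1z_2$, and since components of $G[X]$ have at most two vertices and one of those is $x$ itself, the pair $\{z_1,z_2\}$ cannot both lie in $G[X]$; hence their images must involve $S$. I would therefore track, as in Claims~\ref{claim:starsandpaths:model:uses:many:highdegree}--\ref{claim:starsandpaths:centeredmodel:fewhighdegree}, the minimum number of $S$-vertices any $H'$-model centered in $X$ must use versus the minimum it \emph{can} use, case-splitting on whether $H'$ has any pendant leaves ($k_1>0$) and whether $x$'s component in $G[X]$ has one or two vertices. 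One subtlety is that the two triangle-neighbors $z_1,z_2$ play symmetric roles and at least one of them is a non-cut-neighbor that can be freely re-routed to a fresh neighbor of a high-degree vertex, which is what powers the normalization ``$|\phi'(V(H'))\cap S|$ is as small as possible''. With $x^*$ chosen to maximize the flexibility (prefer a center sitting in a two-vertex component of $G[X]$ when $k_1>0$, otherwise any valid center), the exchange argument of Claim~\ref{claim:starsandpaths:centeratxprime} goes through: a model centered anywhere in $X$ can be rebuilt with center $x^*$, freeing the $S$-vertices and re-realizing the displaced $H''$-components via high-degree absorption. The output is $Y:=S\cup T'\cup\{x^*\}$, of size $\Oh(k^2)$; I expect the stated bound $\Oh(k^3)$ is simply a safe overestimate (or accounts for a slightly larger greedy packing / threshold), and I would not fight to tighten it. The running time is polynomial throughout since all packings are greedy and the center-test subroutine is polynomial.
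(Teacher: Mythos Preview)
Your plan to transplant the proof of Lemma~\ref{lemma:starsandpaths:findrelevantcenters} has a genuine gap: the ``absorb a leftover component of $H''$ using a fresh high-degree vertex plus two free neighbours'' step works for $P_3$'s but \emph{not} for $K_3$'s, and here $H''$ may contain triangles. This breaks both your first shortcut (``if $|S|\ge k$ then $Y=\{\phi_{H'}(c)\}$'') and the final exchange. For a concrete failure of the first shortcut, take $G$ consisting of (i) a vertex $a$ in a triangle $\{a,b,c\}$ with $p$ further neighbours $n_1,\dots,n_p$, of which $n_1,n_2,n_3$ form an additional triangle; (ii) a vertex $a'$ in a triangle $\{a',b',c'\}$ with $p$ further degree-one neighbours; (iii) $k-2$ high-degree star centres with no triangles. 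With $H'$ a fountain with $p$ pendants and $H''=2\cdot K_3$, one has $|S|\ge k$, and both $a,a'$ are valid fountain centres. But centring at $a$ forces all of $N_G(a)$ into the fountain, killing $\{n_1,n_2,n_3\}$ and leaving only one triangle for $H''$; centring at $a'$ leaves $\{a,b,c\}$ and $\{n_1,n_2,n_3\}$ free and succeeds. Your subroutine may well return $\phi_{H'}(c)=a$, and then $Y=\{a\}$ is wrong. The same phenomenon recurs in the exchange step: when you move the centre from $x$ to $x^*$ you may displace a $K_3$ component of $H''$ whose image touched $S$, and a freed vertex $s\in S$ of high degree gives you a $P_3$ through $s$ but not, in general, a $K_3$.

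The paper avoids this entirely by \emph{not} adapting Lemma~\ref{lemma:starsandpaths:findrelevantcenters}. Instead it applies the Sunflower Lemma directly to the fountain's triangle: let $\S$ be the family of triples $\{c',d',e'\}\in\binom{V(G)}{3}$ that induce a triangle with $\deg_G(c')\ge\deg_H(c)$, and iteratively discard a petal from any sunflower of $k+1$ such triples (so stop once $|\S'|<6k^3$). If the sunflower core is empty, the $k+1$ petals are pairwise disjoint triangles each containing a high-degree vertex, so one of them hosts the fountain and the remaining ones absorb \emph{all} of $H''$ (every component of $H''$ fits inside a triangle). If the core is non-empty, a core vertex lies in $\ge k+1$ triangles and hence has degree $\ge k+1$; keep $\phi|_{H''}$ intact, pick a surviving petal disjoint from $\phi(V(H''))$, and rebuild the fountain there. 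The output $Y=\bigcup\S'$ then has size $\Oh(k^3)$. The key difference from your plan is that by tracking candidate \emph{triangles} rather than candidate centre vertices, the sunflower argument handles the interaction with the $K_3$ components of $H''$ automatically.
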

\begin{proof}
Let~$c$ be the center of~$H'$ and let~$d,e$ be the two other vertices on the unique triangle in~$H'$. (Note that we require~$H' \in \Ffountain{3}$ which ensures that~$H'$ has a unique triangle; induced subgraphs of fountains that do not have a triangle do not satisfy the preconditions and will be dealt with separately.) We apply the sunflower lemma in a similar fashion as in Lemma~\ref{lemma:compute:representative:set:constanth} to compute the desired set~$Y$. Define a system of sets~$\S$ containing all triples~$\{c',d',e'\} \in \binom{V(G)}{3}$ for which there is a $H'$-subgraph model~$\phi$ in~$G$ with~$\phi(c) = c', \phi(d) = d'$ and~$\phi(e) = e'$. We can compute~$\S$ in polynomial time, since~$\{c',d',e'\} \in \binom{V(G)}{3}$ can realize~$H'$ if and only if~$\deg_G{c'} \geq \deg_H(c)$ and the triple induces a triangle in~$G$.

We will compute a set~$\S' \subseteq \S$ with the following \emph{preservation property}: if~$G$ contains a full $H$-subgraph model, then~$G$ contains a full $H$-subgraph model~$\phi$ with~$\phi(\{c,d,e\}) \in \S'$. By our choice of~$\S$ it is clear that~$\S$ has the preservation property, so we initialize~$\S'$ as a copy of~$\S$.

\begin{claim} \label{claim:fountainsandtriangles:preservationproperty}
If~$\S' \subseteq \S$ has the preservation property and~$|\S'| \geq 6k^3$, we can identify a set~$S^* \in \S'$ in polynomial time such that~$\S' \setminus \{S^*\}$ also has the preservation property.
\end{claim}
\begin{claimproof}
Suppose that~$|\S'| \geq 6k^3$. By Lemma~\ref{lemma:sunflowers}, there is a sunflower in~$\S'$ consisting of at least~$k+1$ sets~$S_1, \ldots, S_{k+1}$ and this can be found in time polynomial in the size of the set family and the universe. Let~$C := \bigcap _{i=1}^{k+1} S_i$ be the core of the sunflower. We show that~$\S' \setminus \{S_1\}$ has the preservation property.

Suppose that~$\phi$ is a full $H$-subgraph model in~$G$. As~$\S'$ has the preservation property, there is a full $H$-subgraph model~$\phi'$ in~$G$ such that~$S_{\phi'} := \phi'(\{c,d,e\})$ is contained in~$\S'$. If~$S_{\phi'} \neq S_1$ then the set~$S_{\phi'}$ is contained in~$\S' \setminus \{S_1\}$, which establishes the preservation property. If~$S_1 = S_{\phi'}$, then we distinguish two cases depending on whether or not the core~$C$ is empty. 

\textbf{Empty core.} If~$C = \emptyset$, then sets~$S_1, \ldots, S_{k+1}$ induce vertex-disjoint triangles in~$G$ that each have a vertex of degree at least~$\deg_H(c)$. We show that there is an $H$-subgraph model~$\phi^*$ with~$\phi^*(c,d,e) = S_2$, thereby showing that~$S_1$ may be safely discarded. Let~$S_2 = \{c',d',e'\}$ and let~$c' \in V(G)$ satisfy~$\deg_G(c') \geq \deg_H(c)$. Define~$\phi^*(c) = c', \phi^*(d) = d'$, and~$\phi^*(e) = e'$. Let~$v_1, \ldots, v_{\deg_H(c)}$ be distinct vertices in~$N_G(c') \setminus \{c',d'\}$, which exist by the lower bound on the degree of~$c'$. We use~$v_1, \ldots, v_{\deg_H(c)}$ as the images under~$\phi^*$ of the pendant vertices attached to~$c$ in the fountain~$H'$. It remains to define images for the connected components in~$H''$. By the precondition to the lemma, each such component has at most three vertices. Observe that the set~$Z = \{c',d',e'\} \cup \{v_1, \ldots, v_{\deg_H(c)}\}$ has size at most~$|V(H')|$. As the sunflower has an empty core, the sets~$S_i$ are pairwise disjoint and therefore each vertex in~$Z$ intersects at most one set in the sunflower. Hence~$Z$ intersects less than~$|V(H')|$ sets in the sunflower. Consequently, there are at least~$k - |V(H')| = |V(H)| - |V(H')| = |V(H'')|$ sets among~$S_2, \ldots, S_{k+1}$ that are not intersected by~$Z$. As each set induces a triangle in~$G$, which is a supergraph of any connected component of~$H''$, each set not intersected by~$Z$ can form the image of one connected component of~$H''$. Hence there is a~$H$-subgraph model~$\phi^*$ in~$G$ with~$\phi^*(c,d,e) = S_2$, which shows that~$\S' \setminus \{S_1\}$ has the preservation property.

\textbf{Non-empty core.} Now we deal with the case that~$C \neq \emptyset$. We construct a full $H$-subgraph model~$\phi^*$ in~$G$ such that~$\phi'(\{a,b,c\}) \in \{S_2, \ldots, S_{k+1}\}$. Recall that~$\phi'$ is the $H$-subgraph model we obtained from~$\S'$. For every vertex~$v \in V(H'')$, define~$\phi^*(v) = \phi'(v)$. Let~$Z := \bigcup _{v \in V(H'')} \phi'(v)$, which has size~$|V(H'')| < |V(H)| = k$. Hence there is a set~$S_i \in \{S_2, \ldots, S_{k + 1}\}$ that is disjoint from~$Z$. Let~$S_i = \{c_i, d_i, e_i\}$. As the core~$C$ is non-empty, at least one of the vertices in~$S_i$ is contained in~$C$; let~$c_i \in C$ and define~$\phi^*(c) = c_i, \phi^*(d) = d_i$, and~$\phi^*(e) = e_i$. It remains to define an image for the pendant vertices attached to~$c$ in the fountain. We claim that~$\deg_G(c_i) \geq k + 1$. To see this, observe that~$c_i$ is contained in the core of the sunflower, while each of the petals~$S_1 \setminus C, \ldots, S_{k+1} \setminus C$ contain at least one vertex that occurs in a common triangle with~$c_i$ (by definition of the sets in~$\S$). Hence each set in~$S_1 \setminus C, \ldots, S_{k+1} \setminus C$ contains a neighbor of~$c_i$, and as these sets are disjoint by the definition of a sunflower we indeed have~$\deg_G(c_i) \geq k+1$. The number of pendant vertices attached to~$c$ in~$H'$ is exactly~$\deg_{H}(c) = |V(H')| - 1$. As~$Z \cup \{b_i, c_i\}$ has size~$|V(H'')| + 2$, there are at least~$k+1 - (|V(H''| + 2) = |V(H')| - 1$ vertices in~$N_G(c_i) \setminus (Z \cup \{b_i, c_i\})$. Letting each such vertex form the image of one pendant neighbor of~$c$ in~$H'$, we can extend~$\phi^*$ to a full $H$-subgraph model. As~$\phi^*(\{c,d,e\}) = S_i \in \S' \setminus \{S_1\}$, the latter set has the preservation property and we can safely omit~$S_1$. This proves Claim~\ref{claim:fountainsandtriangles:preservationproperty}.
\end{claimproof}

By iterating the argument above, we arrive at a set system~$\S' \subseteq \S$ with the preservation property that contains at most~$6k^3$ sets. As each set in~$\S' \subseteq \S$ has size three, the set~$Y := \bigcup _{S \in \S'} S$ has size at most~$18k^3$. The preservation property directly implies that if there is a $H$-subgraph model in~$G$, then there is such a model that uses one of the triangles in~$\S'$ as the image for the triangle in the fountain, and that therefore centers the fountain at a member of~$Y$. Hence~$Y$ is a valid output for the procedure. As each iteration can be done in time polynomial in the size of~$\S$ and~$G$, while the number of iterations is bounded by~$|\S|$ which is~$\Oh(|V(G)|^3)$, the procedure runs in polynomial time. This concludes the proof of Lemma~\ref{lemma:fountainandtriangles:findrelevantcenters}.
\end{proof}

Lemma~\ref{lemma:starsandpaths:findrelevantcenters} allows us to derive a polynomial many-one kernel.

\restatekarpfountaintriangles*

\begin{proof}
The main idea is the same as in the proof of Theorem~\ref{theorem:karp:subgraphkernel:starsandpaths}: find a representative set of centers of size polynomial in~$k$ and then invoke Theorem~\ref{lemma:kernel:generic} for each choice of center. As we are proving the theorem for the hereditary graph family~\F, we have to consider not just graphs~$H'$ that are contained in~$\Ffountain{3}$, but also their induced subgraphs. These are of two types: removing pendant vertices of a graph in \Ffountain{3} results in another graph in \Ffountain{3}. However, if we remove a vertex from the unique triangle of a graph in \Ffountain{3}, then this either reduces the graph to a star, or splits it into components with at most two vertices. Since Lemma~\ref{lemma:fountainandtriangles:findrelevantcenters} does not apply to stars, we have to deal with the case that~$H'$ is a member of \Ffountain{3} and that~$H'$ is a star separately.

The kernelization algorithm works as follows. On input~$(G,H)$ it first tests whether~$H$ has the right form, which is easy to do in polynomial time. If this is not the case, it outputs a constant-size \no-instance. The behavior on the remaining instances depends on the form of~$H'$, which is a connected induced subgraph of a member of \Ffountain{3}. Recall that the parameter~$k$ is defined as~$|V(H)|$.

If~$H' \in \Ffountain{3}$ and~$H'$ has a unique center~$c$ then we proceed as follows. We invoke Lemma~\ref{lemma:fountainandtriangles:findrelevantcenters} to compute a set~$Y \subseteq V(G)$ of size~$\Oh(k^3)$ such that, if~$G$ contains~$H$ as a subgraph, then there is a $H$-subgraph model that centers~$H'$ at a member of~$Y$. Let~$c$ be the center of~$H'$. For each~$y \in Y$ we create a partial $H$-subgraph model~$\phi^{c \mapsto y}$ with domain~$\{c\}$ that sets~$\phi^{c \mapsto y} := y$. Define~$D := \{c\}$. Just as in the proof of Theorem~\ref{theorem:karp:subgraphkernel:starsandpaths}, these parameters satisfy the requirements for Lemma~\ref{lemma:kernel:generic} with~$(a,b,d) = (3,0,0)$. We combine the resulting sets~$X^{c \mapsto y}$ over all choices of~$y$ into a set~$X$, and output the instance~$(G[X], H)$. Since~$|X| \in |Y| \cdot \Oh(k^{\Oh(a + b^2 + d)}) = \Oh(k^{\Oh(1)})$, the size of the reduced graph~$G[X]$ is bounded by a polynomial in~$k$. Correctness follows by exactly the same arguments as Theorem~\ref{theorem:karp:subgraphkernel:starsandpaths}. Hence we obtain a polynomial kernel if~$H' \in \Ffountain{3}$.

If~$H' \not \in \Ffountain{3}$, then~$H'$ is a star (if~$H' \not \in \Ffountain{3}$) or~$H'$ is a triangle (if it has no center), which is a $1$-thin/$3$-small graph. Hence we obtain a polynomial kernel through Theorem~\ref{theorem:kernel:packing}. 
\end{proof}

\bibliographystyle{abbrvurl}
\bibliography{Paper}

\end{document}